\NewDocumentCommand{\overunderline}{mmm}
 {% #1 is the sequence of digits
  % #2 is the sequence of overlines
  % #3 is the sequence of underlines
  \group_begin:
  \setlength{\arraycolsep}{0pt}
  \renewcommand{\arraystretch}{0}
  \setlength{\belowrulesep}{\aboverulesep}
  \setlength{\cmidrulekern}{0.1em}
  \ziggy_overunderline:nnn { #1 } { #2 } { #3 }
  \group_end:
 }
\newtheorem{theorem}{Theorem}[section]
\newtheorem{lemma}[theorem]{Lemma}
\newtheorem{corollary}[theorem]{Corollary}
\newtheorem{definition}[theorem]{Definition}
\newtheorem{remark}{Remark}
\newtheorem{invariant}{Invariant}
\newcommand{\ubar}[1]{\underaccent{\bar}{{#1}}}
\newcommand{\cev}[1]{\reflectbox{\ensuremath{\vec{\reflectbox{\ensuremath{#1}}}}}}
\DeclareRobustCommand{\cev}[1]{%
  {\mathpalette\do@cev{#1}}%
}
\newcommand{\do@cev}[2]{%
  \vbox{\offinterlineskip
    \sbox\z@{$\m@th#1 x$}%
    \ialign{##\cr
      \hidewidth\reflectbox{$\m@th#1\vec{}\mkern4mu$}\hidewidth\cr
      \noalign{\kern-\ht\z@}
      $\m@th#1#2$\cr
    }%
  }%
}
\newcommand{\RR}{\mathbb{R}}
\newcommand{\Concat}{\mbox{{\it Concatenate}}}
\newcommand{\CO}{\mbox{{\it Concatenate}-{\it Opposite}}}
\newcommand{\LS}{\mbox{{\it Long}-{\it Shortcuts}}}
\newcommand{\Simple}{\mbox{{\it Short}-{\it Shortcuts}}}
\newcommand{\Trivial}{\mbox{{\it Trivial}-{\it Shortcuts}}}
\newcommand{\BFS}{\mbox{{\it Breadth}-{\it Search}}}
\newcommand{\ComputeF}{\mbox{{\it Compute}-{\it Funnels}}}
\newcommand{\ComputeS}{\mbox{{\it Compute}-{\it Shortcuts}}}
\newcommand{\UpdateS}{\mbox{{\it Update}-{\it Shortcuts}}}
\newcommand{\Extend}{\mbox{{\it Arc}-{\it Bounded}-{\it To}-{\it Monotone}}}
\newcommand{\argmax}{\operatornamewithlimits{argmax}}
\newcommand{\dani}[1]{{\color{ForestGreen}(dani: #1)}}
\newcommand\numeq[1]{\stackrel{\scriptscriptstyle(\mkern-1.5mu#1\mkern-1.5mu)}{=}}
\newcommand\numge[1]{\stackrel{\scriptscriptstyle(\mkern-1.5mu#1\mkern-1.5mu)}{\ge}}
\newcommand\numle[1]{\stackrel{\scriptscriptstyle(\mkern-1.5mu#1\mkern-1.5mu)}{\le}}
\begin{document}

\title{\null\vspace*{-60pt}Faster All-Pairs Optimal Electric Car Routing}

% \author{}
\author{%
  \begin{tabular}{ c c c c c}
       Dani Dorfman\thanks{Max Planck Institute for Informatics, Saarbr\"ucken
, Germany. Email: {\tt dani.i.dorfman@gmail.com}.}    &    Haim Kaplan\thanks{{\tt \{haimk,zwick\}@tau.ac.il.} Work of Uri Zwick partially supported by grant 2854/20 of the Israeli Science Foundation. Work of Haim Kaplan partially supported by ISF grant 1595/19 and the Blavatnik family foundation.}   &    Robert E.\ Tarjan\thanks{Department of Computer Science, Princeton University. Research partially supported by a gift from Microsoft. Email: {\tt ret@princeton.edu.}}\;  
       & Mikkel Thorup\thanks{BARC, University of Copenhagen, Denmark. Research supported by the VILLUM Foundation grant no.\ 16582. Email: {\tt mikkel2thorup@gmail.com}}\;    &   Uri Zwick${}^{\dag}$        
  \end{tabular}
}
% \date{May 22, 2021}
\date{}
\maketitle

% \vspace*{-45pt}
\begin{abstract}
\setlength{\parindent}{0pt}
\setlength{\parskip}{3pt plus 2pt}\noindent%

We present a randomized $\tilde{O}(n^{3.5})$-time algorithm for computing \emph{optimal energetic paths} for an electric car between all pairs of vertices in an $n$-vertex directed graph with positive and negative \emph{costs}, or \emph{gains}, which are defined to be the negatives of the costs. The optimal energetic paths are finite and well-defined even if the graph contains negative-cost, or equivalently, positive-gain, cycles. This makes the problem much more challenging than standard shortest paths problems. 

More specifically, for every two vertices $s$ and~$t$ in the graph, the algorithm computes $\alpha_B(s,t)$, the maximum amount of charge the car can reach~$t$ with, if it starts at~$s$ with full battery, i.e., with charge~$B$, where~$B$ is the capacity of the battery. The algorithm also outputs a concise description of the optimal energetic paths that achieve these values. In the presence of positive-gain cycles, optimal paths are not necessarily simple. For dense graphs,
our new $\tilde{O}(n^{3.5})$ time algorithm improves on a previous $\tilde{O}(mn^{2})$-time algorithm of Dorfman et al. [ESA 2023] for the problem. 

The \emph{gain} of an arc is the amount of charge added to the battery of the car when traversing the arc. The charge in the battery can never exceed the capacity~$B$ of the battery and can never be negative. An arc of positive gain may correspond, for example, to a downhill road segment, while an arc with a negative gain may correspond to an uphill segment. A positive-gain cycle, if one exists, can be used in certain cases to charge the battery to its capacity. This makes the problem more interesting and more challenging. As mentioned, optimal energetic paths are well-defined even in the presence of positive-gain cycles. Positive-gain cycles may arise when certain road segments have magnetic charging strips, or when the electric car has solar panels.

Combined with a result of Dorfman et al. [SOSA 2024], this also provides a randomized $\tilde{O}(n^{3.5})$-time algorithm for computing \emph{minimum-cost paths} between all pairs of vertices in an $n$-vertex graph when the battery can be externally recharged, at varying costs, at intermediate vertices.
\end{abstract}

\section{Introduction}\label{S:intro}

Let $G=(V,A,c)$ be a weighted directed graph, where $V$ is the set of vertices,  $A\subseteq V\times V$ is the set of arcs, and where $c:A\to\RR$ is a real-valued \emph{cost} function defined on the arcs. The cost $c(uv)$ of an arc $uv\in A$ \footnote{For brevity we denote an arc from~$u$ to~$v$ by $uv$, rather than $(u,v)$.} is the amount of energy consumed when traversing the arc. Throughout most of this paper, it is more convenient to work with a gain function $g:A\to\RR$ rather than a cost function. The \emph{gain} $g(uv)$ of an arc $uv\in A$ is simply $g(uv)=-c(uv)$, i.e., the amount of energy gained by traversing the arc.
The gain $g(uv)$ is negative if moving from~$u$ to~$v$ requires spending energy, or positive if energy is gained by moving from~$u$ to~$v$.

A weighted directed graph $G=(V,A,g)$, where $g:A\to\RR$ is a gain function, may be viewed as modeling a road network on which an electric car can roam. The electric car is assumed to have a battery of \emph{capacity}~$B$, where $B>0$ is a parameter, i.e., it can store up to~$B$ units of energy. The \emph{charge}, i.e., the amount of energy in the battery, can never be negative, and can never exceed the capacity of the battery. If the car is currently at vertex~$u$ with charge~$b$ in its battery, where $0\le b\le B$, then it can traverse an arc $uv\in A$ if and only if $b+g(uv)\ge 0$. If this condition holds, and the car traverses the arc, then it reaches~$v$ with a charge of $\min\{b+g(uv),B\}$. The car can traverse~$uv$ if $b+g(uv)>B$, but the battery does not charge beyond its capacity of~$B$. The car can traverse a path if and only if it can sequentially traverse its arcs. Throughout most of the paper we assume that no external charging of the battery is allowed. The battery is only charged by traversing arcs with positive gain. We may assume that $g(uv)\in[-B,B]$, for every $uv\in A$, as arcs with $g(uv)<-B$ can never be used, and can thus be removed, and gains $g(uv)>B$ can be changed to $g(uv)=B$ without changing the problem.

We consider the following two related natural questions:

\begin{enumerate}[topsep=3pt,left=15pt,itemsep=3pt,parsep=0pt]
    \item Given two vertices $s,t\in V$, what is the \emph{maximum final charge}, denoted $\alpha_B(s,t)$, with which the car can reach~$t$ if it starts at~$s$ with full battery, i.e., with a charge of~$B$? If the car cannot reach~$t$ even with an initial charge of~$B$ at~$s$, we let $\alpha_B(s,t)=-\infty$. More generally, we let $\alpha_b(s,t)$, where $0\le b\le B$, be the maximum final charge with which the car can reach~$t$ if it starts at~$s$ with a charge of~$b$.
    
    \item Given two vertices $s,t\in V$, what is the \emph{minimum initial charge} at~$s$, denoted $\beta_0(s,t)$, that enables the car to reach~$t$? If the car cannot reach $t$ even with an initial charge of~$B$ at~$s$, we let $\beta_0(s,t)=\infty$. More generally, we let $\beta_b(s,t)$, where $0\le b\le B$, be the minimum initial charge at~$s$ required for reaching~$t$ with a charge of at least~$b$.
\end{enumerate}

It is not difficult to see, as shown in Dorfman et al. \cite[Corollary 5.2]{DorfmanKTZ23}, that $\beta_0(s,t)=B-\cev{\alpha}_B(t,s)$, where $\cev{\alpha}_B(t,s)$ denotes the maximum final charge at~$s$ when starting at~$t$ with full battery in the \emph{reverse} of the graph. Thus, the problems of computing maximal final charges and minimum initial charges are computationally equivalent. (Note, however, that due to the reverse operation used, the single-source version of the maximum final charge problem becomes equivalent to the single-target version of the minimum initial charge problem.) In this paper, we only work with maximal final charges.

If all arc costs are nonnegative, i.e., all gains are nonpositive, then it is easy to see that $\beta_0(s,t)=\delta(s,t)$, and $\alpha_B(s,t)=B-\delta(s,t)$, if $\delta(s,t)\le B$, where $\delta(s,t)$ is the standard distance from~$s$ to~$t$ with respect to the costs of the arcs. Otherwise, $\beta_0(s,t)=\infty$ and $\alpha_B(s,t)=-\infty$. When costs and gains can be both positive and negatives, the problem becomes more complicated. If there are no positive-gain cycles in the graph, the problem can be solved using fairly simple adaptations of standard shortest paths algorithms. Thus, the single-source version of the maximal final charges problem can be solved in $O(mn)$ time using an adaptation of the classical Bellman-Ford algorithm \cite{Bellman58,Ford56}, and the all-pairs version of the problem can be solved in $O(mn+n^2\log n)$ time by an adaptation of the classical algorithm of Johnson \cite{Johnson77}. For these results see, Artmeier, Haselmayr, Leucker and Sachenbacher \cite{artmeier2010shortest}, Eisner, Funke and Storandt \cite{EFS11}, Brim and Chaloupka \cite{BrCh12}, and Dorfman, Kaplan, Tarjan and Zwick \cite{DorfmanKTZ23}. 

The problem becomes much harder when the graph may contain positive-gain cycles. Part of the difficulty is that optimal paths, which are still well-defined, are not necessarily simple and might have to `hop' from one positive-gain cycle to another, until gaining enough charge to head directly to the destination. (See Lemma~\ref{lemma:optimal-structure} below.)
H\'{e}lou\"{e}t et al.~\cite{helouet2019reachability} obtained a polynomial time algorithm for the decision problem of determining whether $\beta_0(s,t) \le B$.
Dorfman et al.~\cite{DorfmanKTZ23} obtained an $O(mn+n^2\log n)$-time algorithm for the single-source version of the problem, which of course implies an $O(mn^2+n^3\log n)$-time algorithm for the all-pairs version. 

Our main result is a randomized $\tilde{O}(n^{3.5})$-time\footnote{The $\tilde{O}(\cdot)$ hides logarithmic factors.} algorithm for solving the all-pairs versions of the maximal final charge problem, and hence also the minimum initial charge problem, improving by a $\Theta(\sqrt{n})$ factor for sufficiently dense graphs on the $O(mn^2+n^3\log n)$ running time of the algorithm of Dorfman et al.~\cite{DorfmanKTZ23}. To appreciate our result, we draw a parallel to standard shortest paths. On a graph with $n$ nodes and $m$ edges (and a suitable potential function), the single source shortest path problem can be solved in $O(m + n\log n)$ time, leading to an $O(mn+n^2 \log n) = O(n^3)$ all pairs algorithm for dense graphs. A breakthrough result by Williams~\cite{williams2014faster} achieved an $O(\frac{n^3}{2^{c\sqrt{\log n}}})$ time all pairs algorithm, shaving a subpolynomial factor for dense graphs.

All the discussion so far assumed that that battery cannot be recharged at intermediate vertices. A natural variant is obtained when we assume that the battery can be charged at some of the vertices of the graph, with a cost per unit of charge that may vary from vertex to vertex. The goal then, is to find \emph{minimum-cost paths} within all pairs of vertices in the graph. This problem was considered by Khuller, Malekian and Mestre \cite{khuller2011fill} in the context of conventional, gas-operated, cars, i.e., when all arc costs are positive, and by Dorfman, Kaplan, Tarjan, Thorup and Zwick \cite{DKTTZ24} in the context of electric cars, i.e., when the costs, or gains, can be both positive and negative, and where there might be positive-gain cycles. The main result of Dorfman et al.~\cite{DKTTZ24} is a reduction from the all-pairs minimum-cost paths problem to the all-pairs maximal final charges and minimum initial charges problems, and to the standard all-pairs shortest paths problem. Combined with the results of Dorfman et al.~\cite{DorfmanKTZ23}, this implies an $O(mn+n^2\log n)$-time algorithm for the all-pairs minimum-cost paths in graphs with no positive-gain cycles. Combined with our result, we obtain a randomized $\tilde{O}(n^{3.5})$-time algorithm for the all-pairs minimum-cost paths in graphs that may contain positive-gain cycles.

To obtain the improved algorithm we need to introduce many new ideas. We next try to give a rough intuitive description of some of them, ignoring some technicalities that will be dealt with later.

Optimal energetic paths can be very long. (Their length cannot be bounded as a function of~$n$ alone. A bound must also take the arc gains and the capacity of the battery into account. For more details, see \cite{DorfmanKTZ23}.) A natural idea to reduce the length of optimal energetic paths is to introduce \emph{shortcuts}, i.e., add new arcs that correspond to possibly long paths in the graph. In the standard shortest paths problem, any path in the graph can be used to generate a shortcut, with the gain (or cost) of the arc equal to the sum of the gains of the arcs on the path. This is far from being the case for energetic paths. Consider, for example, a path $xyz$ with $g(xy)=-1$ and $g(yz)=1$. We cannot add a new arc $xz$ with $g(xz)=0$ to the graph since an electric car with an empty battery would be able to traverse the new arc $xz$, but not the original path $xyz$.

Ignoring some technicalities, we can add a shortcut corresponding to a traversable path in the graph (a path that can be traversed if we start with full battery) if the path is \emph{ascending} or \emph{descending}. (See Figure~\ref{fig:example_funnel_monotone}(a)-(b).) Given a path $u_0u_1\ldots u_{k}$, let $a_i=\sum_{j=0}^{i-1} g(u_ju_{j+1})$, for $0\le j\le k$, be the prefix sums of the gains along the path. We say that a path is \emph{ascending} if $0\le a_i\le a_k$, for every $1\le i\le k$, and descending if $a_k\le a_i\le 0$, for every $1\le i\le k$. If $u_0u_1\ldots u_{k}$ is ascending or descending, then we are allowed to add a shortcut $u_0u_k$ with $g(u_0u_k)=a_k$. For brevity, we refer to ascending or descending paths as \emph{monotone}.\footnote{Note that this does not imply that the prefix sums $a_1,a_2,\ldots,a_k$ form a monotone sequence.}

Unfortunately, most paths are not monotone. Furthermore, subpaths of monotone paths are not necessarily monotone. 
There may also be very long paths that do not contain any monotone subpath. We refer to such paths as \emph{funnels}. Examples of funnels are given in Figure~\ref{fig:example_funnel_monotone}$(c)$-$(f)$.

Our algorithm constructs monotone paths and funnels and combines them to obtain new monotone paths and funnels until enough information is available to find the optimal energetic paths. The exact details, some of which are quite delicate, appear in the rest of the paper.

Another idea used by our new algorithm is \emph{sampling}. It is well known that a random set of vertices of size $(cn\log n)/k$ is likely to hit any given path of length at least~$k$. Taking advantage of this fact in our context is again much more complicated.

The rest of this extended abstract consists of a technical review of our algorithm and main techniques. The full version of the paper is given in the appendix.

\begin{figure}[t]
    \centering
    \includegraphics[width=1.00\textwidth]{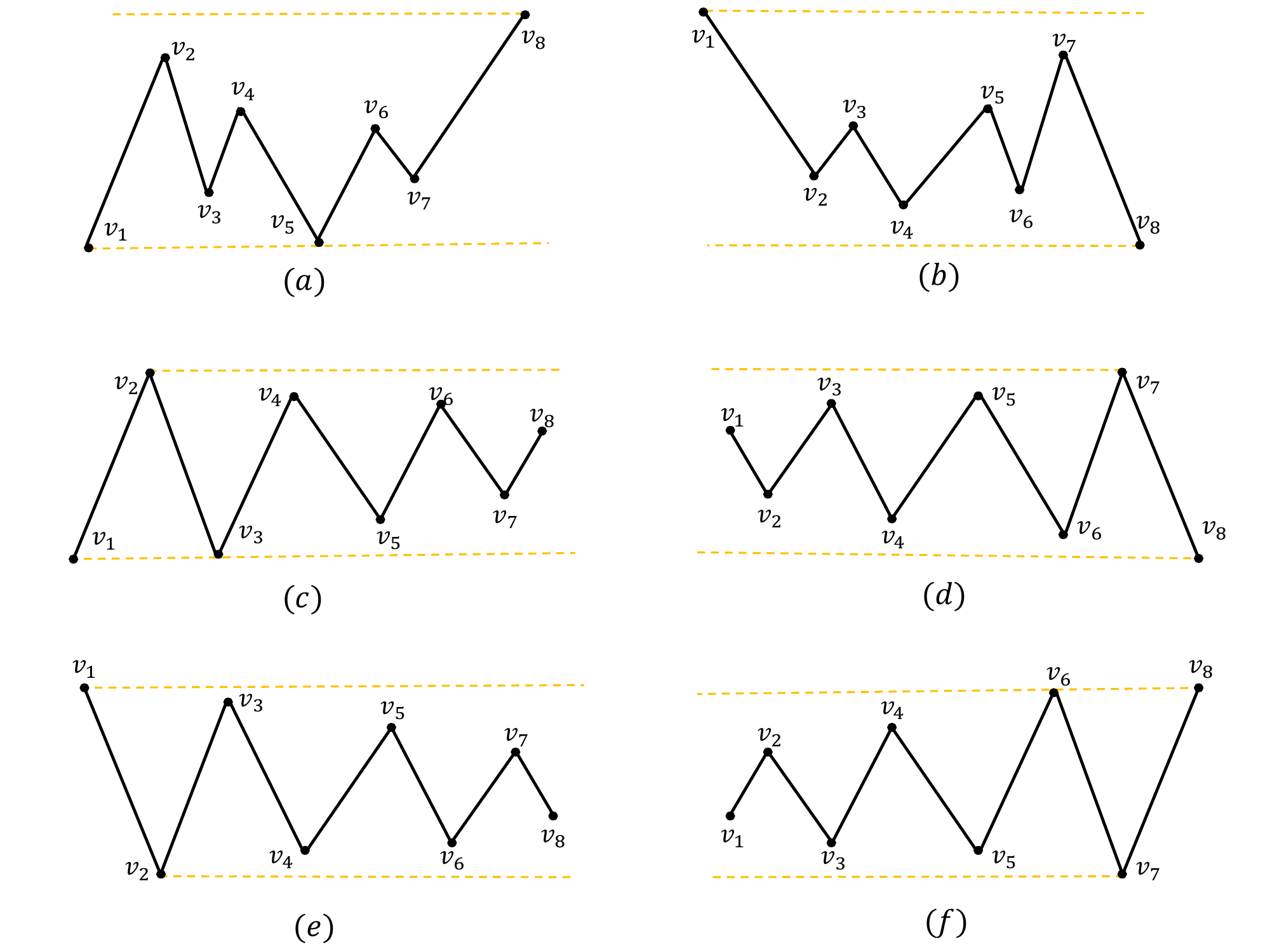}
    \caption{
    The graphs represent directed paths going from left to right. The vertical height of an arc $e$ in the figure is $|g(e)|$. 
    The vertical height of a vertex is its gain on the path (i.e. sum of arc gains). 
    Figures $(a)$ and $(b)$ show an ascending path and a descending path, respectively. 
    Figures~$(c)$-$(f)$ show the four possible cases for funnels. Note that in Figure~$(c)$, $v_3$ (which is the endpoint of the second arc of the funnel) has the same gain as $v_1$, this is valid.}
    \label{fig:example_funnel_monotone}
\end{figure}

\section{Technical Review}\label{sec:technical-review}

A main tool in our algorithm is \emph{shortcutting}. In the setting of standard shortest paths, any path $P=v_1\ldots v_k$ can be shortcutted to a single arc $v_1 v_k$ of gain $g(P)= \sum_{i=1}^{k-1} g(v_i v_{i+1})$ without affecting the lengths of the shortest paths. Unfortunately, because of the upper and lower bound constraints on the battery, this technique breaks down when applied to energetic paths. That is, by shortcutting arbitrary paths, we may change the optimal energetic paths. For example, assume $B = 10$ and let $G$ be a graph that is composed of two paths $P_1=v_1 v_2 v_3$ and $P_2 = u_1 u_2 u_3$, where $g(v_1 v_2) = g(u_2 u_3) = -5$ and $g(v_2 v_3) = g(u_1 u_2)=5$. 
Observe that $\alpha_0(v_1, v_3) = -\infty$ and $\alpha_{10}(u_1, u_3) = 5$. 
On the other hand, by shortcutting the paths $v_1 v_2 v_3$ and $u_1 u_2 u_3$ (to arcs of gain $0$) we will be able to reach $v_3$ from $v_1$ when starting with zero charge. Moreover by using the new $0$ gain arc $u_1 u_3$ the maximum final charge at $u_3$ (when starting with $10$ charge at $u_1$) becomes $10$.

The above discussion encourages us to find \emph{safe} paths that can be shortcutted without affecting the optimal energetic paths (i.e., without affecting the $\alpha$ values). We call these paths \emph{monotone paths}, see Definition~\ref{def:monotone-path} and Figure~\ref{fig:example_funnel_monotone}.  
Monotone paths are either \emph{ascending} or \emph{descending}. An ascending path $P = v_1\ldots v_k$ is a \emph{traversable} path that satisfies that whenever an electric car traverses $P$,  the car has minimum charge at $v_1$ and maximum charge at $v_k$. A traversable path $P = v_1 \ldots v_k$ is a path that does not contain any subpath $v_i \ldots v_j$ of gain smaller than $-B$ (this is equivalent to saying that a car that starts at $v_1$ with full charge can traverse $P$ without the charge level going below zero).
Similarly, a descending path $P = v_1\ldots v_k$ is a traversable path that satisfies that whenever an electric car traverses $P$,  the car has max charge at $v_1$ and minimum charge at $v_k$ (in particular, the gain of a descending path is at least $-B$).
A monotone path avoids the two problems mentioned in the previous example: The charge level of an ascending path never drops below the charge level at $v_1$ and therefore the path $v_1 v_2 v_3$ from the previous example cannot be shortcutted. Moreover, since the charge level remains below the charge level at $v_k$, shortcutting $P$ does not create an alternative path from $v_1$ to $v_k$ that improves the final charge at $v_k$, similarly to what happened with the path $u_1 u_2 u_3$ from the previous example. 

We prove in Theorem~\ref{theorem:shortcut} that in $\tilde{O}(n^{3.5})$ time we can compute a $2$-dimensional table $M[\cdot][\cdot]$ that \emph{dominates} all simple monotone paths. That is, for every simple monotone path $P = v_1\ldots v_k$, it holds that $M[v_1][v_k] \ge g(P)$. Moreover, the table $M$ is \emph{sound}. That is, for every $u,v\in V$, if $M[u][v] \neq -\infty$, then there exists a monotone path $P$ (not necessarily simple) from $u$ to $v$ such that $g(P) \ge M[u][v]$.
Since monotone paths are traversable, it follows that if $M[u][v] \neq -\infty$, then $M[u][v] \ge -B$. Note that it is possible that $M[u][v] > B$.
Once we have computed $M$, solving the all pairs $\alpha_B(\cdot,\cdot)$ problem is rather simple, we explain this derivation at the end of the technical review. 

The following is a high level description of the computation of $M$. For simplicity, in this short review, we only describe how to dominate \emph{ascending} paths. A simple observation is that every monotone path $P$ contains a monotone subpath of edge-length $2$ or $3$. We call such a path a \emph{short} monotone path. Thus, by shortcutting such a short monotone subpath into a single arc, we get an ascending path $P'$ of smaller length than $P$ and larger or equal gain than $g(P)$. This observation leads to a trivial $\tilde{O}(n^4)$ algorithm: Perform $n$ iterations and generate a series of graphs $G_0=G,G_1,\ldots, G_n$. In the $i$'th iteration we find for every $u,v$ the largest gain short monotone path from $u$ to $v$ in $G_i$. Once we have found all such gains, we build $G_i$ by increasing the gains of every arc\footnote{We assume $G_{i-1}$ is a full graph by adding arcs of gain $-\infty$.} 
$(u,v)$
in $G_{i-1}$ if there is a corresponding short monotone path from $u$ to $v$ of a better gain. We can implement each iteration in $\tilde{O}(n^3)$ time using a BST data structure. The table $M$ stores the gains of the arcs of final graph $G_n$.
Given a simple ascending path $P$ in $G_0$, this process implicitly constructs a series of paths $P_i \in G_i$, where $P_i$ is obtained from $P_{i-1}$ by shortcutting as many short monotone paths as possible and $P_n$ is a single arc.\footnote{Note that $P_i$ is not uniquely defined since short monotone paths may overlap. Moreover, we might perform shortcuts in $P_{i-1}$ because of short monotone paths that appear in $G_{i-1}$ and not in $P_{i-1}$.}

An immediate question is whether $\Theta(n)$ iterations are necessary. The answer is yes. The reason for this are \emph{double-funnels} (see Figure~\ref{fig:double-funnel_mono}(a)). A path $P = v_1 \ldots v_k$ is a \emph{double-funnel} if $P$ does not contain a short monotone subpath.
Double-funnels can have $\Theta(n)$ edges and an ascending monotone path which consists mainly of a long double-funnel would require $\Theta(n)$ iterations to be shortcutted into a single arc, see Figure~\ref{fig:double-funnel_mono}(b).

\begin{figure}
    \centering
    \includegraphics[width=1\linewidth]{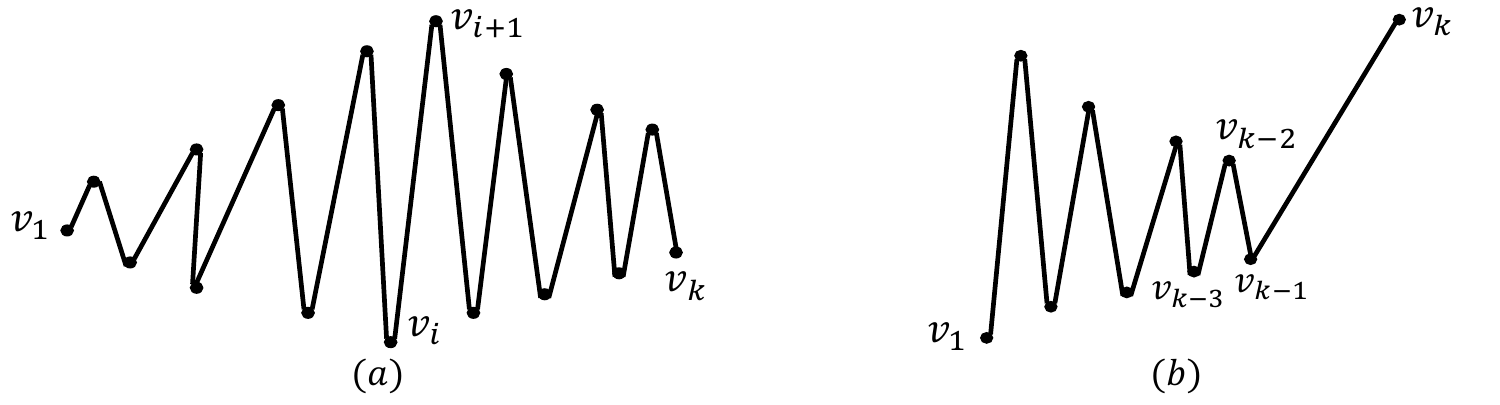}
    \caption{On the left: a double-funnel. On the right: worst case example for the simple algorithm. The depicted (directed) path $P=v_1\ldots v_k$ is monotone. Since $v_1\ldots v_{k-1}$ is a double-funnel, the only short monotone subpath of $P$ is $v_{k-3} v_{k-2} v_{k-1} v_{k}$. Assume $G$ is a path graph that contains only the path $P$. After the first iteration of shortcutting short monotone paths, we are left with the path $P_1 = v_1\ldots v_{v-3} v_k$ that has a similar structure to $P$. Thus, $\lfloor \frac{k}{2} \rfloor$ iterations are necessary in order to shortcut $P$ into a single arc.}
    \label{fig:double-funnel_mono}
\end{figure}

As a consequence of the  discussion above, in order to improve upon the simple algorithm, we need to handle double-funnels and reduce the number of iterations.  A simple observation is that every ascending path can be viewed as an alternation between double-funnels (that are maximal with respect to inclusion) and short monotone paths, see Figure~\ref{fig:alternation-funnel-shortcut}. Indeed, by the definition of a double-funnel, if we extend a double-funnel that is maximal with respect to inclusion by a single arc, the path  ceases to be a double-funnel and therefore contains a short monotone path.

Let $P$ be an ascending path such that $P$ is not shortcutted to a single arc after $T = \sqrt{n}$ iterations of the simple algorithm.
Let $P_0,\ldots, P_T$ (paths in $G_0,\ldots, G_T$, respectively) be the corresponding sequence of ascending paths as we defined before. For every $i=0,\ldots T$, denote by $f_i$ the number of (maximal with respect to inclusion) double-funnels in $P_i$.
By the interleaving property of double-funnels and short monotone paths, for every $i=0,\ldots, T-1$, the number of short-monotone subpaths in $P_i$ is at least $f_i$ and therefore $|P_{i+1}| \le |P_i| - f_i$ (where $|Q|$ denotes the number of arcs in a path $Q$). Since $P=P_0$ is a simple path (and thus of length at most $n-1)$, and since we can uniquely charge a  short monotone path that we shortcut at iteration $i$ to each funnel in $P_i$ it follows that $\sum_{i=1}^T f_i < n$, so the average number of funnels per iteration (of the $T$ iterations that we consider) satisfies $\frac{1}{T}\sum_{i=1}^T f_i = \frac{1}{\sqrt{n}}\sum_{i=1}^{\sqrt{n}} f_i < \sqrt{n}$. 
By Markov's inequality, in at least $\frac{1}{2}T=  \frac{1}{2}\sqrt{n}$ iterations, $f_i \le 2\sqrt{n}$. Thus, in at least half of the $T$ iterations, the paths $P_i$ have $O(\sqrt{n})$ double-funnels. By sampling uniformly at random $\Theta(\log n)$ iterations, we are guaranteed to ``hit" such an iteration w.h.p..  The final component of our algorithm is the procedure $\LS(G_i)$, that, given a path $P_i$ with $O(\sqrt{n})$ double-funnels, finds \emph{long} shortcuts (i.e., shortcuts that correspond to monotone paths that could be of any length) in $P_i$, resulting in a path $P_{i+1}$ that is shorter than $P_i$ by a constant factor.

Based on the above discussion, our algorithm proceeds as follows. We perform $\tilde{\Theta}(\sqrt{n})$ iterations. In each iteration we find all short monotone path and shortcut them (this results in a modified graph with larger arc gains). Moreover, in each iteration, with probability $\tilde{\Theta}(\frac{1}{\sqrt{n}})$ we additionally call $\LS$ which finds long monotone paths in the current graph, shortcuts them, and returns a modified graph.

\begin{figure}
    \centering
    \includegraphics[width=1\linewidth]{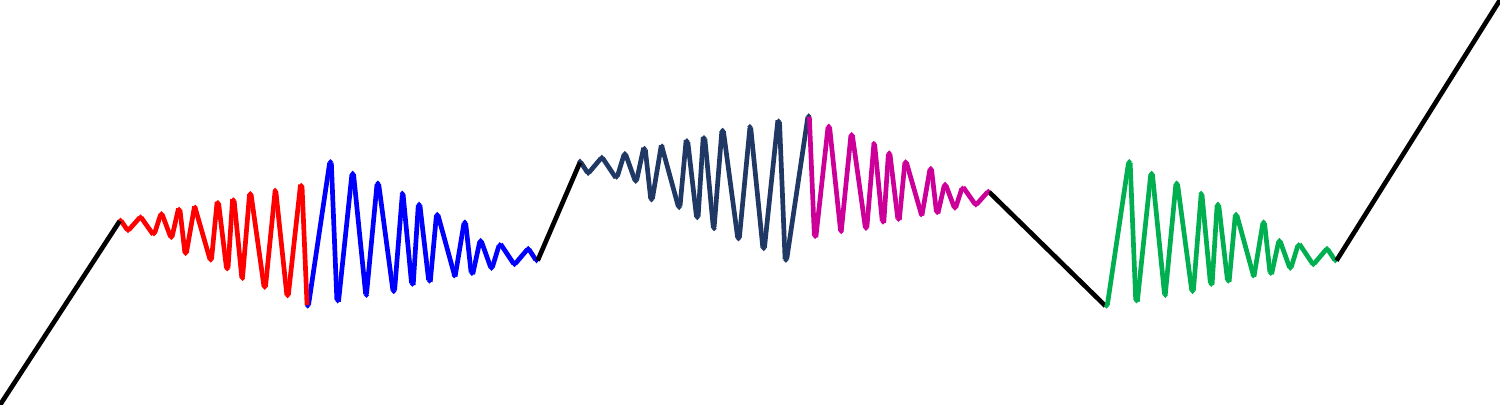}
    \caption{A decomposition of an ascending path to double-funnels that are maximal with respect to inclusion. Observe that ``the gap" between two double-funnels contains a short-monotone path. The double-funnels are split into two funnels. Note that the green double-funnel is not maximal with respect to inclusion (it can be extend backwards by 2 arcs), this was done for aesthetic reasons to show ``the gap" after the purple funnel.}
    \label{fig:alternation-funnel-shortcut}
\end{figure}

We now describe the procedure $\LS(G_i)$. We extensively use two path structures in $\LS$:  \emph{Arc-bounded} paths and \emph{funnels}, see Figure~\ref{fig:example_funnel_monotone}. A path $P = v_1\ldots v_k$ is first arc-bounded  if for every $i = 2\ldots, k$, it holds that $ \sum_{j=1}^{i-1}g(v_j v_{j+1}) \le \max \{0, g(v_1 v_2) \}$ and $ \sum_{j=1}^{i-1}g(v_j v_{j+1}) \ge \min \{0, g(v_1 v_2) \}$. 
A last arc-bounded path is defined analogously. A path is arc-bounded if it is either first or last arc-bounded. A path $P$ is a funnel if it is both arc-bounded and a double-funnel. 
Observe that any double-funnel can be decomposed to two funnels, each starts or ends at the edge of largest gain in absolute value, see Figure~\ref{fig:alternation-funnel-shortcut}.
Given the current graph $G_i$, $\LS(G_i)$ stores a table $D[\cdot][\cdot]$ such that for every $u,v,w\in V$, $D[uv][w]$ stores the largest recorded gain of a first arc bounded path in $G_i$ that starts with the arc $uv$ and ends at $w$ ($D[u][vw]$ is defined similarly for last arc-bounded paths). Algorithm $\LS$ first generates arc-bounded paths (that is, stores values in the table $D$) and finally, finds long monotone paths based on those arc bounded paths. To ease the explanation, we begin by demonstrating the latter. 

\subsection{Generating monotone paths from arc-bounded paths}\label{sec:arc-bounded-to-mono-review}
This part is straightforward: Given a vertex $u\in V$, we consider all arc-bounded paths that start at $u$ and we extend each by a single arc: We scan all triplets $v,w,x\in V$, such that $D[uv][w] \neq -\infty$, and ``concatenate" the arc-bounded path $P^{uv,w}$ that corresponds to $D[uv][w]$ with the arc $wx$, resulting in a path $P^{uv,x}$ to $x$ that starts with $uv$.\footnote{We do not actually store paths. Instead we examine the quantity $D[uv][w]+g(wx)$.} Assume $g(uv)>0$ (other cases are similar). If this concatenated path  remains arc bounded then we did not find a monotone path. Otherwise, either $D[uv][w]+g(wx) > g(uv)$ or $D[uv][w]+g(wx) < 0$. It is easy to see that in the former case, $P^{uv,x}$ is ascending (see Figure~\ref{fig:example-arc-bounded-to-mono}(a)), and in the latter case the subpath from $v$ to $x$ is descending (see Figure~\ref{fig:example-arc-bounded-to-mono}(b)). It is easy to see that the running time of this process is $O(n^3)$.

\begin{figure}
    \centering
    \includegraphics[width=1\linewidth]{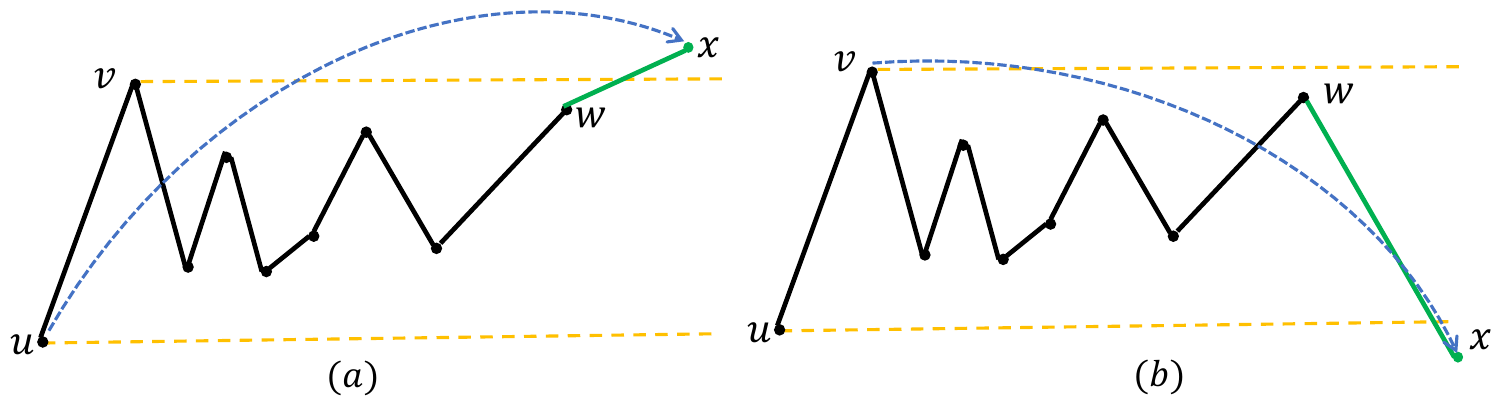}
    \caption{Finding a monotone path by extending an arc-bounded path by a single arc.}
    \label{fig:example-arc-bounded-to-mono}
\end{figure}

\subsection{Finding arc-bounded paths}\label{sec:technical-review-finding-arc-bounded-paths}
As already discusses, any path can be viewed as an alternation between double-funnels (which are just two funnels that are concatenated) and short monotone paths. Thus, handling funnels has a crucial role.

We compute arc bounded paths using two building blocks. 
\begin{enumerate}
    \item A  procedure $\ComputeF(H)$ to compute funnels. Given a graph $H$, $\ComputeF(H)$ returns a table $D[\cdot][\cdot]$ that dominates every funnel (which is a simple path) in $H$. That is, for every funnel $P = v_1 \ldots v_k$ that is first arc-bounded, it holds that $D[v_1 v_2][v_k] \ge g(P)$. Similarly, for every funnel $P = v_1 \ldots v_k$ that is last arc-bounded, it holds that $D[v_1 ][v_{k-1}v_k] \ge g(P)$. Moreover, the table $D$ is \emph{sound}. That is, for every $u,v,w\in V$, if $D[uv][w]\neq -\infty$, then there exists a first arc-bounded path $Q = v_1\ldots v_k$ (not necessarily a funnel) such that $g(Q) \ge D[uv][w]$. For the full details, see Appendix~\ref{sec:funnels}
    \item A concatenation procedure $\Concat(H,D,v)$. Given a graph $H$, a table $D[\cdot][\cdot]$ and a vertex $v\in V$. The procedure, in a brute force manner, scans all 4-tuples $(w,x,y,z)$ of vertices and then tries to concatenate a first arc-bounded path in $D$ that start with the arc $vw$ and end at $x$ with first arc-bounded path that start with the arc $xy$ and end at $z$.\footnote{Formally, we look on the quantity $D[vw][x]+D[xy][z]$ and verify some inequalities to make sure that the concatenated path is indeed first arc-bounded.} Note that this procedure only generates arc-bounded paths that start at $v$.
    For the full details, see Appendices~\ref{section:concatenate} and~\ref{section:CO}.
    A naive implementation of this procedure takes $O(n^4)$ time. 
    Using a balanced binary search tree, we get a running time of $\tilde{O}(n^3)$. Since our claimed running time for the entire algorithm is $\tilde{O}(n^{3.5})$, we can use the $\Concat$ procedure only $\tilde{O}(n^{0.5})$ times.
\end{enumerate}

We now describe $\LS(H)$ and the intuition about it.
The algorithm starts by calling to $\ComputeF(H)$, which in $\tilde{O}(n^{3\frac{1}{3}})$ time computes a table $D[\cdot][\cdot]$ that dominates all simple funnels in $H$. The algorithm then samples uniformly at random sets $S_i\subseteq V$ of size $\tilde{O}\left( \frac{\sqrt{n}}{2^i}\right)$, for $i=1,\ldots, \log (\sqrt{n})$. Then, for every $i=1,\ldots, \log (\sqrt{n})$ and $u \in S_i$ we perform $2^i$ times the procedure $\Concat(H,D,u)$. Finally, we extract monotone paths by applying the procedure from Appendix~\ref{sec:arc-bounded-to-mono-review} on every vertex in $S = \cup_i S_i$.

We now give the intuition behind the algorithm. Recall the discussion about ``hitting" an iteration in which $P_i = v_1\ldots v_k$ (an ascending path in $G_i$, for some $0\le i \le T = \sqrt{n}$, that represents the evolution of $P=P_0$ over the iterations of shortcutting) has at most $2\sqrt{n}$ double-funnels. Assume we run $\LS(G_i)$. Every vertex $v_j \in P_i$ defines a first arc-bounded path $P' = v_j \ldots v_t$, where $j\le t \le k$ is maximal such that $v_j\ldots v_t$ is first arc-bounded, see Figure~\ref{fig:example-different-sampling}.
Note that $P'$ may contain several double-funnels, say $f$. Thus, if we apply $\Concat(G,D,v_j)$, $\Theta(f)$ times, the table $D$ will ``find" $P'$ (that is we will have $D[v_{j} v_{j+1}][v_t] \ge g(P')$).
By the discussion in Appendix~\ref{sec:arc-bounded-to-mono-review}, if we extend $v_j \ldots v_t$ by the arc $v_t v_{t+1}$ we will find a monotone path of length $t-j + O(1)$. For this process to be efficient, we have to balance the work we do (which is proportional to the number of funnels in $P'$ which is the number of calls to concatenate that we need to do to find $P'$) to compute $P'$ with the reward we achieve (which is proportional to the length of $P'$) by shortcutting the monotone path corresponding to $P'$. 

We are shooting for a running time of $O(n^{3.5})$,
therefore as we already said we can call concatenate at most $O(\sqrt{n})$ times (recall that it works for a single particular vertex at each call).
In particular, for every $i=1,\ldots, \log (\sqrt{n})$,
the product of $|S_i|$ and the number of calls of concatenate from each vertex of $S_i$ should be $O(\sqrt{n})$.
To explain why we need the $O(\log(n))$ levels of sampling, we consider the two extreme cases which our sampling interpolates between.
That is, the case of
$i=\log (\sqrt{n})$
where $S_i=O(1)$ and the case of
$i=1$ where $|S_i|=O(\sqrt{n})$. 

These two cases are demonstrated in Figure~\ref{fig:example-different-sampling} for a path $P_i$ of length $\Theta(n)$  and 
$\Theta(\sqrt{n})$ funnels. The first example ($i=\log (\sqrt{n})$), depicted in Figure~\ref{fig:example-different-sampling}$(a)$, considers the case in which all funnels, except for the first one, are of constant length
and the rest is filled with the first funnel which is of linear size.
Moreover, the arc-bounded paths that correspond (in the manner explained in the previous paragraph) to every vertex in a short funnel are of constant length and the arc-bounded paths that correspond to vertices in the long funnel are all reaching the last arc of the path. Thus, in order to achieve sufficient reward (i.e., find long enough monotone paths), we  have to sample a vertex $u$ in the long funnel and then perform $\Theta(\sqrt{n})$ times $\Concat(G_i, D, u)$. Thus, the example shows that there are cases in which we have to perform $\Theta(\sqrt{n})$ concatenations at a single vertex. 

The second extreme case, depicted in Figure~\ref{fig:example-different-sampling}$(b)$, is the case in which all funnels are of length $\Theta(\sqrt{n})$ and for every $v\in P_i$, the arc-bounded path that corresponds to $v$ contains a single funnel. Thus, for every $v\in P_i$ we can apply a single concatenation and find the arc-bounded path that corresponds to $v$ and later extend it to a monotone path of length $O(\sqrt{n})$. In this case to reduce the length of $P_i$ by a constant factor, we  have to sample $\Theta(\sqrt{n})$ vertices (that will hit a constant fraction of the funnels) and perform a constant number of concatenation on each one of them.

\begin{figure}[ht!]
    \centering
    \begin{tabular}{m{0.01\textwidth} m{0.9\textwidth}} % Adjust column widths
        $(a)$ & \includegraphics[width=\textwidth]{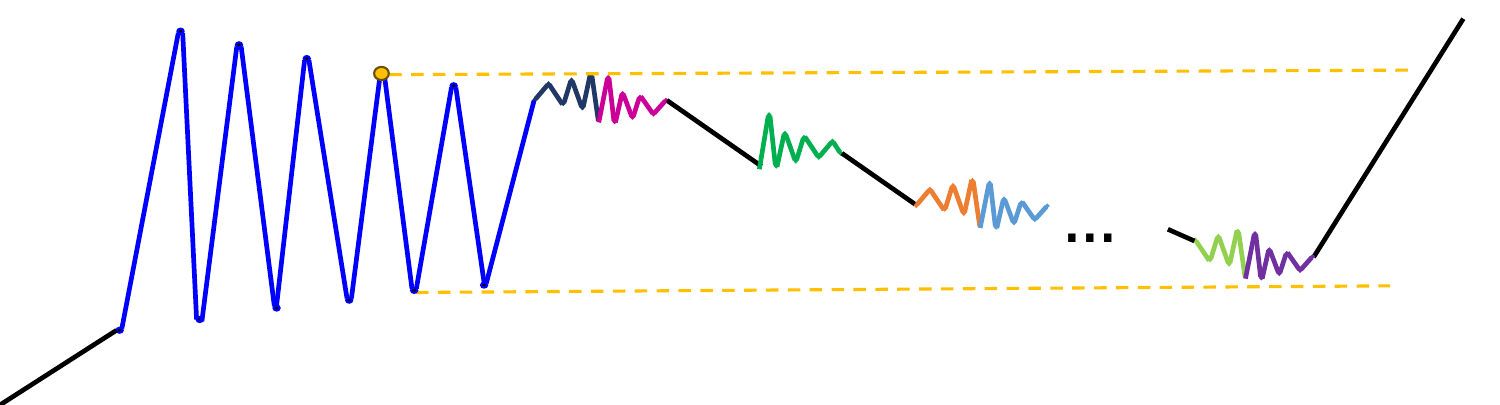} \\[1em]
        $(b)$ & \includegraphics[width=\textwidth]{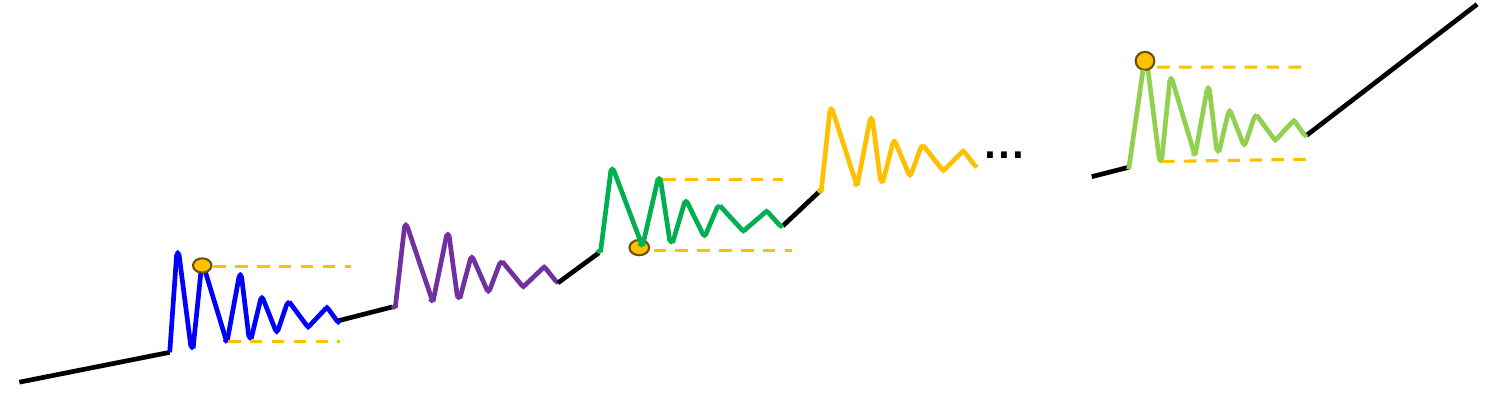} \\[1em]
    \end{tabular}
    \caption{Two extreme cases for algorithm $\LS$. Black lines represent single arcs. Figure $(a)$ shows why we need to sample $O(1)$ vertices but perform $\Theta(\sqrt{n})$ concatenations per vertex. Figure $(b)$ shows why we need to sample $\Theta(\sqrt{n})$ vertices but perform $O(1)$ concatenations per vertex. \label{fig:example-different-sampling}}  
\end{figure}

\subsection{Solving the all-pairs problem}
Finally, we briefly describe the key observations that relate monotone paths to the computation of $\alpha_B(\cdot,\cdot)$. We begin by assuming that the optimal energetic paths are simple and later show how to solve the general case in which the optimal paths use positive cycles.

\subsection{Simple energetic paths}\label{sec:technical-review-alpha-of-simple-paths}
Assume we have computed the table $M[\cdot][\cdot]$ that dominates every simple monotone path in $G$.
Let $s,t\in V$ and let $P = v_1\ldots v_k$ be an optimal energetic path from $v_1=s$ to $v_k=t$ (that is, $\alpha_B(s,t) = \alpha_B(P)$). We consider the special case in which $P$ is simple and for every $1 < i\le k$ it holds that $\alpha_B(v_1 \ldots v_i) < B$. That is, the car starts with full charge at $s$ and its charge level remains below $B$.
We decompose $P$ as follows. Let $v_{i_{1}} = s$ and let $v_{i_{2}}$ be the vertex of lowest gain in $P$. We define $v_{i_{3}}$ to be the vertex of highest gain in the suffix $v_{i_{2}}\ldots v_k$ and so on, see Figure~\ref{fig:mono-to-alpha-derivation}$(a)$. This results in a series of vertices $s=v_{i_{1}}, v_{i_{2}},\ldots, v_{i_{r}}=t$. Clearly, this partitioning divides $P$ into monotone segments that alternate between ascending and descending paths. A key observation is that these monotone paths are optimal in terms of gain. That is, for every $1\le j < r$, there is no monotone path $Q$ from $v_{i_{j}}$ to $v_{i_{j+1}}$ with larger gain than the subpath $v_{i_{j}}\ldots v_{i_{j+1}}$. Otherwise, we can replace the subpath $v_{i_{j}}\ldots v_{i_{j+1}}$ by $Q$ and increase the final charge at $t$,\footnote{We use here the fact that the battery is not full.} a contradiction to the optimality of $P$. Thus, for every $1\le j \le r$, it holds that $M[v_{i_{j}}][v_{i_{j+1}}] = g(v_{i_{j}}\ldots v_{i_{j+1}})$. Let $G'$ be a directed clique whose gains are defined by $M[\cdot][\cdot]$. The final observation is that $v_{i_{1}} v_{i_{2}}\ldots v_{i_{r}}$ is a funnel in $G'$. Thus, by calling $\ComputeF(G')$
we can find this funnel.

\subsection{Handling positive cycles}

A simple observation is that every positive gain cycle $C$ contains a pair of points $x,y\in C$ such that the car can start at $x$ with zero charge, and traverse the cycle until it reaches $y$ with a fully charged battery (i.e., $B$ charge).\footnote{It is possible that the car took the direct path in $C$ from $x$ to $y$, or it cycled through $C$ several times.} We say that $(x,y)$ is an \emph{entry-exit} pair of $C$, where $x$ is the \emph{entry} and $y$ is the exit.

We prove in Lemma~\ref{lemma:special-entry-exit}, that every positive cycle $C$ contains an entry-exit pair $(x,y)$ such that $C^{xy}$, the path from $x$ to $y$ through $C$, is ascending and $C^{yx}$, the path from $y$ to $x$ through $C$, is descending.\footnote{It is possible that $x=y$. For example in a cycle in which all arc gains are positive.} This lemma, leads to a simple algorithm for identifying entry-exit pairs: For every $x,y\in V$, if $M[x][y]>0$ and $M[x][y] + M[y][x] > 0$, then set $\alpha_0(x,y)=B$ (i.e., $(x,y)$ is an entry-exit pair).
The positive shortcut $M[x][y]$ indicates that there is an ascending path $P^{xy}$ from $x$ to $y$.
If $M[x][y]\ge B$ then clearly we can start at $x$ with zero charge and get to $y$ with full charge (by using the shortcut\footnote{Recall that using shortcuts does not change the $\alpha$ values since each shortcut corresponds to a monotone path in $G$ of the same gain.} $xy$ of gain $M[x][y]$). Otherwise, the second inequality $M[x][y] + M[y][x] > 0$ guarantees that we can start at $x$ with zero charge and get back to $x$ with positive charge (by using the shortcuts $xy$ and $yx$).
Therefore, by extending the path  to $y$, we generate an ascending path with larger gain $M[x][y]+M[y][x] + M[x][y] > M[x][y]$, see Figure~\ref{fig:mono-to-alpha-derivation}$(b)$. By repeating this multiple times, we get an ascending path from $x$ to $y$ with gain larger than $B$ justifying setting
$\alpha_0(x,y)=B$.

We perform $3$ additional simple inferences: For every $x,y,z\in V$

\begin{itemize}
    \item If $M[x][y] + M[y][z] \ge 0$ and $M[x][y] \ge 0$, we deduce that the path that consists of the two shortcuts $M[x][y],M[y][z]$ is a witness that $\alpha_0(x,z) \ge 0$. That is, it is possible to start at $x$ with zero charge and reach $z$: Either $M[x][y]\ge B$ and then the claim follows by the traversability of monotone paths ($M[y][z]\ge -B$) or $M[x][y] < B$ and therefore either  $M[y][z] \ge 0$ or $- M[x][y] \le M[y][z] < 0$. The former case is trivial. In the latter case, we can start with zero charge at $x$ and reach $y$ with $M[x][y]$ charge and then continue to $z$ and reach it with $M[x][y] + M[y][z] \ge 0$ charge.  
    \item If $M[x][y] + M[y][z]\ge 0$ and $M[y][z] \ge 0$, we deduce that $\alpha_B(x,z) = B$.
    \item If $M[x][y] \neq -\infty$ (so $M[x][y]\ge -B$), we infer that $\alpha_B(x,y) \ge 0$. That is, it is possible to reach $y$ if we start at $x$ with full charge.
\end{itemize}

\begin{figure}
    \centering
    \includegraphics[width=1\linewidth]{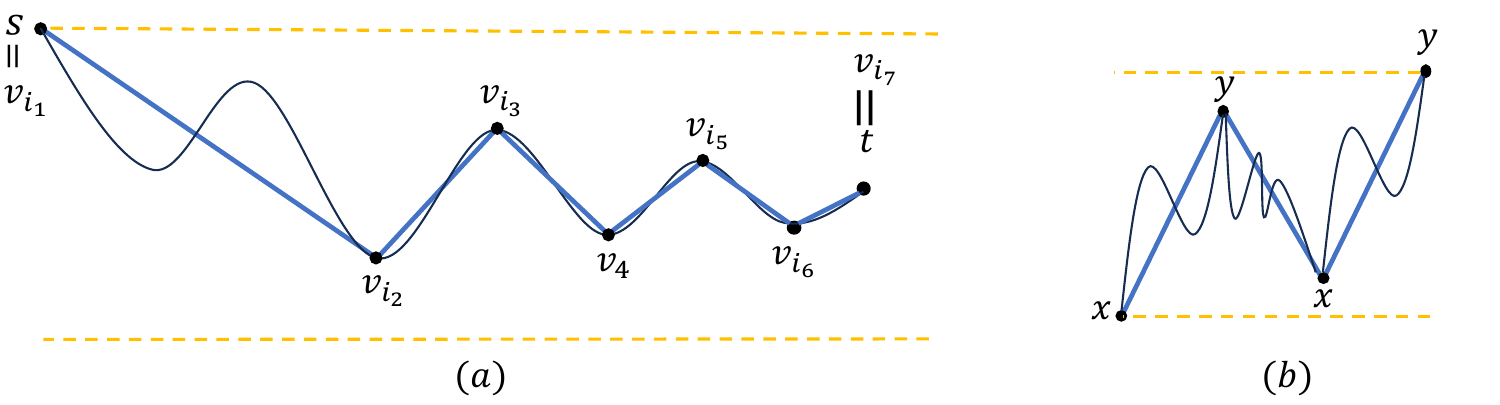}
    \caption{(a) A decomposition of an optimal path from $s$ to $t$ into a sequence of simple monotone paths. After shortcutting these paths, we are left with a funnel. (b) Illustration of why $M[x][y]+M[y][x]>0 \And M[x][y] >0$ leads to $\alpha_0(x,y)=B$. Each blue arc represents a shortcut in $M$. Each such shortcut can be unwrapped into a path in $G$}
    \label{fig:mono-to-alpha-derivation}
\end{figure}

Finally, we combine  these relations into a graph $H$ and compute its transitive closure $H^\star$. 
The graph $H$ is defined as follows. $H = (V^0 \cup V^B, E(H))$, where
$V^0 = \{v^0 \mid v\in V \}$
and 
$V^B = \{v^B \mid v\in V \}$
are two copies of $V$. Each vertex
$v^0\in V^0$ represents being at $v$ with $0$ charge and each vertex $v^B\in V^B$ represents being at $v$ with full charge.
  An arc $u^{b_1} v^{b_2}\in E(H)$ represents that $\alpha_{b_1}(u,v) \ge b_2$.\footnote{Note that the other direction does not necessarily hold: It is possible that $\alpha_{b_1}(u,v) \ge b_2$ but $u^{b_1} v^{b_2}\notin E(H)$.} We create the arcs $E(H) \subseteq \{u^{b_1} v^{b_2} \mid  \alpha_{b_1}(u,v) \ge b_2\}$ according to the $4$ relations shown above (for example, if $M[x][y]\neq -\infty$, we add the arc $x^B y^0$ to $H$).
 We claim in Theorem~\ref{theorem:B-B-strong-full} that, for every $s,t\in V$, $\alpha_B(s,t)=B$ if and only if $s^B t^B \in E(H^\star)$.

Using the graph $H^\star$, our  algorithm reduces the all pairs $\alpha_B(\cdot,\cdot)$ problem to the case in which the energetic paths are simple: For every $s,t \in V$, using $H^\star$, we find all vertices $x\in V$ such that $\alpha_B(s,x)=B$ and then, as in Appendix~\ref{sec:technical-review-alpha-of-simple-paths}, we find the best energetic simple path from 
any such $x$ to $t$. 

The following is a brief review of the correctness of the algorithm. Let $s,t\in V$ and let $P = v_1\ldots v_k$ be an optimal energetic path from $s$ to $t$ (i.e., $\alpha_B(s,t) = \alpha_B(P))$.
We argue that there is a vertex $x$ on $P$
such that $\alpha_B(s,x) = B$ and $\alpha_B(x,t) = \alpha_B(s,t)$.
If $\alpha_B(s,t)=B$, then we are done since this relation is already recorded in $H^\star$ and we can set $x=t$. Otherwise, let $1\le i\le k$ be maximal such that $\alpha_B(v_1\ldots v_i) = B$. It follows that $\alpha_B(s,v_i)=B$ and for every $i< j \le k$ it holds that $\alpha_B(v_1\ldots v_i) < B$. This implies that $v_i \ldots v_k$ must be a simple path.\footnote{Otherwise, $v_i \ldots v_k$ contains a positive cycle, so by repeating the cycle (and using the fact that no vertex on cycle, and the rest of the path, has already reached full charge) we can increase the final charge at $v_k=t$, a contradiction.} So we conclude that  the algorithm  finds the optimal energetic path when inspecting $x = v_i$.

\subsection{A technicality - charge drop schedules}

In this section we describe
\emph{Charge drop schedules} 
and the technical 
challenge that it addresses. Before we delve into the definition, we motivate it by pinpointing several problems with our arguments.
\begin{enumerate}
    \item Throughout this section we explained how to shortcut an ascending path  to single arc via a sequence of 
  short/long shortcut updates. A key invariant that is required for this argument to hold is the fact that given an ascending path $P = v_1\ldots v_k$, if we replace a monotone subpath $v_i \ldots v_j$ of $P$ by a monotone path  $Q$ of larger gain, then the resulting path $P' = v_1\ldots v_i \mid Q \mid v_j \ldots v_k$ (The $\mid$ stands for concatenation) is ascending and $g(P')> g(P)$. Unfortunately, this argument does not hold if $P$ is descending. For example, consider the graph $G$ in Figure~\ref{fig:charge-drop-fixes-issue-1}$(a)$ and the descending path $P = v_1 v_2 v_3 v_4 v_5$. After performing one iteration of the simple algorithm (computing all short monotone paths and updating the gains of the graph), we are left with a graph $G'$ with gain function $g'$ (see Figure~\ref{fig:charge-drop-fixes-issue-1}$(b)$) that does not contain any monotone path from $v_1$ to $v_5$. This is of course unsettling, as finding the best short shortcuts should be a good property of the algorithm and yet it destroyed some other descending paths

    \item Recall the procedure $\Concat(G,D,v)$ that scans all 4-tuples $(w,x,y,z)$ of vertices and then tries to concatenate a first arc-bounded path (stored in $D$) that starts with the arc $vw$ and ends at $x$ with first arc-bounded path that starts with the arc $xy$ and ends at $z$ (which is done by calculating $D[vw][x]+D[xy][z]$ and verifying some inequalities). Consider the following example: Assume $g(vw) = 5, g(xy) = 3$ and $D[vw][x] = 2, D[xy][z] = 3$. Therefore, by running  $\Concat(G,D,v)$, we will concatenate the arc-bounded paths corresponding to $D[vw][x]$ and $D[xy][z]$ and get an arc bounded path that starts at $vw$ and ends at $z$ with gain $D[vw][z] = D[vw][x] + D[xy][z] = 5$. Unfortunately, this concatenation is not guaranteed to happen.  It is possible that earlier in the run of $\Concat(G,D,v)$, the algorithm managed to improve $D[vw][x]$ to $D[vw][x] = 3$ and therefore concatenating $D[vw][x]$ to the arc-bounded path corresponding to $D[xy][z]$ does not result anymore in an arc-bounded path, see Figure~\ref{fig:charge-drop-fixes-issue-2}$(a)$. Again, by performing an update that should be good for us (increasing $D[vw][x]$ from $2$ to $3$), we hurt ourself somewhere else (we did not make the update $D[vw][z] = 5$). 
\end{enumerate}

\begin{figure}[t!]
    \centering
    \includegraphics[width=1\linewidth]{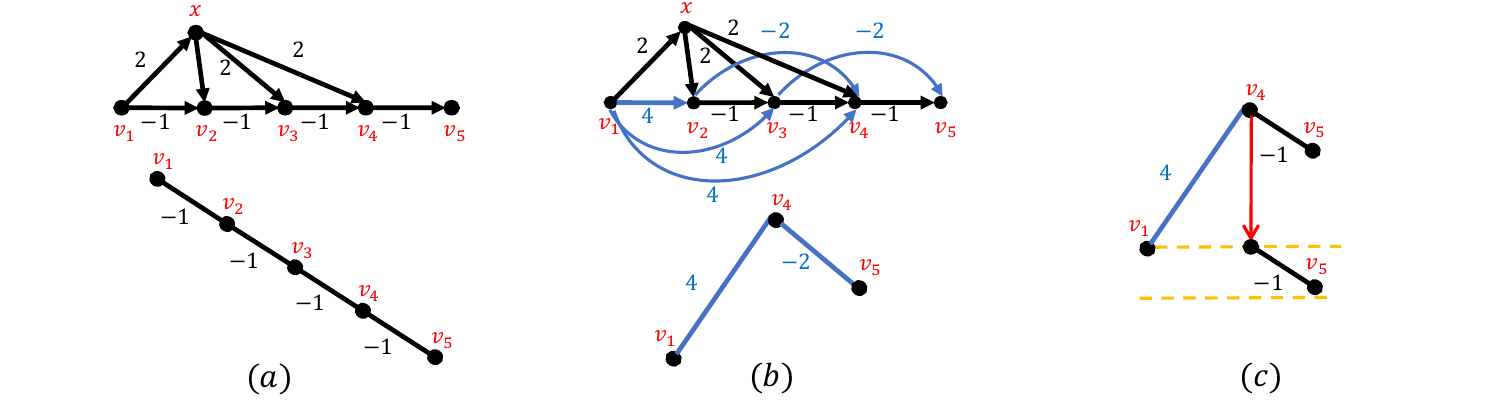}
    \caption{$(a)$ The graph $G$ and the descending path $P=v_1 v_2 v_3 v_4 v_5$. 
    $(b)$ The graph $G'$ that we get after shortcutting all short monotone paths. Blue arcs correspond to either new arcs or arcs with increased gain. Note that there is no monotone path from $v_1$ to $v_5$ in $G'$. $(c)$ By using charge drop schedule, we can transform the path $v_1 v_3 v_5$ into a short descending path of gain $-2$.}
    \label{fig:charge-drop-fixes-issue-1}

    \includegraphics[width=1\linewidth]{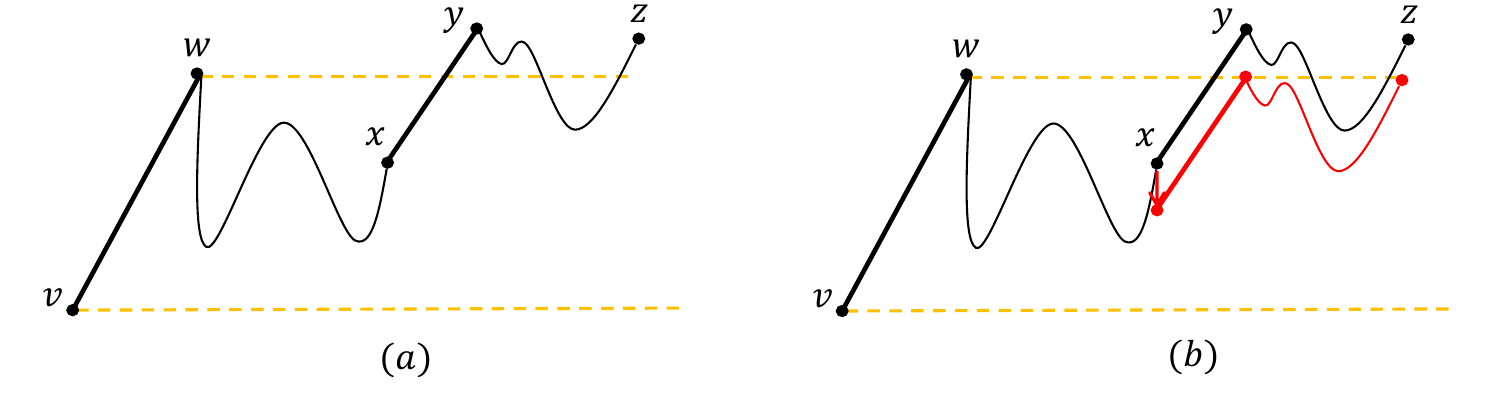}
    \caption{A use case of charge drops. $(a)$ Two arc-bounded paths whose concatenation is not arc-bounded. $(b)$ By applying a simple charge-drop schedule we make the concatenated path arc-bounded.}
    \label{fig:charge-drop-fixes-issue-2}
\end{figure}

In both examples, we suffered from having computed values that are ``too good". The simple concept that solves this problem is \emph{charge drop}. Charge drops allow us, at any vertex along the path, to get rid of some charge, see Figure~\ref{fig:example_charge_drop}. Formally, let $P = v_1\ldots v_k$ be a path in $G$. A charge drop schedule is a vector $C=(d_1,d_2,\ldots, d_k)\in \RR^k_{\ge 0}$, where $d_1 = 0$. The \emph{gain} at $v_i$ with respect to $P$ and $C$, denoted as $g^{P,C}_{v_i}$ is defined as $g_{v_i}^{P,C} =\sum_{t=1}^{i-1} g(v_{t} v_{t+1}) - \sum_{t=2}^{i} d_t$, for $2\le i\le k$ and $g_{v_1} = 0$ otherwise. Monotone paths and arc bounded paths can be defined similarly to before by replacing the gain of an arc $g(v_i v_{i+1})$ by $g(v_i v_{i+1}) - d_{i+1}$. When $P$ is clear from contexts, we abbreviate  $g^{P,0}_{v_i}$ and write $g_{v_i}$.

We now show how to fix the two examples using charge drop schedules.

\begin{enumerate}
    \item In the first example (see Figure~\ref{fig:charge-drop-fixes-issue-1}) $P= v_1 v_2 v_3 v_4 v_5$ is a descending path in $G$, but there is no descending (or ascending) path from $v_1$ to $v_5$ in $G'$.   Instead, $G'$ contains the path $v_1 v_4 v_5$ that has positive gain. By using a simple charge drop schedule that drops $4$ units of charge at $v_4$, 
    we view $v_1 v_4 v_5$ as a short descending path of gain $-2$, see Figure~\ref{fig:charge-drop-fixes-issue-1}$(c)$.

    \item In the second example we faced a problem when trying to concatenate an arc-bounded path corresponding to $g(vw)=5, D[vw][x]=3$ and an arc bounded path corresponding to $g(xy)=3, D[xy][z] = 3$. By simply dropping a single unit of charge at $x$ (the concatenation point), we are now able to concatenate the two paths and therefore assign $D[vw][z] = (D[vw][x] -1) +D[xy][z] = 5$, see Figure~\ref{fig:charge-drop-fixes-issue-2}.
\end{enumerate}

We incorporate charge drops in our algorithm in the following places.

\begin{enumerate}
    \item When computing all short monotone paths, if a path $P$ (of length $2$ or $3$) starts by a negative gain arc, we will always apply charge drop schedule and create a descending path out of $P$. For example, if $P = v_1 v_2 v_3 v_4$ and $g(v_1 v_2) = -5, g(v_2 v_3) = 2, g(v_3 v_4) = -1$, then we record a descending path from $v_1$ to $v_4$ of gain $-5$ (this corresponds to dropping one unit of charge at $v_4$).
    \item In the computation of long monotone paths. Recall that we consider tuples $u,v,w,x\in V$ and we extend the arc-bounded path that corresponds to $D[uv][w]$ by the arc $wx$. We incorporate charge drops in the following case: If $g(uv) < 0 $ and $D[uv][w]+g(wx)\in [g(uv) , 0]$ (that is the concatenated path remains arc-bounded), we record a descending path from $u$ to $x$ of gain $g(uv)$. This corresponds to performing a charge drop at $x$ that drops $D[uv][w]+g(wx)-g(uv)$ charge. 
    \item In the concatenation procedure, whenever the concatenation of the two arc bounded paths does not yield an arc-bounded path, we perform a charge drop to force the result to be arc-bounded. That is, for every $v,w,x,y,z\in V$, if $g(vw)>g(xy)>0$ and $D[vw][x] + D[xy][z] > g(vw)$, we set $D[vw][z] = g(vw)$. This corresponds to performing the smallest possible charge drop at $x$ such that the concatenated path is arc-bounded, see Figure~\ref{fig:charge-drop-fixes-issue-2}$(b)$. 
\end{enumerate}

\subsection{Main technical lemma}
In this section, we prove a simplified version\footnote{We address only ascending paths.} of our main lemma (Lemma~\ref{lemma:paths-shrink}). Recall our algorithm: We perform $\tilde{\Theta}(\sqrt{n})$ iterations. In each iteration we find all short monotone path and shortcut them (this results in a modified graph with larger arc gains). Moreover, in each iteration, with probability $\tilde{\Theta}(\frac{1}{\sqrt{n}})$ we additionally call $\LS$ which finds long monotone paths in the current graph, shortcuts them, and returns a modified graph.

\begin{lemma}\label{lemma:paths-shrink-simple-version}
    Let $P= v_1\ldots v_k$ be a simple ascending path in $G$. Let $G'$ be the modified graph after $\sqrt{n}$ iterations of the modified algorithm and let $g'$ be its gain function. If $|P| \le  \sqrt{n}$, then $g'(v_1v_k) \ge g(P)$. If $|P| > \sqrt{n}$, then w.h.p.\ there is an ascending path $P'$ in $G'$ from $v_1$ to $v_k$ in that satisfies $g'(P')\ge g(P)$ and $|P'| \le (1- 1/\Omega(\log n))\cdot |P|$.
\end{lemma}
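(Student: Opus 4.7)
The plan is to track a canonical sequence of ascending paths $P = P_0, P_1, \ldots, P_T$ with $T = \sqrt{n}$, where each $P_i$ is an ascending path in $G_i$ obtained from $P_{i-1}$ by applying the shortcuts performed in iteration $i$. Three invariants are maintained: (i)~$P_i$ is ascending in $G_i$, (ii)~$g_i(P_i) \ge g(P)$, and (iii)~$|P_i|$ is non-increasing in $i$. Because the lemma concerns ascending paths, replacing a monotone subpath by one of larger gain keeps the path ascending and can only grow its total gain, so (i)--(iii) follow from the construction; charge-drop schedules are invoked only inside $\LS$, to keep the $\Concat$ operations consistent with the arc-bounded property.

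For Case~1 ($|P| \le \sqrt{n}$) I would invoke the observation recalled in the technical review that every monotone, and in particular every ascending, path of edge length $\ge 2$ contains a short monotone subpath of edge length $2$ or $3$. Thus the short-shortcut phase of each iteration shrinks $|P_i|$ by at least $1$ whenever $|P_i| \ge 2$. Iterating $T = \sqrt{n}$ times forces $|P_T| = 1$, so $P_T$ is a single arc $v_1 v_k$ in $G'$ with $g'(v_1 v_k) \ge g(P)$ by invariant~(ii).

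For Case~2 ($|P| > \sqrt{n}$) let $f_i$ be the number of maximal double-funnels in $P_i$. By the alternation of double-funnels with short monotone subpaths (Figure~\ref{fig:alternation-funnel-shortcut}), the short-shortcut phase of iteration $i$ removes at least $f_i$ arcs, so $|P_{i+1}| \le |P_i| - f_i$. Telescoping against $|P_0| \le n$ yields $\sum_{i=0}^{T-1} f_i \le n$, so by Markov's inequality at least $T/2 = \sqrt{n}/2$ iterations are \emph{good}, in the sense that $f_i \le 2\sqrt{n}$. Since $\LS$ fires independently in each iteration with probability $p = \Theta(\log n/\sqrt{n})$, the probability that it never fires in a good iteration is at most $(1-p)^{\sqrt{n}/2} = n^{-\Omega(1)}$; hence w.h.p.\ there is a good iteration $i^\star$ at which $\LS(G_{i^\star})$ runs.

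The main obstacle is to argue that $\LS$ applied to a good $P_{i^\star}$ shrinks it by a factor of $(1 - 1/\Omega(\log n))$. For each vertex $v$ on $P_{i^\star}$ let $Q_v$ denote the maximal first arc-bounded prefix of $P_{i^\star}$ starting at $v$, and let $j(v) = \lceil \log_2 f(Q_v) \rceil \in \{0,1,\ldots,\log_2 \sqrt{n}\}$. There are only $O(\log n)$ levels, so pigeonhole singles out a level $j^\star$ whose vertices account for $\Omega(|P_{i^\star}|/\log n)$ of the length of $P_{i^\star}$ when weighted by $|Q_v|$. The sampling choice inside $\LS$, namely $|S_{j^\star}| = \tilde{\Theta}(\sqrt{n}/2^{j^\star})$ with $2^{j^\star}$ Concat calls per sample, is calibrated to this level: a sampled $v \in S_{j^\star} \cap \text{bucket}(j^\star)$ has $Q_v$ recorded in the arc-bounded table $D$, and extending by one arc via the procedure of Appendix~\ref{sec:arc-bounded-to-mono-review} yields a long monotone shortcut of length $\Theta(|Q_v|)$. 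Selecting a non-overlapping sub-collection of these shortcuts along $P_{i^\star}$ then excises $\Omega(|P_{i^\star}|/\log n)$ arcs, giving $|P_T| \le |P_{i^\star+1}| \le (1 - 1/\Omega(\log n))|P|$, and invariant~(ii) supplies $g'(P_T) \ge g(P)$. The delicate points are (a) balancing the $\Concat$ budget against the sampling size so that w.h.p.\ enough bucket-$j^\star$ shortcuts are actually discovered, (b)~extracting a non-overlapping family of them, and (c)~verifying that the charge-drop bookkeeping inside $\Concat$ preserves both the arc-bounded property and the gain bounds throughout the $\LS$ update.
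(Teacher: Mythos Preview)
Your overall strategy matches the paper's: track a shrinking sequence $P_0,\ldots,P_T$, handle short paths by the ``every monotone path contains a short shortcut'' fact, and for long paths argue that w.h.p.\ $\LS$ fires in an iteration where $P_i$ has few double-funnels, at which point the $\LS$ analysis takes over. The paper's actual proof of this lemma is in fact much shorter than your sketch: it treats Lemma~\ref{lemma:long-shortcuts-shrinks-simple-version} as a black box and does not re-derive the bucket/pigeonhole/chain argument you outline in your penultimate paragraph.

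There is, however, a genuine quantitative gap in your Case~2 reduction. You telescope against $|P_0|\le n$ to get $\sum_i f_i\le n$ and declare an iteration ``good'' when $f_i\le 2\sqrt{n}$. But the hypothesis of Lemma~\ref{lemma:long-shortcuts-shrinks-simple-version} is $t\le k/\sqrt{n}$: the number of double-funnels must be small \emph{relative to the current path length} $|P_{i^\star}|$, not just bounded by $2\sqrt{n}$ in absolute terms. Your bound does not yield $f_{i^\star}\le |P_{i^\star}|/\sqrt{n}$ unless $|P_{i^\star}|\ge 2n$, which is impossible. This is not cosmetic: inside the $\LS$ analysis, the chain-size lower bound $|B'_j|=\Omega(\sqrt{n}\,2^{i^\star}/\log n)$---which is exactly what makes the sampling hit every chain w.h.p.---is obtained by dividing through by $t$ and then invoking $t\le k/\sqrt{n}$. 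With only $t\le 2\sqrt{n}$ that step fails.

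The paper closes the gap with one line you are missing: first check whether $|P_T|\le |P|/2$; if so the conclusion holds outright. Otherwise $|P_i|>|P|/2$ for all $i$, and then at least half the iterations satisfy $|P_i|-|P_{i+1}|\le |P|/\sqrt{n}$, which directly bounds the number of disjoint short shortcuts---hence the number of maximal double-funnels---in $P_i$ by $|P|/\sqrt{n}=O(|P_i|/\sqrt{n})$. That is the correct notion of ``good'', and it matches the hypothesis of Lemma~\ref{lemma:long-shortcuts-shrinks-simple-version}. Equivalently: telescope against $|P_0|=|P|$ rather than $n$, and keep the halving escape hatch so that $|P_{i^\star}|\ge |P|/2$ may be assumed.
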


 Lemma \ref{lemma:paths-shrink-simple-version} is  derived from Lemma~\ref{lemma:long-shortcuts-shrinks-simple-version}, which is our main technical lemma. It provides guarantees about $\LS$, when run on a graph with an ascending path that contains few double-funnels.

\begin{lemma}\label{lemma:long-shortcuts-shrinks-simple-version}
    Let $P = e_1 \ldots e_k$ be a simple ascending path in $G$ from $x$ to $y$. 
    Let $t(\ge 1)$ be the number of double-funnels in $P$ that are maximal with respect to inclusion. Let $G'$ be the updated graph resulted from $\LS(G)$.\footnote{Note that every non empty path contains at least one double-funnel.} If $t \le k / \sqrt{n} $, then w.h.p. there is an ascending path $P'$ in $G'$ from $x$ to $y$ that satisfies $g^{G'}(P')\ge g^{G}(P)$ and $|P'| \le (1-1/\Omega(\log n))\cdot |P|$. 
\end{lemma}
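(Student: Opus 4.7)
The plan is to argue that when $\LS(G)$ is applied to the graph $G$ containing the simple ascending path $P$ with $t\le k/\sqrt n$ maximal double-funnels, the algorithm deposits in $G'$ enough ascending shortcut arcs so that $P$ can be rerouted in $G'$ along an ascending path $P'$ of length at most $(1-1/\Omega(\log n))|P|$ and gain $g^{G'}(P')\ge g^{G}(P)$.

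First, for each index $j\in\{1,\ldots,k\}$ I would let $Q_j$ denote the maximal first arc-bounded prefix of $v_j v_{j+1}\ldots v_k$ inside $P$, denote by $f_j$ the number of maximal double-funnels that $Q_j$ contains, and set $i_j=\lceil\log\max(f_j,1)\rceil$. Since $Q_j$ stops growing precisely because appending the next arc of $P$ destroys arc-boundedness, the discussion of Section~\ref{sec:arc-bounded-to-mono-review} shows that once $\Concat$ deposits an entry $D[v_j v_{j+1}][v_{j+|Q_j|}]\ge g(Q_j)$, the arc-bounded-to-monotone extraction phase places in $G'$ a monotone shortcut arc that substitutes for the length-$(|Q_j|+1)$ subpath of $P$ from $v_j$ to $v_{j+|Q_j|+1}$. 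A key inductive property of $\Concat$, already implicit in the cost analysis of the algorithm, is that if $v_j\in S_i$ with $i\ge i_j$, then the $2^i$ concatenation iterations from $v_j$ (each of which can roughly double the number of funnels captured inside the stored arc-bounded path) suffice to build up the required entry in $D$ for $Q_j$.

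The heart of the argument is a coverage calculation. For each position $p\in[1,k]$, let $J(p)=\{j:j\le p\le j+|Q_j|\}$ index the $Q_j$'s whose capture would cover $p$. Because $|S_i|=\tilde\Theta(\sqrt n/2^i)$ is sampled uniformly and the samplings across levels are independent, the probability that some successful shortcut covers $p$ is at least $1-\prod_{j\in J(p)}(1-p_{i_j})$ with $p_{i_j}=\tilde\Theta(1/(2^{i_j}\sqrt n))$. Using $f_j\le|Q_j|$ (so $2^{i_j}\le 2|Q_j|$) and the fact that the $Q_j$'s tile $P$ up to a constant overlap factor, a pointwise bound $\Pr[p\text{ covered}]=\Omega(1/\log n)$ can be established by separating $J(p)$ by level and applying the inequality $1-\prod(1-p_{i_j})\ge\min(1,\sum p_{i_j})/2$ together with the layered sampling schedule, which ensures that whichever level $i^\star$ dominates the distribution $\{f_j:j\in J(p)\}$ contributes a constant per-level coverage. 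Summing over $p$ yields expected coverage $\Omega(k/\log n)$, and a Chernoff bound applied level by level promotes this to w.h.p. Greedy interval scheduling then extracts from the set of successful shortcuts a non-overlapping subset whose total length is within a constant factor of the covered mass, producing $P'$. Ascending-preservation under these substitutions follows because each shortcut is a monotone path of gain at least that of the replaced subpath while the unreplaced prefixes and suffixes of $P$ already take prefix-gain values in $[0,g(P)]$, which forces the new prefix gains of $P'$ to remain in $[0,g(P')]$.

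The main obstacle is the coverage step: the per-level sampling densities $|S_i|$ and the shortcut lengths $|Q_j|$ must interlock tightly across all $O(\log n)$ levels, not only the two extremes of Figure~\ref{fig:example-different-sampling}. Dense, low-depth sampling handles paths with many short funnels (right extreme), sparse, deep sampling handles a single long funnel (left extreme), and every intermediate distribution of $\{f_j\}$ compatible with $\sum_j f_j=O(t)\le O(\sqrt n)$ must yield $\Omega(1/\log n)$ pointwise expected coverage at some level. Pairing the lengths $|Q_j|$ against their logarithmic levels cleanly, so that the pointwise bound does not bleed off the $k$ factor we need, is the single delicate calculation in the proof and is precisely where the $\Omega(\log n)$ loss in the shrinkage rate originates.
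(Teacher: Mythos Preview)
Your proposal has the right overall shape---the idea that $\LS$ realizes enough of the intervals $Q_j=(j,s(j))$ so that after disjointification one saves an $\Omega(1/\log n)$ fraction of $|P|$---matches the paper. But the coverage calculation, which you correctly identify as the crux, does not go through as written, and this is precisely where the paper supplies a structural ingredient you omit.

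First, the pointwise bound $\Pr[p\text{ covered}]=\Omega(1/\log n)$ is false. Take the ``many short funnels'' example of Figure~\ref{fig:example-different-sampling}(b): every $Q_j$ sits inside a single funnel of length $\Theta(\sqrt n)$, so $f_j=1$ and $p_{i_j}=\tilde\Theta(1/\sqrt n)$ for all $j$. For a position $p$ at the \emph{start} of a funnel, the chain $J(p)$ contains only $O(1)$ indices, so $\Pr[p\text{ covered}]=\tilde\Theta(1/\sqrt n)=o(1/\log n)$. More generally, a $\Theta(1/\log^3 n)$ fraction of the positions in each funnel have $|J(p)|=O(\sqrt n/\log^3 n)$ and hence $\sum_{j\in J(p)} p_{i_j}=o(1/\log n)$. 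The assertion that ``the $Q_j$'s tile $P$ up to a constant overlap factor'' is also off: the family $\{Q_j\}$ is \emph{laminar} (paper's Lemma~\ref{lemma:laminar-simple-version}), not a tiling, and a single position can lie in up to $\Theta(k)$ nested intervals.

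Second, even granting an expected-coverage bound of $\Omega(k/\log n)$, the concentration step does not follow from Chernoff. The covered mass $|\bigcup_{j\text{ realized}}Q_j|$ is not a sum of independent indicators; it is the length of a union of random intervals, and flipping a single sample can change it by up to $\max_j|Q_j|=\Theta(k)$, so McDiarmid-type bounds give nothing useful either.

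The paper fills exactly this gap with a structural decomposition. It observes (using laminarity) that within a fixed level $A_{i^\star}=\{(j,s(j)):f(j)\in[2^{i^\star},2^{i^\star+1})\}$, no interval can contain two disjoint intervals (that would force $f\ge 2\cdot 2^{i^\star}$), so $A_{i^\star}$ splits into internally disjoint \emph{chains} $B'_1,\ldots,B'_{q'}$. A short counting argument then shows each retained chain has size $\Omega(2^{i^\star}\sqrt n/\log n)$, which is large enough that sampling at rate $\tilde\Theta(1/(2^{i^\star}\sqrt n))$ hits each chain w.h.p.\ by a direct Chernoff bound on \emph{independent indicator variables} (one per vertex of the chain). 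Realizing one long interval per chain yields \emph{disjoint} shortcuts totaling $\Omega(k/\log n)$ deterministically, so no union-coverage concentration is needed at all. This chain decomposition is the missing idea in your argument.
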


We prove Lemma~\ref{lemma:long-shortcuts-shrinks-simple-version} at the end of this section. The derivation of Lemma~\ref{lemma:paths-shrink-simple-version} is now straightforward.

\begin{proof}[Proof of Lemma~\ref{lemma:paths-shrink-simple-version}]
    Let $r= \sqrt{n}$ and let $G_0(=G),G_1,\ldots, G_r$ be the graphs throughout the $r$ iterations of the algorithm. Let $P_0=P,P_1,\ldots, P_r$ be a series of monotone paths, where $P_i$ is the shortest path in $G_i$ from $v_1$ to $v_k$ that has no smaller
    gain (with respect to $G_i$) than $P_{i-1}$ (with respect to $G_{i-1}$). We split the proof into cases.
    
     \textbf{Case $|P|\le r$:} Since in each of the $r$ rounds we compute all the short monotone paths, 
     and since every monotone path contains a short monotone path, we get that for every $1\le i < r$, if $|P_i|>1$ then $|P_{i+1}| < |P_i|$. Thus, $|P_r|=1$ and the lemma follows.

    \textbf{Case $|P| > r$:} 
    If $P_r \le |P|/2$, then we are done. Otherwise $P_r > |P|/2$ and therefore for at least $r/2$ indices $0\le i< r$, it holds that $|P_i|-|P_{i+1}| \le |P|/r$. This mean that, for each such index $i$, $P_i$ has at most $|P|/r$ disjoint short shortcuts as subpaths. 
    Thus,
    by our arguments in the previous sections (see Figure~\ref{fig:alternation-funnel-shortcut}),
    $P_i$ contains $O(|P|/r) = O(|P_i|/r)$  double-funnels that are maximal with respect to inclusion. 
    Therefore,
    w.h.p.\ we run $\LS(G_i)$ at an iteration $i$ such that $P_i$ contains $O(|P_i|/r) = O(|P_i|/\sqrt{n})$ double-funnels. Hence, the conditions of Lemma~\ref{lemma:long-shortcuts-shrinks-simple-version} are satisfied and we are done. 
\end{proof}

Before proving Lemma~\ref{lemma:long-shortcuts-shrinks-simple-version}, we need to introduce the following structural definitions.
These definitions allow us to measure how many applications of $\Concat$ are needed in order to dominate an arc bounded path.

\begin{definition}
    Let $P=e_1 \ldots e_k$ be a path in $G$. For every $1\le i \le k$ we define
    $s^P(i)\ge i$ to be the maximal index such that $e_i \ldots e_{s^P(i)}$ is first arc-bounded.
    When $P$ is clear from the context, we abbreviate and write $s(i)$. 
\end{definition}

\begin{definition}\label{def:funnel-distance-simple-version}
    Let $P= e_1 \ldots e_k$ be a path in $G$. For every $i$, we define $f^P(i)$ as the number of first arc-bounded funnels in $e_i \ldots e_{s(i)}$ that are maximal with respect to inclusion.
    When $P$ is clear from context, we abbreviate and write $f(i)$.
\end{definition}

The following lemma proves that for every path $P=e_1\ldots e_k$, the set of paths $\{e_i \ldots e_{s(i)}\mid 1\le i \le k\}$ is laminar. We defer the proof of this lemma to the appendix (see Lemma~\ref{lemma:laminar}).

\begin{lemma}\label{lemma:laminar-simple-version}
    Let $P=e_1 \ldots e_k$ be a path in $G$, then the set of intervals $\{(i, s(i)) \mid 1\le i \le k \}$ is laminar.
\end{lemma}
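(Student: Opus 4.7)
The plan is to show that for any two indices $i_1<i_2$ in $\{1,\ldots,k\}$, either the intervals $(i_1,s(i_1))$ and $(i_2,s(i_2))$ are disjoint (i.e.\ $s(i_1)<i_2$) or the later one is contained in the earlier one (i.e.\ $s(i_2)\le s(i_1)$). Throughout, I introduce the running prefix sums $S_t:=\sum_{j=1}^{t}g(e_j)$ with $S_0=0$, and reformulate first arc-boundedness of $e_i\cdots e_{s(i)}$ as the following ``strip'' condition: setting
\[
I_i \;:=\; [\,S_{i-1}+\min\{0,g(e_i)\},\ S_{i-1}+\max\{0,g(e_i)\}\,],
\]
we have $S_t\in I_i$ for every $t\in\{i-1,i,\ldots,s(i)\}$, and, by maximality of $s(i)$, whenever $s(i)<k$ also $S_{s(i)+1}\notin I_i$.

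The heart of the argument is the following containment: if $i_1<i_2\le s(i_1)$, then $I_{i_2}\subseteq I_{i_1}$. The reason is that $I_{i_2}$ is by definition the interval with endpoints $S_{i_2-1}$ and $S_{i_2-1}+g(e_{i_2})=S_{i_2}$. Since $i_1<i_2\le s(i_1)$, both $i_2-1$ and $i_2$ lie in $\{i_1-1,\ldots,s(i_1)\}$, so applying the strip condition for $i_1$ gives $S_{i_2-1},S_{i_2}\in I_{i_1}$. Because $I_{i_1}$ is a convex interval containing both endpoints of $I_{i_2}$, it contains all of $I_{i_2}$.

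With the containment in hand, the laminarity follows immediately. Assume the two intervals overlap, i.e.\ $i_2\le s(i_1)$, and suppose for contradiction that $s(i_2)>s(i_1)$. Then $s(i_1)<k$, so $e_{s(i_1)+1}$ exists, and the index $s(i_1)+1$ lies in $\{i_2,\ldots,s(i_2)\}$. The strip condition for $i_2$ yields $S_{s(i_1)+1}\in I_{i_2}\subseteq I_{i_1}$, contradicting the maximality of $s(i_1)$. Hence $s(i_2)\le s(i_1)$, which is exactly containment of $(i_2,s(i_2))$ in $(i_1,s(i_1))$.

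I do not anticipate any real obstacle; the proof is essentially bookkeeping once the strip viewpoint is adopted. The only thing to verify is that the degenerate cases $g(e_{i_1})=0$ or $g(e_{i_2})=0$, where a strip collapses to a single point, are handled uniformly by the same reasoning, since the set-theoretic inclusions remain correct for degenerate intervals. No case analysis on signs of $g(e_{i_1}),g(e_{i_2})$ is needed, because the formulation of $I_i$ via $\min$ and $\max$ absorbs the sign, which is one of the advantages of this presentation.
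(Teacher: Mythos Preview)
Your proof is correct and takes essentially the same approach as the paper's own proof of Lemma~\ref{lemma:laminar}. The paper phrases things in terms of vertex gains $g_w$ rather than your prefix sums $S_t$, but these are the same quantities; your interval $I_i$ is exactly the paper's $[\min\{g_u,g_v\},\max\{g_u,g_v\}]$ for $e_i=(u,v)$, and both arguments establish the nesting $I_{i_2}\subseteq I_{i_1}$ and then invoke maximality of $s(i_1)$ (you by contradiction on $S_{s(i_1)+1}$, the paper by observing directly that $e_{i_1}\ldots e_{s(i_2)}$ would be $e_{i_1}$-bounded).
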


We are now ready to prove Lemma~\ref{lemma:long-shortcuts-shrinks-simple-version}.

\begin{proof}[Proof of Lemma~\ref{lemma:long-shortcuts-shrinks-simple-version}]
     Let $F_1,\ldots F_t$ be the disjoint double-funnels in $P$. By the discussion in Section~\ref{sec:technical-review}, there are $O(t)=o(k)$ arcs in $P$ that are not contained in the double-funnels (see Figure~\ref{fig:alternation-funnel-shortcut}). Every double-funnel can be decomposed into at most $2$ funnels (last arc-bounded followed by first arc-bounded). Let $F'_1,\ldots, F'_{t'}$, where $t \le t' \le 2t$, be the corresponding funnels. We distinguish between funnels that are first-arc bounded to those which are last-arc bounded. Assume that the majority of the arcs of $P$ belong to first-arc bounded funnels. The analysis for the other case is symmetric. Therefore, these funnels (first-arc bounded) contain at least $k/3$ arcs.\footnote{The choice of $3$ and not $2$ is due to the subtlety that the disjoint double-funnels do not necessarily cover all of $P$.} Among these funnels, we consider only funnels of length at least $\sqrt{n}/6$. Note that at least $k/6$ arcs belong to such funnels (if more than $k/6$ arcs belong to funnels of length at most $\sqrt{n}/6$ then we need at least $t > k / \sqrt{n} $ funnels to accommodate them, a contradiction). 
     Denote these arcs by $e_{i_1},\ldots e_{i_r}$ ($r \ge k/6$).
    
    By Lemma~\ref{lemma:laminar-simple-version}, the set $A=\{(i_j, s(i_j)) \mid 1\le j \le r \}$ is laminar. We refer to each item in $A$ as an \emph{interval}. Recall that each interval $(i_j,s(i_j))$ corresponds to a monotone path of the same length (A maximal arc bounded path extended by a single arc is monotone), see Section~\ref{sec:arc-bounded-to-mono-review}. 
    Moreover, in order for $\LS$ to shortcut the monotone path corresponding to $(i_j,s(i_j))$, $\LS$  has to sample $v\in e_{i_j}=(v,w)$ and then perform $f(i_j)$ concatenations from $v$.

In the rest of the proof, we prove that $\LS$ finds enough disjoint monotone paths of total length $\Omega(k/\log k)$. To this end, we partition $A$ into disjoint sets $A_1,\ldots, A_{\log \sqrt{n}}$, where $A_i = \{(i_j,s(i_j)) \mid f(i_j) \in  [2^i, 2^{i+1}) \}\subseteq A $ correspond to all intervals/monotone paths that require $c\in [2^i, 2^{i+1})$ concatenations in order to be realized. We then prove that $A_{i^\star}$, the largest of these sets (hence of size $\Omega(k/\log n)$),  contains a collection of disjoint \emph{chains} (a chain is a set of nested intervals) $B'_1,\ldots,B'_{q'}\subseteq A_{i^\star}$ such that:
\begin{enumerate}
    \item The chains are pairwise internally disjoint.
    That is, for every $1\le j_1<j_2\le q'$ and $(\ell_1,r_1)\in B'_{j_1},$ $(\ell_2,r_2)\in B'_{j_2}$ it holds that $(\ell_1,r_1)\cap (\ell_2,r_2) = \emptyset$.
    \item $|B'_j| = \Omega\left( \frac{\sqrt{n} 2^{i^\star}}{\log n} \right)$, for $j=1,\ldots,q'$. This property is crucial for the sampling to ``hit" $B'_j$.
    \item $|\bigcup_{i=1}^{q'}  B'_i| = \Omega(|A_{i^\star}|)=\Omega(k/\log n)$.
\end{enumerate}
Finally, by Property $(2)$, we show that w.h.p., for every $j=1,\ldots,q'$, $\LS$ realizes an interval from $B'_j$ whose length is at least $|B'_j|/2$. By combining these disjoint (Property $(1)$) shortcuts, we reduce the size of $P$ by $\sum_{i=1}^{q'}  |B'_i|/2 = \Omega(k/\log n)$.

    We now show the lower bound on the size of $A_{i^*}$ and prove that it contains a collection of chains $B'_1,\ldots, B'_{q'}$ that satisfy the above poperies.
Since $i^\star$ is such that $|A_{i^\star}| \ge |A_i|$ for every $1\le i \le \log \sqrt{n}$ and $|A| \ge k/6$ (by the laminarity of $A$ each interval contains an edge which is not in any other interval) it follows that $|A_{i^\star}|\ge \frac{k}{6 \log{\sqrt{n}}}$. 
    Observe that for every $1\le i \le \log \sqrt{n}$, $A_{i}$ is laminar as a subset of $A$. Moreover, each interval in $A_{i}$ cannot contain two disjoint intervals in $A_i$. Indeed,  assume $(i_{j_1},s(i_{j_1})), (i_{j_2},s(i_{j_2})) \subseteq (i_{j_3},s(i_{j_3}))$ and $(i_{j_1},s(i_{j_1})) \cap (i_{j_2},s(i_{j_2})) = \emptyset$, where all intervals belong to $A_{i}$. Therefore $f(i_{j_3}) \ge f(i_{j_1}) + f(i_{j_2}) \ge 2^{i} + 2^{i} = 2^{i+1}$, so $(i_{j_3},s(i_{j_3})) \notin A_{i}$, a contradiction. It follows that we can decompose $A_i$ (and in particular $A_{i^*}$ ) into a collection of internally disjoint chains. 

   Let $B_1,\ldots, B_q$ be the decomposition of  $A_{i^\star} $ into internally disjoint chains ($A_{i^\star}= \cup_{i=1}^{q} B_i$). Since the $B_i$'s are internally disjoint (and so are the funnels in them), $q\cdot 2^{i^\star} \le t$.  
   Let $A'_{i^\star}$ be the union of the~$B_i$'s that satisfy $|B_i| \ge \frac{k}{12q \log  \sqrt{n}}$.
   It follows that

    \begin{align}\label{eq:main_lemma-simple-version}
        |A'_{i^\star}| 
    \ge  |A_{i^\star}| - q \cdot \frac{k}{12q \log  \sqrt{n}}
    \ge \frac{k}{12 \log \sqrt{n}}.
    \end{align}

    Let $B'_1,\ldots, B'_{q'} $ be the chains of $A'_{i^\star}$.  Let $B'_j \subseteq A'_{i^\star}$, it holds that
    \begin{align*}
        |B'_j| \ge \frac{k}{12q \log \sqrt{n}} \numge{1} \frac{k \cdot 2^{i^\star}}{12t \log \sqrt{n}} \numge{2} \frac{\sqrt{n} 2^{i^\star}}{12 \log \sqrt{n}} = \Omega \left( \frac{\sqrt{n} 2^{i^\star}}{\log n} \right),
    \end{align*}
    where Inequality~$(1)$ follows since $q\cdot 2^{i^\star}\le t$ and Inequality~$(2)$ follows since $t \le k / \sqrt{n}$. 
     
     Recall that $\LS(M)$ samples vertices to $S_{i^\star}$ i.i.d.\ with probability $p_{i^\star}=\Theta(\frac{\log^2 n}{2^{i^\star}\sqrt{n}})$. Since $\LS$ performs $2^{i^\star}$ concatenations from every vertex in $S_{i^\star}$, 
     every interval in $A_{i^\star}$ has a probability of $p_{i^\star}$ to be realized. 
     Let $B'_j \subseteq A'_{i^\star}$. 
    Since  $|B'_j| = \Omega 
    \left( \frac{\sqrt{n} 2^{i^\star}}{ \log n} \right)$, it follows by the Chernoff bound that w.h.p.\ we realize an interval from $B'_j$ of length at least $0.5 |B'_j|$.

    Since $B'_1,\ldots, B'_{q'} $ are internally disjoint, then the above realized shortcuts (one from every $B'_j$) are also disjoint. Hence, by shortcutting the realized intervals we get an ascending path $P'$ in $G'$ of length:
    
    \begin{align*}
       |P'| & \le k - \sum_{j=1}^{q'} {0.5 |B'_j|} =
       k - 0.5 |A'_{i^\star}| 
       \numle{1} k -0.5 \frac{k}{12\log \sqrt{n}} \\&= 
       \left(1 - \Omega\left(\frac{1}{\log n} \right) \right) \cdot k = 
       \left(1 - \Omega\left(\frac{1}{\log k} \right) \right) \cdot |P|,
    \end{align*}
    where Inequality $(1)$ follows from Equation~(\ref{eq:main_lemma-simple-version}) and the last equality holds because, according to the statement of the lemma, $\sqrt{n} \le t \sqrt{n} \le k < n$.
\end{proof}

\section{Concluding remarks}\label{S-concl}

We presented a randomized $\tilde{O}(n^{3.5})$-time algorithm for the finding optimal energetic paths between all-pairs of vertices in a weighted directed $n$-vertex graph with positive and negative gains that may contain positive-gain cycles. This improves upon a previous $\tilde{O}(mn^{2})$-time algorithm by Dorfman et al.~\cite{DorfmanKTZ23}. The new algorithm is quite involved and requires the introduction of many new ideas. Improving the running time of the algorithm is a natural open problem. 

\bibliographystyle{plain}
\bibliography{bibliography}

\begin{thebibliography}{10}

\bibitem{artmeier2010shortest}
Andreas Artmeier, Julian Haselmayr, Martin Leucker, and Martin Sachenbacher.
\newblock The shortest path problem revisited: Optimal routing for electric vehicles.
\newblock {\em KI}, 6359:309--316, 2010.

\bibitem{Bellman58}
Richard Bellman.
\newblock On a routing problem.
\newblock {\em Quarterly of Applied Mathematics}, 16:87--90, 1958.

\bibitem{BrCh12}
Lubos Brim and Jakub Chaloupka.
\newblock Using strategy improvement to stay alive.
\newblock {\em Int. J. Found. Comput. Sci.}, 23(3):585--608, 2012.

\bibitem{DKTTZ24}
Dani Dorfman, Haim Kaplan, Robert~E. Tarjan, Mikkel Thorup, and Uri Zwick.
\newblock Minimum-cost paths for electric cars.
\newblock In {\em 2024 Symposium on Simplicity in Algorithms, {SOSA} 2024, Alexandria, VA, USA, January 8-10, 2024}, pages 374--382. {SIAM}, 2024.

\bibitem{DorfmanKTZ23}
Dani Dorfman, Haim Kaplan, Robert~Endre Tarjan, and Uri Zwick.
\newblock Optimal energetic paths for electric cars.
\newblock In {\em 31st Annual European Symposium on Algorithms, {ESA} 2023, September 4-6, 2023, Amsterdam, The Netherlands}, pages 42:1--42:17, 2023.

\bibitem{EFS11}
Jochen Eisner, Stefan Funke, and Sabine Storandt.
\newblock Optimal route planning for electric vehicles in large networks.
\newblock In {\em Proceedings of the Twenty-Fifth {AAAI} Conference on Artificial Intelligence, {AAAI} 2011, San Francisco, California, USA, August 7-11, 2011}. {AAAI} Press, 2011.

\bibitem{Ford56}
Lester~R. Ford.
\newblock Network flow theory.
\newblock Technical Report Paper P-923, RAND Corporation, Santa Monica, California, 1956.

\bibitem{helouet2019reachability}
Lo{\"\i}c H{\'e}lou{\"e}t, Nicolas Markey, and Ritam Raha.
\newblock Reachability games with relaxed energy constraints.
\newblock {\em arXiv preprint arXiv:1909.07653}, 2019.

\bibitem{Johnson77}
Donald~B. Johnson.
\newblock Efficient algorithms for shortest paths in sparse networks.
\newblock {\em Journal of the ACM}, 24(1):1--13, 1977.

\bibitem{khuller2011fill}
Samir Khuller, Azarakhsh Malekian, and Juli{\'a}n Mestre.
\newblock To fill or not to fill: The gas station problem.
\newblock {\em ACM Transactions on Algorithms (TALG)}, 7(3):1--16, 2011.

\bibitem{williams2014faster}
Ryan Williams.
\newblock Faster all-pairs shortest paths via circuit complexity.
\newblock In {\em Proceedings of the forty-sixth annual ACM symposium on Theory of computing}, pages 664--673, 2014.

\end{thebibliography}

\appendix

\section{Full Version}

This appendix contains the full technical details of the paper and is organized as follows. In Appendix~\ref{S:prelim} we begin with some preliminary material. Appendix~\ref{S-overview} then gives an overview of the algorithm. The new algorithm is composed of two stages. In Stage I, described in Appendix~\ref{S-shortucts}, sufficiently many shortcuts are found. The correctness of Stage I is proved in Appendix~\ref{sec:correctness}. Stage II, described in Appendices~\ref{S-relating} and~\ref{S-algorithm}, uses the shortcuts found in stage I to find the $\alpha_B(s,t)$ values, and an implicit representation of the optimal energetic paths.

\section{Preliminaries}\label{S:prelim}

Let $G=(V,A,g)$, where $g:A\to\RR$ is a gain function.
Fix the battery capacity $B> 0 $.
Suppose we traverse a path
$P=v_1 u_2\ldots v_k$ 
starting with a charge of $b$ at $v_1$. We define
$\alpha_b(P)\le B$ to be the amount of charge with which we reach $v_k$. If $P$ cannot be traversed with this initial charge, we let $\alpha_b(P) = -\infty$. For $s,t\in V$ and $b\in [0,B]$, define $\alpha_b(s,t)=\max \{ \alpha_b(P) \mid \text{$P$ is a path from $s$ to $t$}\}$, i.e., the maximal final charge possible at $t$ when starting at $s$ with  $b$ charge. It is proved in~\cite{DorfmanKTZ23} that the $\max$ in this definition is well-defined. (Note that the maximum is over a possibly infinite collections of paths, since the paths are not necessarily simple.) A path $P=v_1\ldots v_k$ is optimal if $\alpha_B(v_1,v_k)=\alpha_B(P)$. The all-pairs maximum final charge problem is to compute $\alpha_B(s,t)$ for every pair $s,t\in V$. We say that a path $P$ is \emph{traversable} if $\alpha_B(P)\ge 0$, that is there is some energy level  that we can start with and traverse $P$. We say that a path $P$ is strongly traversable if $\alpha_0(P)\ge 0$. We let $|P|$ be the \emph{length} of $P$, i.e., the number of arcs in $P$.

The gain of an arc $uv\in A$ is $g(uv)$. The gain of a vertex $v$ in a path $P$ is the sum of gains of the arcs  that lead to $v$ in $P$. During our analysis we allow ourselves to dispose of some charge while traversing a path. This leads to the following definition of gains on paths that takes into account charge drops, see Figure~\ref{fig:example_charge_drop}.

\begin{definition}[Gain]\label{def:gain}
    Let $G=(V,A,c)$. Let $P=v_1\ldots v_k$ be a path in $G$ and let $C=(0,d_2,\ldots, d_k)\in \RR^k_{\ge 0}$   be a charge drop schedule.\footnote{Note that $d_1=0$, i.e., we do not drop charge at the first vertex.} The \emph{gain} at $v_i$ with respect to $P$ and $C$, denoted as $g^{P,C}_{v_i}$ is defined as $g_{v_i}^{P,C} =\sum_{t=1}^{i-1} g(v_{t} v_{t+1}) - \sum_{t=2}^{i} d_t$, for $2\le i\le k$ and $g_{v_1} = 0$ otherwise. That is, $d_t$ is the charge drop performed at $v_t$ for $2\le t \le k$. 
    We omit $P$ and $C$ and write $g_{v_i}$ when $P,C$ are clear from the context. The gain of $P$ with respect to $C$, denoted $g^C(P)$, is defined to be $g^{P,C}_{v_k}$. When no charge drop schedule is introduced, then we assume that the schedule is zero: $C = (0,\ldots,0) \in \RR^k_{\ge 0}$. 
\end{definition}

\begin{figure}[t]
    \centering
    \includegraphics[width=1.00\textwidth]{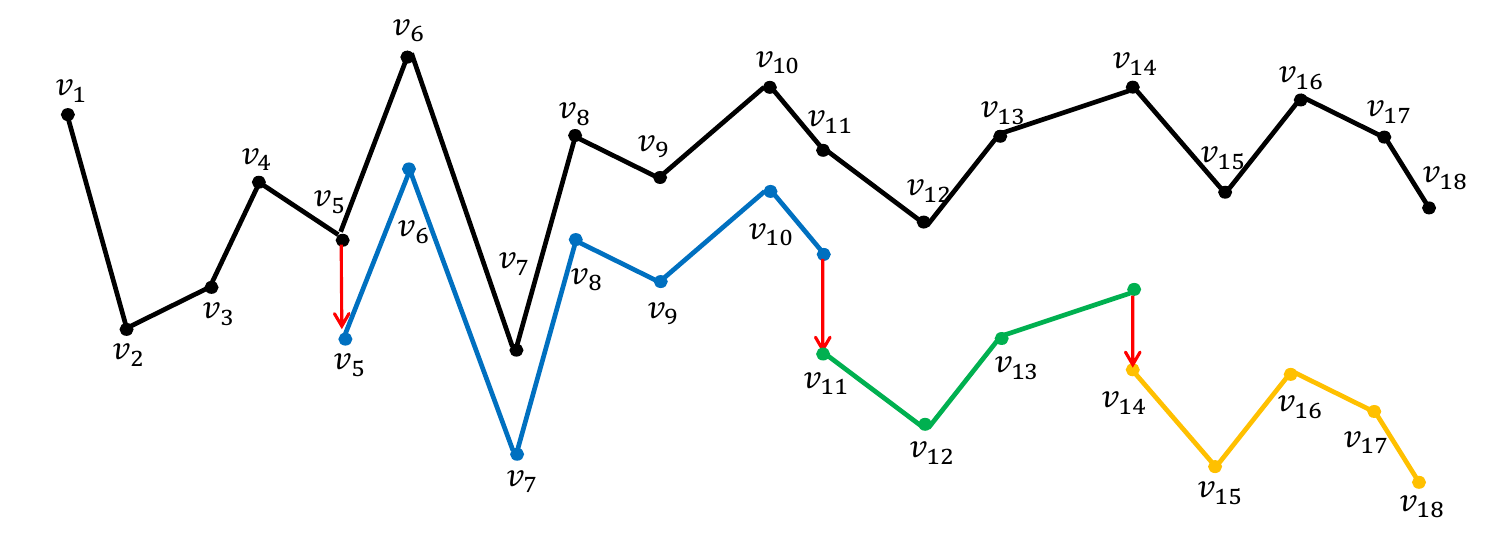}
    \caption{
    In black: The original gains of $P=v_1\ldots v_{18}$. Red downward arrows correspond to charge drops. In color: the gains of $P$ with respect to the charge drops. After each charge drop we switch color. Note that each colored path has matching gains to those of a corresponding subpath of~$P$.}
    \label{fig:example_charge_drop}
\end{figure}

Note that unlike the definition of the charge level of the electric car, the above definition allows the gains of vertices on a path $P$ to be larger than $B$ and smaller than $-B$.\footnote{Note that a path of gain smaller than $-B$ is not traversable.} 
Our algorithm, however, does not 
compute paths (and even subpaths) of gain smaller than $-B$.

Throughout this paper charge drops are used by the algorithm only twice, in Appendices~\ref{sec:extend} and~\ref{section:long}. 
It may be instructive for a reader to first think of 
the case where all charge drops are $0$. In the following sections we define path structures that are studied throughout the paper.

\subsection{Monotone Paths and Shortcuts}

\begin{definition}[Monotone path]\label{def:monotone-path}
    Let $P=v_1\ldots v_k$ be a traversable path in $G$ and let $C$ be a charge drop schedule for $P$.
    \begin{itemize}
        \item We say that $P$ is ascending with respect to $C$ if $0 = g^{C}_{v_1} \le g^C_{v_i} \le g^C_{v_k}$, for every $1\le i\le k$. See Figure~\ref{fig:example_funnel_monotone}$(a)$.
        \item We say that $P$ is descending with respect to $C$ if $0 = g^C_{v_1} \ge g^C_{v_i} \ge g^C_{v_k}$, for every $1\le i\le k$. See Figure~\ref{fig:example_funnel_monotone}$(b)$.
    \end{itemize}
    We say that $P$ is monotone with respect to $C$ if it is either ascending or descending with respect to $C$. We say that $P$ is monotone if it is monotone with respect to the zero schedule.
\end{definition}

Note that all ascending paths are strongly traversable. Also note that an ascending path might have a descending subpath and vice versa.

\begin{lemma}\label{lem:ascending_no_CDS}
    If a path $P=v_1\ldots v_k$ is ascending with respect to a charge drop schedule $C$, then $P$ is ascending with respect to the zero schedule.
\end{lemma}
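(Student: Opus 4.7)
The plan is to show that replacing a nonnegative charge drop schedule $C$ by the zero schedule can only raise the gain at each vertex, and that the two key inequalities $g_{v_i}\ge 0$ and $g_{v_i}\le g_{v_k}$ both follow easily from this observation combined with the hypothesis that they hold under $C$.

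First, I would unpack Definition~\ref{def:gain} to get the identity
\[
  g_{v_i} \;=\; g^{P,C}_{v_i} + \sum_{t=2}^{i} d_t
\]
for every $2 \le i \le k$, with $g_{v_1}=g^{P,C}_{v_1}=0$ at the start. Since $C=(0,d_2,\dots,d_k)\in \RR^k_{\ge 0}$, the summation term is nonnegative, so $g_{v_i} \ge g^{P,C}_{v_i} \ge 0$, where the last inequality uses that $P$ is ascending with respect to $C$. This handles the lower bound $0 = g_{v_1} \le g_{v_i}$.

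Next, for the upper bound $g_{v_i} \le g_{v_k}$, I would subtract the two instances of the identity to obtain
\[
  g_{v_k} - g_{v_i} \;=\; \bigl(g^{P,C}_{v_k} - g^{P,C}_{v_i}\bigr) + \sum_{t=i+1}^{k} d_t.
\]
The first term on the right is nonnegative by the assumption that $P$ is ascending with respect to $C$, and the second is nonnegative since each $d_t \ge 0$. Thus $g_{v_i} \le g_{v_k}$. Finally, traversability of $P$ is part of the hypothesis (it is built into Definition~\ref{def:monotone-path} and does not depend on $C$), so all three conditions required for $P$ to be ascending with respect to the zero schedule are verified. There is no real obstacle here; the lemma is essentially an unfolding of the definitions, its purpose being to justify that imposing charge drops can only restrict which paths qualify as ascending, never create new ones.
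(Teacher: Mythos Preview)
Your proof is correct and follows essentially the same approach as the paper: both arguments use that $g_{v_i} = g^{P,C}_{v_i} + \sum_{t=2}^{i} d_t$ (the paper phrases this as ``$v_k$ encounters the largest charge drop'') to push the two inequalities from the $C$-schedule to the zero schedule. Your version is slightly more explicit with the identity, but the content is the same.
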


\begin{proof}
    Observe that $g^{P,C}_{v_1} = g^{P}_{v_1}=0$. Since $P$ is ascending with respect to $C$, we get that $g^{P,C}_{v_i} \le g^{P,C}_{v_k}$ for every $1\le i\le k$. Since $v_k$ is the last vertex (and therefore encounters the largest charge drop), we get that  $g^{P}_{v_i} \le g^{P}_{v_k}$ for every $1\le i \le k$. Therefore, for every $1 \le i \le k$, it holds that 
    $$g^{P}_{v_1} = g^{P,C}_{v_1} \le g^{P,C}_{v_i} \le g^{P}_{v_i} \le g^{P}_{v_k}.$$
\end{proof}

\begin{definition}[Shortcut]
    We define an arc $e=xy$ (not necessarily in $A$) to be a $k$-\emph{shortcut} in $G$  if there is a path $P=v_1\ldots v_k$ from $x$ to $y$ in $G$ which is monotone with respect to a charge drop schedule~$C$.
    We say that the gain of the shortcut is 
     $g(e)=g^C(P)$. We say that~$e$ is a shortcut in $G$ if it is a $k$-shortcut in $G$ for some $k$.
    The shortcut $e$ is ascending if $P$ is ascending and descending if $P$ is descending. We say that~$e$ is a short shortcut if it is a $k$-shortcuts for $k\in \{ 2,3\}$.
\end{definition}

Note that we may have parallel shortcuts corresponding to different paths, but in this case we  only keep the one of largest gain.

It is convenient to think of $A$ as a clique where some arcs may have gain $-\infty$.
Our algorithms are going to compute sets of shortcuts in some base graph $G$.
Based on such a set of shortcuts $S$, it constructs a new graph $G'$ in which $g(xy)$ for 
every arc $xy$ is the maximum between 
$g(xy)$ in $G$ and the gain of the shortcut $xy$ in $S$. Our definitions of gain apply to the original graph or any graph that we obtain when using this procedure.

The following lemma states a core concept of our shortcutting algorithm: Every monotone path has a subpath that is a short monotone path.

\begin{lemma}\label{lemma:mono-has-shortcut}
    Every monotone path $P=v_1 \ldots v_k$ with respect to a charge drop schedule $C$, where $t>1$, contains a short shortcut with respect to $C$.
\end{lemma}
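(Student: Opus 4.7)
The plan is a direct case split on the signs of the effective arc gains, with the descending case symmetric to the ascending case. Write $\tilde{g}_i := g(v_iv_{i+1}) - d_{i+1}$ for the effective gain of the $i$-th arc under the schedule $C=(0,d_2,\dots,d_k)$, so that the cumulative gains satisfy $g^C_{v_i} = \sum_{j<i}\tilde{g}_j$. The ascending hypothesis is then $0 \le g^C_{v_i} \le g^C_{v_k}$ for every $i$, and in particular $\tilde{g}_1 = g^C_{v_2} \ge 0$. If $P$ has at most three vertices then $P$ is itself a short shortcut, so I may assume $k \ge 4$.

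First, if $\tilde{g}_2 \ge 0$ then the three-vertex subpath $v_1v_2v_3$ with the inherited schedule $(0,d_2,d_3)$ has cumulative gains $0 \le \tilde{g}_1 \le \tilde{g}_1+\tilde{g}_2$, so it is ascending and hence a short shortcut. Otherwise $\tilde{g}_2 < 0$, and I turn to $v_2v_3v_4$. If also $\tilde{g}_3 \le 0$ then the inherited schedule $(0,d_3,d_4)$ yields cumulative gains $0 \ge \tilde{g}_2 \ge \tilde{g}_2+\tilde{g}_3$, a descending short shortcut. If instead $\tilde{g}_3 > 0$, I absorb the positive arc $v_3v_4$ into an extra drop at $v_4$ by using the augmented schedule $(0,d_3,d_4+\tilde{g}_3)$; the cumulative gains then become $0,\tilde{g}_2,\tilde{g}_2$, which is descending. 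The descending case for $P$ is handled by a symmetric analysis, starting from $\tilde{g}_1\le 0$ and absorbing any positive follow-up arc into a drop at its endpoint.

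The main obstacle I expect is the traversability check in the last subcase, when $\tilde{g}_3 > 0$ and a drop of $\tilde{g}_3$ is added at $v_4$. The combined effect of traversing $v_3v_4$ and then dropping $\tilde{g}_3$ is to restore the charge to its level at $v_3$, which is nonnegative because $P$ itself is traversable; the intermediate charge right after the arc $v_3v_4$ is only higher (since $\tilde{g}_3 > 0$ forces $g(v_3v_4) > d_4 \ge 0$), and the extra drop is bounded by $B$ because $|g(uv)|\le B$ from the preliminaries. Once this bookkeeping is confirmed, the three cases exhaust all possibilities and each produces a short shortcut with respect to a (possibly augmented) subschedule of $C$, completing the proof.
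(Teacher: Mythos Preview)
There is a genuine gap. In the paper, ``short shortcut with respect to $C$'' means a length-$2$ or length-$3$ subpath that is monotone under the \emph{restriction} of $C$ to that subpath --- the paper's own proof makes this explicit (``otherwise $v_{i-1}v_iv_{i+1}$ is monotone with respect to a sub-schedule of $C$''). Your schedule $(0,d_3,d_4+\tilde g_3)$ in the $\tilde g_3>0$ subcase drops strictly more at $v_4$ than $C$ does, so it is not a sub-schedule of $C$; what you actually establish is only that some $2$-edge subpath is monotone under \emph{some} schedule, which is a different (and, for the funnel machinery downstream, weaker) statement.

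More importantly, no local case split on the first three or four vertices can prove the restricted-schedule version, because the short monotone subpath need not sit at the start of $P$. Take $C=0$ and effective arc gains $(\tilde g_1,\ldots,\tilde g_5)=(10,-5,4,-3,20)$: the cumulative gains are $0,10,5,9,6,26$, so $P$ is ascending; yet none of the four $2$-edge subpaths $v_jv_{j+1}v_{j+2}$ is monotone, nor is $v_1v_2v_3v_4$ or $v_2v_3v_4v_5$ --- the unique short monotone subpath under the zero sub-schedule is $v_3v_4v_5v_6$. The paper handles this globally: assuming no short sub-schedule-monotone subpath exists, it deduces alternating signs and strictly decreasing magnitudes $|\tilde g_2|\ge|\tilde g_3|>\cdots>|\tilde g_k|$ (the funnel shape), and then the monotonicity of $P$ forces a contradiction at the final arc. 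To close the gap you would need either to include length-$3$ subpaths and propagate the case analysis down the whole path --- which effectively reproduces the paper's induction --- or to supply another global argument.
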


\begin{proof}
    Let $g_i = g^{P,C}_{v_i}$ for every $1\le i \le k$ and denote $g^e_i = g(v_{i-1}v_{i})-C(v_{i})$ for $1 < i \le k$. 
    Observe that $g^{P,C}_{v_i}= \sum_{j=2}^i g^e_j$, for $1 < i \le k$.
    
    By contradiction, assume that $P$ does not contain a short shortcut with respect to $C$. In particular $k > 4$. Moreover, $sign(g^e_i)\neq sign(g^e_{i+1})$ and $g^e_i\neq 0$ for every $1 < i < k$ (otherwise $v_{i-1} v_i v_{i+1}$ is monotone with respect to a sub-schedule of $C$).
    
    Assume $P$ is descending with respect to $C$, the other case is symmetric. We prove by induction that $|g^e_2|\ge|g^e_3| > \ldots > |g^e_{k}|$.\footnote{Note that the first inequality is weak. This is similar to the definition of funnels in the next subsection (see Definition~\ref{def:funnel}),} The base case holds since otherwise $P$ is not descending with respect to $C$. Let $2<i < k$, we prove that $|g^e_i| > |g^e_{i+1}|$. By contradiction, assume $|g^e_i| \le |g^e_{i+1}|$. It is easy to see that $v_{i-2}v_{i-1}v_i v_{i+1}$ is monotone with respect to $C$.

    Thus, $|g^e_2|\ge|g^e_3| > \ldots |g^e_{k}|$. Since $P$ is descending with respect to $C$, we get $g^{P,C}_{v_k} \le g^{P,C}_{v_{k-1}}$ and $sign(g^e_{k}) < 0 < sign(g^e_{k-1})$. Therefore $|g^e_{k}| \ge |g^e_{{k-1}}|$, a contradiction.
\end{proof}

The following lemma shows the relation between paths that reach full charge when stating with zero charge, to ascending paths.

\begin{lemma}\label{lemma:observation-monotone}
    Let $P$ be a path from $x$ to $y$. If $\alpha_0(P) = B$, i.e., $P$ is strongly traversable and it reaches $y$ with full charge, then $P$ is ascending.
\end{lemma}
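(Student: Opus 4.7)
The plan is to track the actual charge of the car along $P$ together with the amount of charge that is ``lost'' due to the battery cap, and show that both lead directly to the two inequalities required for ascending.

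Let $P = v_1 \ldots v_k$ with $v_1 = x$ and $v_k = y$, and let $c_i$ denote the actual charge at $v_i$ when the car starts at $v_1$ with zero charge and traverses $P$. By the hypothesis $\alpha_0(P) = B$, this traversal is valid, so $c_1 = 0$, $c_k = B$, and $c_i \in [0,B]$ for every~$i$. The update rule gives $c_{i+1} = \min\{B,\, c_i + g(v_i v_{i+1})\}$ whenever $c_i + g(v_i v_{i+1}) \ge 0$, which is guaranteed by traversability. In particular, since $\alpha_0(P) \ge 0$ implies $\alpha_B(P) \ge 0$ (a larger initial charge only increases $c_i$ at every step, up to the cap), $P$ is traversable in the sense of Definition~\ref{def:monotone-path}.

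Next, I would introduce the cumulative ``lost charge'' $L_i := g_{v_i} - c_i$, where $g_{v_i} = \sum_{j=1}^{i-1} g(v_j v_{j+1})$ is the gain at $v_i$ under the zero charge drop schedule. The key observation is that $L$ is a non-decreasing, non-negative sequence with $L_1 = 0$. Indeed, $L_1 = 0 - 0 = 0$, and comparing $L_{i+1}$ to $L_i$ by cases: if no capping occurs at step $i{+}1$ then $c_{i+1} = c_i + g(v_i v_{i+1})$ and $L_{i+1} = L_i$; if capping occurs then $c_{i+1} = B \le c_i + g(v_i v_{i+1})$, so $L_{i+1} - L_i = c_i + g(v_i v_{i+1}) - B \ge 0$. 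Combined with $L_1 = 0$, this yields $L_i \ge 0$ for all $i$ and $L_i \le L_k$.

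From these facts the two ascending inequalities are immediate. For the lower bound, $g_{v_i} = c_i + L_i \ge 0 + 0 = 0 = g_{v_1}$. For the upper bound, $g_{v_i} = c_i + L_i \le B + L_k = c_k + L_k = g_{v_k}$, using $c_i \le B$ and monotonicity of $L$. Together with $g_{v_1} = 0$ and the traversability established above, this is exactly the definition of $P$ being ascending with respect to the zero charge drop schedule, completing the proof.

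There is no real obstacle here; the only point requiring care is setting up the invariant $L_i = g_{v_i} - c_i$ and verifying that it is monotone precisely because charge is lost only when the battery is capped at $B$. Once this bookkeeping is in place, both required inequalities drop out of the boundary conditions $c_1 = 0$ and $c_k = B$.
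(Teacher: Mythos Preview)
Your proof is correct and uses essentially the same underlying observations as the paper's proof: the actual charge $c_i$ satisfies $c_{i+1} \le c_i + g(v_iv_{i+1})$ (your monotonicity of $L$), which together with $c_i \ge 0$ and $c_k = B$ gives both bounds. Your explicit ``lost charge'' invariant $L_i = g_{v_i} - c_i$ makes the argument slightly more direct (the paper handles the upper bound by contradiction), but the content is the same.
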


\begin{proof}
    Denote $P = v_0 \ldots v_k$. Since $P$ can be traversed with no initial charge then $g_{v_0}=0 \le  g_{v_i}$  for every $1\le i \le k$. By contradiction, assume there is $i<k$ such that $g_{v_i} > g_{v_k}$. This means that $g(v_i \ldots v_k) < 0$ and therefore we reach $v_k$ with strictly less charge than $v_i$,   contradicting the assumption that we can reach $v_k$ with full charge.
\end{proof}

\subsection{Arc-Bounded Paths}

We next define arc-bounded paths, a core structure of our algorithm. A path $P=v_1\ldots v_k$ is first-arc bounded if the gain of every $v\in P$ is between the gains of the first two vertices, see Figure~\ref{fig:bounded_and_funnel_example}(a)-(b). We also defined arc-bounded paths with respect to charge drop schedules, see Figure~\ref{fig:bounded_and_funnel_example}(c)-(d). 

\begin{figure}[t]
    \centering
\includegraphics[width=1.00\textwidth]{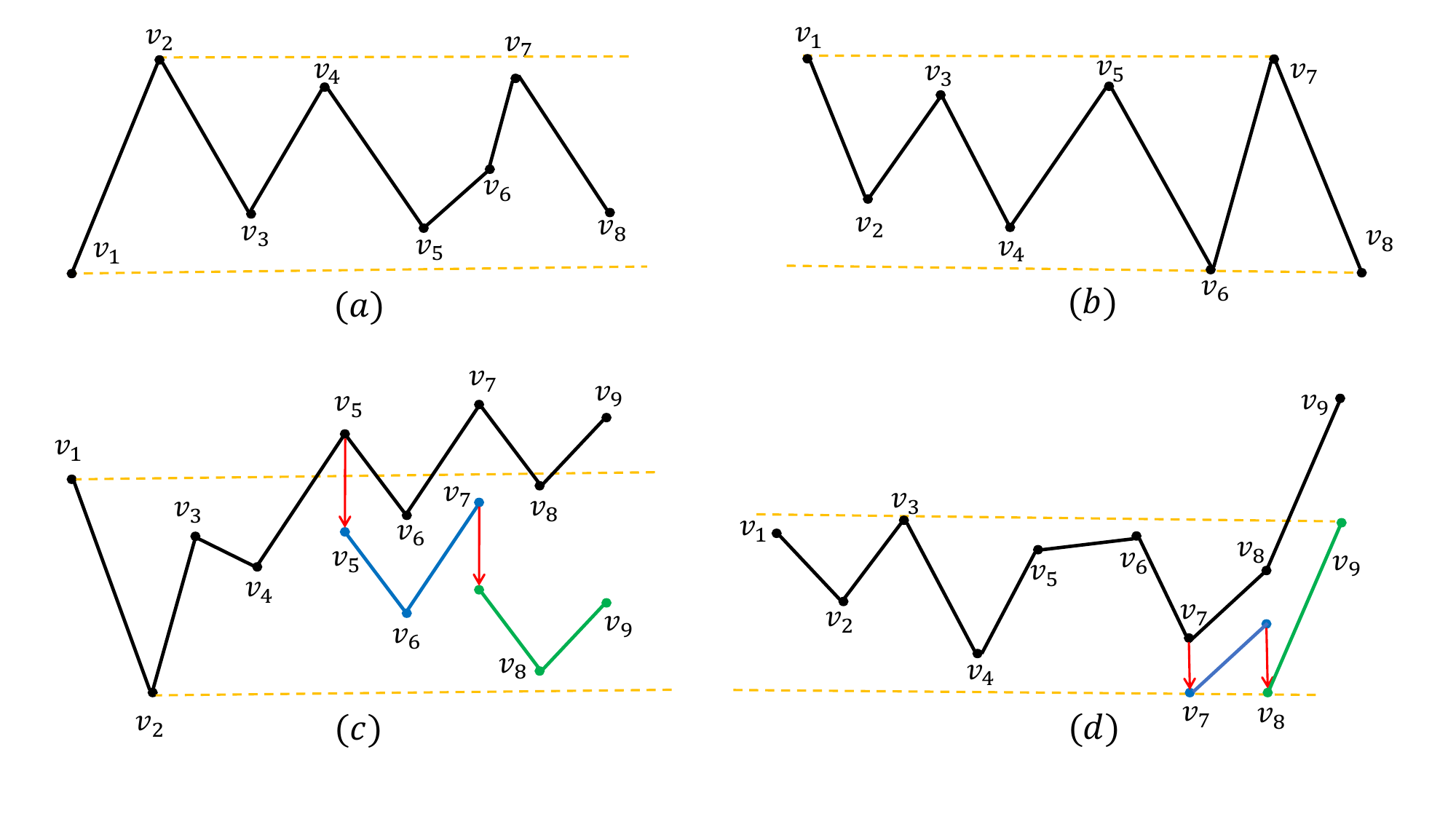}
\caption{
On the top: First-arc (left) and last-arc (right) bounded paths. Both paths are arc bounded paths with respect to the zero schedule. 
On the bottom: First-arc (left) and last-arc (right) bounded paths with respect to different charge drop schedules.}
\label{fig:bounded_and_funnel_example}
\end{figure}

\begin{definition}[Arc-bounded path]\label{def:arc-bounded}
    A path $P=v_1\ldots  v_k$ is first-arc-bounded, or 
    alternatively $v_1 v_2$-bounded with respect to a charge drop schedule $C$ if $C$ does not drop charge at  $v_2$\footnote{Recall Definition~\ref{def:gain} which states that we don't drop charge at $v_1$.} and if one of the following holds
    \begin{itemize}
        \item $g(v_1 v_2) \ge 0 $ and $0 = g^{P,C}_{v_1} \le g^{P,C}_{v_i} \le g^{P,C}_{v_2} = g(v_1 v_2)$, for every $1\le i\le k$.  We say that $P$ is a $ \overunderline{{v_1}{v_2}{v_k}}{2-2}{1-1}$ path with respect to $C$.
        \item  $g(v_1 v_2 )\le 0$ and $g(v_1 v_2)=g^{P,C}_{v_2} \le g^{P,C}_{v_i} \le g^{P,C}_{v_1} = 0$, for every $1\le i\le k$. We say that $P$ is a $\overunderline{{v_1}{v_2}{v_k}}{1-1}{2-2}$ path with respect to $C$.
    \end{itemize}
    Similarly, $P$ is last arc-bounded, or alternatively $v_{k-1} v_k$-bounded with respect to $C$ if $C$ does not drop charge at $v_1$ and  $v_{k-1}$ and $v_k$ and if one of the following holds
    \begin{itemize}
        \item $g^{C}(v_{k-1} v_k) \ge 0 $ and $g^{P,C}_{v_{k-1}} \le g^{P,C}_{v_i} \le g^{P,C}_{v_k}$, for every $1\le i\le k$.  We say that $P$ is a $\overunderline{{v_1}{v_{k-1}}{v_k}}{3-3}{2-2}$ path with respect to $C$.
        \item  $g^C(v_{k-1} v_k )<0$ and $g^{P,C}_{v_k} \le g^{P,C}_{v_i} \le g^{P,C}_{v_{k-1}}$, for every $1\le i\le k$. We say that $P$ is a $\overunderline{{v_1}{v_{k-1}}{v_k}}{2-2}{3-3}$ path with respect to $C$.
    \end{itemize}
     We say that $P$ is arc-bounded if it is either first-arc-bounded or last-arc-bounded. We say that $P$ is negative arc-bounded if the ``bounding" arc is of negative gain.
\end{definition}

\subsection{Funnels}

The following definition defines the structure \emph{funnel}, see Figure~\ref{fig:example_funnel_monotone}(c)-(f). Funnels are defined with respect to the zero charge drop schedule. 

\begin{definition}[Funnels]\label{def:funnel}
    A path $P$ is said to be a \emph{funnel} if it is arc-bounded with respect to the zero schedule 
    and does not contain any monotone path of length $2$ or $3$.
\end{definition}

\begin{lemma}\label{lemma:funnel-zigzag-structure}
    Let $P = v_0 \ldots v_k$ and denote $e_i =  v_{i-1}v_i$ for $i=1,\ldots k$.  $P$ is a funnel if and only if the following two conditions hold.
    \begin{enumerate}
        \item 
        \begin{itemize}
        \item If $P$ is $e_1$-bounded then $|g(e_1)|\ge|g(e_2)| > \ldots |g(e_{k})|>0$, or
        \item If $P$ is $e_{k}$-bounded then $|g(e_{k})| \ge |g(e_{k-1})| > \ldots > |g(e_1)| > 0$.
    \end{itemize}
    Note that all inequalities are strict except the first.
    \item The sign of the arc gains are alternating, i.e., $sign(g(e_i)) = (-1)^{i+1}\cdot sign(g(e_1))$ for
    every $1\le i \le k$.
    \end{enumerate}
\end{lemma}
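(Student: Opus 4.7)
The plan is to prove the two implications separately, with the substantive work for $(\Rightarrow)$ done by induction along $P$ and $(\Leftarrow)$ by direct computation. For $(\Rightarrow)$, by the symmetry between the $e_1$-bounded and $e_k$-bounded cases (which just swaps the roles of the two endpoints of the arc-gain sequence), and between $g(e_1)>0$ and $g(e_1)<0$ (an overall sign flip), it suffices to treat $P$ as $e_1$-bounded with $g(e_1)>0$.

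I would induct on $i\ge 2$ to establish three facts jointly: $g(e_i)\ne 0$, $sign(g(e_i))=(-1)^{i+1}$, and $|g(e_{i-1})|\ge|g(e_i)|$ with strict inequality for $i\ge 3$. The base case $i=2$ is immediate from arc-boundedness: $0\le g_{v_2}\le g(e_1)$ forces $g(e_2)\in[-g(e_1),0]$, and $g(e_2)=0$ would make $v_0v_1v_2$ an ascending length-$2$ monotone subpath, contradicting the funnel assumption. For the inductive step I would rule out each possible violation of the claim for $i+1$ by exhibiting a short monotone subpath of $P$. If $g(e_{i+1})=0$ or $sign(g(e_{i+1}))=sign(g(e_i))$, the length-$2$ subpath $v_{i-1}v_iv_{i+1}$ has gain sequence $0,\ g(e_i),\ g(e_i)+g(e_{i+1})$ with constant nonzero sign and nondecreasing magnitude, hence is monotone. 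If $|g(e_{i+1})|\ge|g(e_i)|$ (with the sign of $g(e_{i+1})$ already forced to $-sign(g(e_i))$ by the previous case), I would inspect the length-$3$ subpath $v_{i-2}v_{i-1}v_iv_{i+1}$ (or $v_0v_1v_2v_3$ when $i=2$). Assuming wlog $g(e_{i-1})>0$, the induction hypothesis yields $g(e_{i-1})+g(e_i)\in[0,g(e_{i-1})]$ and the assumption on $|g(e_{i+1})|$ then gives $g(e_{i-1})+g(e_i)+g(e_{i+1})\ge g(e_{i-1})$, so the internal gain sequence of this subpath lies in $[0,\ g(e_{i-1})+g(e_i)+g(e_{i+1})]$ with its maximum at the endpoint, making the subpath ascending---again a contradiction.

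For $(\Leftarrow)$, assume the two structural conditions and write $a_j=|g(e_j)|$. The alternating partial sum $g_{v_i}=\sum_{j\le i}(-1)^{j+1}a_j$ can be grouped as $(a_1-a_2)+(a_3-a_4)+\cdots$ to see it is $\ge 0$, and as $a_1-(a_2-a_3)-(a_4-a_5)-\cdots$ to see it is $\le a_1=g(e_1)$, so $P$ is $e_1$-bounded. Length-$2$ subpaths fail to be monotone because alternation makes $g(e_i)$ and $g(e_{i+1})$ have opposite signs, so the triple $0,\ g(e_i),\ g(e_i)+g(e_{i+1})$ straddles $0$. For length-$3$ subpaths, the strict inequality $|g(e_{i+1})|>|g(e_{i+2})|$ combined with alternation forces $g(e_{i+1})+g(e_{i+2})$ to have the sign of $g(e_{i+1})$, hence opposite to that of $g(e_i)$; thus the endpoint gain $g(e_i)+g(e_{i+1})+g(e_{i+2})$ lies strictly between $0$ and $g(e_i)$, which rules out ascending (endpoint must equal the maximum) and descending (the internal value $g(e_i)$ already has the wrong sign).

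The main subtlety will be the traversability clause in the definition of a monotone path: each candidate short subpath used above must actually be traversable for the contradiction to fire. This is automatic from the standing assumption $g(uv)\in[-B,B]$ together with arc-boundedness of $P$, which bounds the sum of gains of any two or three consecutive arcs of $P$ inside $[-B,B]$, so a full battery at the start of the short subpath never drops below zero.
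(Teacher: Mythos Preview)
Your proof is correct and follows the same approach as the paper: alternating signs come from the absence of length-$2$ monotone subpaths, and the (strict) decrease of $|g(e_i)|$ is established by induction using the absence of length-$3$ monotone subpaths. You are more thorough than the paper in explicitly handling traversability of the exhibited short subpaths and the degenerate case $g(e_i)=0$; in the $(\Leftarrow)$ direction you additionally verify arc-boundedness from the inequalities, which the paper simply takes as part of the hypothesis.
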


\begin{proof}
    Assume $P$ is a funnel and that it is $e_1$ bounded.
    The proof for the case that $P$ is $e_{k}$-bounded is symmetric.
    The second property is immediate since a funnel does not contain $2$-shortcuts. Since $P$ is $e_1$-bounded it follows that $|g(e_1)| \ge |g(e_2)|$. We prove by induction that $|g(e_i)| > |g(e_{i+1})|$ for $1<i\le k-1$. The base case ($|g(e_2)| > |g(e_3)|$, note the strict inequality) follows since otherwise $e_1e_2e_3$ is a $3$-shortcut.  The inductive step is similar. 

    For the other direction, assume $P$ is $e_1$-bounded and $|g(e_1)|\ge|g(e_2)| > \ldots |g(e_{k})| > 0$ and also  $sign(g(e_i)) = (-1)^{i+1}\cdot sign(g(e_1))$ for every $1\le i\le k$. The second property guarantees that $P$ does not contain $2$-shortcuts and together with the first property we get that $P$ does not contain $3$-shortcuts. 
\end{proof}

As an immediate corollary of the above structural lemma, we observe that a subpath of a funnel is also a funnel.

\section{Relating the Path Structures to \texorpdfstring{$\alpha(\cdot, \cdot)$}{Alpha(.,.)}}
We present several lemmas that relate monotone paths, arc-bounded paths and funnels to the $\alpha(\cdot,\cdot)$ values of $G$. We start with the following lemma that characterizes traversable paths. It states that a path is traversable if and only if it has no subpath that loses more than $B$ gain (charge). Moreover, the lemma shows how to calculate $\alpha_b(P)$ of a path $P$, where $b\in [0,B]$, using the largest gain of a vertex on $P$ and $g(P)$, see Figure~\ref{fig:alpha_general}.

\begin{figure}
    \centering
    % First image (a)
    \includegraphics[width=0.8\textwidth]{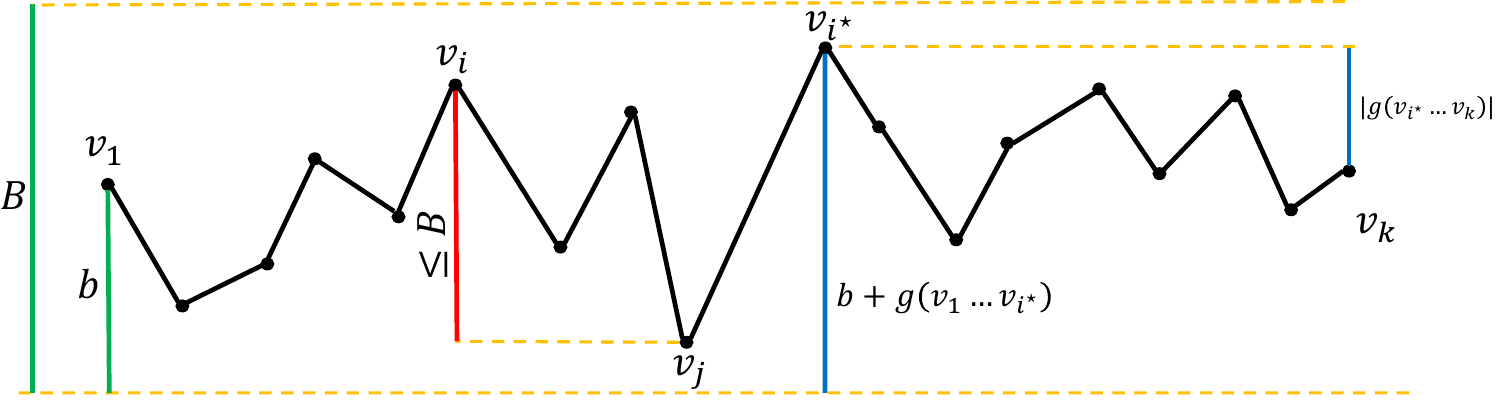}
    \par\smallskip % Adds a small space between the image and its label
    (a)
    
    \vspace{0.5cm} % Adds vertical space between the two images
    
    % Second image (b)
    \includegraphics[width=0.8\textwidth]{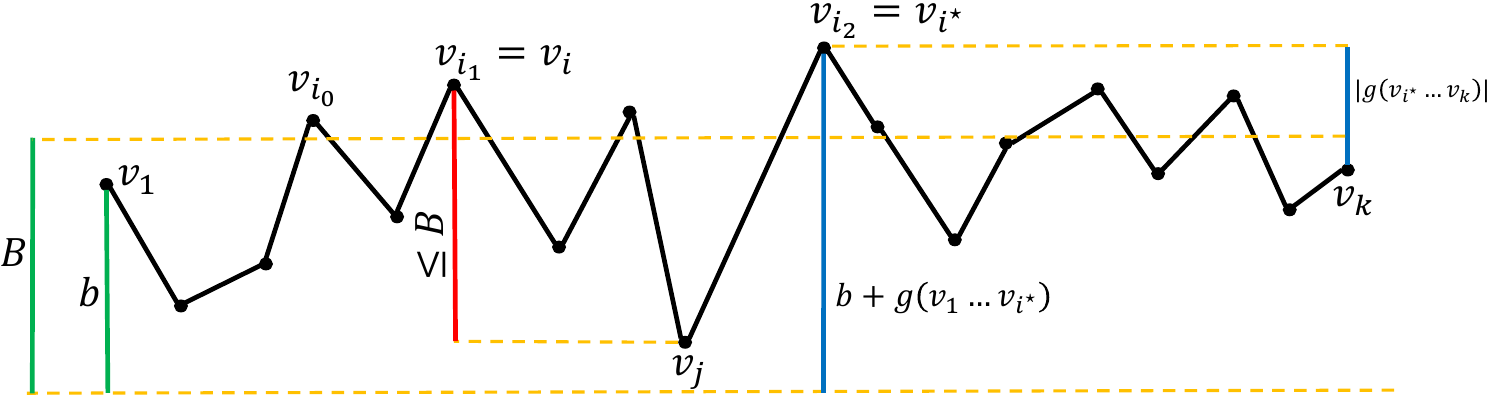}
    \par\smallskip % Adds a small space between the image and its label
    (b)
\caption{Illustration of Lemma~\ref{lem:alpha_general}. We start at $v_1$ with $b$ charge. The subpath $v_i\ldots v_j$ has the lowest gain $g(v_i\ldots v_j) = \min_{i'<j'} g(v_{i'}\ldots v_{j'})$. As depicted, $|g(v_i\ldots v_j)|\le B$. Moreover, every prefix $v_1\ldots v_t$ has gain $g(v_1\ldots v_t)\ge -b$ and therefore $
\alpha_b(P)\neq -\infty$. The vertex of maximum gain is $v_{i^\star} = \argmax_{v_i} g(v_1\ldots v_i)$. If $b+g(v_1\ldots v_{i^\star})\le B$, see Figure~$(a)$, then $\alpha_b(v_1\ldots v_{i^\star}) = b + g(v_1\ldots v_{i^\star})$. Otherwise, see Figure~$(b)$, $\alpha_b(v_1\ldots v_{i^\star}) = B$. In both cases $\alpha_b(v_1\ldots v_{k}) = \alpha_b(v_1\ldots v_{i^\star}) + g( v_{i^\star} \ldots v_k)$.}  
\label{fig:alpha_general}
\end{figure}

\begin{lemma}\label{lem:alpha_general}
 Let $P = v_1\ldots v_k$ and $b\in [0,B]$. Then
 \begin{itemize}
     \item $\alpha_b(P) \ge 0 $ if and only if for every $1\le j \le k$ it holds that $g(v_1\ldots v_{j})\ge -b$ and for every $1\le j_1 \le j_2 \le k$ it holds that $g(v_{j_1}\ldots v_{j_2})\ge -B$.
     \item If $\alpha_b(P) \ge 0 $ then $\alpha_b(P) =  \min \{B, b + g(v_1\ldots v_{i^\star}) \} + g(v_{i^\star} \ldots v_k)$, where $v_{i^\star} = \argmax_{v_i}g(v_1\ldots v_i)$. 
 \end{itemize}
\end{lemma}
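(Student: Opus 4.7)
The plan is to first unpack what $\alpha_b(P)$ means operationally. Let $b_j$ denote the actual charge at $v_j$ during the traversal starting with $b$: $b_1=b$, and inductively $b_{j+1}=\min\{B,\,b_j+g(v_j v_{j+1})\}$ provided $b_j+g(v_j v_{j+1})\ge 0$. Then $\alpha_b(P)\ge 0$ exactly when all $b_j$ are well-defined, in which case $\alpha_b(P)=b_k$. The capping step is one-sided, $b_{j+1}\le b_j+g(v_j v_{j+1})$, which telescopes to $g(v_{j_1}\ldots v_{j_2})\ge b_{j_2}-b_{j_1}$ whenever the right side makes sense.

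For the forward direction of the first item, if $\alpha_b(P)\ge 0$ then every $b_j\in[0,B]$, so the telescoped inequality directly yields $g(v_{j_1}\ldots v_{j_2})\ge b_{j_2}-b_{j_1}\ge -B$ for all $j_1\le j_2$, and, taking $j_1=1$, $g(v_1\ldots v_j)\ge b_j-b\ge -b$. For the converse, I would argue inductively that under the two hypotheses the traversal reaches each $v_j$ with $b_j\ge 0$. For each $j$, let $v_{j'}$ be the latest index $\le j$ at which the cap was triggered (so $b_{j'}=B$), or $j'=1$ if no capping has happened up to $v_j$. Since no cap fires strictly between $v_{j'}$ and $v_j$, we have $b_j=b+g(v_1\ldots v_j)$ in the no-cap case, and $b_j=B+g(v_{j'}\ldots v_j)$ otherwise. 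The hypotheses $g(v_1\ldots v_j)\ge -b$ and $g(v_{j'}\ldots v_j)\ge -B$ respectively give $b_j\ge 0$, and the same case split applied one step further shows that the arc $v_j v_{j+1}$ is traversable, completing the induction.

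For the second item, let $v_{i^\star}=\argmax_i g(v_1\ldots v_i)$. For every $j\ge i^\star$, $g(v_{i^\star}\ldots v_j)=g(v_1\ldots v_j)-g(v_1\ldots v_{i^\star})\le 0$, so no cap fires after $v_{i^\star}$ and consequently $b_k=b_{i^\star}+g(v_{i^\star}\ldots v_k)$. The telescoped inequality combined with $b_{i^\star}\le B$ gives $b_{i^\star}\le \min\{B,\,b+g(v_1\ldots v_{i^\star})\}$. For the matching lower bound I split into the same two cases: with no prior capping, $b_{i^\star}=b+g(v_1\ldots v_{i^\star})\le B$ and equality is immediate; with a prior cap last at $v_{j'}<i^\star$, the maximality of $v_{i^\star}$ forces $g(v_{j'}\ldots v_{i^\star})\ge 0$, so $b_{i^\star}=B+g(v_{j'}\ldots v_{i^\star})\ge B$, and hence $b_{i^\star}=B=\min\{B,\,b+g(v_1\ldots v_{i^\star})\}$ here too.

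The main subtlety is the bookkeeping for the cap in the backward direction of the first item: one must recognise that after each cap the charge equals exactly $B$, so the right invariant is not the full prefix gain but the gain accumulated since the last cap, which is what the uniform subpath hypothesis $g\ge -B$ controls. Once this ``reset'' structure is isolated, the formula in the second item follows almost mechanically, since the prefix-maximum vertex cannot be followed by a further capping event on $P$.
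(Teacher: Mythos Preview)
Your proof is correct and follows essentially the same approach as the paper: both track the ``last capping point'' $v_{j'}$ and use that between successive caps the charge evolves additively. Your presentation is slightly more streamlined---you make the telescoped inequality $g(v_{j_1}\ldots v_{j_2})\ge b_{j_2}-b_{j_1}$ explicit (which the paper leaves implicit in its ``trivial'' forward direction) and you organise the second item around the clean observation that no cap can fire after $v_{i^\star}$, whereas the paper folds the second item into the same global case split (cap ever fires vs.\ not) used for the first. One small point worth making explicit: in your cap case you conclude $b_{i^\star}=B$ and then assert $B=\min\{B,\,b+g(v_1\ldots v_{i^\star})\}$; this does follow, but only because your earlier upper bound $b_{i^\star}\le b+g(v_1\ldots v_{i^\star})$ gives $b+g(v_1\ldots v_{i^\star})\ge B$ here---you might want to say that out loud.
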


\begin{proof}
    We begin by proving the first claim. The first direction (in which we assume $\alpha_b(P)\ge 0$) is trivial. Assume that for every $1\le j \le k$ it holds that $g(v_1\ldots v_{j})\ge -b$ and for every $1\le j_1 \le j_2 \le k$ it holds that $g(v_{j_1}\ldots v_{j_2})\ge -B$. We split into the following cases.

    \textbf{Case $1$: $b + g_{v_i}< B$ for $i=1,\ldots ,k$ (Figure~\ref{fig:alpha_general}(a)):}
    We prove by induction on $i=1,\ldots, k $ that $\alpha_b(v_1\ldots v_i) = b + g_{v_i}$. The base case is immediate $\alpha_b(v_1) = b = b+ g_{v_1}$. Let $1 < i \le k$ and assume that $\alpha_b(v_1\ldots v_{i-1}) = b + g_{v_{i-1}}$. Note that
    $$\alpha_b(v_1\ldots v_{i-1})  + g(v_{i-1} v_i) = b+ g_{v_{i-1}} + g(v_{i-1} v_i) = b + g_{v_i} \ge 0,$$
    where the last inequality holds by the assumption. It follows that $\alpha_b(v_1\ldots v_i)\ge 0$. Since in this case we assume $b+g_{v_i} < B$ for all $i=1,\ldots,k$, it follows that $\alpha_b(v_1\ldots v_{i}) = b + g_{v_{i}}$. We conclude that $\alpha_b(P) = b + g_{v_k} \ge 0$.  

    \textbf{Case $2$: There is an $1\le i \le k$ such that $b + g_{v_i} \ge B$ (Figure~\ref{fig:alpha_general}(b)):} 
    Let $1 \le i_0 \le k$ be minimal such that $b+ g_{v_{i_0}} \ge B$. Similarly to Case $1$, we get that for every $1\le i < i_0$,
    $\alpha_b(v_1 \ldots v_i) = b + g_{v_i}$. Note that 
    $$\alpha_b(v_1 \ldots v_{i_0 - 1}) + g(v_{i_0-1} g_{v_{i_0}})
    = b+ g_{v_{i_0-1}} + g(v_{i_0-1} v_{i_0}) 
    = b+ g_{v_{i_0}} \ge B, 
    $$
    and therefore $\alpha_b(v_1 \ldots v_{i_0}) = B$. Let $i_1 > i_0$ be minimal such that $g_{v_{i_1}} \ge g_{v_{i_0}}$.
    Note that for every $i_0< j < i_1$, it holds that 
    \[
    0 \le  B + g(v_{i_0}\ldots v_j) = B + (g_{v_j} - g_{v_{i_0}}) \le B,
    \]
    where the first inequality follows by the statement of the lemma and the last inequality follows since $i_0 < j  < i_1$.    
    Thus, for every $i_0< j < i_1$, it holds that $\alpha_b(v_1\ldots v_j) = B + ( g_{v_j} - g_{v_{i_0}}) $ and $\alpha_b(v_1\ldots v_{i_1}) = B$. By continuing this process we  get a sequence of indices $i_0<i_1< \ldots <i_t$ for which $\alpha_b(v_1 \ldots v_{i_j}) = B$, for every $1\le j \le t$, and $v_{i_{t}}$ has the largest gain in $P$ (that is $i_t=i^\star$ from the second statement of the lemma). We prove by induction on $i$, for $i_t \le i \le k$, that $\alpha_b(v_1 \ldots v_i) = B +   ( g_{v_i} - g_{v_{i_t}}) \ge 0$ and in particular $\alpha_b(P) \ge 0$. The base case $i = i_t$ holds since we already proved that $\alpha_b(v_1\ldots v_{i_t})=B$. Let $i_t < i \le k$. By the inductive hypothesis, we get that
    \[
    \alpha_b(v_1 \ldots v_{i-1}) + g(v_{i-1} v_i) = B + ( g_{v_{i-1}}-g_{v_{i_t}}) + g(v_{i-1} v_i) = B + ( g_{v_i} - g_{v_{i_t}} ) = B + g(v_{i_t} \ldots v_i) \ge 0, 
    \]
    where the inequality holds by the statement of the lemma. Thus, $\alpha_b(v_1 \ldots v_{i}) \ge 0$. Moreover 
    $\alpha_b(v_1 \ldots v_{i-1}) + g(v_{i-1} v_i) = B + ( g_{v_i} - g_{v_{i_t}} ) \le B$ and therefore 
    $\alpha_b(v_1 \ldots v_{i}) = B + (g_{v_i} - g_{v_{i_t}})$.
    
    We now prove the second statement of the lemma. Assume $\alpha_b(P) \ge 0$. We split to the same cases as before.

    \textbf{Case $1$: $b + g_{v_i}< B$ for $i=1,\ldots ,k$ (Figure~\ref{fig:alpha_general}(a)):}
    As we have seen $\alpha_b(P) = b + g(P)$. Thus
    \begin{align*}
        \alpha_b(P) &= b + g(P) 
        \\&=  b + g(v_1\ldots v_{i^\star}) + g(v_{i^\star}\ldots v_k) 
        \\&= \min \{B, b + g(v_1\ldots v_{i^\star})\} + g(v_{{i^\star}}\ldots v_k).
    \end{align*}

    \textbf{Case $2$: there is $1\le i \le k$ such that $b + g_{v_i} \ge B$ (Figure~\ref{fig:alpha_general}(b)):} 
    Recall the sequence of prefix maxima $v_{i_1},\ldots v_{i_t}$ on $P$ with respect to the gains from the first part of the proof. It follows that $i^\star = i_t$ and we proved that   $\alpha_b(P) = B + (g_{v_k} - g_{v_{i^\star}})$. Therefore
    \begin{align*}
        \alpha_b(P) = B +(g_{v_k} - g_{v_{i^\star}})
        = B + g(v_{i^\star} \ldots v_k)
        = \min\{B,  b+g_{v_{i^\star}}\} + g(v_{i^\star} \ldots v_k).
    \end{align*}
\end{proof}

The following lemma is used extensively in order to lower bound $\alpha_b(P)$, for a monotone path $P$ and $b\in [0,B]$, by the gain of $P$. 

\begin{lemma}\label{lemma:alpha-of-monotone}
    Let $P=v_1\ldots v_k$ be a monotone path with respect to a charge drop schedule $C$. Let $b\in [0,B]$.
    \begin{itemize}
        \item If $P$ is descending with respect to $C$ and $g^C(P)\ge -b$, then   $\alpha_b(P) \ge b+g^C(P)$.  
        \item If $P$ is ascending with respect to $C$, then  $\alpha_b(P) = \min\{B,b+g(P)\} \ge \min\{B,b+g^C(P)\}$. In particular, $P$ is strongly traversable.
    \end{itemize}
\end{lemma}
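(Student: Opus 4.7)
The overall strategy is to treat the two cases separately, using Lemma~\ref{lem:alpha_general} for the ascending case and a direct ``not dropping charge can only help'' argument for the descending case. Throughout, I will exploit the fact that $g_{v_i} \ge g^C_{v_i}$ pointwise (with equality at $v_1$), since a charge drop schedule only subtracts nonnegative quantities from the running sum.

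\emph{Descending case.} The plan is to compare $\alpha_b(P)$ against the final charge obtained by physically executing the schedule $C$ along $P$ starting from charge $b$. First, I will verify that this ``with drops'' traversal is battery-feasible: at each vertex $v_i$ the running charge equals $b + g^C_{v_i}$, and since $P$ is descending with respect to $C$ we have $g^C_{v_k} \le g^C_{v_i} \le g^C_{v_1} = 0$; combined with the hypothesis $g^C(P) = g^C_{v_k} \ge -b$ and $b \le B$, this places the running charge in $[0, B]$ at every step, so no truncation or underflow occurs and the final charge is exactly $b + g^C(P)$. Next, I will argue that voluntarily discarding charge can never increase the final charge: any ``no-drop'' traversal dominates the ``with-drop'' traversal pointwise (charge only ever greater or equal, and the lower-bound constraint is preserved because the no-drop charge is at least the with-drop charge). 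Hence $\alpha_b(P) \ge b + g^C(P)$.

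\emph{Ascending case.} Here the plan is to reduce to the zero schedule and invoke Lemma~\ref{lem:alpha_general}. By Lemma~\ref{lem:ascending_no_CDS}, $P$ is ascending with respect to the zero schedule, so $0 = g_{v_1} \le g_{v_i} \le g_{v_k} = g(P)$ for all $i$. This has two consequences: the argmax vertex $v_{i^\star}$ in Lemma~\ref{lem:alpha_general} can be taken to be $v_k$, and $g(v_1 \ldots v_j) = g_{v_j} \ge 0 \ge -b$ for every prefix. To apply Lemma~\ref{lem:alpha_general} I also need the uniform lower bound $g(v_{j_1}\ldots v_{j_2}) \ge -B$ on all subpaths; this is where I use that monotone paths are traversable by definition, i.e.\ $\alpha_B(P) \ge 0$, which by the characterization half of Lemma~\ref{lem:alpha_general} supplies exactly this bound. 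Invoking the formula in Lemma~\ref{lem:alpha_general} with $i^\star = k$ then yields $\alpha_b(P) = \min\{B, b + g(P)\}$. Finally, since $g(P) = g_{v_k} \ge g^C_{v_k} = g^C(P)$, monotonicity of $\min\{B, b + \cdot\}$ gives the inequality $\min\{B, b + g(P)\} \ge \min\{B, b + g^C(P)\}$, and setting $b = 0$ shows $\alpha_0(P) = \min\{B, g(P)\} \ge 0$, proving strong traversability.

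The main subtlety, and the only step that needs any care, is the feasibility verification in the descending case: one must simultaneously invoke descending-ness (to get $g^C_{v_i} \le 0$, preventing overflow) and the hypothesis $g^C(P) \ge -b$ together with descending-ness (to get $g^C_{v_i} \ge g^C(P) \ge -b$, preventing underflow). Once these two bounds are in hand, the ``no drops dominates drops'' comparison is immediate by induction on $i$. Everything else is a direct bookkeeping use of Lemmas~\ref{lem:alpha_general} and~\ref{lem:ascending_no_CDS}.
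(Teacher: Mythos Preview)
Your proposal is correct. The ascending case is essentially identical to the paper's proof: both reduce to the zero schedule via Lemma~\ref{lem:ascending_no_CDS}, extract the subpath bound $g(v_{j_1}\ldots v_{j_2})\ge -B$ from traversability, and apply Lemma~\ref{lem:alpha_general} with $i^\star = k$.

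For the descending case you take a genuinely different route. The paper verifies the two hypotheses of Lemma~\ref{lem:alpha_general} directly (using $g_{v_j}\ge g^C_{v_j}\ge g^C(P)\ge -b$ for prefixes, and $g(v_{j_1}\ldots v_{j_2})\ge g^{P,C}_{v_{j_2}}-g^{P,C}_{v_{j_1}}\ge g^C(P)\ge -B$ for subpaths), then plugs into the formula and bounds each term. You instead run the elementary operational comparison: define the virtual with-drops charge $c_i = b + g^C_{v_i}$, observe that descending-ness and the hypothesis $g^C(P)\ge -b$ confine every $c_i$ to $[0,B]$, and then inductively show the actual charge $\bar c_i$ dominates $c_i$ (the key step being $\min\{B,\bar c_{i-1}+g(v_{i-1}v_i)\}\ge c_i$, which holds because both $B\ge c_i$ and $\bar c_{i-1}+g(v_{i-1}v_i)\ge c_{i-1}+g(v_{i-1}v_i)\ge c_i$). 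This is arguably more intuitive and sidesteps Lemma~\ref{lem:alpha_general} entirely for this half; the paper's approach buys uniformity by using the same lemma for both cases. One phrasing caution: your claim ``no truncation occurs'' applies only to the virtual sequence $c_i$, not to the actual traversal (the intermediate value $\bar c_{i-1}+g(v_{i-1}v_i)$ can exceed $B$); but your induction handles this correctly since you only need $\min\{B,\cdot\}\ge c_i$, not equality.
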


\begin{proof}
    We begin by proving the first claim. Since $P$ is descending with respect to $C$, we get that for every $1\le i \le k$, it holds that $g_{v_{i}} \ge g^{P,C}_{v_i} \ge g^C(P) \ge -b$. Let $1\le j_1 \le j_2 \le k$. Observe that
    \[g(v_{j_1} \ldots v_{j_2}) \ge g^C(v_{j_1} \ldots v_{j_2})  = g^{P,C}_{v_{j_2}} - g^{P,C}_{v_{j_1}} 
    \ge g^{P,C}_{v_{k}} - g^{P,C}_{v_{1}} = g^C(P) \ge -b \ge -B.
    \]
    Therefore, by Lemma~\ref{lem:alpha_general}, $\alpha_b(P)\ge 0$. Let $v_{i^\star}$ be the vertex with the largest gain $g_{v_{i^\star}}$ in $P$. By Lemma~\ref{lem:alpha_general}, it holds that 
    \begin{align*}
        \alpha_b(P) 
        &=  \min \{B, b + g(v_1\ldots v_{i^\star}) \} + g(v_{i^\star} \ldots v_k) 
        \\ &\ge
        \min \{B, b + g^C(v_1\ldots v_{i^\star}) \} + g^C(v_{i^\star} \ldots v_k)
        \\&\numeq{1}
        b + g^C(v_1\ldots v_{i^\star})  + g^C(v_{i^\star} \ldots v_k) 
        \\&= b + g^C(P),
    \end{align*}    
    where Equality $(1)$ holds since $P$ is descending with respect to $C$, so  $g^C(v_1\ldots v_{i^\star}) \le g^{P,C}_{v_1} = 0$.
    
    We now prove the second claim.
    Let $P=v_1\ldots v_k$ be an ascending path with respect to $C$. By Lemma~\ref{lem:ascending_no_CDS}, it follows that $P$ is ascending with respect to the zero schedule.
    Since $P$ is traversable, it follows by Lemma~\ref{lem:alpha_general} that for every $1\le j_1 \le j_2 \le k$ it holds that $g(v_{j_1}\ldots g_{v_{j_2}}) \ge -B$. Since $P$ is ascending, for every $1\le i \le k$ it holds that $g_{v_{i}} \ge g_{v_1} = 0 \ge -b$. Therefore,
    by Lemma~\ref{lem:alpha_general}, we get that 
        $\alpha_b(P) = \min \{B, b+ g(v_1\ldots v_k) \}$ (note that since $P$ is ascending, then $v_k$ is the vertex of largest gain in $P$).
\end{proof}

Let $P= v_1 \ldots v_k$ be a negative arc-bounded path. The following lemma states that if the (negative) bounding arc of $P$ has gain at least $-B$ then $P$ us traversable.

\begin{lemma}\label{lemma:alpha-of-arc-bounded}
    Let $P=v_1\ldots v_k$ be a negative arc-bounded path with respect to a charge drop schedule $C$. Let $b\in [0,B]$.
    \begin{itemize}
        \item If $P$ is first arc-bounded with respect to $C$ and $g(v_1 v_2)\ge -b$, 
        then $\alpha_b(P) \ge b+g^C(P)$.
        \item If $P$ is last arc-bounded with respect to $C$ and $g(v_{k-1} v_k)\ge -B$ and $ g^C(P)\ge -b$, then   $\alpha_b(P) \ge \min \{  b+g^C(P), B + g(v_{k-1} v_k) \}$.
    \end{itemize}
\end{lemma}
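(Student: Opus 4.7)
The plan is to verify the hypotheses of Lemma~\ref{lem:alpha_general} to establish $\alpha_b(P) \ge 0$, then apply its formula $\alpha_b(P) = \min\{B,\, b+g(v_1\ldots v_{i^\star})\} + g(v_{i^\star}\ldots v_k)$, where $v_{i^\star}$ maximizes the prefix gain. The basic bookkeeping I would rely on throughout is the identity $g_{v_i} = g^{P,C}_{v_i} + \sum_{t=2}^{i} d_t$; since all $d_t \ge 0$, this gives $g_{v_i} \ge g^{P,C}_{v_i}$, and more generally $g(v_{j_1}\ldots v_{j_2}) = g^{P,C}_{v_{j_2}} - g^{P,C}_{v_{j_1}} + \sum_{t=j_1+1}^{j_2} d_t \ge g^{P,C}_{v_{j_2}} - g^{P,C}_{v_{j_1}}$.

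For the first bullet, Definition~\ref{def:arc-bounded} gives $g^{P,C}_{v_i} \in [g(v_1v_2),\, 0]$ for every $i$. This immediately yields $g(v_1\ldots v_j) \ge g^{P,C}_{v_j} \ge g(v_1v_2) \ge -b$ and $g(v_{j_1}\ldots v_{j_2}) \ge g(v_1v_2) \ge -B$, so $\alpha_b(P) \ge 0$ by Lemma~\ref{lem:alpha_general}. I would then split according to the formula: if $b + g_{v_{i^\star}} \le B$ then $\alpha_b(P) = b + g(P) \ge b + g^C(P)$, using $g(P) \ge g^C(P)$. Otherwise $\alpha_b(P) = B + g_{v_k} - g_{v_{i^\star}}$, and the desired inequality $\alpha_b(P) \ge b + g^{P,C}_{v_k}$ rearranges (after substituting the identity) to $g^{P,C}_{v_{i^\star}} - \sum_{t=i^\star+1}^{k} d_t \le B - b$, which holds because the left-hand side is nonpositive ($g^{P,C}_{v_{i^\star}} \le 0$ and $d_t \ge 0$) while $B \ge b$.

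For the second bullet, traversability again follows from Lemma~\ref{lem:alpha_general}: using $d_k = 0$, one has $g^{P,C}_{v_{j_2}} - g^{P,C}_{v_{j_1}} \ge g^{P,C}_{v_k} - g^{P,C}_{v_{k-1}} = g(v_{k-1}v_k) \ge -B$ for every subpath, and $g(v_1\ldots v_j) \ge g^{P,C}_{v_j} \ge g^{P,C}_{v_k} = g^C(P) \ge -b$ for every prefix. The pivotal step is identifying $v_{i^\star}$: for every $i \le k-1$, Definition~\ref{def:arc-bounded} forces $d_{k-1} = d_k = 0$, so $g_{v_i} = g^{P,C}_{v_i} + \sum_{t=2}^{i} d_t \le g^{P,C}_{v_{k-1}} + \sum_{t=2}^{k-1} d_t = g_{v_{k-1}}$, and $g_{v_k} = g_{v_{k-1}} + g(v_{k-1}v_k) < g_{v_{k-1}}$. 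Hence $i^\star = k-1$, and Lemma~\ref{lem:alpha_general} gives $\alpha_b(P) = \min\{B,\, b+g_{v_{k-1}}\} + g(v_{k-1}v_k)$. If $b + g_{v_{k-1}} \le B$ this equals $b + g(P) \ge b + g^C(P)$; otherwise it equals $B + g(v_{k-1}v_k)$. In both subcases $\alpha_b(P) \ge \min\{b + g^C(P),\, B + g(v_{k-1}v_k)\}$.

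The main obstacle is that the actual gains $g_{v_i}$ can substantially exceed the controlled quantities $g^{P,C}_{v_i}$ because of accumulated charge drops, so one cannot naively transfer the arc-boundedness of $g^{P,C}$ to $g$. The resolution is that (i) for traversability, only \emph{differences} of $g^{P,C}$ matter, in which the $d_t$ contributions are favorable, and (ii) for pinning down $v_{i^\star}$ in the last-arc-bounded case, the constraints $d_{k-1} = d_k = 0$ baked into Definition~\ref{def:arc-bounded} are precisely what keep the charge-drop contribution at $v_{k-1}$ maximal among indices $\le k-1$, forcing $i^\star = k-1$ and making the case analysis collapse to two clean subcases.
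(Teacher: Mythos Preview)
Your proof is correct and follows essentially the same approach as the paper: verify the hypotheses of Lemma~\ref{lem:alpha_general} via the arc-boundedness inequalities, then evaluate the resulting formula, identifying $i^\star = k-1$ in the last-arc-bounded case using $d_{k-1}=d_k=0$. The only cosmetic difference is that you split explicitly on which argument of the $\min$ is active, whereas the paper manipulates the $\min$ expression directly; the underlying reasoning is the same.
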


\begin{proof}
     We begin by proving the first claim. Since $P$ is first-arc bounded with respect to $C$, we get that for every $1\le i \le k$, it holds that $g_{v_i} \ge g^{P,C}_{v_i} \ge g^{P,C}_{v_2} = g(v_1 v_2) \ge -b$. Let $1\le j_1 \le j_2 \le k$. Observe that
    \[g(v_{j_1} \ldots v_{j_2}) \ge g^C(v_{j_1} \ldots v_{j_2})  = g^{P,C}_{v_{j_2}} - g^{P,C}_{v_{j_1}} 
    \numge{1} g^{P,C}_{v_{2}} - g^{P,C}_{v_{1}} = g(v_1 v_2) -0\ge  -b \ge -B,
    \]
    where Inequality $(1)$ follows since $P$ is first-arc bounded with respect to $C$.
    Therefore, by Lemma~\ref{lem:alpha_general}, $\alpha_b(P)\ge 0$. Let $v_{i^\star}$ be the vertex with the largest gain $g_{v_{i^\star}}$ in $P$. By Lemma~\ref{lem:alpha_general}, it holds that 
    \begin{align*}
        \alpha_b(P) 
        &=  \min \{B, b + g(v_1\ldots v_{i^\star}) \} + g(v_{i^\star} \ldots v_k) 
        \\ &\ge
        \min \{B, b + g^C(v_1\ldots v_{i^\star}) \} + g^C(v_{i^\star} \ldots v_k)
        \\&\numeq{1}
        b + g^C(v_1\ldots v_{i^\star})  + g^C(v_{i^\star} \ldots v_k) 
        \\&= b + g^C(P),
    \end{align*}
    where Equality $(1)$ holds since  $g^C(v_1\ldots v_{i^\star}) \le g^{P,C}_{v_1} = 0$.

    We now prove the second claim. Assume $P$ is last arc bounded with respect to $C$. Since $P$ is $v_{k-1}v_k$-bounded (with respect to $C$) and $g(v_{k-1}v_k) \ge -B$, it follows that there is no subpath $v_{j_1}\ldots v_{j_2}$ of $P$ of gain $g(v_{j_1}\ldots v_{j_2}) < - B$.\footnote{Indeed, by contradiction assume such $j_1 < j_2$ exist. Then $g^C(v_{j_1}\ldots v_{j_2}) \le g(v_{j_1}\ldots v_{j_2}) < - B$ so $|g^{C,P}_{v_{j_1}} - g^{C,P}_{v_{j_2}}| > B$. Together with the inequality $|g^{C,P}_{v_{k-1}} - g^{C,P}_{v_{k}}| \le B$, we get a contradiction to $P$ being arc-bounded with respect to $C$.} Moreover, since $P$ is last arc-bounded, for every $1\le  i \le k$ it holds that $g^P_{v_i} \ge g^{P,C}_{v_k} = g^C(P) \ge -b$, where the last inequality holds by the assumption of the lemma. 
    Thus, by Lemma~\ref{lem:alpha_general}, $\alpha_b(P) \ge 0$.
    Since $P$ is  negative arc-bounded with respect to $C$, it follows that $g^{P,C}_{v_i} \le g^{P,C}_{v_{k-1}}$ for every $i=1,\ldots,k$. In particular, since $v_{k-1}$ accumulated the largest charge drop, 
    we get that $g^{P}_{v_i} \le g^{P}_{v_{k-1}}$ for every $i=1,\ldots,k$.
    Therefore, by Lemma~\ref{lem:alpha_general}, 
    \begin{align*}
    \alpha_b(P) &=  \min \{B, b + g(v_1\ldots v_{k-1}) \} + g(v_{k-1} v_k) \\
    &= \min \{B + g(v_{k-1} v_k), b + g(v_1\ldots v_{k-1}) + g(v_{k-1}v_k)\} \\
    &= \min \{B + g(v_{k-1} v_k), b + g(P) \}.        
    \end{align*}
\end{proof}

The following structural definition and lemma are from Dorfman et al.~\cite{DorfmanKTZ23}.

\begin{definition}[Entry-exit pairs~\cite{DorfmanKTZ23}]\label{D-entry-exit}
Let $C$ be a positive gain cycle in $G=(V,A,g)$ and let $B$ be the capacity of the battery. A pair of vertices $(x,y)$ on~$C$ is an \emph{entry-exit} pair of~$C$ if the car can start at~$x$ with an empty battery and eventually get to~$y$, possibly after going several times around the cycle, with a full battery, i.e., with a charge of~$B$.
\end{definition}
 
The following lemma characterise the structure of optimal paths, see Figure~\ref{fig:opt-path}.

\begin{figure}[t]
    \centering
    \includegraphics[scale=0.45]{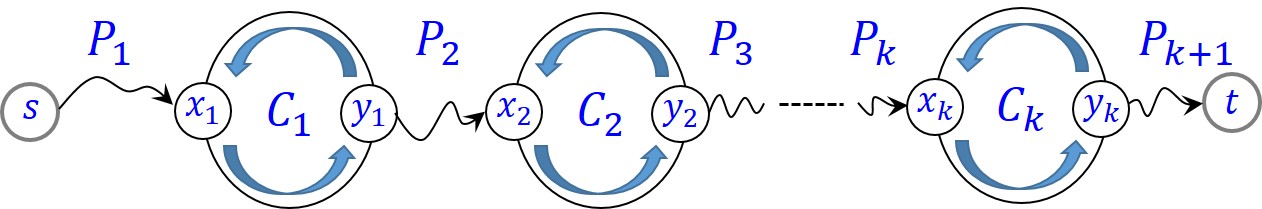}
    \caption{Generic structure of minimum energetic paths in the presence of negative cycles. If $\alpha_b(s,t)\ge -\infty$, then there is a minimum energetic path from~$s$ to~$t$ of the form shown, where $C_1,\ldots,C_k$ are simple negative cycles and $(x_i,y_i)$ is an entry-exit pair on~$C_i$, for $i=1,2,\ldots,k$. All entries $x_1,x_2,\ldots,x_k$ are distinct and all exits $y_1,y_2,\ldots,y_k$ are distinct. The paths $P_1,P_2,\ldots,P_{k+1}$ are simple but necessarily disjoint from the cycles $C_1,C_2,\ldots,C_k$.
    }
    \label{fig:opt-path}
\end{figure}

\begin{lemma}[Lemma 2.6 of \cite{DorfmanKTZ23}]\label{lemma:optimal-structure}
 If there is a traversable path~$P$ from $s$ to $t$ in $G$, then there is a traversable path~$P'$ from $s$ to $t$ such that $\alpha_b(P')\ge \alpha_b(P)$, for every $b \in [0,B]$, where $P'$ has the following form: either $P'$ is simple, or there is a sequence $C_1,C_2,\ldots,C_k$ of simple positive gain cycles, where $k<n$, with entry-exit pairs $(x_1,y_1),(x_2,y_2),\ldots,(x_k,y_k)$ on them, such that $P'$ is composed of a simple path from~$s$ to~$x_1$, followed by sufficiently many traversals of~$C_1$ that end in~$y_1$ with a full battery, followed by a simple path from~$y_1$ to~$x_2$, followed by sufficiently many traversals of~$C_2$ that end in~$y_2$ with a full battery, and so on, and finally a simple path from~$y_k$ to~$t$. Furthermore, all entries $x_1,x_2,\ldots,x_k$ are distinct, and all exits $y_1,y_2,\ldots,y_k$ are distinct.
\end{lemma}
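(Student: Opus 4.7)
The plan is a minimum-counterexample and exchange argument. Start with any traversable path $P$ from $s$ to $t$, and let $P^*$ be a traversable path with $\alpha_b(P^*) \ge \alpha_b(P)$ for every $b \in [0,B]$, chosen to minimize the number of arcs among all such paths. If $P^*$ is simple then we are done, so assume otherwise. I would then establish the claimed structure in three stages.

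First I would show that every cycle $C$ contained in $P^*$ has strictly positive gain. A key monotonicity fact, easily derived from Lemma~\ref{lem:alpha_general}, is that for any path $R$ the function $b \mapsto \alpha_b(R)$ is nondecreasing and capped at $B$. Now suppose $P^* = Q_1 \cdot C \cdot Q_2$ where $C$ is a closed subpath based at some vertex $v$ with $g(C) \le 0$. Starting at $v$ with any charge $c$, traversing $C$ returns to $v$ with charge at most $c$, because the only way to gain charge on $C$ is through arcs of positive gain followed by possible truncation at the cap $B$, and the total gain is nonpositive so the net effect cannot exceed $c$. By the monotonicity fact, applying $Q_2$ starting from this larger charge at $v$ yields a terminal charge at $t$ at least as large as $\alpha_b(P^*)$, for every initial $b$ at $s$. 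Hence $Q_1 \cdot Q_2$ contradicts the minimality of $P^*$, so every cycle in $P^*$ has positive gain.

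Second I would extract the cycle-traversal structure. Walking along $P^*$ and peeling off inner loops, identify maximal consecutive traversals of simple positive gain cycles $C_1, C_2, \ldots, C_k$ interspersed with simple subpaths $P_1, P_2, \ldots, P_{k+1}$. For each $C_i$, let $y_i$ be the vertex where the battery first reaches capacity $B$ during the consecutive traversals of $C_i$, and let $x_i$ be the vertex of $C_i$ where these traversals began. This makes $(x_i, y_i)$ an entry-exit pair in the sense of Definition~\ref{D-entry-exit}. Minimality of the total number of arcs forces the traversals of $C_i$ to end precisely at $y_i$ with full battery: otherwise, extending or truncating cycle use to end at the first full-battery vertex would yield a shorter path with charge at least as large at the exit, and then monotonicity of $\alpha_b$ in the remaining suffix gives at least the same terminal charge. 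Similarly, each intermediate $P_j$ must be simple, since any repetition inside $P_j$ would constitute another positive gain cycle to be extracted separately.

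Finally, distinctness of the entries $x_1, \ldots, x_k$ and of the exits $y_1, \ldots, y_k$ follows from another exchange argument using minimality. If $x_i = x_j$ for $i < j$, the segment of $P^*$ strictly between these two occurrences could be deleted, with the additional traversals of $C_j$ absorbed into the cycle phase at $C_i$ since the car reaches $x_i = x_j$ with the same (full) battery state; this shortens the path while preserving $\alpha_b$. Repeated exits are handled symmetrically. The main obstacle in this plan is the first step: cycle removal is subtle because the battery cap at $B$ breaks the naive ``subtract $g(C)$'' intuition that works for ordinary shortest paths. The cleanest resolution is the monotonicity principle above, which says extra entering charge on any suffix can never hurt, and this is what makes the whole exchange argument go through.
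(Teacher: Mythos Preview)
First, note that the paper does not prove this lemma: it is quoted as Lemma~2.6 of \cite{DorfmanKTZ23} and used as a black box, so there is no in-paper argument to compare your outline against.

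Your Step~1 is sound: capping at $B$ can only discard charge, so $\alpha_c(L)\le c+g(L)\le c$ for any closed subwalk $L$ with $g(L)\le 0$, and monotonicity of $b\mapsto\alpha_b(R)$ on the suffix lets you delete $L$. The real gap is Step~2. Knowing that every closed subwalk of a minimal-arc $P^*$ has positive gain does not imply that $P^*$ decomposes into simple subpaths interleaved with repeated traversals of one fixed simple cycle at a time; a walk such as $\ldots v,a,v,b,v\ldots$ uses two different simple cycles based at the same vertex, and ``peeling off inner loops'' will not collapse them into the required form. More damaging is your definition of $y_i$: a minimal-arc $P^*$ need not reach charge $B$ during any cycle phase. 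Take $B=10$ and the graph with arcs $s\to a$ (gain $-9$), a cycle $a\to b\to a$ (gains $+2,-1$), and $a\to c\to t$ (gains $+8,-9$). With $P=s,a,b,a,c,t$ the minimal-arc $P^*$ dominating $P$ is $P$ itself: starting from full charge it reaches the second $a$ with charge $2$, hits $B$ only later at $c$, and never reaches $B$ on the cycle, so your $y_1$ is undefined. The structured $P'$ of the lemma must instead use at least nine cycle traversals to reach $B$ at $a$ before proceeding, and is strictly longer. Thus $P'$ has to be \emph{constructed}---e.g., locate a simple positive-gain cycle touched by the walk, fix an entry-exit pair on it, splice in enough traversals to reach $B$, and recurse on the suffix from the exit---rather than read off from a minimal-arc witness.

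Step~3 also slips: $x_i$ is an \emph{entry}, reached after the simple path from $y_{i-1}$ with whatever charge remains, not with full charge; the sentence ``the car reaches $x_i=x_j$ with the same (full) battery state'' is false as written. The correct exchange for coinciding entries uses the entry-exit property $\alpha_0(x_j,y_j)=B$ (any nonnegative charge at $x_i=x_j$ suffices to run $C_j$ up to full and then continue from $y_j$), and since rerouting through $C_j$ may require more traversals than $C_i$ did, distinctness is most cleanly obtained by minimizing $k$ among structured paths rather than the total arc count.
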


\section{Overview of the Algorithm}\label{S-overview}
The algorithm is composed of two stages. The goal of first stage, which is described in Appendix~\ref{S-shortucts}, is to store information about shortcuts that correspond to simple paths. In the second stage, which is described in Appendix~\ref{S-algorithm}, we use the stored information and compute $\alpha_B(s,t)$ for every $s,t\in V$.

\subsection{Stage I}
This stage performs $O\left(n^\alpha\log^2 n\right)$ iterations.
To clarify the presentation we partition these iterations into 
$O\left(\log^2 n\right)$ outer-iterations, which are performed in $\ComputeS$ (see Figure~\ref{figure:compute-shortcuts}), where each of them calls the procedure $\UpdateS(M)$ which performs 
$O(n^\alpha)$
inner-iterations.

In each inner-iteration we take $M\in \RR^{n\times n}$, our current table of shortcuts, and improve it several times. These improvements happen using two procedures $\Simple$ and $\LS$, which take~$M$ and return an improved table $M'$. Both procedures perform computations solely on $G^M$, which is the complete graph whose arc gains are defined according to $M$, i.e., $g(uv)=M[u][v]$ for every $u,v\in V$.

At inner-iteration $i$, we store the shortcuts (and more information) in a data structure $D$. The data structure is a union of $3$ tables. Let $x,y,z\in V$ and let $G^M$ be the current graph of interest. The values in $D$ are defined with respect to $G^M$.

\begin{itemize}
    \item $D[x][y]$ is the maximum gain of a monotone path from $x$ to $y$ we have encountered.\footnote{This value might be negative, or even $-\infty$. This is true also for the next entries.}
    \item $D[xy][z]$ is the maximum gain of a $\overunderline{xyz}{1-1}{2-2}$ path or a $\overunderline{xyz}{2-2}{1-1}$ path. 
    \item $D[x][yz]$ is the maximum gain of a $\overunderline{xyz}{2-2}{3-3}$ path or a $\overunderline{xyz}{3-3}{2-2}$ path.
\end{itemize}

We show in Corollary~\ref{cor:M-G-connection} that these values also correspond to paths in $G$ with at least as much gain.

The following definition  helps us to measure the quality of the values stored in $D$

\begin{definition}
    Let $P=v_1 \ldots v_k$ be a path in $G^M$. We say that the data structure $D$ \emph{dominates} $P$ with respect to a charge drop schedule $C$ if 
    \begin{itemize}
        \item If $P$ is $v_1 v_2$-bounded with respect to $C$, then $D[v_1 v_2][v_k] \ge g^C(P)$.
        \item If $P$ is $v_{k-1} v_k$-bounded with respect to $C$, then $D[v_1][v_{k-1} v_k] \ge g^C(P)$.
        \item If $P$ is monotone with respect to $C$, then $D[v_1][v_k]\ge g^C(P)$.
    \end{itemize}
    If $C$ is the zero schedule we just say that $D$ dominates $P$.
\end{definition}

Let $M$ be the final table of shortcuts computed during Stage I.
Theorem~\ref{theorem:shortcut}, states that for any simple monotone path $P=v_0\ldots v_k$ in $G$, w.h.p.,  $M[v_0][v_k]\ge g(P)$. 
That is, the gain of the arc $(v_0,v_k)$ in
$G^M$ is larger than 
$g(P)$.
This theorem follows from 
Lemma~\ref{lemma:paths-shrink} which shows 
that if $P$ is a monotone path from $v$ to $w$ in $G^{M_i}$
where $M_i$ is the shortcuts table at the beginning of outer-iteration~$i$,
 then there exists a monotone path $P'$ from $v$ to $w$ in $G^{M_{i+1}}$,
where 
$M_{i+1}$ is the shortcuts table at the beginning of outer-iteration $i+1$,
such that  $g(P')\ge g(P)$
and 
 $|P'|=\left(1-\Omega\left( \frac{1}{\log n}\right) \right)|P|$.

\subsection{Stage II}
We begin by utilizing the shortcuts obtained from Stage I and build an auxiliary graph $H = (V^0 \cup V^B, E(H))$, where $V^b = \{v^b \mid v\in V \}$ for $b=0,B$ represents that we are at $v$ with at least $b$ charge.  An arc $u^{b_1}v^{b_2}\in E(H)$ represents that $\alpha_{b_1}(u,v)\ge b_2$. We add to $E(H)$ arcs that we can easily deduce by the shortcuts of Stage I. We compute the transitive closure $H^\star$  of $H$ that has even stronger relations. 
We prove in Theorem~\ref{theorem:B-B-strong-full}, that for every $s,t\in V$ we have $\alpha_B(s,t)=B$ if and only if $s^B t^B \in E(H^\star)$.

Recall the ``cycle-hopping" structure of optimal cycles given by Lemma~\ref{lemma:optimal-structure}. Let $s,t\in V$ and let~$P$ be an optimal path from $s$ to $t$ (i.e., $\alpha_B(P)=\alpha_B(s,t)$) that is structured as in Lemma~\ref{lemma:optimal-structure}. Let $(x_1,y_1),\ldots (x_k,y_k)$  be the entry-exit pairs as in Lemma~\ref{lemma:optimal-structure}. By the  discussion above, $s^B y_k^B \in E(H^\star)$, thus $H^\star$ allows us to skip all of the cycle-hoping and focus on the last path from $y_k$ to $t$. This path is simple and we start traversing it with  full charge. We prove in Lemma~\ref{lemma:alpha-simple-B-0} that $\alpha_B(y_k,t)$ can be derived from a value in $D$ that correspond to a funnel in $G^M$, where $M$ is taken from the last iteration of Stage I.

\section{Stage I - Algorithm for finding 
shortcuts}\label{S-shortucts}
The goal of algorithm $\ComputeS(G)$  (see Figure~\ref{figure:compute-shortcuts}) is to find shortcuts corresponding to monotone simple paths in $G$. The algorithm proceeds in $\log^2 n$ outer-iterations. Each iteration is implemented using the procedure $\UpdateS(M)$, which gets the shortcuts table $M$ of the previous iteration and computes new shortcuts, based on the graph defined by $M$. We claim (see Lemma~\ref{lemma:paths-shrink}) that in each iteration, a monotone path $P$, consisting of shortcuts of the previous iteration, can be replaced by a shorter path $Q$ (consisting of new shortcuts) of length shorter by a factor of $1-\frac{c}{\log n}$, for some constant $c$.

\begin{figure}[t]
% \vspace*{-15pt}
\begin{center}
\hspace*{-0.5cm}
\begin{minipage}{2.6in}
\begin{algorithm}[H]
  \Fn{$\ComputeS(G=(V,A,g))$}{
  \BlankLine
  $M\gets ConstMarix(n,n,-\infty)$ \;
  \BlankLine
  \For{$i=1\ldots n$}{
  $M[i][i] \gets 0$\;
  \For{$(i,j)\in E$}{
    $M[i][j] \gets g(i,j)$
  }
  }
  \For{$t=1\ldots \Theta(\log^2 n)$}{
    $M\gets \UpdateS(M)$
  }
  \BlankLine
  \Return $M$
  }
\end{algorithm}
\end{minipage}
\hspace*{-0.45cm}
\begin{minipage}{3.5in}
\begin{algorithm}[H]
  \Fn{$\UpdateS(M)$}{
  \BlankLine
  $r \gets \Theta(n^\alpha)$ \;
  $M' \gets M$\;
%\tcp*{\rm Init main DS}
  \For{$i=1\ldots r$}{
  $rand \sim U[0,1]$\;
  \If{$rand < \log (n)/r$}{ 
  $M' \gets \max\{M',\LS(M)\}$\;
  }
  $M \gets \Simple(M)$\;
  }
  \Return $\max\{M,M'\}$
  }
\end{algorithm}
\end{minipage}
\end{center}
\caption{Main procedure. Finds shortcuts corresponding to monotone simple paths. It combines many rounds of finding short shortcuts, with rare applications of finding shortcuts that correspond to longer paths (``long shortcuts")}\label{figure:compute-shortcuts}\label{figure:update-shortcuts}
\end{figure}

The procedure $\UpdateS(M)$, which implements an outer-iteration, proceeds as follows. We perform $\tilde{O}(n^\alpha)$ rounds, which we view as inner-iterations, where $\alpha= 0.5$ is a constant that we  set later. 
In each round, we call the procedure $\Simple(M)$ which finds all short shortcuts in~$G^M$ and updates $M$ accordingly, see Figure~\ref{figure:simple-shortcuts}. Also, with a small probability $p= \tilde{\Theta}\left( n^{-\alpha} \right)$ during a round, we also call the procedure $\LS(M)$ which aims to find some $k$-shortcuts in~$G^M$, where $k>3$.
This procedure also updates $M$.

Intuitively, given a monotone path $P$ in $G^M$, $\UpdateS(M)$ aims to reduce its length by computing shortcuts in $G^M$ that can replace monotone subpaths of $P$. Let $P_1,\ldots, P_{n^\alpha}$ be the corresponding shortcutted versions of $P$ with respect to the updated shortcuts tables $M_1,\ldots, M_{n^\alpha}$.  If we succeeded in reducing the length of $P$ (say by a constant factor) by the $\tilde{O}(n^\alpha)$ applications of $\Simple$ (i.e. $|P_{n^\alpha}| \le c|P|$), then we achieved our goal. Otherwise, in most of the iterations we did not find many short shortcuts on $P_i$ in $G^{M_i}$, and therefore $P_i$ mostly consists of funnels (and some of them can be long). For this reason, with small probability (enough to ``hit" such a round) we call $\LS$ which finds some shortcuts that correspond to monotone paths that contain funnels. We prove in Lemma~\ref{lemma:long-shortcuts-shrinks}, which is the central lemma of this paper, that these shortcuts are enough.

Each of these procedures computes a data structure
$D$   storing information about monotone and arc-bounded paths in  $G^M$.
At the end of each such call we update $M$ with new shortcuts based on~$D$.
 
The algorithm maintains the following invariant.

\begin{invariant}\label{invariant}
Let $M$ be the shortcuts table of the current inner-iteration. The following holds throughout the inner-iteration:

\begin{enumerate}[label=(\Alph*)]
    \item \label{1a} If $D[xy][z]\neq -\infty$ then there is a traversable path $P = xy \ldots z$ in $G^M$ and a charge drop schedule~$C$ such that $P$ is $xy$-bounded with respect to $C$ and $g^C(P) = D[xy][z]$.
    We say that $P$ is realizing $D[xy][z]$.
    \item \label{1b} If $D[x][yz]\neq -\infty$ then there is a traversable path $P = x \ldots yz$ in $G^M$ and a charge drop schedule~$C$ such that $P$ is $yz$-bounded with respect to $C$ and $g^C(P) \ge D[x][yz]$.
    We say that $P$ is realizing $D[x][yz]$.
    \item \label{1c} If $D[x][y]\neq -\infty$ then there is a traversable path $P = x \ldots y$ in $G^M$ and a charge drop schedule~$C$ such that $P$ is monotone with respect to $C$ and $g^C(P) = D[x][y]$. Moreover, if $D[x][y]\ge 0$, then~$P$ is strongly traversable. We say that $P$ is realizing $D[x][y]$.
\end{enumerate}
    
\end{invariant}

The full details of the procedures $\Simple$ and $\LS$ are explained in Appendices~\ref{section:simple} and \ref{section:long}, respectively.

\subsection{Initializing the data structure}
At the beginning of $\Simple(M)$ and $\LS(M)$, we get a shortcuts table $M$ and initialize the data structure $D$ with respect to $G^M$. This initialization creates trivial paths: For every $x,y\in V$ we set $D[x][y]=M[x][y]$ and $D[x][xy]=D[xy][y] = M[x][y]$.

\subsection{Short Shortcuts}\label{section:simple}
The goal of this procedure is to find shortcuts that dominate all (ascending/descending) monotone paths in $G^M$ of length at most $3$, see Figures~\ref{figure:simple-shortcuts} and~\ref{fig:trivial_shortcuts}. Finding $2$-shortcuts is easy. 
We find them all by simply checking for every triplet $x,y,z\in V$ whether $xyz$ is an ascending path in $G^M$, see Figure~\ref{fig:trivial_shortcuts}$(a2)$, and if not, we create a descending shortcut by dropping the right amount of charge, see Figure~\ref{fig:trivial_shortcuts}$(b1)$-$(b3)$. 
We classify monotone paths $xyaz$ of length $3$ according to the eight possibilities for the \emph{sign} of $M[x][y], M[y][a]$ and $M[a][z]$. In seven out of the eight cases we compute shortcuts that dominate these paths similarly to the computation of $2$-shortcuts: For every $x,y,z\in V$ we concatenate the arc $xy$ with the length $2$ shortcut from $y$ to $z$ that were previously computed. We then check whether this results in an ascending shortcut, see Figure~\ref{fig:trivial_shortcuts}$(a1)$, and otherwise we perform charge drops to get a descending shortcut, see Figure~\ref{fig:trivial_shortcuts}$(c1)-(c6)$.
This is done in
 $\Trivial(M,D)$, see Figure~\ref{fig:trivial_shortcuts}. This procedure captures almost all of the possible monotone paths of length at most $3$, except for three cases shown in Figure~\ref{figure:simple-shortcuts}. 

The three special cases of monotone paths $xyaz$ are the following. In all cases the signs of the arc gains alternate between positive and negative

\textbf{Case $1$:} In this case the \emph{sign} pattern of $M[x][y],M[y][a],M[a][z]$ is the same as in Figure~\ref{fig:trivial_shortcuts}$c(3)$. That is, $M[x][y],M[a][z]\ge 0$ and $M[y][a]\le 0$.
Here the path $xyaz$ is ascending and therefore we create an ascending shortcut from $x$ to $z$, see Figure~\ref{figure:simple-shortcuts}$(a)$.

The last two cases are associated with the sign pattern $M[x][y],M[a][z]\le 0$ and $M[y][a]\ge 0$.

\textbf{Case $2$:} The path $xyaz$ satisfies  $g_a\in [g_y,g_x]$. In this case either $g_z \le g_y$, meaning that $xyaz$ is descending, or $g_z \in [g_y,g_a]$. In the latter case we can perform a charge drop at $z$ to make $xyaz$ descending with respect to the appropriate schedule, see Figure~\ref{figure:simple-shortcuts}$(b)$.

\textbf{Case $3$:} The path $xyaz$ satisfies $g_a > g_x (= 0)$. In this case, in order to make $xyaz$ descending, we perform a charge drop at $a$ such that its gain after the drop is the same as the gain of $x$ (which is zero). If after this charge drop $z$ has larger gain than $y$, then we also perform a charge drop at $z$ so that it matches the gain of $y$, see Figure~\ref{figure:simple-shortcuts}$(c)$.

The computation of these three cases of shortcuts is done in $\Simple(M)$ (see Figure~\ref{figure:simple-shortcuts}) as follows. For every triplet $x,y,z\in V$ we do the following. In all cases we aim to compute shortcuts with gains as large as possible.

\begin{figure}[ht!]
\begin{algorithm}[H]
  \Fn{$\Simple(M)$}{
  $D \gets Init\mathtt{-}DS(M)$\;
  \BlankLine
    $\Trivial(M,D)$  \tcp*[h]{\rm Adding to $D$ both $2$-shortcuts and easy $3$-shortcuts in $G^M$}\\
  \For(\tcp*[h]{Creating $2D$ range trees for $yaz$ paths}){$y,z\in V$}{
    $T_{yz}\gets RT(M[y][\cdot],M[y][\cdot]+M[\cdot][z])$ \\
    $T'_{yz}\gets RT(M[y][\cdot], M[\cdot][z])$ 
  }
  \For(\tcp*[h]{\rm $3$-shortcuts}){$x,y,z\in V$}{ 
        \If{$M[x][y]\ge0$}{
        \BlankLine
        $(-,k_2) \gets T_{yz}.range(k_1\in [-M[x][y],0]).max\_ k_2()$ \\ \tcp*{largest gain at $z$ without going below $x$}
        \If(\tcp*[h]{\rm ascending shortcut}){$M[x][y] + k_2 \ge M[x][y]$}{
            $D[x][z] \gets \max\{D[x][z],M[x][y]+k_2\}$
        }
      }
      \If{$M[x][y] \le 0$}{
        \BlankLine
        $(-,k_2) \gets T_{yz}.range(k_1\in[0,|M[x][y]|]).max\_ k_2()$ \\ \tcp*{Largest gain at $z$ without going above $x$}
        \If(\tcp*[h]{\rm descending shortcut}){$M[x][y]\ge M[x][y]+k_2 \ge -B$}{
            $D[x][z] \gets \max\{D[x][z],M[x][y]+k_2\}$
        }
        \BlankLine
        \If(\tcp*[h]{\rm descending shortcut, charge drop is needed at $z$}){$M[x][y]+k_2 \ge M[x][y]$}{
            $D[x][z] \gets \max\{D[x][z],M[x][y]\}$
        }
        \BlankLine
        $(-,M[a][z]) \gets T'_{yz}.range(k_1\in[|M[x][y]|,\infty)).max\_ k_2()$ \\ \tcp*{Largest gain of last arc while going above $x$}
        $D[x][z]\gets  \max\{D[x][z], \min \{M[x][y], M[a][z] \} \}$ \tcp*{Charge drop at $a$ and $z$}
      }
  }
  \BlankLine
  \Return $ D.shortcuts$
  }
\end{algorithm}

\begin{center}
    \includegraphics[width=1\textwidth]{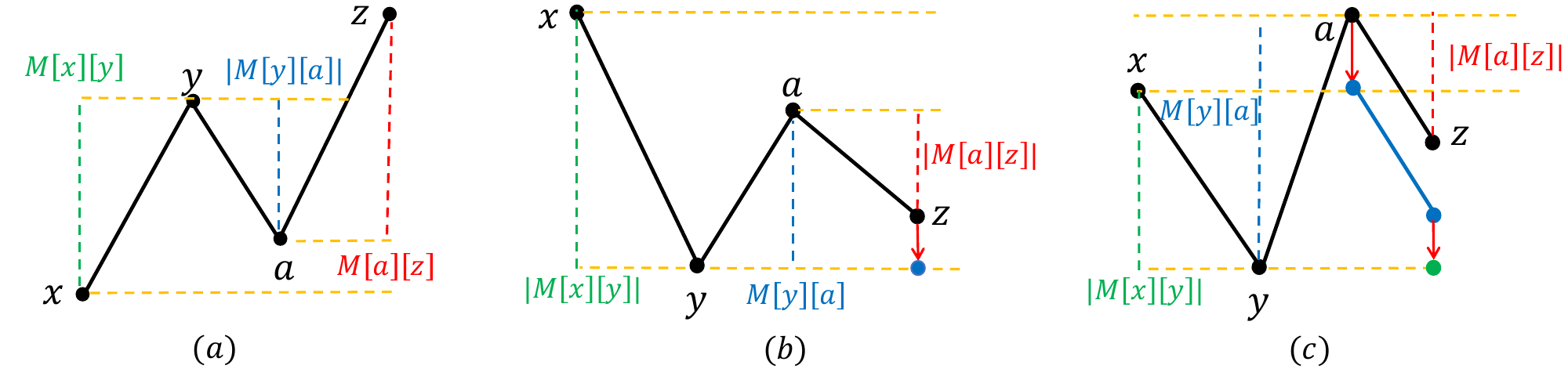}
\end{center}

\caption{Hardest cases to compute $3$-shortcuts. In these cases the signs of the arc gains alternate between positive and negative.}\label{figure:simple-shortcuts}
\end{figure}

\begin{figure}[ht!]
\begin{algorithm}[H]
  \Fn{$\Trivial(M, D)$}{
  $M_2,M_3 \gets ConstMatrix(n,n,-\infty)$ \tcp*{\rm Matrices for shortcuts of paths of length $2$ or $3$}
  \For(\tcp*[h]{\rm $2$-shortcuts}){$x,y,z\in V$}{ 
    \If(\tcp*[h]{\rm ascending shortcuts}){$M[x][y] \ge 0 \wedge M[y][z]\ge 0$}{
        $M_2[x][z] \gets \max\{M_2[x][z],M[x][y]+M[y][z]\}$
    }\Else(\tcp*[h]{\rm descending shortcuts with charge drop}){
    \If{$\min\{M[x][y],0\}+\min\{M[y][z],0\} \ge -B$}{
        $M_2[x][z] \gets \max\{M_2[x][z],\min\{M[x][y],0\}+\min\{M[y][z],0\}\}$
    }
    }
    }
    \For(\tcp*[h]{\rm easy $3$-shortcuts}){$x,y,z\in V$}{ 
      \If(\tcp*[h]{\rm ascending shortcuts}){$M[x][y] \ge 0 \wedge M_2[y][z]\ge 0$}{
        $M_3[x][z] \gets \max\{M_3[x][z],M[x][y]+M_2[y][z]\}$
        }\Else(\tcp*[h]{\rm descending shortcuts with charge drop}){
            \If{$\min\{M[x][y],0\}+\min\{M_2[y][z],0\} \ge -B$}{
            $M_2[x][z] \gets \max\{M_2[x][z],\min\{M[x][y],0\}+\min\{M_2[y][z],0\}\}$
            }
        }
    }
    \For{$x,z\in V$}{
        $D[x][z]\gets \max\{D[x][z], M_2[x][z], M_3[x][z] \}$
    }
  }
\end{algorithm}

\begin{center}
    \includegraphics[width=1\textwidth]{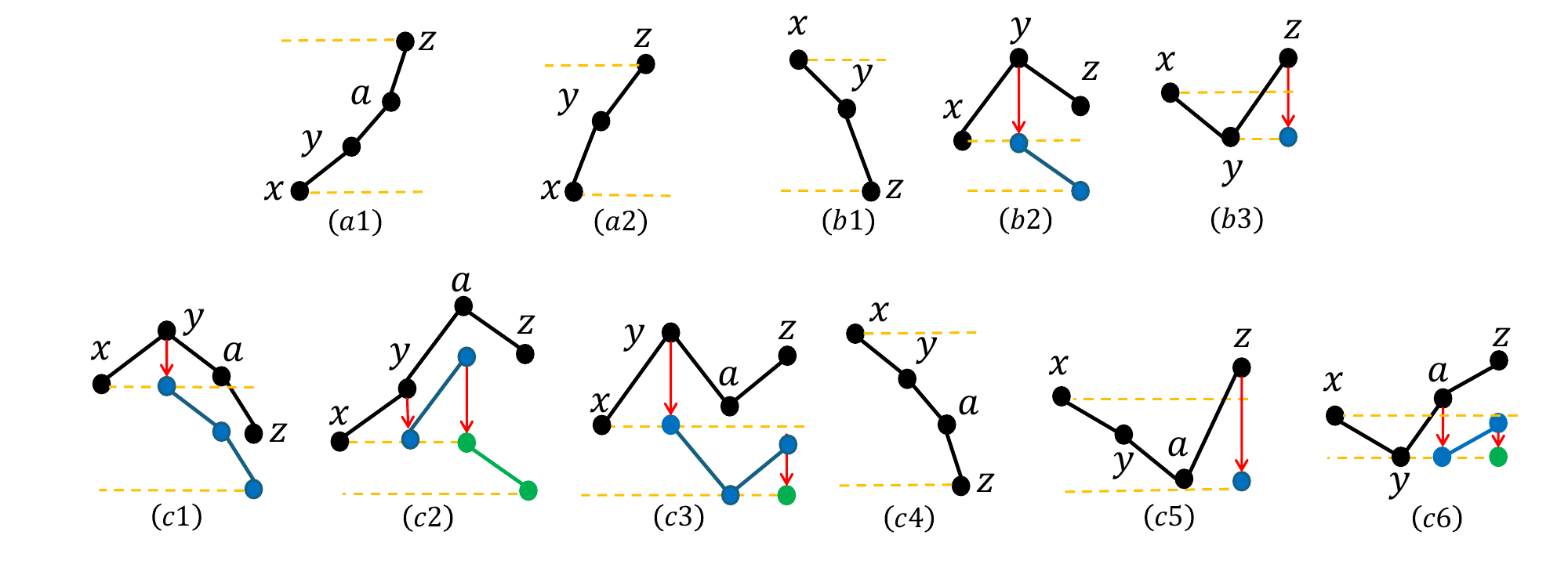}
\end{center}

\caption{
All cases of easy shortcuts, the dashed yellow lines represent the maximum and minimum gains in the paths (with respect to the charge drop schedule). The depicted charge drop schedules are optimal, i.e., the paths are descending with largest possible gain. Figures $(a1),(a2)$ are the only ascending shortcuts. Figures $(b1)$-$(b3)$ are descending $2$-shortcuts with respect to a charge drop schedule that cancels every positive gain arc. Figures $(c1)$-$(c6)$ are descending $3$-shortcuts. Note that the suffix $yaz$ of these paths (except for $(c3)$) has the same schedule as in $(b1)$-$(b3)$. Case $(c3)$ is special since if $z$ was higher, then $xyaz$ was ascending. This is handled in $\Simple(M)$. 
}\label{fig:trivial_shortcuts}
\end{figure}

Assume $M[x][y]>0$, we try to find shortcuts corresponding to Case $1$. That is, we find $a\in V$ such that $xyaz$ is ascending and $a$ satisfies $M[y][a]\le 0$ and $M[a][z]\ge 0$, see Figure~\ref{figure:simple-shortcuts}$(a)$. The computation of $a\in V$ is done as follows. Among all nonpositive gain arcs $ya$ whose gain in absolute value is smaller than the gain of $xy$, we want to find the one which maximized
$M[y][a]+M[a][z]$ and $M[y][a]+M[a][z] \ge 0$. To make this search efficient we store all the pairs $(M[y][a], M[y][a]+M[a][z])$, for $a\in V$, in a $2D$ range tree $T_{yz}$. Creating such a range tree $T_{yz}$ can be done in $O(n\log^2 n)$ time.
Finally, our update for the triplet $x,y,z\in V$ will find amongst pairs in $T_{yz}$ in which the first key $M[y][a]$ satisfies $M[y][a]\in [-M[x][y],0]$, the pair in which its second key $M[y][a]+M[a][z]$ is maximal.\footnote{It is enough to use a range tree in which the secondary structures are heaps rather than search trees because we only need to find the maximum in every secondary data structure. This saves a $\log n$ factor.} This is done in $O(\log^2 n)$ time both in the construction and initialization.

Assume $M[x][y]<0$,
finding shortcuts corresponding to Case $2$ is done similarly to shortcuts corresponding to Case $1$ by utilizing the range tree $T_{yz}$, see Figure~\ref{figure:simple-shortcuts}$(b)$. We find shortcuts corresponding to Case $3$ as follows. Among all $a\in V$ such that $M[y][a] \ge |M[x][y]|$, we find $a\in V$ such that $M[a][z]$ is maximized, see Figure~\ref{figure:simple-shortcuts}$(c)$. To make this search efficient we store all the pairs $(M[y][a], M[a][z])$, for $a\in V$, in a $2D$ range tree $T'_{yz}$, for every pair $y,z\in V$.

The pseudocode of $\Simple(M)$ is given in Figure~\ref{figure:simple-shortcuts}. This pseudocode, and the pseudocodes of the algorithms in the next sections, use range trees as follows. Let $K_1$ and $K_2$ be arrays of length $n$. We denote by $RT(K_1[\cdot],K_2[\cdot])$ the operation of creating a $2D$ range tree with key pairs $(K_1[i],K_2[i])$, for $1\le i \le n$.

\begin{lemma}
    Let $P$ be a monotone path in $G^M$ of length $k\in\{2,3\}$ with respect to a charge drop schedule $C$. Then at the end of $\Simple(M)$, $D$ dominates $P$ with respect to $C$.
\end{lemma}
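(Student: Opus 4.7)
The plan is to proceed by a case analysis on $k\in\{2,3\}$ and, within each, on the sign pattern of the arc gains of $P$, distinguishing the ascending and descending modes. For each case I will (i) compute the maximum of $g^C(P)$ over all schedules $C$ that make $P$ monotone in the given mode, and (ii) verify by inspection of the pseudocode that either $\Trivial$ or one of the three updates of $\Simple$ records in $D[v_1][v_k]$ a value at least this maximum. Throughout, Lemma~\ref{lem:ascending_no_CDS} lets us assume $C$ is the zero schedule whenever $P$ is ascending, so charge drops only come into play for descending $P$.

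The length-$2$ case is direct: ascending forces both arcs nonnegative and is matched by $\Trivial$'s ascending branch, while for descending $P=v_1v_2v_3$ a direct minimization of $d_2+d_3$ subject to $g^C_{v_i}\le 0$ yields the optimum $g^{C^*}(P)=\min\{g(v_1v_2),0\}+\min\{g(v_2v_3),0\}$, exactly the value assigned by $\Trivial$'s descending branch (which is recorded whenever it is at least $-B$, as implied by traversability). For length-$3$ ascending $P=v_1v_2v_3v_4$, the ascending inequalities quickly rule out all sign patterns except $(+,+,+)$ and $(+,-,+)$: the former is captured by the easy $3$-shortcut loop of $\Trivial$ via $M_2[v_2][v_4]$, and the latter is precisely Case~$1$ of $\Simple$, since ascending forces $g(v_2v_3)\in[-g(v_1v_2),0]$ and $g(v_2v_3)+g(v_3v_4)\ge 0$, making $v_3$ a valid witness for the range query on $T_{v_2v_4}$ that produces $D[v_1][v_4]\ge g(P)$.

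For descending length-$3$ paths the easy $3$-shortcut loop of $\Trivial$ again handles all sign patterns except $(-,+,-)$, because the optimal descending schedule drops exactly the positive part of each arc and this is precisely what the recursive update $\min\{M[x][y],0\}+\min\{M_2[y][z],0\}$ encodes. Only for pattern $(-,+,-)$ is the positive middle arc wasted by composing through $M_2$, and here Cases~$2$ and $3$ of $\Simple$ take over. When $g_{v_3}:=g(v_1v_2)+g(v_2v_3)\le 0$, the optimum is $g^{C^*}(P)=g(P)$ if $g(v_2v_3)+g(v_3v_4)\le 0$ and $g^{C^*}(P)=g(v_1v_2)$ otherwise, matched by the two updates in Case~$2$ of $\Simple$ with $v_3$ as the witness in $T_{v_2v_4}$. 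When $g_{v_3}>0$, the descending inequalities $g^C_{v_3}\le 0$ and $g^C_{v_2}\ge g^C_{v_4}$ yield the constraints $d_2+d_3\ge g_{v_3}$ and $d_3+d_4\ge g(v_2v_3)+g(v_3v_4)$, and minimizing $d_2+d_3+d_4$ subject to these equals $\max\{g_{v_3},g(v_2v_3)+g(v_3v_4)\}$, giving $g^{C^*}(P)=\min\{g(v_1v_2),g(v_3v_4)\}$; this matches the $\min\{M[x][y],M[a][z]\}$ update of Case~$3$ of $\Simple$ using $T'_{v_2v_4}$, for which $v_3$ is a valid witness since $g(v_2v_3)\ge|g(v_1v_2)|$. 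The main obstacle I anticipate is this last small LP argument together with the corresponding bookkeeping for $(-,+,-)$, while the other patterns reduce to straightforward sign-matching against the pseudocode of $\Trivial$ and Cases~$1$--$2$ of $\Simple$.
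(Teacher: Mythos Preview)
Your plan is correct and follows essentially the same case analysis as the paper's proof: split on $k\in\{2,3\}$, then on the sign pattern of the arc gains, and match each case to the appropriate update in $\Trivial$ or in Cases~1--3 of $\Simple$. The one organizational difference is that you first compute the optimum $g^{C^*}(P)=\max_C g^C(P)$ over all schedules making $P$ descending and then show the algorithm records at least this value, whereas the paper works directly with the given $C$ and upper-bounds $g^C(P)$. Both are valid since $g^{C^*}(P)\ge g^C(P)$.

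For the hardest case $(-,+,-)$ with $g_{v_3}>0$ (the paper's Case~8.2), your LP argument is correct but the paper's route is shorter: from the descending constraints alone one reads off $g^C(P)\le g^{P,C}_{v_2}\le g(v_1v_2)$ and, since $g^{P,C}_{v_3}\le 0$, also $g^C(P)=g^{P,C}_{v_3}+g(v_3v_4)-C(v_4)\le g(v_3v_4)$, giving $g^C(P)\le\min\{g(v_1v_2),g(v_3v_4)\}$ immediately, without solving the LP. Your assertion that ``the optimal descending schedule drops exactly the positive part of each arc'' for the other seven patterns is true and routine to check, and the $\ge -B$ guards in $\Trivial$ do follow from traversability in each of those patterns (in each, the sum of the negative arc gains coincides with the gain of some contiguous subpath of $P$); you should make that explicit when you write it out.
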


\begin{proof}
    Assume $P=xyz$ is of length $2$. Since $P$ is traversable we get that $M[x][y]+M[y][z]\ge -B$. We split into cases according to the signs of the arc gains of $P$, see Figure~\ref{fig:trivial_shortcuts} Cases $(a2)$ and Cases $(b1)-(b3)$.
    
    \textbf{Case $1$: $M[x][y],M[y][z]\ge 0$ (Figure~\ref{fig:trivial_shortcuts}$(a2)$):}  Clearly $P$ is ascending with respect to the zero schedule and clearly from the pseudocode of $\Trivial(M,D)$ (see Figure~\ref{fig:trivial_shortcuts}) $D[x][z] \ge M_2[x][z] \ge M[x][y]+M[y][z]=g(P)$.

    \textbf{Case $2$: $M[x][y],M[y][z]< 0$ (Figure~\ref{fig:trivial_shortcuts}$(b1)$):}
    In this case $P$ is descending with respect to the zero schedule. By the pseudocode of $\Trivial(M,D)$, we get that $D[x][z] \ge M_2[x][z] \ge M[x][y]+M[y][z]=g(P)$.

    \textbf{Case $3$: $M[x][y]\ge0 $ and $M[y][z] < 0$ (Figure~\ref{fig:trivial_shortcuts}$(b2)$):} Therefore $P$ must be descending with respect to $C$. Since $g^{P,C}_y \le g^{P,C}_x=0$, it follows that $C(y)\ge M[x][y]$ and therefore $g^C(P) = (M[x][y]-C(y)) + (M[y][z]-C(z)) \le M[y][z]$. Therefore, from the pseudocode of $\Trivial(M,D)$,
    \begin{align*}
        D[x][z] \ge M_2[x][z] \ge \min\{M[x][y],0\}+\min\{M[y][z],0\} = M[y][z] \ge g^C(P).
    \end{align*}

    \textbf{Case $4$: $M[x][y]< 0 $ and $M[y][z]\ge 0$ (Figure~\ref{fig:trivial_shortcuts}$(b3)$):} 
    Therefore $P$ must be descending with respect to $C$. 
    Since $g^{P,C}_y \ge g^{P,C}_z$, it follows that $C(z)\ge M[y][z]$ and therefore $g^C(P) = (M[x][y]-g(y)) + (M[y][z]-C(z)) \le M[x][y]$. Therefore, from the pseudocode of $\Trivial(M,D)$,
    \begin{align*}
        D[x][z] \ge M_2[x][z]  \ge \min\{M[x][y],0\}+\min\{M[y][z],0\} = M[x][y] \ge g^C(P).
    \end{align*}

    Assume $P=xyaz$ is of length $3$. Since $P$ is traversable, we get by Lemma~\ref{lem:alpha_general} that $M[x][y] + M[y][a] + M[a][z] \ge -B$ and $M[y][a]+M[a][z]\ge -B$.\footnote{We mention these properties since $\Trivial$ checks if the assign values are at least $-B$.} We split the rest of the proof into cases according to the signs of the arc gains of $P$, see Figure~\ref{fig:trivial_shortcuts} cases $(a1)$ and cases $(c1)-(c6)$ and Figure~\ref{figure:simple-shortcuts} cases $(a)-(c)$.

    \textbf{Case $1$: $M[x][y],M[y][a],M[a][z]\ge 0$ (Figure~\ref{fig:trivial_shortcuts}$(a1)$):}
    In this case $P$ is ascending with respect to the zero schedule and moreover $yaz$ is ascending. By Case $(a2)$ it holds that after the first for loop $M_2[y][z]\ge g(yaz)$. Therefore, after the second for loop in $\Trivial(M,D)$, we get that 
    $$D[x][z] \ge M_3[x][z] \ge M[x][y]+M_2[y][z]\ge M[x][y] + g(yaz)=g(P).$$

    \textbf{Case $2.1$: $M[x][y] \ge 0$ and $M[y][a] < 0$ and $M[a][z]\ge 0$ and $P$ is ascending (Figure~\ref{figure:simple-shortcuts}$(a)$):}
    Since $P$ is ascending with respect to $C$, it follows by Lemma~\ref{lem:ascending_no_CDS} that $P$ is ascending with respect to the zero schedule. Therefore, $|M[y][a]|\le M[x][y]$ and $M[y][a]+M[a][z]\ge 0$. Thus, the pair $(M[y][a'], M[y][a'] +M[a'][z])$ in $T_{yz}$ with largest $k_2=M[y][a']+M[a'][z]$ that satisfies $M[y][a'] \in [-M[x][y],0]$, satisfies $k_2 \ge M[y][a] +M[a][z] \ge 0$. Therefore, from the pseudocode of $\Simple(M)$,
    \begin{align*}
        D[x][z] \ge M[x][y] + k_2 \ge
        M[x][y] + M[y][a] + M[a][z] = g(P).
    \end{align*}

    \textbf{Case $2.2$: $M[x][y]\ge 0$ and $M[y][a] < 0$ and $M[a][z]\ge 0$ and $P$ is descending (Figure~\ref{fig:trivial_shortcuts}$(c3)$):}
    Since $P$ is descending with respect to $C$ and $M[x][y],M[a][z]\ge 0$ and these are the first and last arcs of $P$, it follows that $C(y)\ge M[x][y]$ and $C(z)\ge M[a][z]$.
    Therefore, from the pseudocode of $\Trivial(M,D)$,
    \begin{align*}
      D[x][z] \ge M_3[x][z] \ge \min\{M[x][y],0\} + \min\{M_2[y][z],0\} = \min\{M_2[y][z],0\}.
    \end{align*}
    To see why $g^C(P) \le \min\{M_2[y][z],0\}$, observe that $g^C(P) \le 0$ by monotonicity and that 
    \begin{align*}
      g^C(P) &\le 
      (M[x][y]-C(y)) + M[y][a] + (M[a][z]-C(z)) \\
      &\le
      M[y][a] = \min\{M[y][a],0\} +  \min\{M[a][z],0\} \le M_2[y][z],
    \end{align*}
    where the last inequality also follows from the pseudocode of $\Trivial(M,D)$.

     \textbf{Case $3$: $M[x][y],M[y][a]\ge 0$ and $M[a][z]< 0$ (Figure~\ref{fig:trivial_shortcuts}$(c2)$):} 
     Since the last arc satisfies $M[a][z]<0$, it follows that $P$ is descending with respect to $C$. Therefore, from the pseudocode of $\Trivial(M,D)$,
     \begin{align*}
      D[x][z] \ge M_3[x][z] \ge \min\{M[x][y],0\} + \min\{M_2[y][z],0\}  = \min\{M_2[y][z],0\}
    \end{align*}
    To see why $g^C(P) \le \min\{M_2[y][z],0\}$, observe that $g^C(P) \le 0$ by monotonicity. Moreover, since the first two arcs have nonnegative gain we get that $C(y) + C(a)\ge M[x][y] + M[y][a]$. Therefore, from the pseudocode of $\Trivial(M,D)$, 
    \begin{align*}
      g^C(P) &\le 
      (M[x][y]-C(y)) + (M[y][a]-C(a)) + M[a][z] \\
      &\le
      M[a][z] = \min\{M[y][a],0\} +  \min\{M[a][z],0\} \le M_2[y][z],
    \end{align*}
    where the last inequality also follows from the pseudocode of $\Trivial(M,D)$.

    \textbf{Case $4$: $M[x][y]\ge 0$ and $M[y][a],M[a][z]<0$ (Figure~\ref{fig:trivial_shortcuts}$(c1)$):} 
    Since the last arc satisfies $M[a][z]<0$, it follows that $P$ is descending with respect to $C$. Therefore, from the pseudocode of $\Trivial(M,D)$,
     \begin{align*}
      D[x][z] \ge M_3[x][z] \ge \min\{M[x][y],0\} + \min\{M_2[y][z],0\}  = \min\{M_2[y][z],0\}.
    \end{align*}
    To see why $g^C(P) \le \min\{M_2[y][z],0\}$, observe that $g^C(P) \le 0$ by monotonicity. Since the first arc $xy$ has nonnegative gain, $C(y)\ge M[x][y]$. Therefore, 
    \begin{align*}
      g^C(P) &\le 
      (M[x][y]-C(y)) + M[y][a] + M[a][z] \\
      &\le
      M[y][a] + M[a][z] = \min\{M[y][a],0\} +  \min\{M[a][z],0\} \le M_2[y][z],
    \end{align*}
    where the last inequality follows from the pseudocode of $\Trivial(M,D)$.
    
    \textbf{Case $5$: $M[x][y], M[y][a],M[a][z] < 0$ (Figure~\ref{fig:trivial_shortcuts}$(c4)$):} 
    Since the first arc $xy$ has negative gain, it holds that $P$ is descending with respect to $C$. It follows from the pseudocode of $\Trivial(M,D)$ that $M_2[y][z] \ge \min\{M[y][a],0\} +  \min\{M[a][z],0\}  = M[y][a]+M[a][z]$. Therefore, 
    \begin{align*}
        D[x][z] \ge M_3[x][z] \numge{1} \min\{M[x][y],0\} +  \min\{M_2[y][z],0\} \ge M[x][y]+M[y][a]+M[a][z] = g(P) \le g^C(P),
    \end{align*}
    where Inequality $(1)$ follows from the pseudocode of $\Trivial(M,D)$.

    \textbf{Case $6$: $M[x][y], M[y][a] < 0$ and $M[a][z] \ge 0$ (Figure~\ref{fig:trivial_shortcuts}$(c5)$):} 
    Since the first arc $xy$ has negative gain, it holds that $P$ is descending with respect to $C$. Since the last arc $az$ is of positive gain, we get that $C(z) \ge M[a][z]$. It follows from the pseudocode of $\Trivial(M,D)$ that $M_2[y][z] \ge \min \{M[y][a],0\}+ \min \{M[a][z],0\} = M[y][a]$. Therefore,
    \begin{align*}
         D[x][z] &\ge M_3[x][z] \numge{1} \min\{M[x][y],0\} +  \min\{M_2[y][z],0\} \ge M[x][y] +   M[y][a] \\
         &\ge M[x][y] + M[y][a] + (M[a][z] - C(z)) \ge
         g^C(P).
    \end{align*}
    where Inequality $(1)$ follows from the pseudocode of $\Trivial(M,D)$.
    
    \textbf{Case $7$: $M[x][y] < 0$ and $M[y][a],M[a][z] \ge 0$ (Figure~\ref{fig:trivial_shortcuts}$(c6)$):} 
    Since the first arc $xy$ has negative gain, it holds that $P$ is descending with respect to $C$. It follows from the pseudocode of $\Trivial(M,D)$ that $M_2[y][z] \ge M[y][a]+M[a][z] \ge 0$. Therefore,
    \begin{align*}
         D[x][z] \ge M_3[x][z] \numge{1} \min\{M[x][y],0\} +  \min\{M_2[y][z],0\} = 0 \ge g^C(P),
    \end{align*}
    where Inequality $(1)$ follows from the pseudocode of $\Trivial(M,D)$.

    \textbf{Case $8$: $M[x][y] < 0 $ and $M[y][a] \ge 0$ and $M[a][z] < 0$ (Figure~\ref{figure:simple-shortcuts}$(b)-(c)$):} 
     We split into sub-cases

    \textbf{Sub-Case $8.1$: $M[y][a]\in [0, |M[x][y]|]$:} Therefore, the pair $(k_1,k_2) = (M[y][a'], M[y][a'] +M[a'][z])$ in $T_{yz}$ with largest $k_2 = M[y][a']+M[a'][z]$ that satisfies $k_1=M[y][a'] \in [0,M[x][y]|]$, satisfies $k_2 \ge M[y][a] +M[a][z]$. Thus, by the two inner-if statements in $\Simple(M)$, we get that
    \begin{align*}
        D[x][z] &\ge \min \{ M[x][y], M[x][y]+k_2 \}\\
        &\ge
        \min \{ M[x][y], M[x][y]+M[y][a]+M[a][z] \}\\
        &=
        \min \{ M[x][y], g(P) \} \ge g^C(P),
    \end{align*}
    where the last inequality follows since $P$ is descending with respect to $C$ so $M[x][y] \ge g^{P,C}_y \ge g^C(P)$.

    \textbf{Sub-Case $8.2$: $M[y][a]\ge |M[x][y]|$:} Therefore, the pair $(k_1,k_2) = (M[y][a'], M[a'][z])$ in $T'_{yz}$ with largest $k_2=M[a'][z]$ that satisfies $k_1=M[y][a'] \ge |M[x][y]|$, satisfies $k_2 \ge M[a][z]$. Thus, by last assignment to $D$ in $\Simple(M)$, we get that
    \begin{align}\label{eq:simple_case_c1}
        D[x][z] &\ge \min \{ M[x][y], k_2 \}
        \ge
        \min \{ M[x][y], M[a][z] \}.
    \end{align}
    Since $P$ is descending with respect to $C$, it holds that $$(M[x][y]-C(y)) + (M[y][a]-C(a)) = g^{P,C}_a \le g^{P,C}_x = 0$$
    and therefore 
    \begin{align}\label{eq:simple_case_c2}
      g^C(P) = (M[x][y]-C(y)) + (M[y][a]-C(a)) + (M[a][z]-C(Z))  \le M[a][z]-C(Z) \le M[a][z].  
    \end{align}
    
    Similarly, since $P$ is descending with respect to $C$, we get that
    \begin{align}\label{eq:simple_case_c3}
       g^C(P) \le g^{P,C}_y =  (M[x][y]-C(y))  \le M[x][y]. 
    \end{align}
    By combining Equations~(\ref{eq:simple_case_c1}),(\ref{eq:simple_case_c2}),(\ref{eq:simple_case_c3}), we get that
    \begin{align*}
        D[x][z] \ge
        \min \{ M[x][y], M[a][z] \} \ge 
        g^C(P).
    \end{align*}
\end{proof}

\begin{lemma}\label{lemma:trivial-shortcuts-invariant}
    Procedure $\Trivial(M,D)$ maintains Invariant~\ref{invariant}\ref{1c}.
\end{lemma}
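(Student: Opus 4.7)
The plan is to verify, for each update to $D[x][z]$ made inside $\Trivial(M,D)$, that the assigned value is realized by a monotone path in $G^M$ under a suitable charge drop schedule. Since $\Trivial$ only increases $D[x][z]$ (via $\max$) and prior entries satisfy the invariant by assumption, it suffices to exhibit witness paths for the new candidate values stored in the auxiliary matrices $M_2$ (length-two shortcuts) and $M_3$ (length-three shortcuts), and then use the final $\max$-loop that transfers these values into $D$.

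For the 2-shortcut loop I would fix a triple $x,y,z$ and consider the path $P=xyz$ in $G^M$. If $M[x][y]\ge 0$ and $M[y][z]\ge 0$ (Figure~\ref{fig:trivial_shortcuts}(a2)), take $C$ to be the zero schedule: the prefix gains $0\le M[x][y]\le M[x][y]+M[y][z]$ are nondecreasing, so $P$ is ascending with gain exactly the assigned $M_2[x][z]=M[x][y]+M[y][z]$. Otherwise (Figure~\ref{fig:trivial_shortcuts}(b1)-(b3)), define $C(y)=\max\{M[x][y],0\}$ and $C(z)=\max\{M[y][z],0\}$, which cancels any nonnegative arc. Under $C$ the two effective arc gains become $\min\{M[x][y],0\}$ and $\min\{M[y][z],0\}$, so the prefix gains satisfy $0\ge g^C_y\ge g^C_z$ and $P$ is descending with gain equal to the assigned value. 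The precondition that this sum is at least $-B$ (checked by $\Trivial$), together with the bound $M[x][y],M[y][z]\ge -B$ inherited from the invariant prior to this call, ensures $P$ is traversable in $G^M$.

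For the 3-shortcut loop I fix $x,y,z$ and use the fact, established by the previous paragraph, that every nontrivial $M_2[y][z]$ is realized by some length-two path $yy'z$ in $G^M$ together with a schedule $C'$ on $\{y,y',z\}$. I would extend $C'$ to a schedule $C$ on the 4-vertex path $P=xyy'z$ by setting $C(y):=C'(y)+\max\{M[x][y],0\}$, so that the first arc of $P$ is cancelled whenever nonnegative. If both $M[x][y]\ge 0$ and $M_2[y][z]\ge 0$, the 2-shortcut witness is necessarily the ascending case and $C'\equiv 0$; a direct prefix-sum check then shows $P$ is ascending of gain $M[x][y]+M_2[y][z]=M_3[x][z]$. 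Otherwise, a short case split on the sign of $M[x][y]$ and on the signs of the two arcs underlying $M_2[y][z]$ shows that under $C$ every prefix gain of $P$ after $x$ is nonpositive and the sequence is nonincreasing, so $P$ is descending with $g^C(P)=\min\{M[x][y],0\}+\min\{M_2[y][z],0\}=M_3[x][z]$. Traversability in $G^M$ follows once again from the $-B$ check combined with $M[x][y]\ge -B$ and the traversability of the 2-path witnessing $M_2[y][z]$.

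The ``moreover'' clause of invariant~\ref{1c} is then automatic: in every ascending case the witness has all prefix gains nonnegative and is strongly traversable by Lemma~\ref{lemma:alpha-of-monotone}, whereas in every descending case at least one of the relevant quantities is strictly negative and the assigned value is therefore strictly negative, so the hypothesis $D[x][z]\ge 0$ is vacuous. The main obstacle in writing out the full proof will be the bookkeeping in the 3-shortcut descending case, where one must separately verify each of the four sign-patterns of the arcs $yy'$ and $y'z$ underlying $M_2[y][z]$, combined with the two sign possibilities for $M[x][y]$, and check in each of the resulting sub-cases that the extended schedule $C$ makes $P$ descending with gain matching $\min\{M[x][y],0\}+\min\{M_2[y][z],0\}$ exactly.
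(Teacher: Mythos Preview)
Your approach is essentially the same as the paper's, and the 2-shortcut analysis and the ascending 3-shortcut case are fine. However, your uniform recipe for the descending 3-shortcut case has a genuine gap.

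Consider the sub-case $M[x][y]<0$ and $M_2[y][z]\ge 0$. Here the algorithm assigns $M_3[x][z]=\min\{M[x][y],0\}+\min\{M_2[y][z],0\}=M[x][y]$. Since the descending branch of the 2-shortcut loop always produces strictly negative values, the witness for $M_2[y][z]\ge 0$ must come from the ascending branch: both $M[y][y']\ge 0$ and $M[y'][z]\ge 0$, with $C'\equiv 0$. Your extension rule gives $C(y)=C'(y)+\max\{M[x][y],0\}=0$, so $C$ is the zero schedule on $xyy'z$. But then
\[
g^{C}_{y'} \;=\; M[x][y]+M[y][y'],
\]
which is positive whenever $M[y][y']>|M[x][y]|$, so the path is not descending with respect to $C$; and
\[
g^{C}(P) \;=\; M[x][y]+M_2[y][z] \;\neq\; M[x][y]
\]
whenever $M_2[y][z]>0$, so the gain does not match the assigned value either. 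Thus neither your claim that the prefix gains are nonincreasing nor the claim that $g^C(P)=M_3[x][z]$ holds under your schedule in this sub-case.

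The paper's proof treats this sub-case separately (its Case~4.1): it takes $C(y)=0$, $C(y')=M[y][y']$, $C(z)=M[y'][z]$, fully cancelling both positive arcs so that every prefix gain after $x$ equals $M[x][y]$. With that schedule the path is descending and $g^C(P)=M[x][y]=M_3[x][z]$ as required. The fix is local: you just need to drop the entire ascending tail rather than reusing the (trivial) $C'$ when the tail is ascending and the first arc is negative.
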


\begin{proof}
    We prove that every time Algorithm $\Trivial(M,D)$ makes an assignment to $D[x][z]$ then there is a path $P$ from $x$ to $z$ and a charge drop schedule $C$ such that $P$ is monotone with respect to $C$ and $g^C(P) = D[x][z]$. We split the proof into cases:

    \textbf{Case $1$: $D[x][z] = M_2[x][z] = M[x][y]+M[y][z]$:}
    The algorithm performs this assignment when $M[x][y],M[y][z]\ge 0$. In particular $xyz$ is ascending with respect to the zero schedule and $g(xyz)=D[x][z]$. Moreover, by Lemma~\ref{lemma:alpha-of-monotone}, $P$ is strongly traversable.

    \textbf{Case $2$: $D[x][z] = M_2[x][z]  = \min\{M[x][y],0\}+\min\{M[y][z],0\} \ge -B$:}
    The algorithm performs this assignment when either $M[x][y]<0$ or $M[y][z] < 0$. Let $P = xyz$. We apply the following charge drop schedule $C$: If $M[x][y] \ge 0$, then $C(y)=M[x][y]$ and if $M[y][z] \ge 0$, then $C(z)=M[y][z]$. It is easy to see that $P$ is descending with respect to $C$ and that $g^C(P) = \min\{M[x][y],0\}+\min\{M[y][z],0\}$.

    \textbf{Case $3$: $D[x][z] = M_3[x][z]  = M[x][y]+M_2[y][z]$:}
    The algorithm performs this assignment when $M[x][y],M_2[y][z]\ge 0$. By Case $1$ above, $M_2[y][z]\ge 0$ implies that there is an ascending path $yaz$, with respect to the zero schedule, of in $G^M$. Therefore, $xyaz$ is ascending with respect to the zero schedule. 

    \textbf{Case $4$: $D[x][z] = M_3[x][z]  = \min\{M[x][y],0\}+\min\{M_2[y][z],0\}\ge -B$:}
    The algorithm performs this assignment when either $M[x][y]<0$ or $M_2[y][z] < 0$.
    We split into sub-cases:
    
    \textbf{Case $4.1$: $M_2[y][z] \ge 0$:}
    This means that $M[x][y]<0$ and therefore $D[x][z] = M[x][y]$. Since $M_2[y][z] \ge 0$, it follows by the pseudocode of $\Trivial(M,D)$ that $M_2[y][z] = M[y][a] + M[a][z]$ where $a\in V $ and $M[y][a],M[a][z]\ge 0$. Let $P = xyaz$ and consider the charge drop schedule $C$ where $C(a) = M[y][a]$ and $C(z) = M[a][z]$. It is easy to see that $P$ is descending with respect to $C$ ($P$ is traversable since $g^C(P) = M[x][y] = D[x][z]\ge -B$).
    
    \textbf{Case $4.2$: $M_2[y][z] < 0$:}
    Therefore $M_2[y][z] = \min \{M[y][a],0 \} + \min \{M[a][z],0 \}$ for some $a\in V$. Moreover $D[x][z] = \min \{M[x][y],0 \} + M_2[y][a]$. Let $P = xyaz$ and consider the charge drop schedule $C$ where $C(y) =  \max \{M[x][y],0 \}$ and $C(a) = \max \{M[y][a],0 \}$ and $C(z) =\max \{M[a][z],0 \}$. Observe that
    \begin{align*}
        g^C(P) &= (M[x][y] - \max \{M[x][y],0 \}) + 
        (M[y][a] - \max \{M[y][a],0 \}) + 
        (M[a][z] - \max \{M[a][z],0 \}) 
        \\ &= \min \{M[x][y],0 \} + \min \{M[y][a],0 \} + \min \{M[y][z],0 \} \\
        &= D[x][z] \ge -B.
    \end{align*}
    Thus, $P$ is traversable.
\end{proof}

\begin{lemma}\label{lemma:simple-shortcuts-invariant}
    Procedure $\Simple(M)$ maintains Invariant~\ref{invariant}\ref{1c}.
\end{lemma}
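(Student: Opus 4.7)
The plan is to observe that $\Simple(M)$ begins by calling $\Trivial(M,D)$, which preserves Invariant~\ref{invariant}\ref{1c} by Lemma~\ref{lemma:trivial-shortcuts-invariant}. Consequently, it suffices to check that each subsequent assignment to $D[x][z]$ inside the main triple-loop is witnessed by a path $P = xyaz$ in $G^M$ and a charge drop schedule $C$ such that $P$ is monotone with respect to $C$ and $g^C(P) = D[x][z]$ (with strong traversability whenever the value is nonnegative). There are four such assignments, corresponding to the three cases depicted in Figure~\ref{figure:simple-shortcuts}; the witness is always the 3-path $P = xyaz$ coming from the range-tree lookup, differing only in the chosen charge drop schedule.

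For the ascending update (case $(a)$, $M[x][y] \ge 0$), the range tree $T_{yz}$ returns $a$ with $M[y][a] \in [-M[x][y], 0]$ and $k_2 = M[y][a] + M[a][z] \ge 0$. Taking the zero schedule, I would verify from the constraints that $g_y = M[x][y] \ge 0$, $g_a = M[x][y] + M[y][a] \in [0, M[x][y]]$, and $g_z = M[x][y] + k_2 \ge M[x][y]$, so $P$ is ascending with $g(P) = D[x][z]$; strong traversability then follows from Lemma~\ref{lemma:alpha-of-monotone}.

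For the two descending updates inside the $M[x][y] \le 0$ branch (case $(b)$), the range tree supplies $a$ with $M[y][a] \in [0, |M[x][y]|]$. When the first guard $M[x][y] + k_2 \in [-B, M[x][y]]$ fires, I use $P$ with the zero schedule; the gains $M[x][y]$, $M[x][y] + M[y][a]$ and $M[x][y] + k_2$ all lie in $[g_z, 0]$, making $P$ descending, and $g_z \ge -B$ gives traversability via Lemma~\ref{lemma:alpha-of-monotone}. When instead $k_2 \ge 0$ triggers the second guard, I keep $P$ but set $C(z) = k_2$, which brings $g^{P,C}_z$ back down to $M[x][y]$, so $g^C(P) = M[x][y] = D[x][z]$ and $P$ is descending with respect to $C$.

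The delicate case is the last assignment $D[x][z] \gets \min\{M[x][y], M[a][z]\}$ (case $(c)$), which fires unconditionally once $T'_{yz}$ yields $a$ with $M[y][a] \ge |M[x][y]|$; here the sign of $M[a][z]$ is unconstrained. I plan to use the unified schedule $C(a) = M[y][a] + M[x][y]$ and $C(z) = \max\{0, M[a][z] - M[x][y]\}$. Both values are nonnegative because $M[y][a] \ge |M[x][y]|$, so $C$ is valid; direct calculation gives $g^{P,C}_y = M[x][y]$, $g^{P,C}_a = 0$ and $g^{P,C}_z = \min\{M[x][y], M[a][z]\}$, all lying in $[g^{P,C}_z, 0]$, so $P$ is descending with $g^C(P) = D[x][z]$. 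Traversability uses that $M[x][y], M[a][z] \ge -B$ for every shortcut, so the minimum stays in $[-B, 0]$. The main obstacle is checking the descending property uniformly across the three sign sub-cases of $M[a][z]$, which is exactly what the unified schedule is designed to absorb; once written down, the required inequalities are straightforward arithmetic.
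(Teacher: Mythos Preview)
Your proposal is correct and essentially identical to the paper's own proof: both reduce to Lemma~\ref{lemma:trivial-shortcuts-invariant} for the call to $\Trivial$, then handle the four remaining assignments with the witness $P=xyaz$ and exactly the same charge drop schedules you describe (zero schedule for cases~$(a)$ and the first $(b)$-assignment, $C(z)=k_2$ for the second $(b)$-assignment, and $C(a)=M[y][a]+M[x][y]$, $C(z)=\max\{0,M[a][z]-M[x][y]\}$ for case~$(c)$). One cosmetic remark: in the first $(b)$-case you invoke Lemma~\ref{lemma:alpha-of-monotone} for traversability, but that lemma already presupposes $P$ is monotone (hence traversable); the right justification is simply that all prefix and subpath gains lie in $[-B,0]$, i.e.\ Lemma~\ref{lem:alpha_general}, which is what the paper means by ``it is easy to see that $P$ is traversable.''
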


\begin{proof}
We prove that every time $\Simple(M)$
makes an assignment to $D[x][z]$ then there is a monotone path $P$ with respect to a charge drop schedule $C$ from $x$ to $z$ such that $g^C(P) = D[x][z]$.

By Lemma~\ref{lemma:trivial-shortcuts-invariant} it is enough to consider only assignments made after executing $\Trivial(M,D)$ in $\Simple(M)$. These assignments correspond to monotone paths of length $3$ that contain arcs of positive and negative gain, see Figure~\ref{figure:simple-shortcuts}. Consider such an assignment associated with triplet
 $x,y,z\in V$.
 
Assume $M[x][y]<0$, we split into cases according to the assignment of the algorithm in the pseudocode. 

\textbf{Case $1$: $D[x][z] = M[x][y] +k_2$:} That is, the algorithm assigned $D[x][z] = M[x][y] + M[y][a] + M[a][z]$, where $a\in V$ satisfies $M[y][a] \in [0, |M[x][y]|]$ and $-B \le M[x][y] + M[y][a] + M[y][z] \le M[x][y]$, See Figure~\ref{figure:simple-shortcuts}$(b)$. Consider the path $P=xyaz$, clearly $g(P)=D[x][z]$. It holds that 
\begin{align*}
    &g_x=0,\;\; g_y=M[x][y] < 0 =g_x,\;\; g_a = M[x][y]+M[y][a] \le 0 = g_x,\\
    &g_z = M[x][y]+M[y][a]+M[a][z] \le M[x][y] = g_y \le g_x.
\end{align*}

Thus, $x$ has the Largest gain in $P$. Moreover, $g_a=M[x][y]+M[y][a] \ge M[x][y]=g_y\ge g_z$, so $z$ has the minimum gain in $P$. It is easy to see that $P$ is also traversable, hence, $P$ is descending.

\textbf{Case $2$: $D[x][z] = M[x][y]$:}
That is, there is $a\in V$ such that $M[y][a] \in [0, |M[x][y]|]$ and $M[x][y] + M[y][a] + M[a][z] \ge M[x][y]$.
In particular $M[y][a] + M[a][z] \ge 0$.
Let $P = xyaz$ (observe that $P$ is traversable) and consider the charge drop schedule $C$ that only drops charge at $z$ and $C(z) = M[y][a] + M[a][z]$. We prove that $P$ is descending with respect to $C$. Observe that

\begin{align*}
    &g^C_x=0,\;\; g^C_y=M[x][y] < 0 =g^C_x,\;\; g^C_a = M[x][y]+M[y][a] \le 0 = g^C_x,\\
    &g^C_z = M[x][y]+M[y][a]+(M[a][z] - C(z)) = M[x][y] = g^C_y \le g^C_x.
\end{align*}

Thus, $x$ has the maximum gain in $P$ with respect to $C$. Moreover, $g^C_a=M[x][y]+M[y][a] \ge M[x][y]=  g_z$, so $z$ has the minimum gain in $P$ with respect to $C$. Hence, $P$ is descending with respect to $C$.

\textbf{Case $3$: $D[x][z] = \min \{M[x][y], M[a][z] \}$:} That is $a\in V$ satisfies $M[x][y] \ge |M[x][y]|$, see Figure~\ref{figure:simple-shortcuts}$(c)$. Let $P = xyaz$ (observe that $P$ is traversable) and let $C$ be the schedule that assigns $C(a) = M[x][y] + M[y][a]$ and $C(z) = \max \{0, M[a][z] - M[x][y] \}$.
Observe that

\begin{align*}
    &g^C_x=0,\;\; g^C_y=M[x][y] < 0 =g^C_x,\;\; g^C_a = M[x][y]+(M[y][a]-C(a)) = 0 = g^C_x,\\
    &g^C_z = M[x][y]+(M[y][a]-C(a))+(M[a][z] - C(z)) = \min \{M[a][z], M[x][y] \} \le g^C_x.
\end{align*}

Thus, $x$ has the maximum gain in $P$ with respect to $C$. Moreover, 
\begin{align*}
g^C_a&= 0 \ge \min \{M[a][z], M[x][y] \} =  g^C_z,\\
g^C_y &= M[x][y] \ge \min \{M[a][z], M[x][y] \} = g^C_z,
\end{align*}
 so $z$ has the minimum gain in $P$ with respect to $C$. Hence, $P$ is descending with respect to $C$.

\medskip

We now assume that $M[x][y]\ge 0$, and the algorithm assigned $D[x][z]=M[x][y]+M[y][a]+M[a][z]$ where $M[y][a]+M[a][z] \ge 0$ for some $a\in V$ satisfying $M[y][a] \in [-M[x][y] , 0]$. Consider the path $P=xyaz$. Observe that $P$ is traversable. Similar to before, we get that $P$ is ascending. By Lemma~\ref{lemma:alpha-of-monotone} we conclude that $P$ is strongly traversable.
\end{proof}

\subsection{Building Long Shortcuts}
The procedure $\LS(M)$ aims to find \emph{long shortcuts} in $G^M$ and update $M$ accordingly. \emph{Long Shortcuts} are shortcuts that correspond to monotone paths of length $k>3$. We find such shortcuts by computing arc-bounded paths and then extending them by one arc into monotone paths (i.e  shortcuts). We give the full description of $\LS(M)$ in Appendix~\ref{section:long}. This algorithm uses several sub-algorithm which we list below and elaborate on in the next sections. 

\begin{itemize}
    \item  $\BFS(M,D):$ This procedure aims to discover arc-bounded paths that are longer than the ones stored in $D$. This is done by extending existing arc-bounded paths in $D$ by one arc. This procedure performs updates of the form $D[xy][z] = \max \{D[xy][z], D[xy][a] + M[a][z]\}$. See Figure~\ref{figure:BFS}. 
    \item $\Concat(M,D,U,W,X)$: Given sets $U,W,X\subseteq V$, the procedure aims to discover longer arc-bounded paths than the ones stored in $D$ by concatenating first-arc-bounded paths with first-arc-bounded paths and last-arc-bounded paths with last-arc-bounded paths. This procedure performs updates of the form $D[uv][x] = \max \{D[uv][x], D[uv][w]+D[wa][x]\}$, where $u\in U, w\in W, x\in X$. See Figure~\ref{figure:concat}.
    \item $\ComputeF(M):$ This procedure returns a data structure $D$ that dominates any simple path that is a funnel in $G^M$ w.h.p.\ (see Lemma~\ref{lemma:funnels-optimality}). See Figure~\ref{figure:funnels}.
    \item $\CO(M,D,U,W,X)$: Given sets $U,W,X\subseteq V$, this procedure aims to discover longer arc-bounded paths than the ones stored in $D$ by concatenating first-arc-bounded paths with last-arc-bounded paths. This procedure performs updates of the form $D[uv][x] =\max \{D[uv][x], D[uv][w]+D[w][ax]\}$, where $u\in U, w\in W, x\in X$. See Figure~\ref{figure:concat-opposite}.
    \item $\Extend(M,D,T)$: Given a set $T\subseteq V$, this procedure considers  every arc-bounded path in which the ``bounding" arc contains a vertex of $T$. The goal of this procedure is to extend such a path by a single arc and get a monotone path. This is the procedure that computes the shortcuts for $\LS(M,D)$.
\end{itemize}

The following is the relation between the different algorithms.
Algorithm $\ComputeF(M)$ Is achieved by applying $\BFS{}$ and $\Concat$ several times on a sampled set. 
Algorithms $\LS(M)$ (see Figure~\ref{figure:long-shortcuts}) starts by applying $\ComputeF(M)$, which returns a data structure $D$ that,  dominates every simple path that is a funnel in $G^M$ w.h.p.. The algorithm then tries to elongate some sampled arc-bounded paths. This is done by consecutive applications of $\Concat$ and $\CO$. Finally, $\LS(M)$ calls $\Extend$ in order to transform the arc bounded path stored in $D$ into monotone paths.

\subsubsection{Breadth-Search}\label{section:BFS}
This procedure extend the length of arc-bounded paths dominated by $D$, by concatenating to them a single arc of larger gain. I.e., given $P = v_1 \ldots v_k$, a $v_1 v_2$-bounded path, $\BFS(M,D)$ scans all arcs $xv_1$ and checks if $xv_1 \ldots v_k$ is $x v_1$-bounded path and if so, updates $D[xv_1][v_k]$. The implementation is as follows and its pseudocode is given in Figure~\ref{figure:BFS}. 

For every triplet $x,y,z\in V$, we update $D[xy][z]$ as follows. If $M[x][y]\ge 0$, we consider the values $D[ya][z]$, for all $a\in V$ such that $-M[x][y] \le  M[y][a] \le 0$, and we concatenate $xy$ to the path corresponding to $D[ya][z]$, which results in a $xy$-bounded path to $z$.  That is, for every $y,z\in V$, we find $a \in V$,  that maximizes $M[x][y]+D[ya][z]$ while satisfying  $-M[x][y] \le  M[y][a] \le 0$. To compute such $a\in V$, we store in a range tree $FT_{yz}$ the pairs $(k_1,k_2)=(M[y][a],D[ya][z])$ for every $a\in V$.
To update $D[xy][z]$, we search in $FT_{yz}$ for the pair $(k_1,k_2)=(M[y][a],D[ya][z])$ with largest $k_2$ that satisfies $k_1 \in [-M[x][y],0]$. We then assign $D[xy][z] = \max \{D[xy][z], M[x][y] + D[ya][z] \}$.

The case $M[x][y]\le 0$ and the cases that $P$ is last-arc-bounded are symmetric, See Figure~\ref{figure:BFS}.

\begin{figure}[t!]
\begin{algorithm}[H]
  \Fn{$\BFS(D,M)$}{
      \BlankLine
      \For{$a,b\in V$}{
        $FT_{ab} \gets RT(M[a][\cdot], D[a\cdot][b])$ \tcp*{\rm Range tree of $\ubar{a}\bar{w}b$ and $\bar{a}\ubar{w}b$ paths}
        $LT_{ab} \gets RT(M[\cdot][b], D[a][\cdot b])$ \tcp*{\rm Range tree of $a\ubar{w}\bar{b}$ and $a\bar{w}\ubar{b}$  paths}
      }
      \For{$x,y,z\in V$}{
        \If{$M[x][y]\ge0$}{
            $(-,D[ya][z])\gets FT_{yz}.range(k_1\in [-M[x][y],0]).max\_k_2()$\;
            $D[xy][z] \gets \max \{D[xy][z], M[x][y] + D[ya][z]\}$ \tcp*{\rm We do this if $D[ya][z] \neq -\infty$}
        }\Else{
            $(-,D[ya][z])\gets FT_{yz}.range(k_1 \in [0,|M[x][y]|]).max\_k_2()$\;
            $D[xy][z] \gets \max \{D[xy][z], M[x][y] + D[ya][z]\}$ \tcp*{\rm We do this if $D[ya][z] \neq -\infty$}
        }
        \If{$M[y][x]\ge0$}{
            $(-,D[z][ay])\gets LT_{zy}.range(k_1\in [-M[y][x],0]).max\_k_2()$\;
            $D[z][yx] \gets \max \{D[z][yx], D[z][ay] + M[y][x]\}$ \tcp*{\rm We do this if $D[z][ay] \neq -\infty$}
        }\Else{
            $(-,D[z][ay])\gets LT_{zy}.range(k_1\in [0,|M[y][x]|]).max\_k_2()$\;
            $D[z][yx] \gets \max \{D[z][yx], D[z][ay] + M[y][x]\}$ \tcp*{\rm We do this if $D[z][ay] \neq -\infty$}
        }
      }    
  }
\end{algorithm}

\begin{center}
    \includegraphics[width=0.7\textwidth]{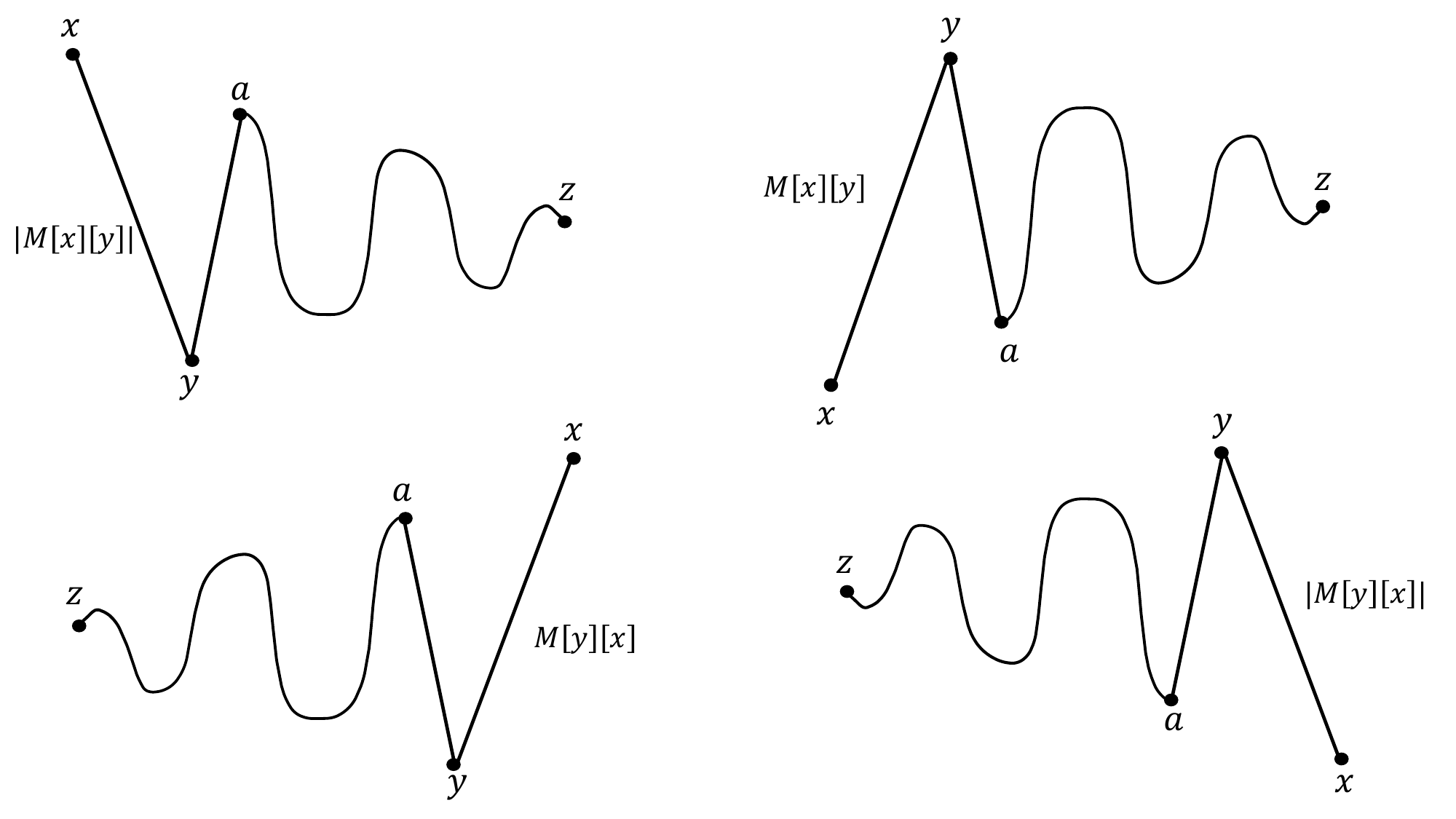}
    \label{figure:BFS}
\end{center}

\caption{The four cases of $\BFS(M,D)$. On the top  we concatenate the arc $xy$ with a first-arc bounded path from $y$ to $z$. On the bottom  we concatenate a last-arc bounded path from $z$ to $y$ with the arc $yx$.}\label{alg2}
\end{figure}

\begin{lemma}\label{lemma:BFS-single-extension}
Let $P = v_1 \ldots v_k$ be an arc-bounded path in $G^M$. If $D$ dominates $P$, then the following holds after $\BFS(D,M)$
\begin{itemize}
    \item If $P$ is $v_1 v_2$-bounded and $P'=v_0 v_1 v_2 \ldots v_k$ is $v_0 v_1$-bounded, then $D$ dominates $P'$.
    \item If $P$ is $v_{k-1} v_k$-bounded and $P'=v_1 \ldots v_k v_{k+1}$ is $v_k v_{k+1}$-bounded, then $D$ dominates $P'$.
\end{itemize}
\end{lemma}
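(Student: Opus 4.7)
The plan is to analyze the pseudocode of $\BFS$ and verify, by case analysis on the sign of $g(v_0 v_1)$, that the update performed for the triplet $(x, y, z) = (v_0, v_1, v_k)$ yields $D[v_0 v_1][v_k] \ge g^{C'}(P')$ for the schedule $C'$ under which $P'$ is $v_0 v_1$-bounded. The key observation is that the algorithm's range-tree queries are tuned precisely to the sign alternations that the arc-bounded structure of $P'$ forces on the second arc $v_1 v_2$.

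Suppose $P$ is $v_1 v_2$-bounded with respect to some schedule $C$ for which $D[v_1 v_2][v_k] \ge g^C(P)$, and consider the natural extension $C'$ of $C$ to $P'$ defined by $C'(v_0) = C'(v_1) = 0$ and $C'(v_i) = C(v_i)$ for $i \ge 2$; this gives $g^{P', C'}_{v_i} = g(v_0 v_1) + g^{P, C}_{v_i}$ for $i \ge 1$. Any alternative schedule $C''$ making $P'$ bounded must drop strictly more charge at $v_2$, hence satisfies $g^{C''}(P') < g^{C'}(P')$, so it suffices to dominate the natural extension. Assume $g(v_0 v_1) \ge 0$ (the other case being symmetric). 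For $P'$ to be $v_0 v_1$-bounded w.r.t.\ $C'$, we need $g^{P', C'}_{v_i} \in [0, g(v_0 v_1)]$, equivalently $g^{P, C}_{v_i} \in [-g(v_0 v_1), 0]$; combined with the $v_1 v_2$-boundedness of $P$, this forces $g(v_1 v_2) \le 0$ and $g(v_1 v_2) \ge -g(v_0 v_1)$. Thus $M[v_1][v_2] \in [-M[v_0][v_1], 0]$, which is exactly the range queried in the first branch of $\BFS$'s main loop.

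The range-tree lookup on $FT_{v_1, v_k}$ returns some $a$ in this range maximizing $D[v_1 a][v_k]$; since the pair $(M[v_1][v_2], D[v_1 v_2][v_k])$ lies in the queried range, the returned maximum is at least $D[v_1 v_2][v_k]$. Hence $\BFS$'s update yields
\[
D[v_0 v_1][v_k] \;\ge\; M[v_0][v_1] + D[v_1 v_2][v_k] \;\ge\; g(v_0 v_1) + g^C(P) \;=\; g^{C'}(P'),
\]
as required. The second claim of the lemma, about extending a last-arc-bounded path by a suffix arc $v_k v_{k+1}$, is proved symmetrically using the branches of $\BFS$ that query the range trees $LT_{ab}$. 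The main obstacle is the careful charge-drop bookkeeping that reduces all bounded extensions to the natural-extension case and matches the four sign combinations to the exact ranges checked by the pseudocode.
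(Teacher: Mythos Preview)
Your core argument is correct and is essentially the same as the paper's proof: you identify that the $v_0v_1$-boundedness of $P'$ forces $M[v_1][v_2]\in[-M[v_0][v_1],0]$ (for the case $g(v_0v_1)\ge 0$), observe that the pair $(M[v_1][v_2],D[v_1v_2][v_k])$ therefore lies in the range queried by $\BFS$ on $FT_{v_1,v_k}$, and conclude via the inequality chain $D[v_0v_1][v_k]\ge M[v_0][v_1]+D[v_1v_2][v_k]\ge g(v_0v_1)+g(P)=g(P')$. The paper does exactly this (choosing the symmetric sign case $M[v_1][v_2]\le 0$ instead).

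Where you go astray is in the charge-drop discussion. In the lemma, ``$D$ dominates $P$'', ``$P$ is $v_1v_2$-bounded'', and ``$P'$ is $v_0v_1$-bounded'' all refer to the \emph{zero} schedule by convention, so introducing an arbitrary $C$ is a misreading. Your sentence ``any alternative schedule $C''$ making $P'$ bounded must drop strictly more charge at $v_2$, hence $g^{C''}(P')<g^{C'}(P')$'' is both false (the zero schedule is a valid $C''$ and drops \emph{no} charge, so you would be claiming $g(P')<g^{C'}(P')$, which is impossible since charge drops never increase gain) and unnecessary. If you simply take $C=0$ from the start, your displayed inequality becomes $D[v_0v_1][v_k]\ge g(v_0v_1)+g(P)=g(P')$, which is precisely what is needed, and the proof collapses to the paper's.
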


\begin{proof}
    Assume the first case, i.e., $P$ is $v_1 v_2$-bounded. Assume that $M[v_1][v_2] \le 0$, the case $M[v_1][v_2]\ge 0$ is symmetric.
    Let $(M[v_1][a] , D[v_1 a][v_k])$ be the pair in $FT_{v_1 v_k}$ with largest $D[v_1 a][v_k]$ that satisfies $M[v_1][a] \in [- M[v_0][v_1],0]$. Since $P'$ is $v_0 v_1$-bounded, we have $|M[v_1][v_2]| \le M[v_0][v_1]$ and therefore $D[v_1 a][v_k] \ge D[v_1 v_2][v_k]$. Thus, after $\BFS(M,D)$, 
    \begin{align*}
        D[v_0 v_1][v_k] \ge M[v_0][v_1] + D[v_1 a][v_k] \ge M[v_0][v_1] +  D[v_1 v_2][v_k] \ge M[v_0][v_1] + g(P) = g(P').
    \end{align*}

    The proof of the second case where $P$ is $v_{k-1}v_k$-bounded is symmetric.
\end{proof}

Since every funnel is arc-bounded, the following is a direct corollary of Lemma~\ref{lemma:BFS-single-extension}.

\begin{corollary}\label{lemma:BFS-basic}
    Assume that every funnel $P$ in $G^M$ of length at most $k$ is dominated by $D$.
    Then after calling $\BFS(D,M)$, it holds that every funnel $P$ in $G^M$ of length at most $k+1$ is dominated by $D$.
\end{corollary}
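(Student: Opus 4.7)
My plan is to derive the corollary as a direct consequence of Lemma~\ref{lemma:BFS-single-extension}. Since $\BFS$ only updates $D$ via max operations, any entry that dominated a funnel of length at most $k$ before the call continues to dominate it afterwards, so it suffices to handle funnels of length exactly $k+1$.

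Let $P = v_0 v_1 \ldots v_{k+1}$ be such a funnel. By definition, $P$ is arc-bounded, so it is either $v_0 v_1$-bounded or $v_k v_{k+1}$-bounded; I would treat the first case, as the second is symmetric. I would consider the sub-path $Q = v_1 v_2 \ldots v_{k+1}$ and use the zigzag characterization of Lemma~\ref{lemma:funnel-zigzag-structure} to show that $Q$ is again a funnel. Concretely, the arc gains of $P$ satisfy $|g(e_1)| \ge |g(e_2)| > |g(e_3)| > \ldots > |g(e_{k+1})| > 0$ with alternating signs; dropping the first arc leaves the chain $|g(e_2)| > |g(e_3)| > \ldots > |g(e_{k+1})| > 0$ with alternating signs, which by the same structural lemma witnesses that $Q$ is a $v_1 v_2$-bounded funnel of length $k$. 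The hypothesis of the corollary then yields $D[v_1 v_2][v_{k+1}] \ge g(Q)$.

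To conclude, I would apply the first bullet of Lemma~\ref{lemma:BFS-single-extension} with $Q$ as the $v_1 v_2$-bounded path and $P$ as its extension obtained by prepending the arc $v_0 v_1$; the hypotheses are met because $P$ itself is $v_0 v_1$-bounded. The lemma gives $D[v_0 v_1][v_{k+1}] \ge g(P)$ after $\BFS(D,M)$, which is the desired domination. I do not foresee any substantial obstacle: the only point that deserves explicit verification is that truncating a funnel by its bounding arc yields another funnel with the expected bounding arc, and this is immediate from the strict inequalities in the zigzag chain.
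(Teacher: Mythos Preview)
Your proposal is correct and follows essentially the same approach as the paper: the paper simply states that the corollary is ``a direct corollary of Lemma~\ref{lemma:BFS-single-extension}'' since every funnel is arc-bounded, while you spell out the details of why truncating the bounding arc leaves a shorter funnel (using Lemma~\ref{lemma:funnel-zigzag-structure}) and how Lemma~\ref{lemma:BFS-single-extension} then applies. Your observation that $\BFS$ only performs $\max$-updates, so domination of shorter funnels is preserved, is a small point the paper leaves implicit.
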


\begin{lemma}
    Procedure $\BFS(M,D)$ maintains Invariants~\ref{invariant}\ref{1a} and \ref{invariant}\ref{1b}
\end{lemma}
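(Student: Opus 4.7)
The plan is to inspect each assignment performed by $\BFS(M,D)$ and show that the stored value is realized by a witness path that satisfies the arc-boundedness condition. The procedure performs four kinds of updates, organized by whether we extend a first-arc-bounded path (updates to $D[xy][z]$, corresponding to Invariant~\ref{invariant}\ref{1a}) or a last-arc-bounded path (updates to $D[z][yx]$, corresponding to Invariant~\ref{invariant}\ref{1b}), and by the sign of the newly prepended/appended arc. All four cases are symmetric, so I will describe the representative case of an update to $D[xy][z]$ with $M[x][y]\ge 0$; the other three are handled identically.

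Consider the update $D[xy][z] \gets M[x][y]+D[ya][z]$ made in the branch $M[x][y]\ge 0$, where $a$ is chosen so that $M[y][a]\in[-M[x][y],0]$. By the prior state of Invariant~\ref{invariant}\ref{1a} there exists a path $Q=y,a,\ldots,z$ in $G^M$ and a charge drop schedule $C'$ (with no drop at $a$) such that $Q$ is $ya$-bounded with respect to $C'$ and $g^{C'}(Q)=D[ya][z]$. Since $M[y][a]\le 0$, $Q$ is a $\overunderline{yaz}{1-1}{2-2}$ path, so $g^{Q,C'}_v\in[M[y][a],0]$ for every vertex $v$ of $Q$. Now form $P=x,Q$ and define $C$ to coincide with $C'$ on all vertices of $Q$ after $y$, with $C(x)=C(y)=0$. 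Then $g^{P,C}_x=0$, $g^{P,C}_y=M[x][y]$, and for every later vertex $v$, $g^{P,C}_v=M[x][y]+g^{Q,C'}_v\in[M[x][y]+M[y][a],\;M[x][y]]$. The range condition $M[y][a]\ge -M[x][y]$ forces this interval to lie inside $[0,M[x][y]]=[g^{P,C}_x,g^{P,C}_y]$, so $P$ is a $\overunderline{xyz}{2-2}{1-1}$ path with respect to $C$. Moreover $g^C(P)=M[x][y]+g^{C'}(Q)=M[x][y]+D[ya][z]=D[xy][z]$, as required for equality in Invariant~\ref{invariant}\ref{1a}.

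It remains to verify traversability of $P$ in $G^M$. Since $P$ is first arc-bounded with respect to $C$, every vertex gain $g^{P,C}_v$ lies in $[0,M[x][y]]$, hence every subpath gain $g^C(v_{j_1}\ldots v_{j_2})=g^{P,C}_{v_{j_2}}-g^{P,C}_{v_{j_1}}$ lies in $[-M[x][y],M[x][y]]\subseteq[-B,B]$ because $M[x][y]\le B$. Since charge drops only decrease gains, the true gains $g(v_{j_1}\ldots v_{j_2})$ are no less than $g^C(v_{j_1}\ldots v_{j_2})\ge -B$. Applying Lemma~\ref{lem:alpha_general} with $b=B$ (the prefix condition $g(v_1\ldots v_j)\ge -B$ holds since $g(v_1\ldots v_j)\ge g^C(v_1\ldots v_j)=g^{P,C}_{v_j}\ge 0$) yields $\alpha_B(P)\ge 0$, so $P$ is traversable; alternatively this follows directly from Lemma~\ref{lemma:alpha-of-arc-bounded}.

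The three symmetric cases follow the same template. For $M[x][y]\le 0$ the witness $Q$ is a $\overunderline{yaz}{2-2}{1-1}$ path with $M[y][a]\in[0,|M[x][y]|]$, and an identical computation places $g^{P,C}_v$ inside $[M[x][y],0]$. For the two updates to $D[z][yx]$, an analogous argument uses Invariant~\ref{invariant}\ref{1b} to obtain a witness $Q=z,\ldots,a,y$, appends the arc $yx$, and exploits the range condition on $M[\cdot][y]$ to place the extended path's gains inside the interval bounded by $M[y][x]$, giving a $\ge$ rather than $=$ (matching the weaker form of Invariant~\ref{invariant}\ref{1b}). The only potential obstacle is the sign bookkeeping for the charge drop schedule, but because the inductive schedule $C'$ is imported unchanged onto $P$ and no drop is introduced at the new bounding endpoint, the arithmetic of $g^{P,C}$ reduces uniformly to the inductive gain plus a single arc gain, so no new drop needs to be invented and the invariant is preserved.
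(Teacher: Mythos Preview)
Your approach matches the paper's: induct on the assignments, take the witness path for $D[ya][z]$ (resp.\ $D[z][ay]$) guaranteed by the invariant, prepend (resp.\ append) the arc, and use the range-tree constraint to verify arc-boundedness of the extended path with the extended schedule. The gain arithmetic is exactly as you wrote.

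There is one small slip in your traversability argument for the case $M[x][y]\ge 0$. You claim $M[x][y]\le B$, but this need not hold (the paper explicitly notes that $M[u][v]>B$ is possible, since ascending shortcuts can have arbitrarily large gain). Your fallback to Lemma~\ref{lemma:alpha-of-arc-bounded} does not rescue this, because that lemma is stated only for \emph{negative} arc-bounded paths, whereas here $P$ is positive first-arc-bounded. The fix is the one-line argument the paper uses: the inductive witness $Q$ is traversable by Invariant~\ref{invariant}\ref{1a}, and prepending an arc of nonnegative gain to a traversable path keeps it traversable (start at $x$ with charge $B$, arrive at $y$ with charge $\min\{B,B+M[x][y]\}=B$, then traverse $Q$). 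For the symmetric case $M[x][y]\le 0$ your appeal to Lemma~\ref{lemma:alpha-of-arc-bounded} \emph{would} be valid, since then $P$ is negative first-arc-bounded and $M[x][y]\ge -B$ by Invariant~\ref{invariant}\ref{1c}.
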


\begin{proof}
    Assume the invariant holds before $\BFS(M,D)$. We proceed by induction on the changes of $D$. Let $x,y,z\in V$. We split into cases.

    Assume $M[x][y]\ge 0$ and assume the procedure assigned $D[xy][z]=M[x][y]+D[ya][z]$, where $a\in V$ satisfies $0 > M[y][a] \ge  -M[x][y]$. By Invariant~\ref{invariant}\ref{1a}, there is a traversable path $P = ya v_1\ldots v_k z$ in $G^M$ and a charge drop schedule $C$ such that $P$ is $ya$-bounded with respect to $C$ and $g^C(P)= D[ya][z]$. Since $M[x][y]\ge 0$, it follows that $P' = xya v_1\ldots v_k z$ is traversable.  Moreover, since $M[x][y]\ge |M[y][a]|$ and $P$ is $ya$-bounded with respect to $C$, we get that $P'$ is $xy$-bounded with respect to the schedule $C'$ that does not drop charge at $x$ and then goes according to $C$. We get $g^{C'}(P') = M[x][y]+g^C(P) = M[x][y]+D[ya][z] = D[xy][z]$. 

    Assume $M[y][z]\le 0$ and assume the procedure assigned $D[x][yz]=D[x][ay]+M[y][z]$, where $a\in V$ satisfies $0 \le M[a][y] \le  -M[y][z]$. By Invariant~\ref{invariant}\ref{1b}, there is a path $P = x v_1\ldots v_k a y$ in $G^M$ and a charge drop schedule $C$ such that $P$ is $ay$-bounded with respect to $C$ and satisfies $g^C(P) = D[x][ay]$. Since $M[a][y] \le  -M[y][z]$, we get that $P'=x v_1\ldots v_k ayz$ is $yz$-bounded with respect to the charge drop schedule $C'$ that performs charge drops according to $C$ and does not drop charge at the new vertex $z$. By Invariant~\ref{invariant}\ref{1c}, the arc $yz$ is traversable, thus $M[y][z]\ge -B$ which means by Lemma~\ref{lemma:alpha-of-arc-bounded} that $P'$ is traversable. Finally, note that $g^{C'}(P')=g^{C}(P)+M[y][z] = D[x][ay] + M[y][z]=D[x][yz]$.
    
    The other case $M[x][y]\le 0$ is symmetric to the case $M[x][y]\ge 0$ and the case $M[y][z]\le 0$ is symmetric to the case $M[y][z]\ge 0$.
\end{proof}

\subsubsection{Concatenate first-arc bounded paths with first-arc bounded paths}\label{section:concatenate}
In this procedure (see Figure~\ref{figure:concat}) we are given $3$ sets $U,W,X\subseteq V$. For every $u\in U,w\in W,x\in X$ and $v\in V$, we try to concatenate a $\overunderline{uvw}{1-1}{2-2}$ path with some $\overunderline{wax}{1-1}{2-2}$ path, where $a \in V$.
This gives  a (hopefully new or improved gain) $\bar{u}\ubar{v}x$ path. 
We also do the symmetric version: we try to concatenate a $x\bar{a}\ubar{w}$ path to a $w\bar{v}\ubar{u}$ path. 

The choice of focusing on  paths  bounded by a arcs of negative gain  was intentional. To emphasize the difficulty in concatenating   paths  bounded by arcs of positive gain, consider the following example.

Let $P$ be a $\overunderline{uvw}{2-2}{1-1}$ path, where $M[u][v] = 10$ and $g(P)=5$. Let $Q$ be a $\overunderline{wax}{2-2}{1-1}$ path, where $M[w][a] = 5$ and $g(Q)=3$. Clearly $P\mid Q$ is $uv$-bounded with gain $g(P\mid Q)=8$.
However it may be the case where $D$
 dominates both $P$ and $Q$ and stores the values $D[uv][w]=9$ and $D[wa][x]=4$.
 But the concatenation of the paths, say $P'$ and $Q'$, realizing these values is not $uv$-bounded since the gain of $P'\mid Q'$ is
  $D[uv][w] + D[wa][x]=13$ which is larger than $M[u][v]$. For arcs of negative gain  if we replace $P$ by a $\overunderline{uvw}{2-2}{1-1}$ path $P'$ with a larger gain then $P'\mid Q$ is always also 
  $uv$-bounded. We could have addressed this problem by dropping charge at $w$ (see Definition~\ref{def:gain}) but we preferred to  get our desired set of shortcuts without concatenating such paths at all.

In Appendix~\ref{section:CO} we show how to concatenate  $\overunderline{uvw}{1-1}{2-2}$ paths with  $\overunderline{wax}{3-3}{2-2}$ paths. 
This requires a range tree and the ability to drop charges.

We distinguish the cases of concatenating first-arc-bounded paths and last-arc-bounded paths.

\textbf{Concatenating $\overunderline{uvw}{2-2}{1-1}$ and $\overunderline{wax}{2-2}{1-1}$:}
Consider the values D[uv][w] and $D[wa][x]$, where $a\in V$ satisfies $ M[w][a]\le 0$.
It follows from Lemma~\ref{lemma:concat-basic} and Lemma~\ref{lemma:concat-maintains-invariant} that
the concatenation of the paths realizing these values is 
a $uv$-bounded path if and only if
$|M[w][a]| \le D[uv][w] - M[u][v]$. See Figure~\ref{figure:concat}.

Therefore we update $D[uv][x]$ as follows. 
 We find an $a\in V$ that maximises $D[uv][w] + D[wa][x]$ while satisfying $|M[w][a]| \le D[uv][w] - M[u][v]$. This is done by storing, for every pair $w\in W, x\in X$, a Range tree of first-arc-bounded paths $FT_{wx}$ containing the pairs $(k_1,k_2)=(M[w][a], D[wa][x])$, for every $a\in V$. We then find the pair $(k_1,k_2)=(M[w][a], D[wa][x])$ with largest $k_2$ that satisfies $k_1 \in [-( D[uv][w] - M[u][v]),0]$. We then perform the update $D[uv][x] = \max \{ D[uv][x], D[uv][w] + D[wa][x]\}$.

\textbf{Concatenating $\overunderline{xaw}{2-2}{3-3}$ and $\overunderline{wvu}{2-2}{3-3}$:} This case is handled symmetrically. We  perform an update of the form $D[x][vu] = \max \{D[x][vu], D[x][aw] + D[w][uv] \}$, see Figure~\ref{figure:concat}.

\begin{figure}[t]
\begin{algorithm}[H]
  \Fn{$\Concat(M,D,U,W,X)$}{
  \BlankLine
  \For{$(w,x)\in  W \times  X$}{
  $FT_{wx} \gets  RT(M[w][\cdot], D[w \cdot ][x])$ \tcp*{\rm Range tree of $\bar{w}\ubar{a}x$ paths} 
  $LT_{xw} \gets  RT(M[\cdot][w], D[x][\cdot w])$ \tcp*{\rm Range tree of $x\bar{a}\ubar{w}$ paths}
  }
  \For{$(u,v,w,x)\in U\times V \times W \times  X$}{
  \If{$M[u][v]<0$}{
  $(-,D[wa][x]) \gets FT_{wx}.range(k_1 \in [-(D[uv][w]-M[u][v]),0]).max\_k_2()$\;
  $D[uv][x] \gets \max\{D[uv][x], D[uv][w] + D[wa][x]\}$
  }
  \If{$M[v][u] < 0$}{
  $(-,D[x][aw]) \gets LT_{xw}.range(k_1 \in [-(D[w][vu] - M[v][u]),0]).max\_k_2()$ \;
  $D[x][vu] \gets \max \{ D[x][vu], D[x][aw] + D[w][vu] \}$
  }
  }
  }
\end{algorithm}

\begin{center}
    \includegraphics[width=0.49\textwidth]{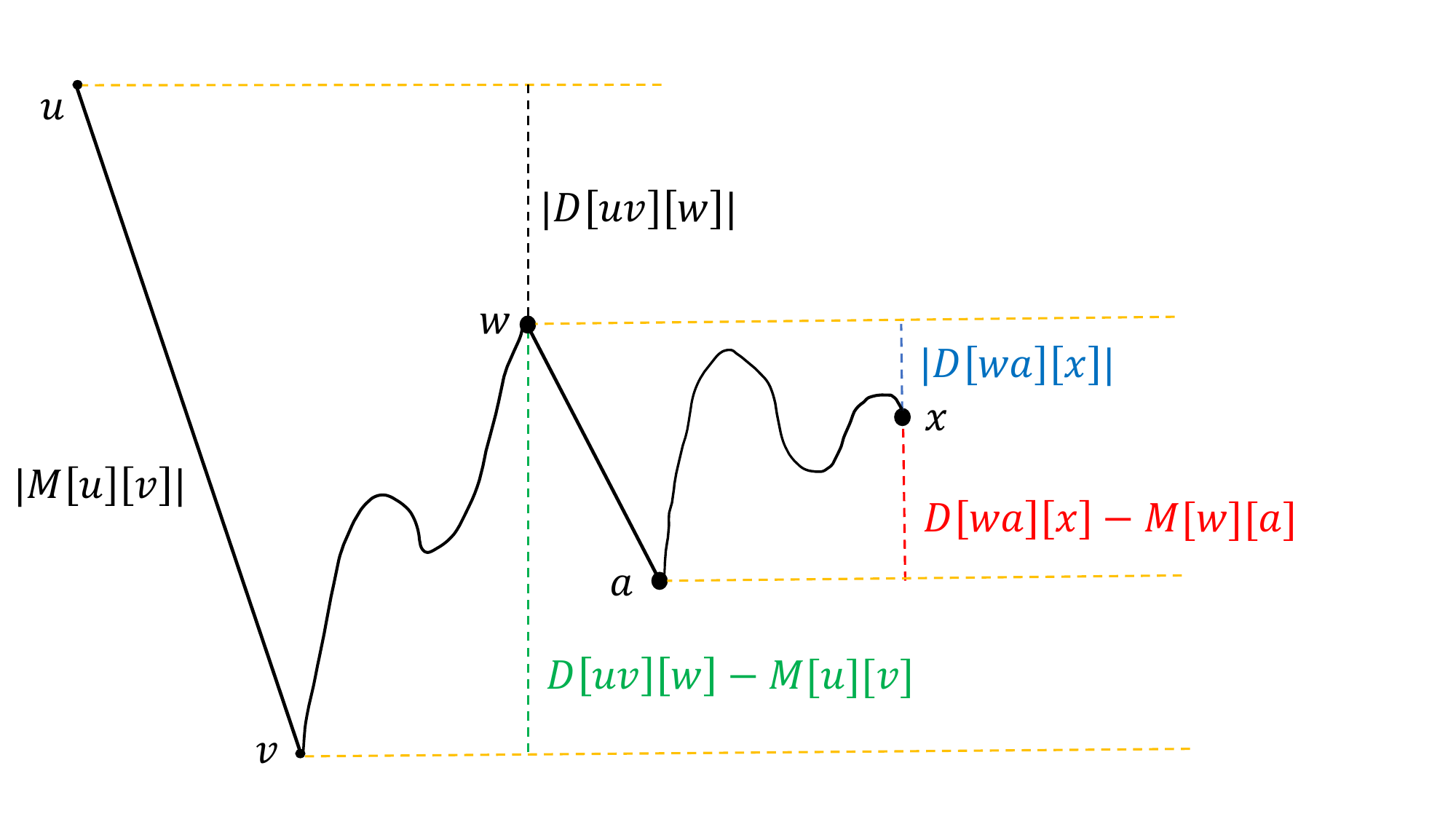}
    \includegraphics[width=0.49\textwidth]{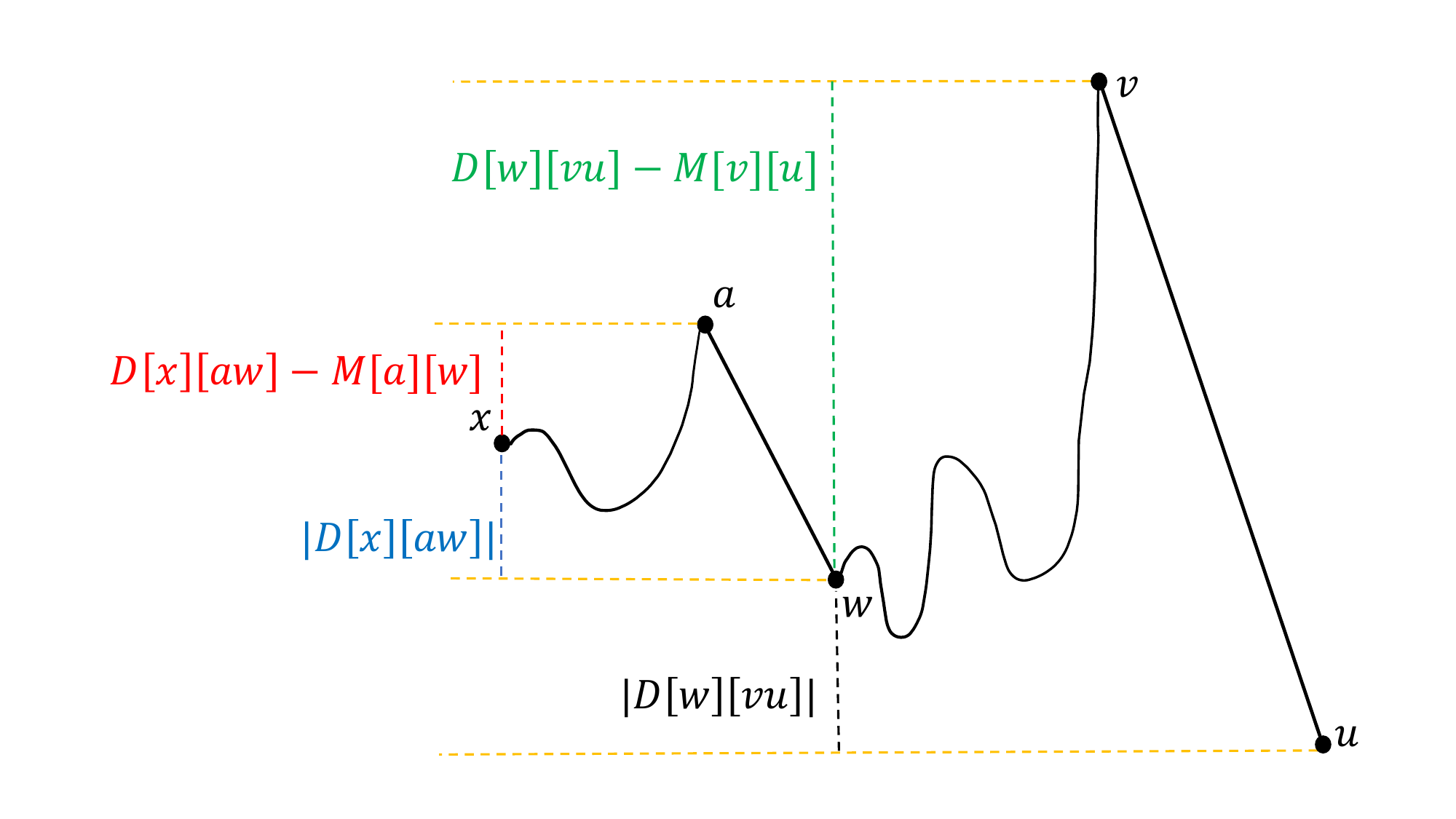}
\end{center}

\caption{On the left: a concatenation of two $\overunderline{ABC}{1-1}{2-2}$ paths.}\label{figure:concat}
\end{figure}

The following lemma proves that after running algorithm $\Concat(M,D)$,
the concatenation of two arc-bounded paths $P,Q$ that match the description above and were dominated by $D$ before
executing $\Concat(M,D)$, is dominated by $D$ after this execution.

\begin{lemma}\label{lemma:concat-basic}
    Let $U,W,X \subseteq V$ and let $u\in U,w\in W,x\in X$ and $v\in V$. Let $P_1$ and $P_2$ be paths in $G^M$ that are dominated by $D$. Assume one of the following holds
    \begin{itemize}
        \item $P_1$ is a $\overunderline{uvw}{1-1}{2-2}$ path, $P_2$ is a $\overunderline{wax}{1-1}{2-2}$ path, and $P=P_1 \mid P_2$ is a $\overunderline{uvx}{1-1}{2-2}$ path.
        \item $P_1$ is a $\overunderline{xaw}{2-2}{3-3}$ path, $P_2$ is a $\overunderline{wvu}{2-2}{3-3}$ path, and $P=P_1 \mid P_2$ is a $\overunderline{xvu}{2-2}{3-3}$ path.
    \end{itemize}
    Then, after $\Concat(M,D,U,W,X)$, $D$ dominates $P$.
\end{lemma}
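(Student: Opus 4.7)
The plan is to prove the first case in detail; the second case is entirely symmetric using the last-arc range tree $LT_{xw}$. Let $a$ denote the second vertex of $P_2$, so $P_2 = wa\cdots x$. By hypothesis, at the moment $\Concat$ is called we have $D[uv][w] \ge g(P_1)$ and $D[wa][x] \ge g(P_2)$. Since the procedure only raises $D$-entries via $\max$-updates, the inequality $D[uv][w] \ge g(P_1)$ persists up to (and after) the moment the main loop reaches the tuple $(u,v,w,x)$. Moreover, the range tree $FT_{wx}$ is built once from the initial $D$ values, so it stores at index $a$ a pair $(M[w][a], D[wa][x])$ whose second coordinate is at least $g(P_2)$.

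The heart of the argument is verifying that $M[w][a]$ lies inside the query window $\bigl[-(D[uv][w] - M[u][v]),\, 0\bigr]$. The upper bound $M[w][a] \le 0$ is immediate since $P_2$ being $\overunderline{wax}{1-1}{2-2}$ forces its first arc to be non-positive. For the lower bound, the vertex $a$ lies on $P = P_1 \mid P_2$ and its gain along $P$ is $g(P_1) + M[w][a]$; the fact that $P$ is $uv$-bounded with $M[u][v] < 0$ implies $M[u][v] \le g(P_1) + M[w][a] \le 0$, so
\[
M[w][a] \;\ge\; M[u][v] - g(P_1) \;\ge\; M[u][v] - D[uv][w],
\]
which is precisely the left endpoint. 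Hence the range query returns some pair $(M[w][a'], D[wa'][x])$ with $D[wa'][x]$ maximal over feasible $a'$, and in particular $D[wa'][x] \ge D[wa][x] \ge g(P_2)$. The subsequent assignment then yields
\[
D[uv][x] \;\ge\; D[uv][w] + D[wa'][x] \;\ge\; g(P_1) + g(P_2) \;=\; g(P),
\]
so $D$ dominates $P$ as required. The second bullet is obtained by the mirror-image computation on $LT_{xw}$ via the update $D[x][vu] \gets D[x][aw] + D[w][vu]$, now with $a$ playing the role of the second-to-last vertex of $P_1$ and the inequalities translated to the last-arc-bounded setting.

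The main subtlety worth highlighting is that the query window's left endpoint depends on the \emph{current} value of $D[uv][w]$, which may have grown since the start of $\Concat$ due to earlier updates in the same loop. Fortunately, such growth can only widen the window, so $a$ always remains a feasible candidate, and no ordering effect within the loop hurts us; this monotonicity is also what justifies not rebuilding $FT_{wx}$ between iterations. A secondary point to be careful about is the sign hypothesis $M[u][v] < 0$ used by the algorithm: both bullets of the lemma guarantee it, since $P_1$ is explicitly $\overunderline{uvw}{1-1}{2-2}$ (or respectively $\overunderline{wvu}{2-2}{3-3}$), so the conditional in the pseudocode is triggered exactly when needed.
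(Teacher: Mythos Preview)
Your proof is correct and follows essentially the same approach as the paper's: verify that $M[w][a]$ lands in the query window $[-(D[uv][w]-M[u][v]),0]$ by reading off the gain of $a$ in the concatenated path $P$ and invoking $uv$-boundedness, then use the range-tree maximum to finish. Your discussion of order-of-update effects (the window can only widen, the tree is frozen at construction) is more explicit than the paper's, which is helpful; one tiny inaccuracy is that ``$P_1$ is $\overunderline{uvw}{1-1}{2-2}$'' only guarantees $M[u][v]\le 0$, not strict inequality, so the algorithm's conditional is not literally forced to fire when $M[u][v]=0$ --- but the paper's own proof elides this same degenerate boundary case.
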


\begin{proof}
    Assume the first case: Since $D$ dominates $P_1$ and $P_2$ we have $D[u v][w] + D[w a][x] \ge g(P_1) + g(P_2) = g(P)$. Since $P$ is $uv$-bounded, it follows that $g(P_1) + M[w][a] = g_a^P \ge g_{v}^P = M[u][v]$. So by rearranging we get 
    $-M[w][a]=|M[w][a]| \le g(P_1) - M[u][v]$.
    Since $D$ dominates $P_1$ it follows $
    M[w][a]\in [-(D[u v][w] - M[u][v]),0]$.
    Let $(k_1,k_2)=(M[w][a'], D[wa'][x])$ be the pair in $FT_{wx}$ with largest $k_2$ that satisfies $k_1 \in [-(D[u v][w] - M[u][v]),0]$.  Therefore $D[wa'][x] = k_2 \ge D[wa][x]$, so, after the algorithm assigns $D[uv][x]$ a value, we get $D[uv][x] \ge D[uv][w]+D[wa'][x] \ge D[uv][w]+D[wa][x] \ge g(P)$.
    
    The second case in which $P_1$ is a $\overunderline{xaw}{2-2}{3-3}$ path and $P_2$ is a $\overunderline{wvu}{2-2}{3-3}$ path is symmetric.
\end{proof}

\begin{lemma}\label{lemma:concat-maintains-invariant}
    Procedure $\Concat(M,D,U,W,X)$ maintains Invariants~\ref{invariant}\ref{1a} and \ref{invariant}\ref{1b}.
\end{lemma}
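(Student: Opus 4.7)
The plan is to verify that each update performed by $\Concat$ preserves the two invariants, by explicitly exhibiting a path and a charge drop schedule realizing the new $D$-value. Since the procedure modifies only entries of the form $D[uv][x]$ (governed by Invariant~\ref{invariant}\ref{1a}) and $D[x][vu]$ (governed by Invariant~\ref{invariant}\ref{1b}), and the two cases are symmetric under reversal, it suffices to handle the first. I will use induction on the sequence of updates performed inside $\Concat$, assuming both invariants hold before each individual assignment.

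For the inductive step, suppose $\Concat$ updates $D[uv][x]\gets D[uv][w]+D[wa][x]$ via the branch with $M[u][v]<0$. The range-tree query ensures $M[w][a]\in [-(D[uv][w]-M[u][v]),0]$, i.e., $M[w][a]\le 0$ and $D[uv][w]+M[w][a]\ge M[u][v]$. By Invariant~\ref{invariant}\ref{1a} there is a traversable $uv$-bounded path $P_1=uv\ldots w$ with schedule $C_1$ realizing $g^{C_1}(P_1)=D[uv][w]$, and a traversable $wa$-bounded path $P_2=wa\ldots x$ with schedule $C_2$ realizing $g^{C_2}(P_2)=D[wa][x]$. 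Let $P=P_1\mid P_2$, and let $C$ be the schedule that applies $C_1$ to the $P_1$ portion and $C_2$ to the $P_2$ portion (keeping no drop at $w$ since $w$ plays the role of the first vertex in $P_2$, consistent with Definition~\ref{def:gain}).

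The main verification is that $P$ is $uv$-bounded with respect to $C$: for vertices of $P_1$ the gains under $C$ already lie in $[M[u][v],0]$ by the $uv$-boundedness of $P_1$, while for a vertex $y$ of $P_2$ its gain in $P$ under $C$ equals $g^{C_1}(P_1)+g^{P_2,C_2}_y=D[uv][w]+g^{P_2,C_2}_y$, which lies in $[D[uv][w]+M[w][a],D[uv][w]]\subseteq [M[u][v],0]$ by the $wa$-boundedness of $P_2$ and the range-tree constraint. Traversability of $P$ then follows from Lemma~\ref{lem:alpha_general}: for any subpath $v_{j_1}\ldots v_{j_2}$ of $P$ we have $g(v_{j_1}\ldots v_{j_2})\ge g^{C}(v_{j_1}\ldots v_{j_2})\ge M[u][v]\ge -B$ (the last inequality holding by our standing assumption that arc gains lie in $[-B,B]$). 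Finally, $g^C(P)=g^{C_1}(P_1)+g^{C_2}(P_2)=D[uv][w]+D[wa][x]$, matching the assigned value.

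The symmetric branch updating $D[x][vu]$ from $D[x][aw]+D[w][vu]$ is handled in exactly the same way, reversing the roles: Invariant~\ref{invariant}\ref{1b} provides the two last-arc-bounded witness paths, their concatenation is shown to be $vu$-bounded under the concatenated schedule using the range constraint $M[v][u]\in[-(D[w][vu]-M[v][u]),0]$, and traversability again follows from the arc-boundedness bound $-B\le M[v][u]$. I expect the only delicate point to be bookkeeping the schedule at the join vertex $w$ — one must check that no implicit charge drop is needed there and that the gain bounds from the two halves glue correctly through the shift by $D[uv][w]$; once this is written out carefully, the rest is routine substitution into Definition~\ref{def:arc-bounded}.
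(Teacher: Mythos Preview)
Your proposal is correct and follows essentially the same approach as the paper: exhibit witness paths $P_1,P_2$ from the invariant, concatenate them together with their schedules, and verify that the result is $uv$-bounded using the range-tree constraint $D[uv][w]+M[w][a]\ge M[u][v]$. The paper argues traversability via Lemma~\ref{lemma:alpha-of-arc-bounded} whereas you go directly through Lemma~\ref{lem:alpha_general}; both are fine. One small remark: your justification ``by our standing assumption that arc gains lie in $[-B,B]$'' for $M[u][v]\ge -B$ is not quite right, since $M[u][v]$ is a shortcut value rather than an original arc gain --- the paper appeals to Invariant~\ref{invariant}\ref{1c} here, which guarantees that any finite $M$-entry is realized by a traversable monotone path and hence is at least $-B$.
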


\begin{proof}
    Let $u\in U,w\in W,x\in X$ and $v\in V$.

    Assume $M[u][v]<0$ and the algorithm sets $D[uv][x] = D[uv][w] + D[wa][x]$, where $a\in V$ satisfies 
    \begin{align}\label{eq:2}
     -M[w][a]=|M[w][a]|\le D[uv][w]-M[u][v].   
    \end{align}
    By Invariant~\ref{invariant}\ref{1a}, there is a $\overunderline{uvw}{1-1}{2-2}$ path $P_1$ with respect to a charge drop schedule $C_1$ that satisfies $g^{C_1}(P_1) = D[uv][w]$. Similarly there is a $\overunderline{wax}{1-1}{2-2}$ path $P_2$ with respect to a charge drop schedule $C_2$ that satisfies $g^{C_2}(P_2) = D[wa][x]$. Let $P = P_1\mid P_2$ and let $C$ be the concatenation $C_1$ and $C_2$. Clearly $g^C(P) = g^{C_1}(P_1) + g^{C_2}(P_2) = D[uv][x]$. We prove $P$ is $uv$-bounded with respect to $C$ and therefore, by Lemma~\ref{lemma:alpha-of-arc-bounded}, $P$ is traversable. Since $P_1$ is $uv$-bounded with respect to $C_1$ and $P_2$ is $wa$-bounded with respect to $C_2$, it is enough to prove that the gain at $a$ (with respect to $P$ and $C$) is bounded between the gains of $u$ and $v$. Since $M[w][a]<0$ it follows that $g_a^{P,C} \le g_w^{P,C} \le g_u^{P,C} = 0$. Let $d_a$ be the charge drop at $a$ induced by $C_2$. Since $P_2$ is $wa$-bounded with respect to $C_2$, it follows by Definition~\ref{def:arc-bounded} that $d_a = 0$. We get 
    \begin{align*}
        g_a^{P,C} &= g^{C_1}(P_1) + (M[w][a] - d_a) 
        = D[uv][w] + M[w][a]
        \numge{1} M[u][v] =g_v^P,
    \end{align*}    

    Where inequality $(1)$ holds by Equation~(\ref{eq:2}).
    Since $P$ is $uv$-bounded, and by Invariant~\ref{invariant}\ref{1c} $M[u][v]\ge -B$, we conclude that $P$ is traversable.

    The case in which $M[v][u]<0$ and the algorithm set $D[x][vu] = D[x][aw]+D[w][vu]$ is symmetric.
\end{proof}

\subsubsection{Dominating Funnels}\label{sec:funnels}
This procedure returns a data structure $D$ such that every funnel, that is a simple path
in $G^M$, is dominated by $D$ w.h.p.. This is done in $4$ steps. Let $s = \tilde{O}(n^\beta)$, where $\beta=2/3$. The first step is to compute bounded paths that dominate funnels of length $n/s$. This is done by running $\BFS(D,M)$  $n/s$ times. Correctness of this step follows from Corollary~\ref{lemma:BFS-basic}.

In the second step, we sample a set $S$ of $\Theta(s \log n)$ vertices. For every triplet $s_1,s_2,s_3 \in S$ we try to concatenate 
a $\overunderline{{s_1}{a_1}{s_2}}{1-1}{2-2}$ path with a $\overunderline{{s_2}{a_2}{s_3}}{1-1}{2-2}$ path, where $a_1,a_2\in V$. We also concatenate the symmetric paths: a $\overunderline{{s_3}{a_1}{s_2}}{2-2}{3-3}$ path with a $\overunderline{{s_2}{a_2}{s_3}}{2-2}{3-3}$ path.
This is done by applying $\Concat(D,S,S,S)$  $\log n$ times, see Appendix~\ref{section:concatenate}.
Each of these $\log n$ iterations multiplies the length of the funnels between vertices of $S$ that $D$ dominates.
We show that after the second step, $D$ dominates all funnels that are simple paths.
that start and end at  vertices from $S$. Lemma \ref{lemma:concat-sampling-correctness} proves the correctness of this step.

In the third step we call $\Concat(D,S,S,V)$, which for every $s_1,s_2\in S$ and $v\in V$ concatenates
$\overunderline{{s_1}{a_1}{s_2}}{1-1}{2-2}$ paths with $\overunderline{{s_2}{a_2}{v}}{1-1}{2-2}$ paths, where $a_1,a_2\in V$. We also concatenate the symmetric paths:  $\overunderline{{v}{a_1}{s_2}}{2-2}{3-3}$ paths with $\overunderline{{s_2}{a_2}{s_1}}{2-2}{3-3}$ paths.
We show that after 
the third step, $D$ dominates every simple funnel that is a $\overunderline{suv}{1-1}{2-2}$ path or a $\overunderline{vus}{2-2}{3-3}$ path, where $s\in S$ and $u,v\in V$. That is a funnel that starts with a sampled vertex and ends at an arbitrary vertex or a funnel that ends with a sampled vertex and starts at an arbitrary vertex.
This happens since each such funnel that ends at a vertex $v$ contains w.h.p.\ a sampled vertex $s$, such that the funnel from $s$ to $v$ starts with a negative gain arc and is of length at most $n/s$.

Finally, in the fourth step we run $\BFS(D,M)$ again  $n/s$  times. 
This extends w.h.p.\ the funnels that we cover to include all simple funnels of linear length (that start at any vertex). Lemma \ref{lemma:funnels-optimality} proves the correctness of this entire procedure.

\begin{figure}[t]
\begin{algorithm}[H]
  \Fn{$\ComputeF(M)$}{
      \BlankLine
      $D \gets Init\mathtt{-}DS(M)$\;
      $s \gets \Theta(n^\beta)$ \;
       \For(\tcp*[f]{\rm Finding funnels of length $n/s$}){$i=1,\ldots, n/s$}{
        $\BFS(M,D)$ 
      }
      $S\gets Sample(V, p= \log n \cdot s/n)$ \tcp*{\rm Each vertex is sampled i.i.d}
      \For{$iteration =1 \ldots \log n$}{
        $\Concat(M,D,S,S,S)$ \tcp*{\rm Dominate funnels between sampled vertices}
      }
      $\Concat(M,D,S,S,V)$ \tcp*{\rm Compute suffixes of funnels (from sampled vertices)}
      \For(\tcp*[f]{\rm Fully compute funnels}){$i=1,\ldots, n/s$}{
        $\BFS(M,D)$ 
      }
      \Return $D$
  }
\end{algorithm}
\caption{After this procedure every funnel $P$ in $G^M$ is dominated by $D$ w.h.p.}\label{figure:funnels}
\end{figure}

\begin{lemma}\label{lemma:concat-sampling-correctness}
    Let $P = v_1 \ldots v_k$ be a funnel which is negative arc-bounded and let $S$ be the set sampled by the procedure $\ComputeF$. Assume $v_1,v_k \in S$, then  w.h.p. after applying $\Concat(M,D,S,S,S)$ $\log n$ times in $\ComputeF(M)$, $D$ dominates $P$.
\end{lemma}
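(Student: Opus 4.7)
The plan is to induct on the number $i$ of $\Concat$ iterations performed so far, proving that after $i$ iterations $D$ dominates, w.h.p., every simple negative first-arc-bounded sub-funnel $P' = v_a\ldots v_b$ of $P$ with $v_a, v_b \in S$ and $b-a \le \ell_i$. Here $\ell_0 = n/s$ and $\ell_{i+1} = 2\ell_i - \Theta(n/s)$, so the closed form is $\ell_i = \Theta(2^i\, n/s)$ and $\ell_{\log n} \ge n \ge |P|$, yielding the lemma by applying the claim at $i = \log n$ to $P' = P$. The structural input is Lemma~\ref{lemma:funnel-zigzag-structure}: since $P$ is negative first-arc-bounded, its arc gains strictly alternate in sign with the odd-indexed ones negative. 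Consequently, a sub-funnel $v_a\ldots v_b$ of $P$ is negative first-arc-bounded iff $a = 1$ or $a$ is even (equivalently, iff its first arc $v_av_{a+1}$ is negative), and this parity is the only constraint on where $P$ may be legally split.

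The base case $i = 0$ is immediate from Corollary~\ref{lemma:BFS-basic} together with the $n/s$ initial calls to $\BFS$ inside $\ComputeF$. For the inductive step, fix $P' = v_a\ldots v_b$ as above with $b-a \le \ell_{i+1}$, and aim to find an even index $j$ with $b-\ell_i \le j \le a+\ell_i$ and $v_j \in S$. Given such a $j$, both halves $v_a\ldots v_j$ and $v_j\ldots v_b$ are simple negative first-arc-bounded sub-funnels of $P$ (the first is a prefix of $P'$ and retains its negative first arc; the second starts with $v_jv_{j+1}$, negative because $j$ is even) of length at most $\ell_i$ with endpoints in $S$, hence dominated by $D$ before iteration $i+1$ by the inductive hypothesis. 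Since $P'$ is itself $v_av_{a+1}$-bounded, Lemma~\ref{lemma:concat-basic} applied with $(u,v,w,x) = (v_a,v_{a+1},v_j,v_b)$ and $U=W=X=S$ (entering the first branch of $\Concat$, since $M[v_a][v_{a+1}] < 0$) shows that $\Concat(M,D,S,S,S)$ at iteration $i+1$ makes $D$ dominate $P'$.

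The probabilistic content is the existence of the split vertex $v_j$: the admissible range $[b-\ell_i, a+\ell_i]$ has length $2\ell_i - (b-a) \ge \Theta(n/s)$ by design of the recurrence, hence contains $\Theta(n/s)$ even indices, each a distinct vertex of $V$ (since $P$ is simple) that is independently sampled into $S$ with probability $p = \log n \cdot s/n$. The expected number of admissible samples is therefore $\Theta(\log n)$, and a Chernoff bound yields failure probability at most $n^{-c}$ per (sub-funnel, iteration) pair for any constant $c$ we wish, by absorbing enough into the hidden constants in $p$ and in the $\Theta(n/s)$ slack. A union bound over the at most $|P|^2 \log n = O(n^2 \log n)$ candidate pairs then pushes the overall failure probability below $n^{-\Omega(1)}$. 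The main obstacle is the tension between geometric growth of $\ell_i$ (which caps the additive slack at a small multiple of $n/s$) and keeping the Chernoff exponent large enough for the union bound (which wants the expected samples in that slack to be a large multiple of $\log n$); both demands are reconciled by scaling the hidden constant in $p = \Theta(\log n \cdot s/n)$ appropriately.
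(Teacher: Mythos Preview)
Your approach is correct and essentially the same as the paper's—both establish a length-doubling induction via $\Concat$, splitting at a sampled vertex whose outgoing arc is negative; the only organizational difference is that the paper pre-selects one such vertex per length-$n/(2s)$ segment and then runs a clean deterministic doubling over those anchors, whereas you locate split points on the fly and absorb the extra randomness with a union bound over all $O(n^2\log n)$ sub-funnel/iteration pairs. One slip to fix: with $P=v_1\ldots v_k$ negative first-arc-bounded, the arc $v_jv_{j+1}$ is negative precisely when $j$ is \emph{odd} (since $v_1v_2$ is the first, negative arc), so your condition should read ``$a$ is odd'' and you should seek an odd split index $j$; after this swap the argument goes through unchanged.
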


\begin{proof}
Assume that $P$ is first-arc-bounded path. The case in which $P$ is last-arc-bounded is symmetric.
    If $k \le n/s$ then the claim follows by Corollary~\ref{lemma:BFS-basic}. Assume $k > n/s$. Divide $P$  into continuous segments each of length $n/2s$.
    Let
    $I_t = \{t\cdot n/2s + 1, \ldots  (t+1)\cdot n/2s\}$
    be the set of indices of the vertices of segment $t$
    for 
    $0\le t \le k/(n/2s)-1$.\footnote{We assume  from brevity that $k$ is a multiple of $n/2s$, otherwise the last segment is shorter, but it does not affect the argument.} By the choice of $S$, for every $t$ it holds w.h.p.\ that there exists $i_t \in I_t$ such that $v_{i_t} \in S$ and the arc $v_{i_t}v_{i_t+1}$ has negative gain. Thus, for every $t$, $i_{t+1} -i_t \le n/s$ and therefore (by Corollary~\ref{lemma:BFS-basic}) $D$ dominates the sub-funnel $v_{i_t} \ldots v_{i_{t+1}}$. 
    Therefore, after the first call to $\Concat(M,D,S,S,S)$, by Lemma~\ref{lemma:concat-basic}, $D$ dominates $v_{i_t} \ldots v_{i_{t+2}}$ for every $t< 2s-2$.
    It follows by a simple induction that after the $j$'th call to $\Concat(M,D,S,S,S)$, $D$ dominates $v_{i_a} \ldots v_{i_{b}}$ for every $1\le a < b < 2s$ where $b-a \le 2^j$.
\end{proof}

\begin{lemma}\label{lemma:funnels-optimality}
    Let $P$ be a funnel of length $|P|=O(n)$. After a call to $\ComputeF(M)$, $D$ dominates $P$ w.h.p.
\end{lemma}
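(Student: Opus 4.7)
The plan is to trace $P=v_1\ldots v_k$ through the four phases of $\ComputeF(M)$ and show that one of them makes $D$ dominate $P$. By the symmetric if-branches in $\BFS$ and $\Concat$, I will focus on the first-arc-bounded case; the last-arc-bounded case is entirely symmetric, replacing prepending by appending and using the branch of $\Concat(M,D,S,S,V)$ that produces paths from $V$ to $S$ rather than from $S$ to $V$. Since $\Concat$ handles only negative-bounding-arc concatenations, I first reduce the case $g(v_1v_2)>0$: by Lemma~\ref{lemma:funnel-zigzag-structure}, $P'=v_2\ldots v_k$ is a negative first-arc-bounded funnel of length $|P|-1$, so it suffices to dominate $P'$ before the final $\BFS$ phase and then apply Lemma~\ref{lemma:BFS-single-extension} once inside that phase to recover $P$. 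Henceforth assume $P$ is negative first-arc-bounded. If $|P|\le n/s$ then Corollary~\ref{lemma:BFS-basic}, applied iteratively through the initial $n/s$ $\BFS$ calls, already dominates $P$; so I will assume $|P|>n/s$ (and, to avoid an uninteresting edge case, $|P|\ge 2n/s$).

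The core of the argument is a sampling step. By Lemma~\ref{lemma:funnel-zigzag-structure}, the arc $v_l v_{l+1}$ has negative gain exactly when $l$ is odd. I will establish that w.h.p.\ there exist sampled vertices $v_i,v_j\in S$ with $i,j$ odd, $i\le n/s$, and $k-n/s\le j\le k-1$. Each window of $\Theta(n/s)$ odd positions receives $\Theta(\log n)$ samples in expectation under the sampling rate $p=\log n\cdot s/n$, so a Chernoff bound gives failure probability $n^{-\Omega(1)}$ per window, and a union bound over the (polynomially many) windows involved preserves high probability. The parity constraint is essential, because only odd-indexed positions begin an arc of negative gain and hence yield a negative first-arc-bounded sub-funnel.

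Given such $v_i$ and $v_j$, consider the decomposition $v_i\ldots v_k=(v_i\ldots v_j)\mid (v_j\ldots v_k)$. Both halves are sub-funnels of $P$ and are negative first-arc-bounded, with bounding arcs $v_iv_{i+1}$ and $v_jv_{j+1}$ (each the maximum-gain arc in its sub-funnel by Lemma~\ref{lemma:funnel-zigzag-structure}). The suffix $v_j\ldots v_k$ has length at most $n/s$, so by Corollary~\ref{lemma:BFS-basic} the initial $\BFS$ phase dominates it. The middle $v_i\ldots v_j$ has both endpoints in $S$ and is negative arc-bounded, so Lemma~\ref{lemma:concat-sampling-correctness}, applied to the $\log n$ iterations of $\Concat(M,D,S,S,S)$, dominates it. Next, Lemma~\ref{lemma:concat-basic} applied to the call $\Concat(M,D,S,S,V)$ with $u=v_i,\ w=v_j,\ x=v_k$ yields that $D$ dominates $v_i\ldots v_k$; the concatenated path remains $v_iv_{i+1}$-bounded because $v_iv_{i+1}$ is the largest-gain arc of $v_i\ldots v_k$ in $P$. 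Finally, because the absolute gains along $v_1v_2,v_2v_3,\ldots,v_{i-1}v_i$ are strictly increasing (Lemma~\ref{lemma:funnel-zigzag-structure}), every prefix extension $v_{i'-1}v_{i'}\ldots v_k$ is first-arc-bounded with bounding arc $v_{i'-1}v_{i'}$, so the final $\BFS$ phase extends the dominance to $P=v_1\ldots v_k$ by $i-1\le n/s-1$ iterated applications of Lemma~\ref{lemma:BFS-single-extension}.

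The main obstacle I expect is the interaction between the parity restriction and the high-probability guarantee. Because $\Concat$ concatenates only across negative bounding arcs, the sampled vertices we can exploit are restricted to odd positions (those that start a negative-gain arc), which halves the effective sampling rate; the factor $\log n$ in $p=\log n\cdot s/n$ is calibrated precisely so that Chernoff absorbs this loss and the union bound over the relevant sub-funnels. The secondary bookkeeping---spending a single extra $\BFS$ invocation in the final phase for the positive-bounding-arc reduction---fits within the $n/s$ budget because the sampled $v_i$ can always be taken with $i\le n/s$. Verifying that the "in-between'' case $n/s<|P|<2n/s$ works (one may take $v_i=v_j$ or simplify the decomposition) is straightforward and I would dispatch it as a small separate case at the start of Paragraph~2.
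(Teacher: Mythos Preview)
Your proposal is correct and follows essentially the same approach as the paper: sample vertices with negative outgoing arc near both ends of $P$, invoke Lemma~\ref{lemma:concat-sampling-correctness} on the middle piece, use $\Concat(M,D,S,S,V)$ to attach the short suffix, and finish with the final $\BFS$ phase to prepend the short prefix. Two minor remarks: the absolute gains along $v_1v_2,\ldots,v_{i-1}v_i$ are strictly \emph{decreasing} (not increasing), which is precisely what makes each backward extension first-arc-bounded; and your reduction for $g(v_1v_2)>0$ is unnecessary, since $\BFS$ prepends arcs of either sign and the paper simply samples $v_a$ with $M[v_a][v_{a+1}]<0$ regardless of the sign of the first arc.
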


\begin{proof}
Denote $P= v_1 \ldots v_k$ and assume $P$ is $v_1v_2$-bounded, the case of a last-arc bounded funnel is symmetric. For every $1\le i \le j\le k$ we denote $P^{ij}=v_i\ldots v_j$.

If $k \le n/s$ then the claim follows by Corollary~\ref{lemma:BFS-basic}. Assume $k > n/s$. Let $A = \{1,\ldots n/2s\},B = \{k-n/2s , \ldots k-1\}$ be sets of  the first $2n/s$ indices and last $2n/s$ indices. By the sampling probability of the nodes to $S$ we get that w.h.p.\ there exists $a \in A$ and $b \in B$ such that $v_{a},v_{b}\in S$ and $M[v_{a}][v_{a + 1}]<0$ and $M[v_{b}][v_{b + 1}]<0$.\footnote{We may assume that $M[v_{a}][v_{a + 1}]<0$ and $M[v_{b}][v_{b + 1}]<0$ since half the arcs in a funnel are of negative gain (Lemma~\ref{lemma:funnel-zigzag-structure}).} By Lemma~\ref{lemma:concat-sampling-correctness}, w.h.p., after the $\log n$ applications of $\Concat(M,D,S,S,S)$, $D$ dominates $P^{a b}$. Since $k - b \le n/s$, by Corollary~\ref{lemma:BFS-basic}, after the first $n/s$ call to $\BFS(D,M)$,~$D$ dominates $P^{b k}$. By applying Lemma~\ref{lemma:concat-basic} on $P_1= P^{a b}$ and $P_2=P^{b k}$, we conclude that after performing $\Concat(M,D,S,S,V)$ it holds that~$D$ dominates $P^{a k}$. Finally, since $a < n/s$, we get by Lemma~\ref{lemma:BFS-single-extension} that after the last $n/s$ calls to $\BFS(D,M)$, $D$ dominates $P$.
\end{proof}

\subsubsection{Concatenating first-arc-bounded paths with last-arc-bounded paths}\label{section:CO}
Similarly to $\Concat(M,D)$, in $\CO(M,D)$ we are given $3$ sets $U,W,X\subseteq V$. For every $u\in U,w\in W,x\in X$ and $v\in V$, we try to create a $\overunderline{uvx}{1-1}{2-2}$ path by concatenating a $\overunderline{uvw}{1-1}{2-2}$ path with a $\overunderline{wax}{3-3}{2-2}$ path, where we optimize over the choices of $a \in V$, see Figure~\ref{figure:concat-opposite}. We also do the symmetric computation: we try to create a $\overunderline{xvu}{2-2}{3-3}$ path by concatenating a $\overunderline{xaw}{2-2}{1-1}$ path with a $\overunderline{wvu}{2-2}{3-3}$ path.
Notice that in either case the new path that we create is negative arc-bounded.

We now elaborate on the case corresponding to concatenating  $\overunderline{uvw}{1-1}{2-2}$ path with a $\overunderline{wax}{3-3}{2-2}$ path.
Assume $M[u][v]<0$, we update $D[uv][w]$ as follows. We consider the values $D[w][ax]$, for every $a\in V$ that satisfies $M[a][x]>0$ and $M[a][x]-D[w][ax]\le D[uv][w]-M[u][v]$. The latter condition guarantees that the gain of $a$ is larger than the gain of $v$ with respect to the concatenation of the paths realizing $D[uv][w]$ and $D[w][ax]$, see Figure~\ref{figure:concat-opposite}.
We distinguish between the following two cases.

\textbf{Case $1$: 
$M[a][x]-D[w][ax] \le D[uv][w]-M[u][v]$ and $D[w][ax] \le |D[uv][w]|$:} 
This case corresponds to Figure~\ref{figure:concat-opposite}(a).
The first condition says that the gain of the concatenated path never goes below the gain of $v$ (i.e $g_v=M[u][v] < 0$) and the second condition says that the gain of the concatenated path never exceeds the gain of $u$ (i.e., $g_u=0$).
In this case we claim that the paths realizing $D[uv][w]$ and $D[w][ax]$ can be concatenated into a $\overunderline{uvx}{1-1}{2-2}$ path of gain $D[uv][w]+D[w][ax]$.  We find such an $a\in V$ with largest $D[w][ax]$ and perform the update $D[uv][x]= \max \{D[uv][x], D[uv][w] + D[w][ax] \}$. To find the best $a\in V$, we store for every $w\in W$ and $x\in X$ the pairs $(M[a][x]-D[w][ax], D[w][ax])$, for $a\in V$ satisfying $M[x][a]>0$, in a \emph{Range Tree} $LRT_{wx}$ of last-arc-bounded paths. We then perform a search in $LRT_{wx}$ for a pair $(k_1,k_2)$ with $k_1 \le D[uv][w]-M[u][v]$ and largest~$k_2$ that satisfies $k_2 \le |D[uv][w]|$. This operation takes $O(\log^2 n)$ time.

\textbf{Case $2$:
  $ D[w][ax] \ge |D[uv][w]|$ and $M[a][x] \le |M[u][v]|$.}
 This case corresponds to Figure~\ref{figure:concat-opposite}(b).
  The first condition implies that the gain at $x$ is larger than the gain at $u$. Note that the condition from Case $1$ $M[a][x]-D[w][ax] \le D[uv][w]-M[u][v]$ can be derived from the two conditions.
 In this case we set $D[uv][x]=0$. To justify this assignment we argue that there is a path $P$ and an associated charge drop schedule $C$ such that $P$ is a $\overunderline{uvx}{1-1}{2-2}$ path with respect to $C$ and $g^C(P)=0$.
 Let $(P_1,C_1)$ and $(P_2,C_2)$ be the paths and charge drop schedules realizing $D[uv][w]$ and $D[w][ax]$, respectively.
Let $P = P_1\mid P_2$. We define a charge drop schedule $C$ for $P$ as follows: Let $d_w$ be the last charge drop in $C_1$ associated with $w$. We get $C$ by concatenating $C_1$ and $C_2$ and changing $d_w$ to be equal to $d_w+g^{C_1}(P_1)+g^{C_2}(P_2)$.

We claim that $P$ is $uv$-bounded with respect to $C$ and $g^C(P) = 0$. The latter is clear since 
\begin{align*}
    g^C(P) = g^{C_1}(P_1)- (g^{C_1}(P_1)+g^{C_2}(P_2)) + g^{C_2}(P_2) = 0.
\end{align*}
Lemma \ref{lemma:concat-opposite-invariant} shows that $P$ is $uv$-bounded with respect to $C$.

We discover whether there exists a vertex  $a\in V$ for which we should apply this case as follows. For every $w\in W$ and $x\in X$,
we store the pairs $(D[w][ax], M[a][x])$, for $a\in V$ satisfying $M[x][a]>0$, in a $2$-dimensional \emph{Range Tree} $LRT'_{wx}$ of values realized by $\overunderline{wax}{3-3}{2-2}$ paths.
We then perform a search in $LRT'_{wx}$ for a pair $(k_1,k_2)$ with $k_1 \ge |D[uv][w]|$ and $k_2\le |M[u][v]|$. This operation is done in $O(\log^2 n)$ time. If we find such a pair, we
apply this case and set
$D[uv][x]=0$.

The symmetric version, i.e., concatenating a $\overunderline{xaw}{2-2}{1-1}$ path with a $\overunderline{wvu}{2-2}{3-3}$ path, is as done analogously, see Figure~\ref{figure:concat-opposite}. We search for $a\in V$, such that $M[x][a]>0$ and one of the following cases is satisfied:

\textbf{Case $3$: 
$M[x][a]-D[xa][w] \le D[w][vu]-M[v][u]$ and $D[xa][w] \le |D[w][vu]|$:} 
This case corresponds to Figure~\ref{figure:concat-opposite}(c).
Similarly to Case $1$, in this case we can concatenate the paths realizing $D[xa][w]$ and $D[w][vu]$. we find $a$ that maximize $D[xa][w]$ and perform the update $D[x][vu] = \max \{D[x][vu], D[xa][w] + D[w][vu] \}$.

\textbf{Case $4$:
 $ D[xa][w] \ge |D[w][vu]|$ and $M[x][a] \le |M[v][u]|$.}
 This case corresponds to Figure~\ref{figure:concat-opposite}(d).
 Similarly to Case $4$, in this case in order to concatenate the paths realizing $D[xa][w]$ and $D[w][vu]$ we have to perform a charge drop at $w$. This will give us a $\overunderline{xvu}{2-2}{3-3}$ path of gain $0$ (with respect to some charge drop schedule), this is the best we can hope for in a negative arc-bounded path. We search if such an $a\in V$ exists using a Range tree $FRT'_{xw}$. If so, we perform the update $D[x][vu] = 0$.

\begin{figure}[t!]
\begin{algorithm}[H]
  \Fn{$\CO(M,D,U,W,X)$}{
  \BlankLine
  \For{$(w,x)\in  W \times  X$}{
    $LRT_{wx} \gets RT()$ \tcp*{\rm $2D$-Range tree of $w\ubar{a}\bar{x}$ paths}
    \For{$a\in V$ s.t $M[a][x]\ge 0$}{
        $LRT_{wx}.insert(M[a][x]-D[w][ax], D[w][ax])$
    }
    \BlankLine
    $LRT'_{wx} \gets RT()$ \tcp*{\rm $2D$-Range Tree of $w\ubar{a}\bar{x}$ paths}
    \For{$a\in V$ s.t $M[a][x]\ge 0$}{
        $LRT'_{wx}.insert(D[w][ax], M[a][x])$ \;
    }
    $FRT_{xw} \gets RT()$ \tcp*{\rm $2D$-Range tree of $\ubar{x}\bar{a}w$ paths}
    \For{$a\in V$ s.t $M[x][a]\ge 0$}{
        $FRT_{xw}.insert(M[x][a]-D[xa][w],D[xa][w])$
    }
    \BlankLine
    $FRT'_{xw} \gets RT()$ \tcp*{\rm $2D$-Range tree of $\ubar{x}\bar{a}w$ paths}
    \For{$a\in V$ s.t $M[x][a]\ge 0$}{
        $FRT'_{xw}.insert(D[xa][w], M[x][a])$
    }
  }
  \For{$(u,v,w,x)\in U\times V \times W \times  X$}{
      \If(\tcp*[h]{Trying to create a $\bar{u}\ubar{v}x$ path}){$M[u][v]<0$}{
          $(-,D[w][ax]) \gets LRT_{wx}.range(k_1\le D[uv][w]-M[u][v], k_2\le |D[uv][w]|).max\_k_2()$\;
          $D[uv][x] \gets \max \{D[uv][x], D[uv][w]+D[w][ax] \}$ \;
          
          $bool \gets LRT'_{wx}.find(k_1\ge |D[uv][w]|,k_2 \le |M[u][v]|)$\;
          \If{bool}{
            $D[uv][x]\gets 0$
          }
      }
      \If(\tcp*[h]{Trying to create a $x\bar{v}\ubar{u}$ path}){$M[v][u]<0$}{
          $(-,D[xa][w]) \gets FRT_{xw}.range(k_1\le D[w][vu]-M[v][u],k_2\le |D[w][vu]|).max\_k_2()$\;
          $D[x][vu] \gets \max \{D[x][vu], D[xa][w] + D[w][vu] \}$\;
          
          \BlankLine
          $bool \gets FRT'_{xw}.find(k_1\ge |D[w][vu]|,k_2\le |M[v][u]|)$\;
          \If{bool}{
            $D[x][vu]\gets 0$
          }
      }
  }
  }
\end{algorithm}

\begin{centering}
    \includegraphics[width=0.85\textwidth]{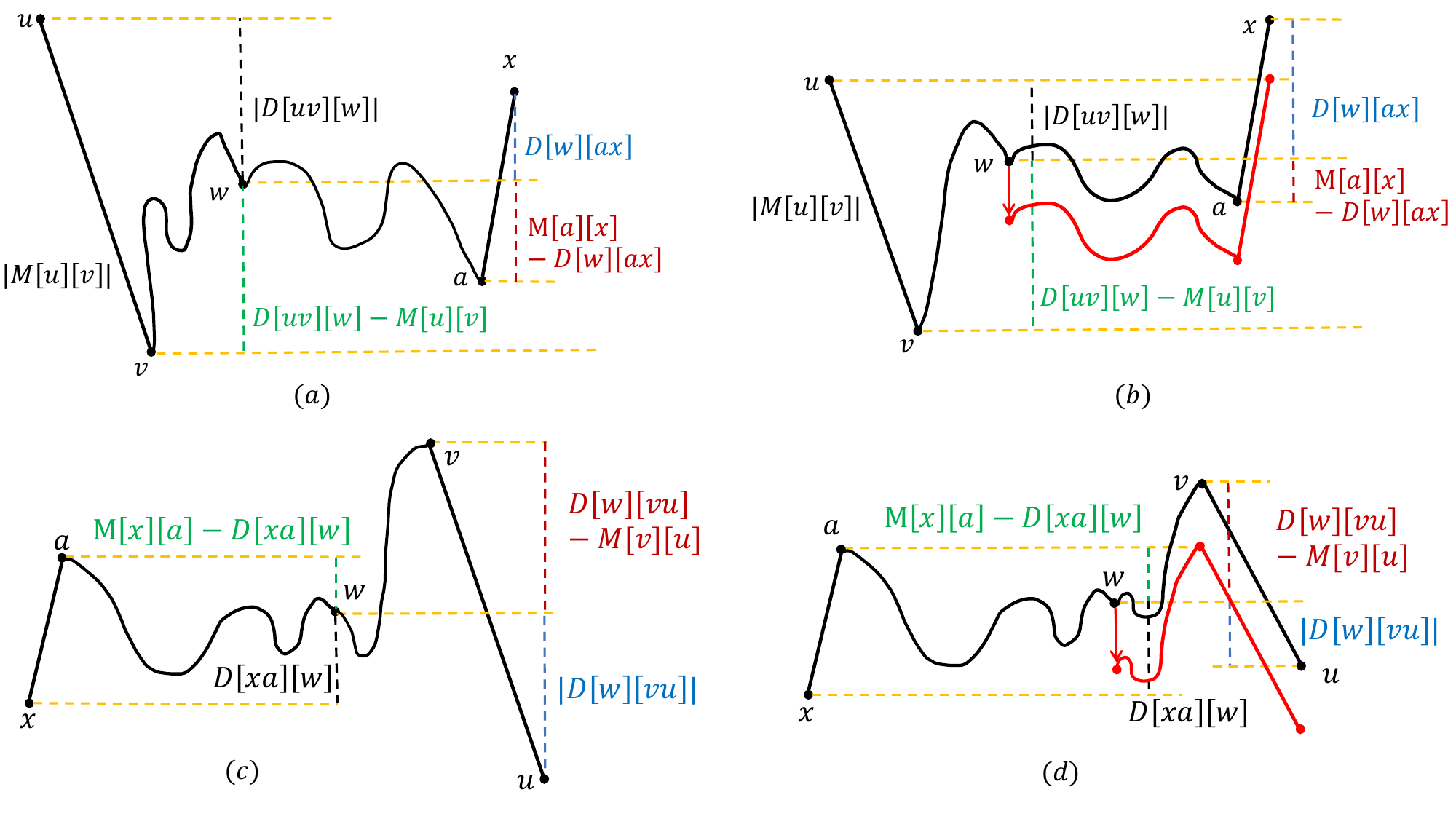}
\end{centering}

\caption{Concatenating  arc-bounded paths, one is first-arc-bounded  and the other is last-arc-bounded.}\label{figure:concat-opposite}
\end{figure}

\begin{lemma}\label{lemma:concat-opposite-bounded}
    Let $P$ and $Q = $ be paths in $G^M$ that are dominated by $D$. Let $U,W,X\subseteq V$ and let $u\in U,w\in W,x\in $ and $v \in V$. If one of the following holds
    \begin{itemize}
        \item $P$ is a $\overunderline{uvw}{1-1}{2-2}$ path, $Q$ is a $\overunderline{wax}{3-3}{2-2}$ path, and $P\mid Q$ is a $\overunderline{uvx}{1-1}{2-2}$ path.
        \item $P$ is a $\overunderline{xaw}{2-2}{1-1}$ path, $Q$ is a $\overunderline{wvu}{2-2}{3-3}$ path, and $P\mid Q$ is a $\overunderline{xvu}{2-2}{3-3}$ path.
    \end{itemize}
    then after $\CO(M,D,U,W,X)$, $D$ dominates $P \mid Q$.
\end{lemma}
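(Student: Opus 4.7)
The plan is to prove each bullet by a two-way case split on the relative size of the two stored $D$-values, showing that in each case the specific interior vertex $a$ of $Q$ witnesses one of the range-tree queries performed inside $\CO$. Fix schedules $C_P,C_Q$ realizing the domination hypotheses, set $D_1:=D[uv][w]\ge g^{C_P}(P)$ and $D_2:=D[w][ax]\ge g^{C_Q}(Q)$, and take the schedule of $P\mid Q$ to be $C_P$ followed by $C_Q$ with no additional drop at the joining vertex $w$. Since $P$ is $\overunderline{uvw}{1-1}{2-2}$ we have $M[u][v]\le 0$, and since $Q$ is $\overunderline{wax}{3-3}{2-2}$ (with no drop at $x$ by Definition~\ref{def:arc-bounded}) we have $M[a][x]\ge 0$; in particular $a$ appears in the trees $LRT_{wx}$ and $LRT'_{wx}$.

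For the first bullet, suppose first that $D_2\le |D_1|$. The $uv$-bound on $P\mid Q$ at $a$ reads
\[
g^{C_P}(P)+g^{C_Q}(Q)-M[a][x]\;\ge\;M[u][v],
\]
which after substituting $D_1\ge g^{C_P}(P)$ and $D_2\ge g^{C_Q}(Q)$ rearranges to $M[a][x]-D_2\le D_1-M[u][v]$. Together with the case assumption, the pair $(M[a][x]-D_2,\,D_2)$ lies in the query region of Case~1 of $\CO$, so the returned key $k_2=D[w][a'x]\ge D_2$ yields
\[
D[uv][x]\;\ge\;D_1+D[w][a'x]\;\ge\;D_1+D_2\;\ge\;g^{C_P}(P)+g^{C_Q}(Q)\;=\;g(P\mid Q).
\]
Suppose instead that $D_2>|D_1|$. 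The $uv$-bound at $x$ gives $g^{C_P}(P)+g^{C_Q}(Q)\le 0$, and combined with the bound at $a$ it forces
\[
M[a][x]\;\le\;g^{C_P}(P)+g^{C_Q}(Q)-M[u][v]\;\le\;-M[u][v]\;=\;|M[u][v]|.
\]
Hence the pair $(D_2,\,M[a][x])$ satisfies the feasibility query of Case~2, so $\CO$ executes $D[uv][x]\gets 0$, which dominates $g(P\mid Q)\le 0$.

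The second bullet is handled by an entirely symmetric argument: with $D_1':=D[xa][w]$ and $D_2':=D[w][vu]$, the split $D_1'\le |D_2'|$ versus $D_1'>|D_2'|$ activates Case~3 (via $FRT_{xw}$) and Case~4 (via $FRT'_{xw}$) by the mirror-image calculation, using the $vu$-bound of $P\mid Q$ at $a$ and at $x$ respectively. There is no genuine obstacle; the proof is pure bookkeeping of inequalities once the case split is in place. The one subtle point worth flagging is that in the high-magnitude case ($D_2>|D_1|$, resp.\ $D_1'>|D_2'|$) the path realizing the stored value $0$ is no longer the naive concatenation but its modification by a charge drop at $w$, as described before the lemma; this matters for the companion statement Lemma~\ref{lemma:concat-opposite-invariant} but not for the domination bound proven here.
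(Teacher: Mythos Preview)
Your proof is correct and follows essentially the same approach as the paper's: the same two-way split on $D[w][ax]\lessgtr |D[uv][w]|$, the same derivation of $M[a][x]-D[w][ax]\le D[uv][w]-M[u][v]$ from the bound at $a$, and the same conclusion that the stored pair for $a$ satisfies the relevant range-tree query. You are actually slightly more explicit than the paper in deriving $M[a][x]\le|M[u][v]|$ in the second case (the paper simply lists it as a case hypothesis without justifying exhaustiveness). One cosmetic remark: the schedules $C_P,C_Q$ you introduce are superfluous and mildly misleading, since ``dominated by $D$'' and the path-type hypotheses in the lemma are all with respect to the zero schedule; your inequalities are only correct as written when $C_P=C_Q=0$, so you should simply drop the schedule notation.
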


\begin{proof}
     Assume the first case, the second case is symmetric. Since $P \mid Q$ is a $\overunderline{{u}{v}{w}}{1-1}{2-2}$ path, we get that $g_{a}\ge g_{v} = M[u][v]$ (gains are with respect to $P \mid Q$). Thus 
     \begin{align*}
     D[u v][w] + D[w][ax]\ge g(P) + g(Q) = 
      g_{x} = g_{a} + M[a][x] \ge g_v+M[a][x] = M[u][v]+M[a][x].
     \end{align*}
     Rearranging the terms, we get $M[a][x]-D[w][ax]\le D[u v][w]-M[u][v]$. We split to cases analogue to the cases in the description of the algorithm.

     \textbf{Case 1: $D[w][ax]\le |D[uv][w]|$.}
     Let $(k_1,k_2)=(M[a'][x]-D[w][a'x], D[w][a'x])$ be the pair in $LRT_{wx}$ with $k_1 \le D[u v][w]-M[u][v]$ and largest $k_2$ that satisfies $k_2 \le |D[uv][w]|$. Since $(M[a][x]-D[w][ax], D[w][ax])$ is also a pair in $LRT_{wx}$, we get $D[w][a'x]=k_2 \ge D[w][ax]$. Consider the tuple $(u,v,w,x)$. Thus, when $D[uv][x]$ is updated according to this tuple, the first If statement performs the following update     \begin{align*}
         D[uv][x] \ge D[uv][w]+D[w][a'x] \ge D[uv][w]+D[w][ax] \ge g(P) + g(Q) = g(P\mid Q).
     \end{align*}

    \textbf{Case 2: $D[w][ax]\ge |D[uv][w]|$ and $M[a][x]\le |M[u][v]$.} 
    In this case the algorithm assigns $D[uv][x]=0$. Since $P\mid Q$ is a $\overunderline{uvx}{1-1}{2-2}$ path, it follows that $g(P\mid Q) \le 0$ and therefore $D[uv][x]\ge g(P \mid Q)$.
\end{proof}

\begin{lemma}\label{lemma:concat-opposite-invariant}
    Procedure $\CO(M,D,U,W,X)$ maintains Invariants~\ref{invariant}\ref{1a} and \ref{invariant}\ref{1b}.
\end{lemma}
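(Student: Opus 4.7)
My plan is to mirror the proof of Lemma~\ref{lemma:concat-maintains-invariant}, arguing by induction on the sequence of assignments that $\CO(M,D,U,W,X)$ performs on $D$. There are four update types: two that modify $D[uv][x]$ (triggered when $M[u][v]<0$, i.e.\ Cases~1--2 of the algorithm description) and two mirror updates to $D[x][vu]$ (triggered when $M[v][u]<0$, Cases~3--4). It suffices to handle the two updates to $D[uv][x]$; the other two are literally symmetric.

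For the straightforward assignment $D[uv][x]\gets D[uv][w]+D[w][ax]$, I would invoke Invariants~\ref{invariant}\ref{1a}--\ref{invariant}\ref{1b} on the current $D$ to extract a $\overunderline{uvw}{1-1}{2-2}$ path $P_1$ with schedule $C_1$ realizing $D[uv][w]$ and a $\overunderline{wax}{3-3}{2-2}$ path $P_2$ with schedule $C_2$ realizing $D[w][ax]$. Let $P=P_1\mid P_2$ with concatenated schedule $C$; no extra drop is inserted at $w$, so $g^C(P)=D[uv][w]+D[w][ax]$. The gains along $P_1$ already sit in $[M[u][v],0]$, while the gains along $P_2$ get translated uniformly by $D[uv][w]$ inside $P$. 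The two algorithmic conditions, $M[a][x]-D[w][ax]\le D[uv][w]-M[u][v]$ and $D[w][ax]\le|D[uv][w]|$, are precisely the statements that the shifted minimum of $P_2$ (attained at $a$) remains $\ge M[u][v]$ and the shifted maximum (attained at $x$) remains $\le 0$. Hence $P$ is $uv$-bounded with respect to $C$, and traversability follows from Lemma~\ref{lemma:alpha-of-arc-bounded} using $M[u][v]\ge -B$, which is preserved by Invariant~\ref{invariant}\ref{1c}.

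The delicate case is the assignment $D[uv][x]\gets 0$, triggered when $D[w][ax]\ge|D[uv][w]|$ and $M[a][x]\le|M[u][v]|$. I would again take $P=P_1\mid P_2$, but now modify $C$ by adding an extra drop of $D[uv][w]+D[w][ax]\ge 0$ at the join vertex $w$, so that a direct telescoping yields $g^C(P)=0$. The main obstacle is verifying that this large drop does not push $g_w^{P,C}=-D[w][ax]$ below $M[u][v]$. The key structural fact to extract is that $P_2$ is a $\overunderline{wax}{3-3}{2-2}$ path with $g_w^{C_2}=0$, which forces $g_a^{C_2}\le 0\le g_x^{C_2}$, and therefore $D[w][ax]=g_x^{C_2}\le g_x^{C_2}-g_a^{C_2}=M[a][x]$. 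Combined with the algorithmic condition $M[a][x]\le|M[u][v]|$, this gives $D[w][ax]\le|M[u][v]|$ and hence $g_w^{P,C}=-D[w][ax]\ge M[u][v]$. The interior gains of $P_2$ inside $P$ land in $[-M[a][x],0]\subseteq[M[u][v],0]$, so $P$ is $uv$-bounded with respect to the modified $C$, and Lemma~\ref{lemma:alpha-of-arc-bounded} again supplies traversability. The symmetric Cases~3 and~4 are handled by the same two arguments applied to the mirror-image concatenation.
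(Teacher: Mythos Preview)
Your proof is correct and follows essentially the same approach as the paper's. Both proceed by induction on assignments, concatenate the realizing paths $P_1,P_2$ with the concatenated schedule, and in Case~2 insert the extra drop $\gamma=D[uv][w]+D[w][ax]$ at the join vertex~$w$; the paper verifies $uv$-boundedness by checking only the extremal vertices $a$ and $x$ (relying on $P_2$ being $ax$-bounded to cover~$w$), whereas you single out $g_w^{P,C'}$ explicitly via the observation $D[w][ax]\le M[a][x]$, but the two verifications amount to the same inequality $M[a][x]\le |M[u][v]|$.
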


\begin{proof}
    We prove the lemma by induction on the assignments of the algorithm. Let $u\in U,w\in W,x\in X$ and $v\in V$ and assume $M[u][v]<0$, the case $M[v][u]<0$ is symmetric.
    
    Assume the algorithm set $D[uv][x] = D[uv][w]+ D[w][ax]$ for some $a\in V$ satisfying
    \begin{align}\label{eq:1}
     M[a][x]-D[w][ax] &\le D[uv][w]-M[u][v], \; \text{and} \\
     D[w][ax] &\le |D[uv][w]|.
    \end{align}
    
    In particular $D[uv][x]\le 0$.
    By Invariant~\ref{invariant}\ref{1a}, there is traversable a $\overunderline{uvw}{1-1}{2-2}$ path $P_1$ with respect to a charge drop schedule $C_1$ that satisfies $g(P_1) = D[uv][w]$. Similarly there is a $\overunderline{wax}{3-3}{2-2}$ path $P_2$ with respect to a charge drop schedule $C_2$ that satisfies $g(P_2) = D[w][ax]$. Consider the path $P = P_1\mid P_2$.  Let $C$ be the charge drop schedule derived by following $C_1$ on $P_1$ and then $C_2$ on $P_2$. We get 
    \begin{align*}
        g^{C}(P) = g^{C_1}(P_1) + g^{C_2}(P_2) = D[uv][w]+D[w][ax] = D[uv][x].
    \end{align*}

    We now prove that $P$ is $uv$-bounded with respect to $C$.  
    Since $P_1$ is $uv$-bounded with respect to $C_1$ and $P_2$ is $ax$-bounded with respect to $C_2$, it is enough to prove that $g^{P,C}_a \ge g^{P,C}_v$ and $g^{P,C}_x\le g^{P,C}_u$. Indeed, $g^{P,C}_x = g^C(P)= D[uv][x] \le 0 = g^{P,C}_u$. Let $d_x\ge 0$ be the charge drop performed at $x$ in $C$. We get
    \begin{align*}
        g^{P,C}_a 
        &= g^C(P)-(M[a][x]-d_x)
        \ge g^{C_1}(P_1) + g^{C_2}(P_2) - M[a][x] \\
        &= D[uv][w] + D[w][ax] - M[a][x]
        \numge{1} M[u][v] 
        \numge{2} g^{P,C}_v,
    \end{align*}
    where inequality $(1)$ holds by the left inequality in Equation~(\ref{eq:1}). Note inequality $(2)$ is not necessarily an equality since there might be a charge drop at $v$. Since $P_1$ is traversable it holds that $M[u][v]\ge -B$ and therefore, By Lemma~\ref{lemma:alpha-of-arc-bounded}, $P$ is traversable.

    Assume the algorithm set $D[uv][x] = 0$ because there is a vertex $a\in V$ such that
    \begin{align}
        M[a][x]\le  |M[u][v]| \\
        D[w][ax]\ge  |D[uv][w]|
    \end{align}
    $M[a][x]\le  |M[u][v]|$ and 
    $D[w][ax]\ge  |D[uv][w]|$ and $M[a][x]-D[w][ax]\le D[uv][w]-M[u][v]$. In particular $D[uv][w] + D[w][ax]\ge 0 $.  Define $P = P_1\mid P_2$ and $C$ as before. Let $\gamma = g^C(P)$, therefore $\gamma =  g^{C_1}(P_1) + g^{C_2}(P_2) = D[uv][w]+D[w][ax] \ge 0$.  Let $d_w$ be the last charge drop in $C_1$ associated with $w$. We define a charge drop schedule $C'$ that differs from $C$ only at $w$ and assigns a charge drop at $w$ of $d_w+\gamma$. We prove that $P$ is a $\overunderline{uvx}{1-1}{2-2}$ path with respect to $C'$ and $g^{C'}(P)=D[uv][w]$. The latter follows by the following calculation

    \begin{align*}
        g^{C'}(P)
        = g^{C_1}(P_1) + (g^{C_2}(P_2)-\gamma)
        = D[uv][w] + D[w][ax] - \gamma
        = 0
        =D[uv][w].
    \end{align*}
    
    We now prove that $P$ is $uv$-bounded with respect to $C'$ and therefore, By Lemma~\ref{lemma:alpha-of-arc-bounded}, $P$ is traversable. Since $P_2$ is a $\overunderline{wax}{3-3}{2-2}$ path with respect to $C_2$ (and also with respect to the new charge drop at its first vertex $w$), it is enough to show that  $g^{P,C'}_a \ge g^{P,C'}_v$ and $g^{P,C'}_x\le g^{P,C'}_u$. Indeed, $g^{P,C'}_x = g^{C'}(P)= 0 = g^{P,C'}_u$. Let $d_x\ge 0$ be the charge drop performed at $x$ in $C'$. We get
    \begin{align*}
        g^{P,C'}_a 
        &= g^{C'}(P)-(M[a][x]-d_x)
        = - M[a][x] +d_x \\
        &\ge M[u][v] +d_x
        \ge M[u][v] 
        \ge g^{P,C'}_v.
    \end{align*}     
\end{proof}

\subsubsection{Build monotone paths from arc-bounded paths}\label{sec:extend}
In this procedure (see Figure~\ref{figure:extend}) we are given a set $T\subseteq V$. For every $u\in T$ and $v,w,x\in V$, we try to concatenate a $\overunderline{uvx}{1-1}{2-2}$ path with the arc $xy$ in order to either get a descending path from $u$ to $y$ or to get an ascending path from $v$ to $y$. 
We also do the opposite:
we try to concatenate the arc $yx$ with a $\overunderline{xvu}{2-2}{3-3}$ path in $G^M$ in order to either get a descending path from $y$ to $u$ or to get an ascending path from $y$ to $v$.

The concatenation of $\overunderline{uvx}{1-1}{2-2}$ paths with the arc $xy$ is done as follows. We distinguish between the following cases.

\textbf{Case 1: $-B \le D[uv][x]+M[x][y]\le M[u][v]$.}
This case corresponds to a concatenation of a $\overunderline{uvx}{1-1}{2-2}$ path with the arc $xy$ that results in a descending path from $u$ to $u$. The algorithm sets $D[u][y] = \max \{   D[u][y],D[uv][x]+M[x][y]\}$.

\textbf{Case 2: $D[uv][x]+M[x][y] \ge M[u][v]$}
This case means that after the concatenation, the gain at $y$ is at least the gain at $v$. In order to make the path descending, we perform a charge drop at~$y$ such that the gain at $y$ matches the gain of $v$, resulting in a descending path. The algorithm sets $D[u][y] = \max \{   D[u][y], M[u][v]\}$.

\textbf{Case 3: $D[uv][x]+M[x][y] \ge 0$}
This case means that after the concatenation, the gain at $y$ is at least the gain at $u$ (which is $0$). This means that $y$ has the maximum gain in the concatenated path. Since $P$ is $uv$-bounded, $v$ has the minimum gain in the concatenated path. Therefore the sub path from $v$ to $y$ is ascending. The algorithm sets $D[u][y] = \max \{   D[u][y], -M[uv] + D[uv][x]+M[x][y]\}$.

The procedure performs similar computations when it concatenates the arc $yx$ with a $\overunderline{xvu}{2-2}{3-3}$ paths.

\begin{figure}[hbt!]
\begin{algorithm}[H]
  \Fn{$\Extend(M,D,T)$}{
    \BlankLine
  \For{$(u,v,x,y)\in T \times V^3$}{
      \BlankLine
      \If(\tcp*[h]{first-arc bounded paths}){$M[u][v]\le 0$}{
          \If(\tcp*[h]{descending shortcuts}){$-B \le D[uv][x]+M[x][y]\le M[u][v]$}{
            $D[u][y] = \max \{   D[u][y],D[uv][x]+M[x][y]\}$ 
          }
          \If(\tcp*[h]{descending shortcuts}){$D[uv][x]+M[x][y] \ge M[u][v]$ }{
            $D[u][y] = \max \{D[u][y], M[u][v] \}$\tcp*{\rm  Drop charge at $y$ to be descending.}
          }
          \If(\tcp*[h]{ascending shortcuts}){$D[uv][x]+M[x][y] \ge 0$ }{
            $D[v][y] = \max \{D[v][y], -M[u][v]+D[uv][x]+M[x][y] \}$ \\
            \tcp*{\rm  Note: Ascending path starts at $v$.}
          }
      }
      \BlankLine
      \If(\tcp*[h]{last-arc bounded paths}){$M[v][u]\le 0$}{
          \If(\tcp*[h]{descending shortcuts}){$-B\le M[y][x]+D[x][vu]\le M[v][u]$ }{
            $D[y][u] = \max \{D[y][u], M[y][x]+ D[x][vu]\}$
          }
          \If(\tcp*[h]{descending shortcuts}){$M[y][x]+D[x][vu]\ge M[v][u]$ }{
            $D[y][u] = \max \{D[y][u], M[v][u] \}$\tcp*{\rm Charge drop at $x$ to make $g^C_y=g^C_v$.}
          }
          \If(\tcp*[h]{ascending shortcuts}){$M[y][x]+D[x][vu]\ge 0$}{
            $D[y][v] = \max \{   D[y][v],M[y][x]+D[x][vu]-M[v][u]\}$ \\
            \tcp*{\rm  Note: Ascending path ends at $v$.}
          }
        }
  }
  }
\end{algorithm}

\begin{center}    \includegraphics[width=1.10\textwidth]{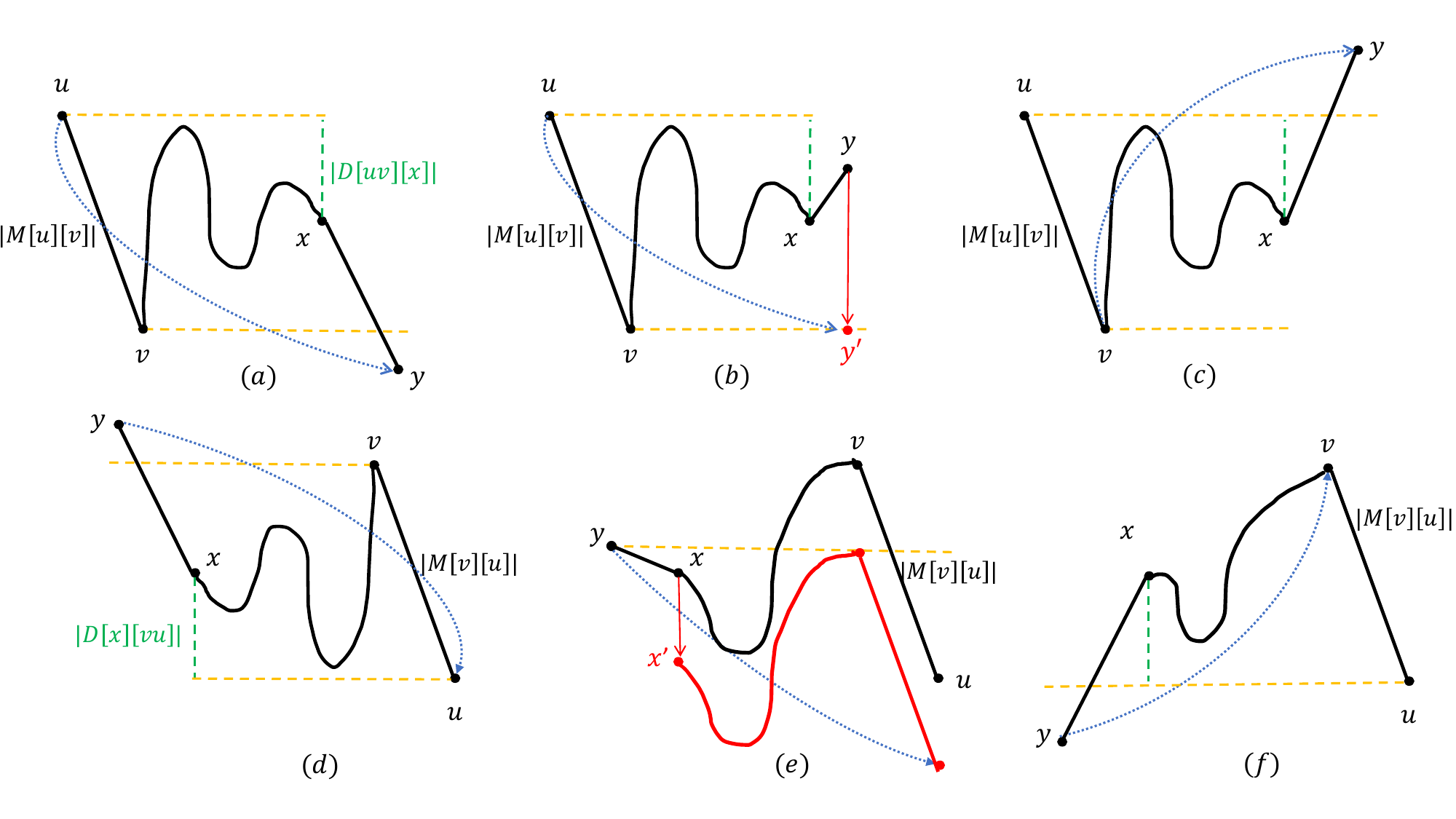}
\end{center}

\caption{The six cases of the Algorithm $\Extend(M,D,T)$. Blue dotted arrows are new shortcuts. Red downwards vertical arrows represent charge drop at a vertex, this affects the gain of all subsequent vertices.}\label{figure:extend}
\end{figure}

The following lemma states that if $D$ dominates a first-arc $uv$-bounded path $P$, where $u\in T$, that can be extended by an arc $xy$ and result in a monotone path $P'$ (that starts either at $u$ or $v$), then after $\Extend(M,D,T)$, $D$ dominates $P'$. The lemma also proves a similar result for last-arc bounded paths.

\begin{lemma}\label{lemma:long-shortcuts-dominating}
   Let $P$ be an arc-bounded path in $G^M$ and assume that $D$ dominates $P$. Denote by $\tilde{P}$ the subpaths of $P$ that excludes the bounding arc of $P$ (either the first arc or the last arc). Then the following holds after $\Extend(M,D,T)$
    \begin{enumerate}
        \item Assume $P$ is a $\overunderline{uvx}{1-1}{2-2}$ bounded and $P' = P\mid y $ is descending such that $g(P')\ge -B$. If $u \in T$, then $D$ dominates $P'$.
        \item Assume $P$ is $\overunderline{uvx}{1-1}{2-2}$-bounded and $P' = \tilde{P}\mid y $ is ascending. If $u \in T$, then $D$ dominates $P'$.
        \item  Assume $P$ is a $\overunderline{uvx}{1-1}{2-2}$ bounded. If $u \in T$, then $D[u][x]\ge M[u][v]$.
        \item Assume $P$ is a $\overunderline{xvu}{2-2}{3-3}$ path and $P' = y \mid P $ is descending such that $g(P')\ge -B$. If $u \in T$, then $D$ dominates $P'$.
        \item Assume $P$ is a $\overunderline{xvu}{2-2}{3-3}$ path and $P' = y \mid \tilde{P} $ is ascending. If $u \in T$, then $D$ dominates $P'$.
        \item Assume $P$ is a $\overunderline{uvx}{2-2}{3-3}$ bounded. If $x \in T$, then $D[u][x]\ge M[v][x]$.
    \end{enumerate}
\end{lemma}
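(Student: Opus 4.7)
The plan is a six-way case analysis, pairing each numbered item with one of the if-branches in the pseudocode of $\Extend(M,D,T)$: items 1--3 use the three branches inside the block guarded by $M[u][v]\le 0$, and items 4--6 use the symmetric branches inside $M[v][u]\le 0$. Since the hypothesis of each item pins down a tuple $(u,v,x,y)\in T\times V^3$ that the outer loop examines (because $u\in T$), the work reduces to checking that the right if-branch fires and that its assignment dominates $g(P')$.

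For item 1, $P'=P\mid y$ is descending from $u$ through $v$ to $y$, so $g(P')\le M[u][v]$, and $g(P')\ge -B$ by hypothesis; dominance of $P$ gives $D[uv][x]+M[x][y]\ge g(P)+M[x][y]=g(P')$. I split on whether $D[uv][x]+M[x][y]\le M[u][v]$: if so, the first if-branch fires with value $\ge g(P')$; otherwise $D[uv][x]+M[x][y]>M[u][v]\ge g(P')$ and the second if-branch assigns $D[u][y]\ge M[u][v]\ge g(P')$. Item 3 is an immediate consequence obtained by instantiating the loop at $y=x$: because $M[x][x]=0$ (the diagonal is initialized to $0$ in $\ComputeS$), one has $D[uv][x]+M[x][x]=D[uv][x]\ge M[u][v]$, the second if-branch fires, and $D[u][x]\ge M[u][v]$.

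For item 2, I first read off from $P'=\tilde P\mid y$ being ascending the facts $M[x][y]\ge 0$ (since $y$ carries the maximum gain on $P'$ and $x$ is its predecessor) and $g(P')\ge 0$. The cleanest way to verify the third if-branch's guard is to reason through the path $Q$ realizing $D[uv][x]$ from Invariant~\ref{invariant}\ref{1a}: $Q$ is $uv$-bounded with $g^{C_Q}(Q)=D[uv][x]\ge g(P)$, so the intermediate gains of $\tilde Q\mid y$ lie in $[0,-M[u][v]]$ and its final gain equals $D[uv][x]-M[u][v]+M[x][y]$. The guard $D[uv][x]+M[x][y]\ge 0$ is exactly the condition that $\tilde Q\mid y$ be ascending; when it holds the update gives $D[v][y]\ge -M[u][v]+D[uv][x]+M[x][y]\ge -M[u][v]+g(P)+M[x][y]=g(P')$.

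Items 4, 5 and 6 are the mirror images of 1, 2 and 3, obtained by replacing first-arc-bounded paths with last-arc-bounded ones, the arc $xy$ by $yx$, and $M[u][v]$ by $M[v][u]$; they use the three if-branches in the lower block with identical reasoning. The main obstacle I expect is item 2 (and its mirror item 5): the third if-branch's guard $\ge 0$ is strictly stronger than what the ascending hypothesis on $P'$ delivers on its face (which is only $D[uv][x]+M[x][y]\ge M[u][v]$), and reconciling this requires the realizing-path view of $D$ rather than $P$ itself. The remaining work is bookkeeping: the $-B$ traversability check in each branch and the verification that the implicit charge drops assigned by the second and fifth branches are consistent with the charge drop schedule under which $P'$ is monotone.
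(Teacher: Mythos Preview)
Your approach mirrors the paper's: the same three-branch case split for items 1--3, the same $y=x$ trick with $M[x][x]\ge 0$ for item 3, and symmetry for items 4--6. Items 1, 3, 4, 6 are complete and match the paper.

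You are right to flag item 2. Neither you nor the paper actually verifies the guard $D[uv][x]+M[x][y]\ge 0$: the paper asserts ``by the assumptions, $|M[u][v]|\le g(P')$'' (which rearranges to $g(P)+M[x][y]\ge 0$), but this does not follow from $P'=\tilde P\mid y$ being ascending. Take $P=(u,v,x)$ with $M[u][v]=-5$, $M[v][x]=3$, so $g(P)=-2$ and $P$ is $\overunderline{uvx}{1-1}{2-2}$-bounded; with $M[x][y]=1$ the path $P'=(v,x,y)$ has gains $0,3,4$ and is ascending with $g(P')=4<5=|M[u][v]|$. If $D[uv][x]=g(P)=-2$ (which is all that domination guarantees) then $D[uv][x]+M[x][y]=-1<0$ and the third branch does not fire, so $D[v][y]$ is not updated by this tuple. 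Your ``realizing-path view'' does not rescue this---the realizer can be $P$ itself. The defect is in the lemma's statement, not the method: item 2 needs the extra hypothesis $g(P)+M[x][y]\ge 0$ (equivalently, $P\mid y$ is not $uv$-bounded). This hypothesis holds in the only use of the lemma, Corollary~\ref{corollary:long-shortcuts}, where $P=e_i\ldots e_{\bar{s}(i)}$ is a maximal first-arc-bounded subpath and by definition of $\bar{s}(i)$ appending $e_{\bar{s}(i)+1}$ pushes the gain outside $[M[u][v],0]$; with it, $D[uv][x]+M[x][y]\ge g(P)+M[x][y]\ge 0$ directly, and both the paper's argument and yours complete. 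Item 5 needs the symmetric fix.
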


\begin{proof}
    We prove items $1,2$ and $3$ of the lemma, items $4,5$ and $6  $ are symmetric. Since $D$ dominates $P$, we get $D[u v][x] \ge g(P)$.
    
    We begin by proving item $1$ of the lemma. 
    We split to cases according to the pseudocode.
    
     If $D[u v][x]+ M[x][y] \le M[u][v]$ (See Figure~\ref{figure:extend}(a)), then after $\Extend(M,D,T)$ we get $D[u][y]\ge D[u v][x]+ M[x][y]$. Therefore

     \[ D[u][y]\ge D[u v][x]+ M[x][y] \ge g(P) + M[x][y] = g(P').\]
     
     If $D[u v][x]+ M[x][y] \ge M[u][v]$ (See Figure~\ref{figure:extend}(b)), then after $\Extend(M,D,T)$ we get $D[u][y]\ge  M[u][v] \ge g(P')$, since $P'$ is descending. Thus, in both cases $D$ dominates $P'$.

    We now prove item $2$ of the lemma  (See Figure~\ref{figure:extend}(c)). By the assumptions, $|M[u][v]| \le g(P') = g(\tilde{P}) + M[x][y] =  -M[u][v] + g(P)+M[x][y]$. Rearranging the terms, we get $g(P)+M[x][y]\ge 0$. So $D[u v][x]+M[x][y]\ge 0$. Hence, after $\Extend(M,D,T)$ we get 
    \[D[v][y]\ge - M[u][v] + D[u v][x]+M[x][y] 
    \ge - M[u][v] + g(P) +M[x][y]
    = g(P').\]

    We now prove item $3$ of the lemma (See Figure~\ref{figure:extend}(b) and set $x=y$). Since $\ComputeS$ initializes $M[w][w]=0$ for every $w\in V$, and since the values in $M$ are non decreasing, it follows that $M[x][x]\ge 0$. Therefore $D[uv][x]+M[x][x]\ge D[uv][x]\ge M[u][v]$, and by the second inner-if statement in $\Extend(M,D,T)$ we get that $D[u][x]\ge M[u][v]$.

    The proof of items $4,5$ and $6$ follows similarly, see Figures~\ref{figure:extend}(d)-(f).
\end{proof}

\begin{lemma}\label{lemms:long-shortcuts-invariant}
    Procedure $\Extend(M,D,T)$ maintains Invariant~\ref{invariant}\ref{1a}.
\end{lemma}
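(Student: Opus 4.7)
The plan is to prove, by induction on the sequence of assignments inside $\Extend(M,D,T)$, that every update to an entry $D[a][b]$ is witnessed by a monotone path from $a$ to $b$ in $G^M$ together with a suitable charge drop schedule, and that this witness is strongly traversable whenever the stored value is non-negative. Since $\Extend$ writes only to the two-argument entries $D[u][y]$, $D[v][y]$, $D[y][u]$, $D[y][v]$, it suffices to walk through the six branches depicted in Figure~\ref{figure:extend}, pulling back an arc-bounded realizer for $D[uv][x]$ or $D[x][vu]$ via the inductive hypothesis on the first- and last-arc-bounded entries, and then extending by one arc with an optional tailored charge drop at the new endpoint (or at $x$ in the symmetric case).

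Fix $(u,v,x,y) \in T \times V^3$ with $M[u][v] \le 0$. Let $P = uv\ldots x$ with schedule $C$ be a traversable $uv$-bounded realizer of $D[uv][x]$; by $uv$-boundedness the internal gains of $(P,C)$ lie in $[M[u][v],0]$, with the minimum $M[u][v]$ attained at $v$ and gain zero at $u$. I set $P' := P \mid y$ and let $C'$ inherit $C$ while placing a drop $d_y \ge 0$ at $y$. In branch (a), active when $-B \le D[uv][x]+M[x][y] \le M[u][v]$, take $d_y = 0$; the endpoint becomes the new global minimum, so $P'$ is descending under $C'$ with gain $D[uv][x]+M[x][y] \ge -B$, which is traversable by Lemma~\ref{lemma:alpha-of-monotone}. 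In branch (b), active when $D[uv][x]+M[x][y] \ge M[u][v]$, take $d_y = D[uv][x]+M[x][y] - M[u][v] \ge 0$; then the endpoint gain becomes exactly $M[u][v]$ and $P'$ is descending with gain $M[u][v] \ge -B$. In branch (c), active when $D[uv][x]+M[x][y] \ge 0$, consider the suffix $\tilde{P} \mid y$ starting at $v$ with $C'$ re-based at $v$: because $v$ is the minimum of $(P,C)$, every internal gain relative to $v$ is non-negative, and the endpoint gain $-M[u][v] + D[uv][x] + M[x][y]$ is at least $-M[u][v] \ge 0$, so $y$ becomes the new maximum and $\tilde{P}\mid y$ is ascending. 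Strong traversability then follows from Lemma~\ref{lemma:alpha-of-monotone}, giving the second clause of the invariant when the stored value is non-negative.

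The three branches with $M[v][u] \le 0$ are handled symmetrically: the inductive hypothesis on last-arc-bounded entries supplies a $vu$-bounded realizer of $D[x][vu]$, which we prepend by the arc $yx$, placing any tailored drop at $x$ rather than $y$. The main obstacle is the bookkeeping of the charge drop schedule at the seam. One must invoke Definition~\ref{def:arc-bounded} to see that $C$ performs no drop at $v$ (respectively $x$ in the symmetric case), so that the new drop $d_y$ (or $d_x$) can be introduced without collision; and one must verify in each branch that the induced drop is non-negative, which in every case reduces to the inequality stated in the guard. Once these sanity checks are done, each of the six branches collapses to algebraic verification that the monotone witness has the claimed gain.
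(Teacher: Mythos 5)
Your proposal is correct and follows essentially the same route as the paper's proof: induction over the assignments, the same three-case analysis per branch (append the arc with no drop, with a tailored drop at the new endpoint to equalize with $v$, or pass to the suffix starting at $v$ for the ascending case), the symmetric treatment with the drop placed at $x$, and strong traversability via Lemma~\ref{lemma:alpha-of-monotone}. The only cosmetic difference is ordering: since ``descending/ascending'' presupposes traversability by Definition~\ref{def:monotone-path}, the paper verifies traversability of the extended path first (from traversability of $P$ and the gain bound $\ge -B$, resp.\ $M[x][y]\ge 0$) before concluding monotonicity, whereas you state monotonicity first; the needed one-line check is implicit in the inequalities you already establish.
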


\begin{proof}
    We prove the lemma by induction on the assignments of the algorithm.

    Let $(u,v,x,y)\in T\times V^3$. Assume $M[u][v]\le 0$, the case $M[v][u]\le 0$ is symmetric.
    By Invariant~\ref{invariant}\ref{1a}, there is a $\overunderline{uvx}{1-1}{2-2}$ path $P$ with respect to a charge drop schedule $C$ that satisfies $g^{C}(P) = D[uv][x]$.
    We split to three cases according to the assignment to $D$ that the $\Extend(M,D,T)$ performs.

    \textbf{Case 1: $D[u][y] = D[uv][x] + M[x][y]$.} We perform this assignment only when 

    \begin{align}\label{eq:bounded-to-mono}
        -B\le  D[uv][x] + M[x][y] \le M[u][v],
    \end{align}
    
    see Figure~\ref{figure:extend}(a).
    Let $P' = P\mid y$ and let $C'$ be the charge drop schedule that concatenates $C$ with the length one schedule that does no drop charge at the last vertex $y$. It holds that $g^{C'}(P') = g^C(P)+M[x][y] = D[uv][x]+M[x][y]=D[u][y]$.  Observe that 
    \[g^{P',C'}_y = g^{C'}(P') = D[uv][x]+M[x][y] \numle{\ref{eq:bounded-to-mono}}  M[u][v] = g^{P',C'}_v  \ .\]
    Since $P$ is traversable
    and $g^{C'}(P') =  D[uv][x] + M[x][y] \numge{\ref{eq:bounded-to-mono}} -B$, it follows that $P'$ is traversable. Since $P$ is also $uv$-bounded with respect to $C$,  
    it follows that $P'$ is descending with respect to $C'$.  

    \textbf{Case 2: $D[u][y] =M[u][v]$.} We perform this assignment only when 
    \begin{align}\label{eq:bounded-to-mono2}
        D[uv][x] + M[x][y] \ge M[u][v],
    \end{align}
    
    see Figure~\ref{figure:extend}(b). Let $P,P'$ and $C$ be as in the previous case. In order to make $P'$ descending we define the charge drop schedule $C'$ that follows $C$ and then performs a charge drop at $y$ of $d_y = (g^C(P)+M[x][y]) - M[u][v]$ (i.e., we drop the gain at $y$ to be equal to the gain at $v$). Note that $d_y$ is indeed non  negative since

    \[d_y = g^C(P) + M[x][y] - M[u][v]
    = D[uv][x] + M[x][y] - M[u][v] \numge{\ref{eq:bounded-to-mono2}} 0. \]

    We claim that $g^{C'}(P')= D[u][y]$($=M[u][v]$) since \[
    g^{C'}(P')
    =g^C(P) + (M[x][y]-d_y)
    = D[uv][x] + (-D[uv][x] + M[u][v])
    =M[u][v]
    = D[u][y].
    \]

   Since $P$ is traversable and $g^{C'}(P') = M[u][v]\ge - B$, it follows that $P'$ is traversable.
    Since $P$ is $uv$-bounded with respect to $C$ and $g^{P',C'}_y = g^{C'}(P)= M[u][v]= g^{P',C'}_v$, it follows that $P'$ is descending with respect to $C'$.
    \textbf{Case 3: $D[v][y] = -M[u][v]+D[uv][x] + M[x][y]$.} We perform this assignment only when 
    
    \begin{align}\label{eq:bounded-to-mono3}
        D[uv][x] + M[x][y] \ge 0,
    \end{align}
    
    see Figure~\ref{figure:extend}(c). Let $Q$ be the suffix of $P$ that skips the first vertex $u$. Let $P'= Q\mid y$.  Since $P$ is traversable, it follows that $Q$ is traversable and therefore ($M[x][y]\ge 0$) $P'$ is traversable. Let $C'$ be the charge drop schedule that follows $C$ (but starts at $v$) and does not drop charge at $y$.
    Observe that 
    \[g^{C'}(P') = -M[u][v] + g^C(P)+M[x][y]  = -M[u][v]+D[uv][x]+M[x][y] = D[v][y].\] 
    
    Moreover, since $D[uv][x] + M[x][y] \numge{\ref{eq:bounded-to-mono3}} 0$, it follows that $g^C(P) + M[x][y] \ge 0$. This means that $y$ has larger gain (with respect to $C$) than all vertices in $P$, so $P'$ is ascending with respect to $C'$.
     It follows by Lemma~\ref{lemma:alpha-of-monotone} that $P$ is strongly traversable. 
\end{proof}

\subsubsection{Long Shortcuts}\label{section:long}

This procedure aims to find ``long shortcuts" in $G^M$. These are shortcuts that correspond to monotone paths of length $k>3$. We find such shortcuts by computing (long) arc bounded paths and then extending them by one arc into monotone paths (i.e  shortcuts) using $\Extend$.

The procedure (See Figure~\ref{figure:long-shortcuts}) starts by running $\ComputeF(M)$ in order get a data structure~$D$ that dominates each funnel in $G^M$ w.h.p. 
The procedure $\LS(M)$ samples sets $T_i$ of size $\Theta(\frac{\log^2 (n) \cdot \kappa }{2^i})$, for every $1\le i \le \log n$,\footnote{Recall the intuition from the Technical review (Section~\ref{sec:technical-review-finding-arc-bounded-paths}), the algorithm interpolates between two extreme cases, see Figure~\ref{fig:example-different-sampling}} where $\kappa=\Theta(n^{1-\alpha})$. In Appendix \ref{sec:correctness}  $\kappa$  would be a bound on the number of funnels which are maximal with respect to inclusion in a studied path $P$. For every $u\in T_i$, we concatenate $2^i$ times arc bounded paths starting at $u$ ($uw$-bounded) with other arc bounded paths. This is done using the two concatenation procedures $\Concat(M,D)$ and $\CO(M,D)$. Intuitively, each such concatenation extends the reach of a $uv$-bounded path $P$ ($u\in T_i$) by an additional funnel.

For example, if $P$ is first-arc bounded, then the procedure $\CO(M,D)$ is used in order to concatenate $P$ with last-arc bounded funnel and 
$\Concat(M,D)$ is used in order to concatenate~$P$ with first-arc bounded funnel.

Finally, after computing these arc  bounded paths, we try to extend them by one arc to get new shortcuts. We do so by running $\Extend(M,D,T)$, where $T = \cup_i {T_i}$ is the set of all sampled vertices.

\begin{figure}[t]
\begin{algorithm}[H]
  \Fn{$\LS(M)$}{
    \BlankLine
    $D \gets \ComputeF(M)$\;
    $T \gets \emptyset$ \tcp*{\rm All vertices sampled for creating shortcuts}
    \For{$i=1\ldots \log\left( {n^{1-\alpha} \log^2 n}\right)$}{
        $s_i \gets \Theta(\frac{\log^2 (n)}{2^i \cdot n^\alpha})$ \tcp*{\rm new sampling probability}
        $T_i \gets Sample(V, p=s_i)$\;
        \RepTimes{$2^i$}{
            $\CO(M,D,T_i,V,V)$ \tcp*{\rm Skip funnels of opposite direction}
            $\Concat(M,D,T_i,V,V)$\tcp*{\rm Skip funnels of the same direction}
        }
        $T \gets T \cup T_i$
    }
  \BlankLine
  $\Extend(M,D,T)$
  \BlankLine
  \Return $ D.shortcuts$
  }
\end{algorithm}
\caption{Procedure $\LS$. We sample vertices and compute arc bounded paths in which those vertices are end points. The lower the sampling probability, the further we extend our search.}\label{figure:long-shortcuts}
\end{figure}

\section{Stage I Correctness}\label{sec:correctness}

In this appendix we prove the main theorem of our shortcutting algorithm.
\begin{theorem}\label{theorem:shortcut}
    Let $P = v_1 \ldots v_k$ be a monotone simple path in $G$. Let $M$ be the shortcuts returned from $\ComputeS(G)$. Then w.h.p.\ $M[v_1][v_k]\ge g(P)$.
\end{theorem}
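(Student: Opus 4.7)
The plan is to iterate Lemma~\ref{lemma:paths-shrink} (the full version of Lemma~\ref{lemma:paths-shrink-simple-version}) across the $\Theta(\log^2 n)$ outer-iterations of $\ComputeS$. Let $M_0, M_1, \ldots, M_T$ denote the shortcut tables at the end of each outer-iteration, so $M_0$ is the initial table encoding $G$ and $M_T$ is the returned table, with $T = \Theta(\log^2 n)$. I will construct, inductively, a sequence of monotone paths $P_0 = P, P_1, \ldots, P_T$ in $G^{M_0}, G^{M_1}, \ldots, G^{M_T}$ respectively, such that each $P_i$ goes from $v_1$ to $v_k$ and satisfies $g^{M_i}(P_i) \ge g^{M_{i-1}}(P_{i-1}) \ge \cdots \ge g(P)$.

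The inductive step is exactly Lemma~\ref{lemma:paths-shrink}: applied to $P_i$ as a monotone path in $G^{M_i}$, it yields w.h.p.\ either (i) a single-arc ``path'' $P_{i+1} = v_1 v_k$ in $G^{M_{i+1}}$ with gain at least $g^{M_i}(P_i)$, which means $M_{i+1}[v_1][v_k] \ge g(P)$ and we are done, or (ii) a monotone path $P_{i+1}$ in $G^{M_{i+1}}$ from $v_1$ to $v_k$ with $g^{M_{i+1}}(P_{i+1}) \ge g^{M_i}(P_i)$ and $|P_{i+1}| \le (1 - c/\log n)|P_i|$ for a constant $c > 0$. Since $|P_0| \le n-1$, iterating case (ii) for $T$ steps drives the length down geometrically: once $(1 - c/\log n)^T \cdot (n-1) \le \sqrt{n}$ (which happens once $T \ge (2c)^{-1}\log^2 n$), case (i) of the lemma must trigger, pinning $M_T[v_1][v_k] \ge g(P)$.

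For the probabilistic guarantee, I will use the standard union bound: by choosing constants in $\UpdateS$ appropriately, each single application of Lemma~\ref{lemma:paths-shrink} succeeds with probability $1 - n^{-\Omega(1)}$. Because $T = O(\log^2 n) = n^{o(1)}$, the union bound over all outer-iterations preserves the w.h.p.\ guarantee, and we obtain $M[v_1][v_k] = M_T[v_1][v_k] \ge g(P)$ with high probability.

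The main obstacle is really in Lemma~\ref{lemma:paths-shrink} itself and not in this derivation; the theorem reduces to that lemma in a straightforward induction. One subtlety to keep in mind, however, is that the path $P_i$ may not be simple even though $P_0 = P$ is simple, so we must apply Lemma~\ref{lemma:paths-shrink} in its form for (possibly non-simple) monotone paths created by previous shortcutting rounds; the proof of Lemma~\ref{lemma:paths-shrink-simple-version} presented in the excerpt is already written in this form, requiring only that $|P_i| \le n$ for the analysis, a condition guaranteed throughout the induction by the monotone length-decrease.
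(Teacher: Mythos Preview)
Your proposal is correct and matches the paper's intended derivation: the paper explicitly states that Theorem~\ref{theorem:shortcut} ``follows from'' Lemma~\ref{lemma:paths-shrink} without spelling out the iteration, and what you have written is precisely that iteration across the $\Theta(\log^2 n)$ outer calls to $\UpdateS$, with the geometric length decay forcing the single-arc case once the length drops below $n^\alpha=\sqrt{n}$. Your remark about the intermediate paths $P_i$ possibly being non-simple is apt; the hypothesis ``simple'' in Lemma~\ref{lemma:paths-shrink} (and in Lemma~\ref{lemma:long-shortcuts-shrinks}) is only used to bound the initial length by $n$, and since $|P_i|$ is nonincreasing from $|P_0|\le n-1$, the required polynomial length bound is maintained throughout.
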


Theorem~\ref{theorem:shortcut} follows from the following lemma. 

\begin{lemma}\label{lemma:paths-shrink}
    Let $P= v_1\ldots v_k$ be a monotone simple path in $G^M$ with respect to a charge drop schedule $C$. Let $M'$ be the shortcuts table after running $\UpdateS(M)$. If $|P| \le  n^\alpha$, then $M'[v_1][v_k] \ge g^{C}(P)$. If $|P| > n^\alpha$, then w.h.p.\ there is a monotone path $P'$, with respect to a charge drop schedule $C'$, from $v_1$ to $v_k$ in $G^{M'}$ that satisfies $g^{C'}(P')\ge g^{C}(P)$ and $|P'| \le (1- 1/\Omega(\log n))\cdot |P|$.
\end{lemma}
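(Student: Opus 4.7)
Let $r = \Theta(n^\alpha)$ be the number of inner-iterations in $\UpdateS(M)$, and let $M = M_0, M_1, \ldots, M_r$ be the shortcut tables maintained across these iterations (so that each $M_{i+1}$ is obtained from $M_i$ by a call to $\Simple$, possibly preceded by a call to $\LS$). Let $G_i = G^{M_i}$. I will inductively construct a sequence of monotone paths $P_0 = P, P_1, \ldots, P_r$ from $v_1$ to $v_k$, each $P_i$ living in $G_i$ and equipped with a charge-drop schedule $C_i$, such that $g^{C_i}(P_i) \ge g^{C_{i-1}}(P_{i-1})$ and $P_i$ is of minimum length among such paths. The target $P' = P_r$ (or $P_{i+1}$ in the long-path case after a successful $\LS$ call) is the witness required by the lemma.

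\textbf{Short case ($|P| \le n^\alpha$).} Each inner-iteration calls $\Simple(M_i)$, which dominates every monotone subpath of length $2$ or $3$ in $G_i$ with respect to any charge-drop schedule. By Lemma~\ref{lemma:mono-has-shortcut}, if $|P_i| > 1$ then $P_i$ contains such a short monotone subpath, which can be replaced by the corresponding single arc in $G_{i+1}$ of gain at least as large as the subpath; we set $C_{i+1}$ by deleting the drops that fell inside the replaced subpath and (if necessary, for descending $P_i$) adding a single drop at the endpoint of the new arc to keep the rest of the path descending. Monotonicity of $P_{i+1}$ follows since a monotone path remains monotone when a contiguous subpath is replaced by an arc of no-smaller gain and the schedule is adjusted to preserve the global minimum/maximum structure. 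Hence $|P_{i+1}| < |P_i|$, and after $r \ge |P|$ iterations we obtain $|P_r| = 1$, yielding $M'[v_1][v_k] \ge g^{C_r}(P_r) \ge g^C(P)$.

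\textbf{Long case ($|P| > n^\alpha$).} If $|P_r| \le |P|/2$ we are done, taking $P' = P_r$ and $C' = C_r$. Otherwise the length decreases by less than $|P|/2$ over $r$ iterations, so at least $r/2$ indices $i$ satisfy $|P_i| - |P_{i+1}| \le |P|/r$. By the interleaving property of double-funnels and short monotone subpaths (Figure~\ref{fig:alternation-funnel-shortcut}), and since every short monotone subpath of $P_i$ is shortcutted by $\Simple(M_i)$, each such $P_i$ contains $O(|P|/r) = O(|P_i|/n^\alpha)$ double-funnels that are maximal with respect to inclusion. In $\UpdateS$, $\LS$ is invoked in each inner-iteration independently with probability $\tilde{\Theta}(1/n^\alpha)$, so across the $\ge r/2 = \tilde{\Omega}(n^\alpha)$ ``good'' iterations, with high probability $\LS$ runs in at least one of them. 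Applying the full version of the main technical lemma (Lemma~\ref{lemma:long-shortcuts-shrinks}, the analogue of Lemma~\ref{lemma:long-shortcuts-shrinks-simple-version} extended to descending paths and charge drops) to such an iteration yields a monotone path $P_{i+1}$ in $G_{i+1}$ from $v_1$ to $v_k$ with $g^{C_{i+1}}(P_{i+1}) \ge g^{C_i}(P_i) \ge g^C(P)$ and $|P_{i+1}| \le (1 - 1/\Omega(\log n))|P_i| \le (1 - 1/\Omega(\log n))|P|$, which is the desired $P'$.

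\textbf{Main obstacle.} The delicate point, as flagged in the discussion of Figure~\ref{fig:charge-drop-fixes-issue-1}, is that replacing a monotone subpath of a descending path $P_i$ by a higher-gain shortcut can \emph{break} descending monotonicity (the new suffix may rise above the starting level); the naive invariant ``monotone remains monotone after improving a subpath'' is simply false without charge drops. The proof must therefore carry, as part of the inductive invariant, that $P_{i+1}$ is monotone \emph{with respect to some charge-drop schedule $C_{i+1}$} that is obtained from $C_i$ by (i) deleting drops that lay inside the replaced subpath, (ii) inheriting whatever drops are recorded inside the shortcut itself (since shortcuts in $D$ are certified by Invariant~\ref{invariant}\ref{1c} to correspond to monotone paths with respect to some schedule), and (iii) installing at most one additional drop at the exit of the shortcut to restore the $g^{C_{i+1}}_{v_j} \le g^{C_{i+1}}_{v_1} = 0$ ceiling. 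A symmetric, simpler bookkeeping handles ascending $P_i$. Once this invariant is set up, both the short-case induction and the reduction of the long case to Lemma~\ref{lemma:long-shortcuts-shrinks} go through verbatim from the simple version.
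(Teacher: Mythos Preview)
Your proposal is correct and follows essentially the same approach as the paper's proof: define the sequence $(P_i,C_i)$ of shortest monotone witnesses in $G^{M_i}$, dispatch the short case by Lemma~\ref{lemma:mono-has-shortcut} plus the fact that $\Simple$ dominates every length-$2$/$3$ monotone subpath, and in the long case use pigeonhole to find $\ge r/2$ iterations where $P_i$ has $O(|P|/n^\alpha)$ maximal (double-)funnels, then invoke Lemma~\ref{lemma:long-shortcuts-shrinks} at a randomly hit such iteration.

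Two minor remarks. First, you phrase the count in terms of \emph{double-funnels} (from the technical review) while Lemma~\ref{lemma:long-shortcuts-shrinks} is stated for the \emph{funnel decomposition}; these differ by at most a constant factor, so the hypothesis $t\le k/n^\alpha$ is still met, but it is worth saying so explicitly. Second, in the pseudocode of $\UpdateS$ the output of $\LS(M_i)$ is accumulated into a separate table $M'$ and only merged with $M_r$ at the very end, so the shrunk path $P'$ guaranteed by Lemma~\ref{lemma:long-shortcuts-shrinks} lives in $G^{\max\{M_r,M'\}}$ rather than in $G^{M_{i+1}}$; this does not affect your conclusion (the final returned table dominates $\bar M$ entrywise, and any extra gain can be absorbed by additional charge drops), but your sentence ``yields a monotone path $P_{i+1}$ in $G_{i+1}$'' is slightly off. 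Your added discussion of the charge-drop bookkeeping for descending paths is accurate and makes explicit what the paper's proof leaves to the design of $\Simple$ and Invariant~\ref{invariant}\ref{1c}.
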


Before proving Lemma~\ref{lemma:paths-shrink}, we need to introduce the concept of funnel decomposition.

\subsection{Funnel Decomposition}
 A \emph{funnel decomposition} of a path $P=e_1\ldots e_k$ in $G^M$ is a partition of $P$ into subpaths $F_1,\ldots,F_t$ which are funnels that are maximal with respect to inclusion. More precisely, the funnel decomposition of $P$ is defined by the following process. We define $F_1 = e_1 \ldots e_r$, where $1\le r\le k$, to be the maximal funnel in $P$ that contains $e_1$. Assume we have constructed $F_1,F_2,\ldots, F_s$ and denote $F_s = e_{\ell} \ldots e_r$. If $\cup_{i=1}^s F_i \neq P$, then we define $F_{s+1}= e_{\ell'}\ldots e_{r'}$ as the maximal funnel in $P$ with largest $r'$ that contains $e_{r+1}$.\footnote{Note that an arc can be contained in at most two maximal funnels. Therefore it is important to specify which maximal funnel we pick.} In particular $\ell'>\ell$. Since every arc is a funnel, it is clear that the funnel decomposition is well defined.

The following lemma proves structural properties on the funnel decomposition.
The lemma states that every two different funnels that are maximal can intersect by at most two arcs. In particular, every two consecutive funnels in the funnel decomposition overlap by at most two consecutive arcs.

\begin{lemma}\label{lemma:funnel-intersection}
    Let $P= e_1 \ldots e_k$ be a path in $G=(V,A,c)$. Let $F_1= e_a\ldots e_b$ and $F_2=e_c \ldots e_d$ be two different funnels in $P$ which are maximal with respect to inclusion. If $a<c$ then $c\ge b-1$. Moreover, if $c=b-1$ then $g(e_{b-1})=-g(e_{b})$
\end{lemma}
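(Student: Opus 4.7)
The plan is to argue by contradiction and case analysis, leveraging the rigid characterization of funnels given by Lemma~\ref{lemma:funnel-zigzag-structure}: a funnel has strictly alternating arc-gain signs, and its absolute values are monotone along the path with at most one equality (between the first two arcs if first-arc bounded, or between the last two if last-arc bounded). Assume $a<c$ but, for contradiction, $c \le b-2$, so that $F_1$ and $F_2$ share at least three consecutive arcs $e_c, e_{c+1}, \ldots, e_b$. The argument splits into four subcases according to which of $F_1$ and $F_2$ is first-arc bounded and which is last-arc bounded.

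In the two ``opposite orientation'' subcases (one first-arc bounded, the other last-arc bounded), the shared arcs force incompatible orderings on $|g(e_c)|, |g(e_{c+1})|, \ldots, |g(e_b)|$: a first-arc bounded funnel forces this sequence to be (essentially) strictly decreasing, while a last-arc bounded one forces it to be strictly increasing. Comparing $|g(e_c)|$ with $|g(e_{c+1})|$ alone already yields a direct contradiction. In the two ``same orientation'' subcases (both first-arc or both last-arc bounded), the plan is to show that one of the funnels can be extended by a single arc to a strictly larger funnel, contradicting its maximality. For instance, when both are first-arc bounded, I will extend $F_2$ backwards by $e_{c-1}$: sign alternation at $e_{c-1}$ follows from the funnel structure of $F_1$; the relation $|g(e_{c-1})|\ge|g(e_c)|$ needed for $e_{c-1}$ to serve as the new bounding arc comes from $F_1$; and the strict inequality $|g(e_c)|>|g(e_{c+1})|$, needed to preserve the funnel structure of the extension, is guaranteed by $F_1$'s structure since $c \ge a+1$. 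Arc-boundedness of the extension is verified from the prefix-sum description of vertex gains, and the only two new candidate short shortcuts, $e_{c-1}e_c$ and $e_{c-1}e_c e_{c+1}$, are ruled out by direct computation of their prefix gains, which straddle both $0$ and $g(e_{c-1})$. The symmetric argument extends $F_1$ forward by $e_{b+1}$ when both funnels are last-arc bounded.

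For the ``moreover'' part, the overlap consists of exactly the two arcs $e_{b-1},e_b$, and the same four-subcase split applies. In two of the subcases the extension argument above still produces a strictly larger funnel, contradicting maximality; in one subcase the structural constraints $|g(e_b)| < |g(e_{b-1})|$ (from a first-arc bounded $F_1$) and $|g(e_b)| > |g(e_{b-1})|$ (from a last-arc bounded $F_2$) are in direct conflict. Only the subcase where $F_1$ is last-arc bounded and $F_2$ is first-arc bounded survives, and there $F_1$'s structure gives $|g(e_b)|\ge|g(e_{b-1})|$ while $F_2$'s structure gives $|g(e_{b-1})|\ge|g(e_b)|$. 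Combined with the alternating-sign condition, these force $g(e_{b-1})=-g(e_b)$.

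The main obstacle will be the careful bookkeeping around the single permitted equality in each funnel's absolute-value sequence, which sits at opposite ends of first-arc and last-arc bounded funnels: in every extension argument one has to verify that whenever a strict inequality is required at the ``non-equality'' end of the proposed extended funnel, it is actually supplied by the strict inequalities of the other funnel; and in the moreover case one must check that the resulting equality is located at a position that is simultaneously legal for each of the two funnel orientations involved.
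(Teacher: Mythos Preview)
Your proof is correct, and it follows essentially the same spirit as the paper's argument: exploit the rigid structure of funnels given by Lemma~\ref{lemma:funnel-zigzag-structure}. The main difference is organizational. The paper first observes in one line that if $F_1$ and $F_2$ overlap, are maximal, and are different, then necessarily $F_1$ is last-arc bounded and $F_2$ is first-arc bounded (since in any other orientation pattern the union $e_a\ldots e_d$ would itself be a funnel, contradicting maximality of both). This collapses your four subcases to one, after which the paper compares $|g(e_{b-2})|$ with $|g(e_b)|$ through $|g(e_{b-1})|$ to reach the contradiction. Your explicit case split does the same work more verbosely; in particular your ``same orientation'' extension argument is exactly a proof of the paper's one-line reduction.

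One small redundancy in your plan: once you have verified the alternating-sign and monotone-absolute-value conditions for the extended path $e_{c-1}e_c\ldots e_d$, Lemma~\ref{lemma:funnel-zigzag-structure} is an if-and-only-if, so arc-boundedness and the absence of short monotone subpaths follow automatically. You do not need to separately compute prefix sums or rule out $e_{c-1}e_c$ and $e_{c-1}e_ce_{c+1}$ as potential shortcuts.
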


\begin{proof}
    If $c>b$ then we are done. 
    Otherwise $F_1$ and $F_2$ intersect and  therefore we may assume that 
    $F_1$ is last-arc bounded and $F_2$ is first-arc bounded (otherwise, by maximality they must be identical).\footnote{Note that a funnel of only two arcs of the same gain in absolute value is both first-arc bounded  and  last-arc bounded.}
    
    By contradiction, assume $c< b-1$. Therefore, $e_{b-2},e_{b-1},e_b \in F_1 \cap F_2$. It follows by the strict inequalities in Lemma~\ref{lemma:funnel-zigzag-structure} that $g(e_{b-2})$ must be both strictly larger and strictly smaller than   $g(e_{b})$
 which is a contradiction.

    Assume $c=b-1$.  Since $F_1$ is last-arc-bounded, by the weak  inequality  in Lemma~\ref{lemma:funnel-zigzag-structure}, we get $|g(e_{b})| \ge |g(e_{b-1})|$. Similarly, since $F_2$ is first-arc bounded, we get $|g(e_{b-1})| \ge |g(e_{b})|$ and therefore $g(e_{b}) = -g(e_{b-1})$.
\end{proof}

\subsection{Proof of Lemma~\ref{lemma:paths-shrink}}

We present the road map of the proof of Lemma~\ref{lemma:paths-shrink}. Let $u,v\in V$ and let $P$ be an ascending path\footnote{The same argument applies for descending paths but we will need to incorporate charge drops to the argument.} from $u$ to $v$ in $G^M$. Let $M_1,\ldots ,M_r$ be the shortcuts tables resulted after each of the $r = n^\alpha$ applications of $\Simple$ during the iterations of $\ComputeS$. During these iterations we called $\Simple$ $n^{\alpha}$ times and $\LS$ $\tilde{\Theta}(1)$ times in expectation. Let $P_i$ be the shortest ascending path from $u$ to $v$ in $G^{M_i}$ of gain larger than the gain of $P$ in $G^M$. Since the shortcuts tables $M_i$ keep increasing their gains it follows that $|P_i|$ is decreasing with $i=1,\ldots , r$.
Let us focus only on the calls of $\Simple(M_i)$ for $i=1,\ldots,r$. If after running $\Simple$ $n^\alpha$ times, the length of $P_{n^\alpha}$ is not smaller by a constant factor than the length of $P$, say $P_{n^\alpha} \ge 0.9 |P|$, it follows that in \textbf{half of the calls} to $\Simple(M_i)$ we have $|P_i| - |P_{i+1}| \le 0.2 |P|/n^{\alpha}$. 

Lets focus on an iteration $i$ such that 
$|P_i| - |P_{i+1}| \le 0.2 |P|/n^{\alpha}$. This means that in $P_i$ there are at most $ 0.2 |P|/n^{\alpha}$ short shortcuts. From this we can deduce that in the funnel decomposition of $P_i$ there are 
at most $0.2 |P|/n^{\alpha}=O(|P|/n^{\alpha})$ funnels (at the end of a maximal funnel there must be a short shortcut by the definition of a funnel).

Since in half of the calls to $\Simple(M_i)$, for
$i=1,\ldots r$, the funnel decomposition of $P_i$
 has $O(|P|/n^{\alpha})$ funnels, we get w.h.p.\,\footnote{The probability is $1-\left( 1- \frac{\log n}{r} \right)^r \ge 1-\frac{1}{n}$} that during $\UpdateS(M)$ we run $\LS(M_j)$, for some $1\le j \le r$, where $P_j$ satisfies the above (i.e., has at most $|P|/n^\alpha$ funnels in its funnel decomposition).

Let $M'$ be the matrix in which we accumulate long shortcuts in $\UpdateS(M)$. We prove in Lemma~\ref{lemma:long-shortcuts-shrinks} that if we run $\LS(M_j)$ where $P_j$ has $t=O(|P_j|/n^\alpha)$ funnels (and therefore $|P_j| = \Omega(t n^\alpha)$) in its decomposition then there is a monotone path $P'$ in $G^{M'}$ from $u$ to $v$ that satisfies $|P'| \le  \left( 1- 1/\log n \right)|P_j|$ and $g^{G^{M'}}(P') \ge g^{M_j}(P_j)$, which proves the Lemma~\ref{lemma:paths-shrink}.

To prove Lemma~\ref{lemma:long-shortcuts-shrinks}, for every arc $e\in P_j$ we consider the furthest first-arc bounded subpath $P_e$ of $P$ that starts at~$e$. Note that similarly to Lemma~\ref{lemma:long-shortcuts-dominating}, if we extend $P_e$ with the next arc in $P$, we get a monotone path of length at least $|P_e|$. We prove in Lemma~\ref{lemma:laminar} that the arc-bounded subpaths $P_e$ for $e\in E$, form a laminar set. We then argue that there is a large subset $B \subseteq \{P_e \mid e\in P_j\}$ that satisfies

\begin{enumerate}
    \item $|B| = \Omega(|P|/\log n)$
    \item $B$ has a stronger structure than laminarity: It is a union of chains $B = \cup_{k=1}^q B_k$, where a chain $B_k$ is a set of subpaths such that for every two paths $P_1,P_2\in B_k$ either $P_1\subseteq P_2$ or $P_2\subseteq P_1$.
    \item There exists $0\le f^\star \le \log n$ such that for every  $P_e\in B$, it holds that the number of maximal funnels in $P_e$ is at least $2^{f^\star}$ and less than $2^{f^\star +1 }$.
    \item Similarly to the bound $|P_j| = \Omega(t n^\alpha)$, i.e., the length of $P_j$ is larger than the number of funnels in $P_j$ by at least a factor of $n^\alpha$, we have $|B_k| = \Omega\left( \frac{2^{f^\star} n^\alpha}{\log n} \right)$ for every $1\le k \le q$. This means that the length of the longest path in $B_k$ is larger by a factor of at least $n^\alpha /\log n$ than the number of funnels inside it.
\end{enumerate}

Let $1\le k \le q$. We finish the argument by saying that because of the chain structure of $B_k$ and because we uniformly sample vertices in $\LS$, we will sample w.h.p.\ a vertex $v \in T_{f^\star}$ (see $\LS$) that is the first vertex of a path $P_e  \in B_k$  
that contains $\Omega(|B_k|)$ of the paths in $B_k$. In particular $|
P_e| = \Omega(|B_k|)$. By Lemma~\ref{lemma:long-shortcuts-dominating}, after $\Extend(M,D)$, $D$ will dominate the monotone path that corresponds to $P_e$, which is of length $|P_e| = \Omega(|B_k|)$. Because $B$ is composed of disjoint chains, it follows that w.h.p. the total shortcutting we perform to $P$ will be of size $\sum_{k=1}^q \Omega(|B_k|) = \Omega(|B|) = \Omega(|P|/\log n)$.

The proof of Lemma~\ref{lemma:long-shortcuts-shrinks} is based on the following structural definitions that formalize the paths $P_e$ in the above explanation.
These definitions allow us to measure how many applications of $\Concat$ and $\CO$ are needed in order to dominate a path $P_e$.

\begin{definition}
    Let $P=e_1 \ldots e_k$ be a path in $G^M$. For every $1\le i \le k$ we define
    \begin{itemize}
        \item $\bar{s}^P(i)\ge i$, the maximal index such that $e_i \ldots e_{\bar{s}^P(i)}$ is $e_i$-bounded.
        \item $\ubar{s}^P(i)\le i$, the smallest index such that $ e_{\ubar{s}^P(i)} \ldots e_i$ is $e_i$-bounded.
    \end{itemize}
    When $P$ is clear from the context, we abbreviate and write $\bar{s}(i),\ubar{s}(i)$.
\end{definition}

\begin{definition}\label{def:funnel-distance}
    Let $P= e_1 \ldots e_k$ be a path in $G^M$ and let $F_1,\ldots, F_t$ be the funnel decomposition of~$P$. For every $i$ we define
    \begin{itemize}
        \item $\bar{f}^P(i)=b-a+1$, where $a$ is maximal such that $e_i \in F_{a}$ and $b$ is minimal such that $e_{\bar{s}(i)}\in F_{b}$.
        \item $\ubar{f}^P(i)=a-b+1$, where  $a$ is minimal such that $e_i \in F_{a}$ and $b$ is maximal such that $e_{\ubar{s}(i)}\in F_{b}$.
    \end{itemize}
    When $P$ is clear from context, we abbreviate and write $\bar{f}(i),\ubar{f}(i)$.
\end{definition}

\begin{remark}
    Some arcs on a path might belong to two funnels (arcs that end/start a funnel). This is the reason Definition~\ref{def:funnel-distance} needs to specify a concrete funnel that contains $e_i,e_{\bar{s}(i)},e_{\ubar{s}(i)}$.
\end{remark}

The following lemma states that for every arc $e_i$ in a path $P=e_1 \ldots e_k$, if we extend the arc bounded path $e_i \ldots e_{\bar{s}(i)}$ by a single arc then we can extract from this path a monotone path, See Figure~\ref{figure:extend}(a),(d) and (c),(f).

\begin{lemma}\label{lemma:-bounded-to-monotone}
    Let $P = e_1 \ldots e_k$ be a path in $G^M$. Then for every  $1 < i < k$
    \begin{itemize}
        \item If $\bar{s}(i) < k$ then either $e_i \ldots e_{\bar{s}(i)+1}$ is monotone or $e_{i+1} \ldots v_{\bar{s}(i)+1}$ is monotone.
        \item If $\ubar{s}(i) > 1$ then either $e_{\bar{s}(i)-1}\ldots e_i$ is monotone or $e_{\bar{s}(i)-1}\ldots e_{i-1}$ is monotone.
    \end{itemize}
\end{lemma}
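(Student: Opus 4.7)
My plan is to prove the first bullet by a direct case analysis on the sign of $g(e_i)$ and on the direction in which the extension breaks $e_i$-boundedness; the second bullet then follows by reversing the orientation of $P$.

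Fix $i$ with $\bar{s}(i) < k$, write $e_i = uv$, set $j = \bar{s}(i)$, and let $y$ denote the endpoint of $e_{j+1}$. By Definition~\ref{def:arc-bounded}, the sub-path $e_i \ldots e_j$ is $e_i$-bounded; assuming first that $g(e_i) \ge 0$, every vertex $w$ on $e_i \ldots e_j$ satisfies $0 \le g_w^P \le g(e_i)$. By maximality of $\bar{s}(i)$, adjoining $e_{j+1}$ destroys this property, and since the only new vertex is $y$, we must have either $g_y^P > g(e_i)$ or $g_y^P < 0$.

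In the overshoot sub-case $g_y^P > g(e_i)$, the entire path $e_i \ldots e_{j+1}$ is ascending: $u$ has gain $0$, $y$ has gain $g_y^P$, and every intermediate vertex lies in $[0, g(e_i)] \subseteq [0, g_y^P]$. In the undershoot sub-case $g_y^P < 0$, the sub-path $e_{i+1} \ldots e_{j+1}$, rebased at $v$, is descending: $v$ has rebased gain $0$, each intermediate vertex has rebased gain in $[-g(e_i),0]$, and $y$ has rebased gain $g_y^P - g(e_i) < -g(e_i)$, so $y$ is uniformly the smallest. The case $g(e_i) \le 0$ is fully symmetric: boundedness now forces intermediate gains into $[g(e_i),0]$, overshoot (meaning $g_y^P > 0$) makes $e_{i+1} \ldots e_{j+1}$ rebased at $v$ ascending, and undershoot $g_y^P < g(e_i)$ makes the whole of $e_i \ldots e_{j+1}$ descending.

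The second bullet is the time-reversed statement: applying the argument above to the reversed path $e_k \ldots e_1$ converts last-arc-boundedness into first-arc-boundedness and yields the two claimed alternatives on backward extensions. The proof is essentially bookkeeping driven by the definitions of arc-bounded and monotone paths, so I do not anticipate a real obstacle; the only subtlety to keep in mind is that ``monotone'' here is understood in the prefix-sum sense, with traversability of the extended path following either from Lemma~\ref{lemma:alpha-of-monotone} in the ascending sub-cases, or from the $g \ge -B$ bound that will be supplied by the caller, exactly as happens in the hypotheses of Lemma~\ref{lemma:long-shortcuts-dominating}.
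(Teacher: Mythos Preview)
Your proof is correct and follows essentially the same approach as the paper's. The only cosmetic difference is that the paper treats both signs of $g(e_i)$ at once by writing $\max(g_u,g_v)$ and $\min(g_u,g_v)$, whereas you split explicitly into the cases $g(e_i)\ge 0$ and $g(e_i)\le 0$; your final remark about traversability being supplied by the caller is also a fair observation of a technicality the paper leaves implicit.
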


\begin{proof}
    We prove only the first claim, the second claim is symmetric.
    
    Assume $\bar{s}(i) < k$ and let $(u,v)=e_i$ and $(x,y) = e_{\bar{s}(i)+1}$. By the definition of $\bar{s}(i)$, it holds that $e_i \ldots e_{\bar{s}(i)}$ is $e_i$-bounded and $g_{y} = g(e_i \ldots e_{\bar{s}(i)+1})$ is either strictly larger than $\max( g_{u},g_{v})$ or strictly smaller than $\min( g_{u},g_{v})$. Thus, if $g_{y}> \max( g_{u},g_{v})$ then $y$ creates an ascending path with the vertex of minimum gain, either $u$ or $v$. Similarly, if $g_{y}< \min( g_{u},g_{v})$ then $y$ creates a descending path with the vertex of maximum gain, either $u$ or $v$.
\end{proof}

The following lemma is similar to Lemma~\ref{lemma:-bounded-to-monotone} and addresses the case in which a maximal arc bounded path in $P$ reaches the last arc of $P$, and $P$ is monotone with respect to a charge drop schedule, See Figure~\ref{fig:maximal_si}.

\begin{lemma}\label{lemma:-bounded-to-monotone_v2}
    Let $P = e_1 \ldots e_k$ be a monotone path in $G^M$.

    If $P$ is ascending, then for every $1<i<k$
    \begin{itemize}
        \item If $\bar{s}(i) = k$ then either $e_i \ldots e_{k}$ is ascending or $e_{i+1} \ldots e_{k}$ is ascending.
        \item If $\ubar{s}(i) = 1$ then either $e_{1}\ldots e_i$ is ascending or $e_{1} \ldots e_{i-1}$ is ascending.
    \end{itemize}
    
    If $P$ is descending  with respect to a charge drop schedule $C$, then for every $1<i<k$ 
    
    \begin{itemize}
        \item If $\bar{s}(i) = k$ then, with respect to an appropriate suffix of $C$, either $e_i \ldots e_{k}$ is descending or $e_{i+1} \ldots e_{k}$ is descending.
        \item If $\ubar{s}(i) = 1$ then, with respect to an appropriate prefix of  $C$, either $e_{1}\ldots e_i$ is descending or $e_{1} \ldots e_{i-1}$ is descending.
    \end{itemize}
\end{lemma}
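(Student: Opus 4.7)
The plan is to split into four cases, controlled by (i) whether $P$ is ascending or descending and (ii) the sign of $g(e_i)$. Throughout, write $P=v_0v_1\ldots v_k$ with $e_j=v_{j-1}v_j$, let $g^P_{v_j}$ denote the cumulative zero-schedule gain, and let $D_j=\sum_{t=2}^j d_t$ be the partial charge-drop sums for $C=(0,d_2,\ldots,d_k)$, so that $g^{P,C}_{v_j}=g^P_{v_j}-D_j$. I would treat the $\bar{s}(i)=k$ bullets in detail; the $\ubar{s}(i)=1$ bullets are mirror images handled by reversing the direction of argumentation.

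For the ascending half, the first step is unpacking $\bar{s}(i)=k$: the subpath $e_i\ldots e_k$ is $e_i$-bounded under the zero schedule, which sandwiches $g^P_{v_j}$ between $g^P_{v_{i-1}}$ and $g^P_{v_i}$ for every $i-1\le j\le k$. When $g(e_i)\ge 0$ this reads $g^P_{v_{i-1}}\le g^P_{v_j}\le g^P_{v_i}$; combining the bound $g^P_{v_k}\le g^P_{v_i}$ with the ascending property $g^P_{v_k}\ge g^P_{v_i}$ forces the equality $g^P_{v_k}=g^P_{v_i}$, whereupon every $g^P_{v_j}-g^P_{v_{i-1}}$ lies in $[0,g^P_{v_k}-g^P_{v_{i-1}}]$ and $e_i\ldots e_k$ is ascending. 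When $g(e_i)<0$, the symmetric sandwich $g^P_{v_i}\le g^P_{v_j}\le g^P_{v_{i-1}}$ together with ascendingness forces $g^P_{v_{i-1}}=g^P_{v_k}$, and a direct check shows that $e_{i+1}\ldots e_k$ (starting fresh at $v_i$) is ascending.

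For the descending half, I would take $C'$ to be the honest restriction of $C$ to the subpath's vertices, with no drop at its new starting vertex. A one-line computation gives $g^{P',C'}_{v_j}=g^{P,C}_{v_j}-g^{P,C}_{v_*}$, where $v_*$ is the new start, so the descending requirement reduces to $g^{P,C}_{v_j}\le g^{P,C}_{v_*}$ for every $j$ in range (the lower bound by $g^{P,C}_{v_k}$ is inherited from $P$ being descending). This inequality unpacks to the simultaneous facts $g^P_{v_j}\le g^P_{v_*}$ and $D_j\ge D_*$; the sign split makes both point the right way. When $g(e_i)<0$, take $v_*=v_{i-1}$ and the candidate subpath $e_i\ldots e_k$: $e_i$-boundedness supplies $g^P_{v_j}\le g^P_{v_{i-1}}$, and non-negativity of drops supplies $D_j\ge D_{i-1}$. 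When $g(e_i)\ge 0$, take $v_*=v_i$ and the subpath $e_{i+1}\ldots e_k$: $e_i$-boundedness supplies $g^P_{v_j}\le g^P_{v_i}$ for $j\ge i-1$, and $D_j\ge D_i$ holds for all $j\ge i$.

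The symmetric $\ubar{s}(i)=1$ bullets go through with an appropriate prefix of $C$, i.e.\ keeping the drops at vertices up to the new last vertex and discarding the rest, together with the same sign split. The step I expect to require the most care is the bookkeeping around the boundary drop $d_i$: in the descending case the argument includes it in the restricted schedule on one side of the sign split and excludes it on the other, and the direction of the inequality $D_j\ge D_*$ must be verified for the matching index. Traversability of each candidate sub-path, which is required to call it monotone per Definition~\ref{def:monotone-path}, is inherited from $P$ via Lemma~\ref{lem:alpha_general}: every sub-subpath of $P$ has gain at least $-B$, so the candidate subpath satisfies the same criterion.
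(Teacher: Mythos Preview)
Your proof is correct and follows essentially the same route as the paper. The only organizational difference is that you case-split on the sign of $g(e_i)$ (equivalently, on which of $g^P_u,g^P_v$ is larger), whereas the paper splits on which of $g^{P,C}_u,g^{P,C}_v$ is larger; both splits lead to the same pair of candidate subpaths and the same verification via $e_i$-boundedness plus monotonicity of cumulative drops. One phrasing nit: saying the inequality $g^{P,C}_{v_j}\le g^{P,C}_{v_*}$ ``unpacks to'' $g^P_{v_j}\le g^P_{v_*}$ and $D_j\ge D_*$ reads as an equivalence, but you only need (and only use) these as sufficient conditions, which your sign split indeed guarantees.
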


\begin{proof}
    Assume $P$ is descending, the case of an ascending path is simpler since there is no charge drop schedule in play.
    Assume $\bar{s}(i) = k$, the case $\ubar{{s}}(i)=1$ is symmetric. 
    
    Let $(u,v)=e_i$ and $(x,y) = e_{k}$. By the definition of $\bar{s}(i)$, it holds that $P_i=e_i \ldots e_{k}$ is $e_i$-bounded (with respect to the zero schedule).
    Therefore, for every $w\in P_i$ we get $ g^{P}_w \le \max \{g^P_u, g^P_v \}$. Since $u,v$ are the first two vertices of $P_i$, we get for every $w\in P_i$ that $g^{P,C}_w \le \max \{g^{P,C}_u, g^{P,C}_v \}$.
    Since $P$ is descending with respect to $C$, then for every $w\in P_i$ it holds that $g^{P,C}_w \ge g^{P,C}_y$. Thus, if $g^{P,C}_u \ge g^{P,C}_v$ then $P_i=e_i \ldots e_k = uv \ldots y$ is descending with respect to a suffix of $C$ and otherwise $e_{i+1}\ldots e_k = v\ldots y$ is descending with respect to a suffix of $C$.

    The case $\ubar{s}(i)=1$ is symmetric.
\end{proof}

\begin{figure}
    \centering
\includegraphics[width=1.00\textwidth]{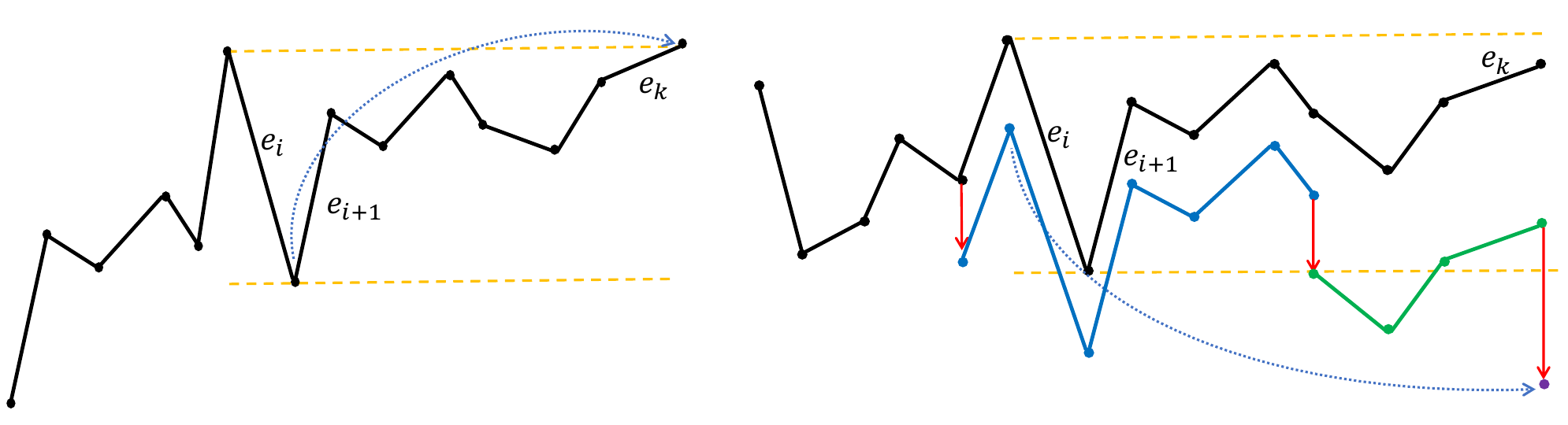}
\caption{Two illustrations of Lemma~\ref{lemma:-bounded-to-monotone_v2}. The dotted blue arrows point from the beginning to the end of the monotone suffixes. On the left we have an ascending path $e_1 \ldots e_k$, where $\bar{s}(i)=k$ and $e_i$ has negative gain. On the right we have a descending path $e_1\ldots e_k$ (the original path is in black) with respect to a charge drop schedule (indicated by the down vertical red arrows), where $\bar{s}(i)=k$ and $e_i$ has negative gain. The two symmetric cases in which $e_i$ is of positive gain are not shown.}
\label{fig:maximal_si}
\end{figure}

\begin{lemma}\label{lemma:funnel-bfs-length}
    Let $P=e_1 \ldots e_k = v_1\ldots v_{k+1}$ be a negative arc bounded path in $G^M$. Let $F_1, \ldots F_t$ be the funnel decomposition of $P$. The following holds w.h.p.\ after the main for-loop in $\LS(M,D)$.
    \begin{itemize}
        \item Assume $P$ is a $\overunderline{{v_1}{v_2}{v_{k+1}}}{1-1}{2-2}$-path in $G^M$. If $v_1$ is sampled to $T_j$, where $2^j\ge t$, then $D$ dominates $P$.
        \item Assume $P$ is a $\overunderline{{v_1}{v_{k}}{v_{k+1}}}{2-2}{3-3}$-path in $G^M$. If $v_{k+1}$ is sampled to $T_j$, where $2^j\ge t$, then $D$ dominates $P$.
    \end{itemize}
\end{lemma}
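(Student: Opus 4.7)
The plan is to induct on the number of funnels of the decomposition of $P$ that $D$ comes to dominate as prefixes anchored at $v_1$. Write $F_1, F_2, \ldots, F_t$ for the funnel decomposition of $P$ and, for each $j=1,\ldots,t$, let $Q_j$ denote the prefix of $P$ that ends where $F_j$ ends, so that $Q_1 = F_1$ and $Q_t = P$. I claim that, conditional on $\ComputeF(M)$ succeeding (which happens w.h.p.\ by Lemma~\ref{lemma:funnels-optimality}), after the $(j-1)$-st pass through the ``repeat $2^j$ times'' inner loop associated with the sampled set $T_j \ni v_1$, $D$ dominates $Q_j$. The base case $j=1$ is immediate from Lemma~\ref{lemma:funnels-optimality}: $F_1$ is a simple funnel of length $O(n)$ (the lemma is applied in contexts in which $P$ is simple), and hence $F_1$ is dominated by $D$ right after the call $\ComputeF(M)$.

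For the inductive step, let $w$ be the vertex at which $F_j$ ends inside $P$, and let $F_{j+1}'$ be the suffix of $F_{j+1}$ that extends $Q_j$ to $Q_{j+1}$. By Lemma~\ref{lemma:funnel-intersection} consecutive funnels overlap in at most two arcs, so $F_{j+1}'$ is a well-defined suffix of $F_{j+1}$ starting at $w$. The zig-zag characterization of Lemma~\ref{lemma:funnel-zigzag-structure} implies that removing a prefix from a first-arc-bounded (resp.\ last-arc-bounded) funnel produces a first-arc-bounded (resp.\ last-arc-bounded) funnel, since the strict decrease of arc-gain magnitudes and the alternating sign pattern are both preserved. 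Hence $F_{j+1}'$ is itself a funnel of the same arc-bounded type as $F_{j+1}$ and is also dominated by $D$ after $\ComputeF(M)$. I now split on the type of $F_{j+1}'$: if it is first-arc-bounded, I apply Lemma~\ref{lemma:concat-basic} to the $\Concat(M,D,T_j,V,V)$ call, combining $Q_j$ (a $\overunderline{{v_1}{v_2}{w}}{1-1}{2-2}$-path) with $F_{j+1}'$ (a $\overunderline{{w}{a}{x}}{1-1}{2-2}$-path, where $a$ is the vertex right after $w$ in $F_{j+1}'$ and $x$ is its endpoint) to obtain $Q_{j+1}$, which is first-arc-bounded as a prefix of $P$. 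If instead $F_{j+1}'$ is last-arc-bounded, I apply Lemma~\ref{lemma:concat-opposite-bounded} to the companion $\CO(M,D,T_j,V,V)$ call, concatenating $Q_j$ with a $\overunderline{{w}{a}{x}}{3-3}{2-2}$-path in the analogous manner. Since each pass of the repeat-loop performs both a $\CO$ and a $\Concat$ call, one pass always advances the induction from $Q_j$ to $Q_{j+1}$, and the hypothesis $2^j \ge t$ ensures that the loop contains enough passes to reach $Q_t = P$. The second bullet of the lemma follows by the identical argument using the ``$M[v][u]<0$'' branches of $\Concat$ and $\CO$ with $v_{k+1} \in T_j$ in place of $v_1$.

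The main obstacle I anticipate is the bookkeeping around the funnel overlap together with the freedom enjoyed by the concatenation routines. Verifying that the truncation $F_{j+1}'$ is itself a dominated funnel (and not merely some arc-bounded path) is precisely why I invoke the structural Lemma~\ref{lemma:funnel-zigzag-structure} rather than treating funnels as opaque objects. Once that is established, the monotonicity of the $D$ entries across iterations, together with Invariant~\ref{invariant}, guarantees that any intermediate improvement of $D[wa][x]$ or $D[w][ax]$ still corresponds to a legitimate arc-bounded path, so the hypotheses of Lemmas~\ref{lemma:concat-basic} and~\ref{lemma:concat-opposite-bounded} remain satisfied throughout. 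Finally, the maximizing $a$ actually picked by $\Concat$ or $\CO$ may differ from the literal next vertex of $F_{j+1}'$ in $P$, but those lemmas only require the existence of \emph{some} dominated continuation, and $F_{j+1}'$ itself is always a valid witness, so this flexibility is harmless.
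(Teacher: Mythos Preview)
Your inductive skeleton is exactly the paper's: define prefixes $Q_j$ ending at the end of $F_j$, use Lemma~\ref{lemma:funnels-optimality} for the base case, and absorb one more funnel per pass of the inner loop. The gap is in the inductive step, where you implicitly assume sign conditions that need not hold.

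Lemma~\ref{lemma:concat-basic} only concatenates a $\overunderline{{v_1}{v_2}{w}}{1-1}{2-2}$ path with a $\overunderline{wax}{1-1}{2-2}$ path, i.e.\ the \emph{first} arc $wa$ of the second piece must have negative gain. Likewise, Lemma~\ref{lemma:concat-opposite-bounded} only applies when the second piece is a $\overunderline{wax}{3-3}{2-2}$ path, i.e.\ its \emph{last} arc has positive gain. Your truncation $F_{j+1}'$ is indeed a funnel of the same first/last-arc-bounded type as $F_{j+1}$, but nothing forces its first arc to be negative (in the first-arc-bounded case) or its last arc to be positive (in the last-arc-bounded case); by the zig-zag structure of Lemma~\ref{lemma:funnel-zigzag-structure} that sign depends on parity. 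So the sentence ``if it is first-arc-bounded, I apply Lemma~\ref{lemma:concat-basic}'' is not always legal as stated.

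The paper closes this gap by sub-casing on the sign of the offending arc and using \emph{both} calls in a single pass. If $F_{j+1}$ is first-arc-bounded and $F_{j+1}'$ begins with a positive arc $e_s$, one first applies $\CO$ to append the single arc $e_s$ (which, as a length-one path, is last-arc-bounded with positive last arc), and then applies $\Concat$ to append $e_{s+1}\ldots e_{r_{j+1}}$, which now starts with a negative arc. Symmetrically, if $F_{j+1}$ is last-arc-bounded and ends with a negative arc $e_{r_{j+1}}$, one applies $\CO$ to reach $e_{r_{j+1}-1}$ (whose last arc is positive) and then a single $\Concat$ to append $e_{r_{j+1}}$. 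In both problematic cases one pass still suffices, precisely because the loop executes $\CO$ \emph{before} $\Concat$; your proof should make this two-step splitting explicit.
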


\begin{proof}
    Throughout the proof we use the procedures $\Concat$ and $\CO$ in order to concatenate funnels. Recall that we only use $\Concat$ on two negative arc bounded paths and we only use $\CO$ on a negative arc bounded path and a positive arc bounded path.  Note that every other arc on a funnel has negative gain.
    
    We prove only the first case since the second case is symmetric. For $b=1\ldots t$, denote $F_b = e_{\ell_{b}}\ldots e_{r_{b}}$. We prove by induction on $b=1\ldots t$ that after iteration $b-1$ (among the $2^j$) of applying the procedures $\Concat(M,D,T_j,V,V)$ and $\CO(M,D,T_j,V,V)$, $D$ dominates $P_b = e_1 \ldots e_{r_{b}}$.

    The base case $b=1$ is immediate by Lemma~\ref{lemma:concat-sampling-correctness} which states that after executing $\ComputeF(M)$,   $D$ dominates each of $F_1,\ldots, F_t$  w.h.p.\ ($\LS$ starts by running $\ComputeF(M)$). Therefore $D$ dominates $F_1$ even before the first iteration. Since a subpath of a funnel is also a funnel (see Lemma~\ref{lemma:funnel-zigzag-structure}), it follows that w.h.p.\ $D$ also dominates all of the subpaths of each of the funnels $F_1,\ldots, F_t$.

    Assume that after the first $b-1$ iterations, $D$ dominates $P_b=e_1 \ldots e_{r_b}$. Consider the next funnel $F_{b+1} = e_{\ell_{b+1}}\ldots e_{r_{b+1}}$ 
    and the $b$'th iteration. We split to the following cases.

    \textbf{Case $F_{b+1}$ is $e_{r_{b+1}}$-bounded:} If $e_{r_{b+1}}$ has nonnegative gain, we apply Lemma~\ref{lemma:concat-opposite-bounded} on $P_b$ and the funnel $F_{b+1} \setminus P_b$: After performing $\CO(M,D,T_j,V,V)$ in iteration $b$ in the inner loop of $\LS$, $D$ dominates $P_{b+1} = e_1 \ldots e_{r_{b+1}}$. If $e_{r_{b+1}}$ has negative gain,  then by Lemma~\ref{lemma:funnel-zigzag-structure}, $e_{r_{b+1}-1}$ has positive gain and $F'_{b+1} = e_{\ell_{b+1}} \ldots e_{r_{b+1}-1}$ is a funnel which is $e_{r_{b+1}-1}$-bounded. Therefore, by Lemma~\ref{lemma:concat-opposite-bounded}, after performing $\CO(M,D,T_j,V,V)$, $D$ dominates $P'_{b+1} \coloneq P_b \mid (F'_{b+1}\setminus P_b) = e_1 \ldots e_{r_{b+1}-1}$. By Lemma~\ref{lemma:concat-basic}, after performing $\Concat(M,D,T_j,V,V)$ (i.e., concatenating $P'_{b+1}$ with the single negative gain arc $e_{r_{b+1}}$), $D$ dominates $P_{b+1}=  P'_{b+1} \mid e_{r_{b+1}}$.
    
    \textbf{Case $F_{b+1}$ is $e_{\ell_{b+1}}$-bounded:} Consider the funnel $F'_{b+1} = F_{b+1}\setminus P_b$ and let $s\ge \ell_{b+1}$ be the index such that $F'_{b+1} = e_s \ldots e_{r_{b+1}}$. By Lemma~\ref{lemma:funnel-zigzag-structure}, since $F_{b+1}$ is first arc bounded then so is $F'_{b+1}$.
    If $e_{s}$ has negative gain, then by Lemma~\ref{lemma:concat-basic}, after performing $\Concat(M,D,T_j,V,V)$, $D$ dominates $P_{b+1}= P_b \mid F'_{b+1}$.
    If $e_{s}$ has nonnegative gain, then by Lemma~\ref{lemma:concat-opposite-bounded}, after performing $\CO(M,D,T_j,V,V)$, $D$ dominates $Q = P_b \mid e_s$.
    By Lemma~\ref{lemma:funnel-zigzag-structure}, $e_{s+1}$ has negative gain and therefore $F''_{b+1} = F'_{b+1}\setminus \{e_s\} = e_{s+1} \ldots e_{r_{b+1}}$ is negative arc bounded. Therefore, by Lemma~\ref{lemma:concat-basic}, after performing $\Concat(M,D,T_j,V,V)$, $D$ dominates $P_{b+1} = Q \mid F''_{b+1}$.
\end{proof}

The following is a corollary of Lemma~\ref{lemma:long-shortcuts-dominating} and Lemmas \ref{lemma:-bounded-to-monotone}, \ref{lemma:-bounded-to-monotone_v2}, \ref{lemma:funnel-bfs-length}.

\begin{corollary}\label{corollary:long-shortcuts}
    Let $P= e_1 \ldots e_k$ be a monotone path in $G^M$ which is either descending with respect to a charge drop schedule $C$ or ascending with respect to the zero schedule. Let $(u,v)=e_i\in P$ be a negative gain arc and let $\bar{P}_i,\ubar{P}_i$ be the monotone\footnote{These paths may have a charge drop schedule assigned to them.} paths corresponding to $e_{\bar{s}(i)},e_{\ubar{s}(i)}$, respectively, given by Lemmas~\ref{lemma:-bounded-to-monotone} and~\ref{lemma:-bounded-to-monotone_v2}.\footnote{Monotone paths given by Lemma~\ref{lemma:-bounded-to-monotone} start either at $u$ or at $v$ and end at $e_{\bar{s}(i)+1}$. Monotone paths given by Lemma~\ref{lemma:-bounded-to-monotone_v2} start either at $u$ or at $v$ and end at $e_k$. Similarly monotone paths given by Lemma~\ref{lemma:-bounded-to-monotone} end either at $u$ or at $v$ and start at $e_{\ubar{s}(i)-1}$. Monotone paths given by Lemma~\ref{lemma:-bounded-to-monotone_v2} end either at $u$ or at $v$ and start at $e_1$.} The following holds at the end of $\LS(M)$.
    \begin{itemize}
        \item Assume $\bar{P}_i$ starts with $e_i$. If $u$ is sampled into $T_j$ and $2^j\ge \bar{f}(i)$, then $D$ dominates $\bar{P}_i$.
        \item Assume $\bar{P}_i$ starts with $e_{i+1}$. If $u$ is sampled into $T_j$ and $2^j\ge \bar{f}(i)$, then $D$ dominates $\bar{P}_i$.
        \item Assume $\ubar{P}_i$ ends with $e_i$. If $v$ is sampled into $T_j$ and $2^j\ge \ubar{f}(i)$, then $D$ dominates $\ubar{P}_i$.
        \item Assume $\ubar{P}_i$ ends with $e_{i-1}$. If $v$ is sampled into $T_j$ and $2^j\ge \ubar{f}(i)$, then $D$ dominates $\ubar{P}_i$.
    \end{itemize}
    The domination is with respect to a sub-schedule of $C$.
\end{corollary}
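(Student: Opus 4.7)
The plan is to derive the corollary by combining Lemma~\ref{lemma:funnel-bfs-length} (which produces dominated arc-bounded paths whenever an appropriately sampled endpoint is present) with Lemma~\ref{lemma:long-shortcuts-dominating} (which converts a dominated arc-bounded path into a dominated monotone path once the final $\Extend(M,D,T)$ call in $\LS$ has been executed). I will argue the first of the four cases in full; the remaining three cases are symmetric along the two axes $\bar{s}(i)/\ubar{s}(i)$ and ``starts with $e_i$ vs.\ starts with $e_{i+1}$''.

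For the first case, fix the negative-gain arc $e_i=(u,v)$ and set $Q = e_i \ldots e_{\bar{s}(i)}$. The definition of $\bar{s}(i)$ makes $Q$ a first-arc-bounded path of type $\overunderline{uvw}{1-1}{2-2}$ (since $g(e_i)<0$), and by Definition~\ref{def:funnel-distance} its funnel decomposition has $\bar{f}(i)$ maximal funnels. Assuming $u \in T_j$ with $2^j \ge \bar{f}(i)$, the first bullet of Lemma~\ref{lemma:funnel-bfs-length} gives that $D$ dominates $Q$ immediately after the main for-loop of $\LS(M)$. It then remains to pass from $Q$ to $\bar{P}_i$ via the final call $\Extend(M,D,T)$. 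If $\bar{s}(i)<k$, Lemma~\ref{lemma:-bounded-to-monotone} states $\bar{P}_i = Q \mid e_{\bar{s}(i)+1}$; since $\bar{P}_i$ begins with the negative-gain arc $e_i$ it must be descending, and item~1 of Lemma~\ref{lemma:long-shortcuts-dominating} yields the conclusion. If $\bar{s}(i)=k$, then $\bar{P}_i = Q$ is already monotone by Lemma~\ref{lemma:-bounded-to-monotone_v2} (descending with respect to a suffix of $C$); I will invoke item~3 of Lemma~\ref{lemma:long-shortcuts-dominating}, obtaining $D[u][v_{k+1}] \ge M[u][v]$, and observe that for a descending path the terminal vertex has the smallest gain, so $g^C(\bar{P}_i) \le g^{P,C}_v = M[u][v]$, which closes the case.

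For case~2, when $\bar{P}_i$ starts with $e_{i+1}$ and is necessarily ascending, the argument is parallel but uses item~2 of Lemma~\ref{lemma:long-shortcuts-dominating}; the $\bar{s}(i)=k$ sub-case is handled by applying the ascending branch of $\Extend$ with the self-loop $M[v_{k+1}][v_{k+1}]=0$ (preserved from the initialization in $\ComputeS$), noting that the hypothesis that the host path $P$ is ascending forces $g(Q) = g^P_{v_{k+1}} - g^P_u \ge 0$, so the precondition $D[uv][v_{k+1}]+M[v_{k+1}][v_{k+1}] \ge 0$ of the ascending branch is met and $D[v][v_{k+1}] \ge -M[u][v]+g(Q) = g(\bar{P}_i)$. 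Cases~3 and~4 about $\ubar{P}_i$ are completely symmetric: one replaces $\bar{s}(i)$ by $\ubar{s}(i)$, first-arc-bounded by last-arc-bounded, invokes the second bullet of Lemma~\ref{lemma:funnel-bfs-length} (where $v$, the last vertex of $e_{\ubar{s}(i)}\ldots e_i$, is the sampled endpoint), and applies items~4--6 of Lemma~\ref{lemma:long-shortcuts-dominating}.

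The reasoning is essentially bookkeeping on top of the two cited lemmas. The only mildly delicate point I anticipate is the boundary sub-case $\bar{s}(i)=k$ (and symmetrically $\ubar{s}(i)=1$), where the monotone $\bar{P}_i$ coincides with the arc-bounded $Q$ rather than being obtained by a genuine one-arc extension, so no ``extending arc'' $(x,y)$ is available to feed into $\Extend$. This is precisely where I will lean on the self-loop convention $M[x][x]=0$ together with items~3 and~6 of Lemma~\ref{lemma:long-shortcuts-dominating} to absorb the missing arc; verifying that the relevant conditional in $\Extend$ actually fires requires the sign analysis sketched above (descending $\Rightarrow$ terminal gain is minimum; ascending $\Rightarrow$ $g(Q)\ge 0$), which is the only substantive observation beyond a direct appeal to the cited lemmas.
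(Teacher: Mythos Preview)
Your proposal is correct and follows essentially the same route as the paper: apply Lemma~\ref{lemma:funnel-bfs-length} to get $D$ dominating the arc-bounded subpath $Q=e_i\ldots e_{\bar s(i)}$, then feed this into Lemma~\ref{lemma:long-shortcuts-dominating} (or the corresponding branches of $\Extend$) to obtain domination of $\bar P_i$, with the boundary sub-case $\bar s(i)=k$ handled via the self-loop $M[x][x]\ge 0$. The paper proves only the first bullet in detail and declares the rest ``simpler''; your sketch of case~2 (and the symmetry claim for cases~3--4) is consistent with that. One small notational slip: in the $\bar s(i)=k$ sub-case you write $g^{P,C}_v = M[u][v]$, but $g^{P,C}_v$ is the gain of $v$ measured from $v_1$ along all of $P$, which is not $M[u][v]$; what you need (and what the paper uses) is the gain of $v$ within $\bar P_i$ itself, namely $g^{\bar P_i,C'}_v \le g^{\bar P_i}_v = M[u][v]$.
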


\begin{proof}
    We prove only the first case, the other cases are simpler. Assume $\bar{P}_i$ starts with $e_i$ and $u \in T_j$. Since $e_i$ has negative gain, it follows that $\bar{P}_i$ is descending with respect to a charge drop schedule $C'$. Let $P_i = e_i \ldots e_{\bar{s}(i)}$. By Lemma~\ref{lemma:funnel-bfs-length}, after the for loop in $\LS(M)$, $D$ dominates $P_i$. Let $b$ be the index such that $\bar{P}_i=e_i \ldots e_b$.  By Lemmas~\ref{lemma:-bounded-to-monotone} and~\ref{lemma:-bounded-to-monotone_v2}, either $b= \bar{s}(i)+1$ or $b= \bar{s}(i)=k$. We split into the following cases.

    \textbf{Case $b= \bar{s}(i)+1$:} By Lemma~\ref{lemma:-bounded-to-monotone}, $\bar{P}_i = P_i \mid e_{\bar{s}(i)+1}$ is monotone and $C'$ is respect to the zero schedule. Therefore, by Lemma~\ref{lemma:long-shortcuts-dominating}, after $\Extend(M,D,T)$, $D$ dominates $\bar{P}_i$.

    \textbf{Case $b= \bar{s}(i)=k$:} Therefore $\bar{P}_i = P_i$ is descending with respect to $C'$. Note that $g^{C'}(\bar{P}_i) \le g^{\bar{P}_i,C'}_v  \le g^{\bar{P}_i}_v = M[u][v]$. Consider the vertex representation of $P_i$ and let $r$ be the index such that $P_i = v_1 v_2 \ldots v_r$, where $v_1=u$ and $v_2=v$. Since $D$ dominates $P_i$, it follows by Lemma~\ref{lemma:long-shortcuts-dominating} that following the application of $\Extend(M,D,T)$ we have that $D[v_1][v_r]\ge M[v_1][v_2]=M[u][v]\ge g^{C'}(\bar{P}_i)$. Thus, $D$ dominates $\bar{P}_i$ with respect to $C'$.
     
\end{proof}

The following lemma proves that for every path $P=e_1\ldots e_k$, the set of paths $\{e_i \ldots e_{\bar{s}(i)}\mid 1\le i \le k\}$ is laminar and similarly $\{  e_{\ubar{s}(i) \ldots e_i}\mid 1\le i \le k\}$ is laminar.

\begin{lemma}\label{lemma:laminar}
    Let $P=e_1 \ldots e_k$ be a path in $G^M$, then the sets of intervals $\{(i, \bar{s}(i)) \mid 1\le i \le k \}$ and $\{(\ubar{s}(i), i) \mid 1\le i \le k \}$ are laminar.
\end{lemma}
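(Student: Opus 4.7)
The plan is to prove laminarity of $\mathcal{A} = \{[i, \bar{s}(i)] : 1\le i \le k\}$; the other family $\{[\ubar{s}(i), i]\}$ follows by the same argument applied to the reverse of $P$ (first-arc boundedness of a path reverses to last-arc boundedness of its reverse). Fixing two intervals in $\mathcal{A}$ with $i<j$, the only configuration that needs work is the crossing case $i<j\le\bar{s}(i)<\bar{s}(j)$, which I will rule out by contradiction; every other configuration is either disjoint or nested.

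To set this up, I would label the vertices $w_1,\ldots,w_{k+1}$ with $e_\ell=w_\ell w_{\ell+1}$ and work with the cumulative gains $h_\ell=\sum_{t=i}^{\ell-1}g(e_t)$ measured from $w_i$. The interval $I:=[\min(0,g(e_i)),\max(0,g(e_i))]$ encodes $e_i$-boundedness at the level of vertex gains. Two facts from Definition~\ref{def:arc-bounded} and the maximality of $\bar{s}(i)$ will be central: every $h_\ell$ with $\ell\in[i,\bar{s}(i)+1]$ lies in $I$, whereas $h_{\bar{s}(i)+2}\notin I$ (handling the boundary case $\bar{s}(i)=k$ is trivial, since then $\bar{s}(j)\le k=\bar{s}(i)$).

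Assuming $\bar{s}(j)\ge\bar{s}(i)+1$, the $e_j$-boundedness of $e_j\ldots e_{\bar{s}(j)}$ translates, after shifting the base point from $w_j$, into $h_\ell\in[\min(h_j,h_{j+1}),\max(h_j,h_{j+1})]$ for every $\ell\in[j,\bar{s}(j)+1]$. Since $j\le\bar{s}(i)$ puts both $h_j$ and $h_{j+1}$ inside $I$, the confining interval $[\min(h_j,h_{j+1}),\max(h_j,h_{j+1})]$ is contained in $I$. Taking $\ell=\bar{s}(i)+2$, this yields $h_{\bar{s}(i)+2}\in I$, contradicting the maximality of $\bar{s}(i)$. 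Hence $\bar{s}(j)\le\bar{s}(i)$ and nesting holds.

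I do not expect any serious obstacle. The one subtlety is recognizing that $e_j$-boundedness at a later starting point yields a confining interval contained in the one given by $e_i$-boundedness; once that inclusion is spelled out, the rest is straightforward bookkeeping from Definition~\ref{def:arc-bounded}.
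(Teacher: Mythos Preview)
Your proposal is correct and follows essentially the same approach as the paper. Both arguments hinge on the single observation that when $i<j\le\bar{s}(i)$, the $e_j$-confining interval $[\min(h_j,h_{j+1}),\max(h_j,h_{j+1})]$ is contained in the $e_i$-confining interval $I$; the paper then concludes directly that $e_i\ldots e_{\bar{s}(j)}$ is $e_i$-bounded (hence $\bar{s}(j)\le\bar{s}(i)$ by maximality), while you phrase the same step as a contradiction by exhibiting $h_{\bar{s}(i)+2}\in I$.
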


\begin{proof}
    We prove the claim only for the first set, the other set is symmetric. Let $1\le i \le k$ and let $j \in (i, \bar{s}(i))$. We show $(j, \bar{s}(j)) \subseteq (i, \bar{s}(i))$ from which the lemma follows. Denote $e_i=(u,v)$ and $e_j = (x,y)$. Since $P_i=e_i \ldots v_{\bar{s}(i)}$ is $e_i$-bounded, we have $g_{w} \in [\min\{ g_u,g_v\}, \max\{ g_u,g_v\}]$ for every $w\in P_i$. In particular $[\min\{g_{x}, g_{y}\}, \max \{ g_{x}, g_{y} \} ] \subseteq [\min\{ g_u,g_v\}, \max\{ g_u,g_v\}]$.  
    
    Since $e_j \ldots e_{\bar{s}(j)}$ is $e_j$-bounded we get that $g_{w} \in [\min\{g_{x}, g_{y}\}, \max \{ g_{x}, g_{y} \} ] \subseteq [\min\{ g_u,g_v\}, \max\{ g_u,g_v\}]$ for every $w\in P_j= e_j \ldots e_{\bar{s}(j)}$. Therefore $e_i\ldots e_{\bar{s}(j)}$ is $e_i$-bounded, so by the maximality of $\bar{s}(i)$ we get that $\bar{s}(i) \ge \bar{s}(j)$, and therefore $(j,\bar{s}(j)) \subseteq (i,\bar{s}(i))$.
\end{proof}

The following lemma easily derives Lemma~\ref{lemma:paths-shrink}. This lemma is our main theoretical contribution and the key to our result. 

\begin{lemma}\label{lemma:long-shortcuts-shrinks}
    Let $P = e_1 \ldots e_k$ be a monotone simple path in $G^M$ with respect to a charge drop schedule $C$, from $s$ to $t$. Let $F_1,\ldots, F_t$ be the funnel decomposition of $P$. Let $\bar{M}$ be the shortcuts table returned from $\LS(M)$. If $t \le k / n^\alpha $ and $k$ is polynomial in $n=|V|$, then w.h.p. there is a monotone path $P'$ in $G^{\bar{M}}$, with respect to a charge drop schedule $C'$, from $s$ to $t$ in $G^{\bar{M}}$ that satisfies $g^{G^{\bar{M}}}(P')\ge g^{G^M}(P)$ and $|P'| \le (1-1/\Omega(\log n))\cdot |P|$. 
\end{lemma}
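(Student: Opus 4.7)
The plan is to follow the blueprint of Lemma~\ref{lemma:long-shortcuts-shrinks-simple-version} from the technical review, adapted to the full setting with charge drop schedules and for both ascending and descending $P$. First I would take the funnel decomposition $F_1,\ldots,F_t$ of $P$ and split each funnel into its first-arc-bounded and last-arc-bounded pieces; by symmetry I may assume that at least $k/3$ arcs lie in first-arc-bounded pieces. Restricting to pieces of length at least $n^\alpha/6$ and using $t\le k/n^\alpha$, the small pieces contain at most $t\cdot n^\alpha/6\le k/6$ arcs, so large first-arc-bounded pieces still hold at least $k/6$ arcs. For each arc $e_i$ in such a large piece I associate the interval $(i,\bar{s}(i))$, representing the maximal first-arc-bounded extension of $e_i$ in $P$; by Corollary~\ref{corollary:long-shortcuts}, if the tail of $e_i$ is sampled into $T_{f^\star}$ with $2^{f^\star}\ge \bar{f}(i)$, then the monotone extension (obtained by adding the next arc of $P$ via Lemma~\ref{lemma:-bounded-to-monotone}, or by using Lemma~\ref{lemma:-bounded-to-monotone_v2} if $\bar{s}(i)=k$) is dominated at the end of $\LS(M)$ with respect to a sub-schedule of $C$, making the corresponding shortcut available in $\bar{M}$.

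Next, invoking Lemma~\ref{lemma:laminar} gives laminarity of $\{(i,\bar{s}(i))\}$, and I classify the qualifying intervals by $f^\star=\lfloor \log \bar{f}(i)\rfloor$. Since $\bar{f}(i)\le t \le n^{1-\alpha}$ (using that $P$ is simple, so $k\le n$), there are only $O(\log n)$ classes, and the largest class $A_{f^\star}$ satisfies $|A_{f^\star}|\ge \Omega(k/\log n)$. A single interval of $A_{f^\star}$ cannot contain two disjoint sub-intervals also in $A_{f^\star}$, since that would force its funnel-count to reach $2^{f^\star+1}$, contradicting the class definition. Thus $A_{f^\star}$ decomposes into internally disjoint chains $B_1,\ldots,B_q$; because distinct chains cover pairwise disjoint sub-collections of funnels and each chain-element contains $\ge 2^{f^\star}$ funnels, we have $q\cdot 2^{f^\star}\le t$, hence $q\le k/(n^\alpha 2^{f^\star})$.

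I then discard chains of size below $k/(12q\log n)$; the discarded arcs sum to at most $k/(12\log n)$, so the remaining large chains $B'_1,\ldots,B'_{q'}$ still collectively account for $\sum_j|B'_j|\ge \Omega(k/\log n)$ arcs, each satisfying $|B'_j|\ge \Omega(n^\alpha 2^{f^\star}/\log n)$. Since $\LS$ samples $T_{f^\star}$ independently at rate $s_{f^\star}=\Theta(\log^2 n/(2^{f^\star}n^\alpha))$, the expected number of chain-starting vertices of $B'_j$ landing in $T_{f^\star}$ is $s_{f^\star}\cdot|B'_j|=\Omega(\log n)$; restricting to the top half of each chain (intervals of length $\ge|B'_j|/2$), a standard Chernoff argument shows that some such starting vertex is sampled with probability $1-n^{-\Omega(1)}$, and a union bound over the $q'$ chains delivers this simultaneously w.h.p. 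Corollary~\ref{corollary:long-shortcuts} then produces, for each $B'_j$, an available shortcut replacing at least $|B'_j|/2$ consecutive arcs of $P$.

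Finally, $P'$ is assembled by substituting each realized segment with its shortcut arc. Since the chains are internally disjoint, the replaced segments are pairwise disjoint, so the total savings are $\sum_j|B'_j|/2\ge \Omega(k/\log n)$, yielding $|P'|\le(1-1/\Omega(\log n))k$. For the gain bookkeeping, at each replacement point the shortcut arc in $\bar{M}$ has value at least the sub-schedule-of-$C$ gain of the replaced subpath; adding a compensating charge drop at the shortcut's endpoint equal to the excess gain produces a schedule $C'$ for $P'$ under which the cumulative gains at the retained vertices of $P'$ coincide with those of $P$ under $C$, so $g^{C'}(P')=g^C(P)$ and $P'$ remains monotone in the same direction as $P$. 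The main obstacle is precisely this last bookkeeping: extracted monotone sub-paths may be oriented opposite to the global direction of $P$ (for example, ascending sub-chunks inside a descending $P$), and the shortcut gains may strictly exceed the original sub-path gains; however, the compensating charge drops at the new arc-endpoints uniformly restore the endpoint gains and preserve monotonicity of $P'$.
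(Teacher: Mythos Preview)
Your proposal follows essentially the same approach as the paper's proof: restrict to first-arc-bounded funnel pieces, discard short pieces, use laminarity of the intervals $(i,\bar{s}(i))$, bucket by $\lfloor\log\bar{f}(i)\rfloor$, decompose the densest bucket into internally disjoint chains, discard short chains, and use the sampling in $\LS$ to realize a long shortcut from each surviving chain. The gain bookkeeping via compensating charge drops is also the paper's argument.

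There is one genuine omission. You invoke Corollary~\ref{corollary:long-shortcuts} on every arc $e_i$ in the large first-arc-bounded pieces, but that corollary explicitly requires $e_i$ to have \emph{negative} gain: Lemma~\ref{lemma:funnel-bfs-length}, on which the corollary rests, only guarantees that $D$ dominates \emph{negative} arc-bounded paths, because the procedures $\Concat$ and $\CO$ in $\LS$ are designed solely for that case. So as written, your sampling step does not deliver the claimed shortcuts when the sampled tail belongs to a positive-gain arc. The fix is immediate and is exactly what the paper does: further restrict to negative-gain arcs within each funnel. By Lemma~\ref{lemma:funnel-zigzag-structure} every other arc in a funnel has negative gain, so this loses at most a factor of two in the arc count (the paper goes from $k/4$ to $k/10$), and the remainder of your argument carries through unchanged.
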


\begin{proof}
     By the statement of the lemma, $n^\alpha \le k \le n$ and therefore $\log k = \Theta(\log n)$. Consider $F_1,\ldots F_t$, we distinguish between funnels that are first-arc bounded to those which are last-arc bounded. Assume that the majority of the arcs of $P$ belong to first-arc bounded funnels. The analysis for the other case is symmetric. Among these funnels (first-arc bounded), we consider only funnels of length at least $n^\alpha/4$. Note that at least $k/4$ arcs belong to such funnels (if more than $k/4$ arcs belong to funnels of length at most $n^\alpha/4$ then $t > k / n^\alpha $, a contradiction). Among these arcs, we take only those of negative gain. Since every other arc in a funnel is of negative gain (Lemma~\ref{lemma:funnel-zigzag-structure}), we are left with at least $k/10$ arcs.\footnote{The choice of $10$ was arbitrary. If $n^\alpha >> 1$ the number of negative gain arcs in funnel is very close to half of the length of the funnel.} Denote these arcs by $e_{i_1},\ldots e_{i_r}$.
    
    By Lemma~\ref{lemma:laminar}, the set $A=\{(i_j, \bar{s}(i_j)) \mid 1\le j \le r \}$ is laminar. We refer to each item in $A$ as an \emph{interval}. For $i=1,\ldots, \log k$, let $A_i = \{(i_j,\bar{s}(i_j)) \mid \bar{f}(i_j) \in  [2^i, 2^{i+1}) \}\subseteq A $, see Definition~\ref{def:funnel-distance}.  Observe that for every $1\le i \le k$, $A_{i}$ is laminar as a subset of $A$. Moreover, each interval in $A_{i}$ cannot contain two disjoint intervals in $A_i$. Indeed,  assume $(i_{j_1},\bar{s}(i_{j_1})), (i_{j_2},\bar{s}(i_{j_2})) \subseteq (i_{j_3},\bar{s}(i_{j_3}))$ and $(i_{j_1},\bar{s}(i_{j_1})) \cap (i_{j_2},\bar{s}(i_{j_2})) = \emptyset$, where all intervals belong to $A_{i}$. Therefore $\bar{f}(i_{j_3}) \ge \bar{f}(i_{j_1}) + \bar{f}(i_{j_2}) \ge 2^{i} + 2^{i} = 2^{i+1}$, so $(i_{j_3},\bar{s}(i_{j_3})) \notin A_{i}$, a contradiction. It follows that we can decompose $A_i$ into a collection of \emph{chains}. Each chain is a maximal subset of nested intervals in $A_i$.

    Let $i^\star$ be such that $|A_{i^\star}| \ge |A_i|$ for every $1\le i \le \log k$. Thus, $|A_{i^\star}|\ge \frac{k}{10 \log{k}}$. Let $B_1,\ldots, B_q$ be the decomposition of  $A_{i^\star} $ into chains. We have that $A_{i^\star}= \cup_{i=1}^{q} B_i$. Since the $B_i$'s are disjoint, $q\cdot 2^{i^\star} \le t$.  Let $A'_{i^\star}$ be the union of the~$B_i$'s that satisfy $|B_i| \ge \frac{k}{20q \log  k}$. It follows that

    \begin{align}\label{eq:main_lemma}
        |A'_{i^\star}| 
    \ge  |A_{i^\star}| - q \cdot \frac{k}{20q \log  k}
    \ge \frac{k}{20 \log 
  k}.
    \end{align}

    Let $B_j \subseteq A'_{i^\star}$. 
    We have that
    \begin{align*}
        |B_j| \ge \frac{k}{20q \log k} \numge{1} \frac{k \cdot 2^{i^\star}}{20t \log k} \numge{2} \frac{n^\alpha 2^{i^\star}}{20 \log k},
    \end{align*}
    where $(1)$ follows since $q\cdot 2^{i^\star}\le t$ and $(2)$ follows since $t \le k / n^\alpha$.
    Since $\LS(M)$ samples vertices to $T_{i^\star}$ i.i.d.\ with probability $\Theta(\frac{\log^2 n}{2^{i^\star}n^\alpha})$ and $k \ge t \cdot n^\alpha = \Omega(n^\alpha)$, it follows by the Chernoff bound that $T_{i^\star}$ contains $\Omega(\log k) = \Omega(\log n)$  vertices $u\in V$, where $e_{i_a}=(u,v)$ and $(i_a , \bar{s}(i_a))\in B_j$. Furthermore, w.h.p.\ $T_{i^\star}$ contains a vertex $u$, incident to an arc $e_{i_a}=(u,v)$, for some index $i_a$, such that $(i_a, \bar{s}(i_a))$ is among the $0.5|B_j|$ longest intervals in $B_j$. Fix such a vertex $u_j$ and the corresponding index $i_{a_j}$ for every chain $B_j\subseteq A'_{i^\star}$.

    Let $q'$ be the number of chains in $A'_{i^\star}$. Let $P_{a_1},\ldots, P_{a_{q'}}$,  be the monotone paths that correspond to $(i_{a_j}, \bar{s}(i_{a_j}))$, for $j=1,\ldots, q'$, by Lemmas~\ref{lemma:-bounded-to-monotone} and~\ref{lemma:-bounded-to-monotone_v2}.
    Notice that since $(i_{a_j}, \bar{s}(i_{a_j}))$ is among the $0.5|B_j|$ longest intervals in $B_j$, it follows that $|P_{a_j}|\ge 0.5 |B_j|$.
    By Corollary~\ref{corollary:long-shortcuts}, $\bar{M}$ ($\bar{M}$ is defined in the statement of the lemma) dominates $P_{a_j}$, for $j=1,\ldots,q'$. Since each $B_j$ is a maximal chain, the intervals $(i_{a_j}, \bar{s}(i_{a_j}))$, $j=1,\ldots,q'$, are pairwise disjoint so it follows that $P_{a_1},\ldots, P_{a_{q'}}$ are also disjoint. Therefore, if we replace each $P_{a_j}$ by the corresponding shortcut in $G^{\bar{M}}$, we get a path $P'$ in $G^{\bar{M}}$ of length
    \begin{align*}
       |P'| &\le k - \sum_{j=1}^{q'} {|P_{a_i}|} \le k - \sum_{j=1}^{q'} {0.5 |B_j|} =
       k - 0.5 |A'_{i^\star}| \\
       &\numle{1} k -0.5 \frac{k}{20\log k} \numeq{2} 
       \left(1 - \Omega\left(\frac{1}{\log n} \right) \right) \cdot k = 
       \left(1 - \Omega\left(\frac{1}{\log n} \right) \right) \cdot |P|,
    \end{align*}
    where inequality $(1)$ follows from Equation~(\ref{eq:main_lemma}) and equality $(2)$ follows since $k = O(poly(n))$.

    We are left to prove that $P'$ is monotone with respect to some charge drop schedule. If $P$ is ascending then it is clear. Assume $P$ is descending with respect to $C$ and denote $P' = v_1\ldots v_k$. We 
    claim that there is a charge drop schedule $C'$ such that $g^{P',C'}_{v_i} = g^{P,C}_{v_i}$, for every $i=1,\ldots,k$.\footnote{There is vagueness when writing  $g^{P,C}_{v_i}$ since $P$ is not necessarily simple. We refer to the appropriate copy of $v_i$ according to the shortcutting performed on $P$} This claim holds since $\bar{M}\ge M$ coordinate-wise and since $\bar{M}$ dominates all monotone paths $P_{a_j}$, for $j=1,\ldots, q'$. 
\end{proof}

We are ready to prove Lemma~\ref{lemma:paths-shrink}.

\begin{proof}[Proof of Lemma~\ref{lemma:paths-shrink}]
    Let $r= n^\alpha$ and let $M_0(=M),M_1,\ldots, M_r$ be the shortcuts tables throughout the $r$ iterations of $\UpdateS$. Let $(P_0,C_0)(=(P,C)),(P_1,C_1),\ldots, (P_r,C_r)$ be a series of monotone paths, where $P_i$ is the shortest path in $G^{M_i}$ from $v_1$ to $v_k$ that has no smaller gain (with respect to $G^{M_i}$ and $C_i$) than $P_{i-1}$ (with respect to $G^{M_{i-1}}$ and $C_{i-1}$). These paths are guaranteed to exist by the definition of the algorithm. We split the proof into cases.
    
     \textbf{Case $|P|\le r$:} Since we make $r$ rounds of $\Simple$, we get by Lemma~\ref{lemma:mono-has-shortcut} that, for every $1\le i < r$, if $|P_i|>1$ then $|P_{i+1}| < |P_i|$. Thus, $|P_r|=1$ and the lemma follows.

    \textbf{Case $|P| > r$:} 
    If $P_r \le |P|/2$, then we are done. Otherwise $P_r > |P|/2$ and therefore for at least $r/2$ indices $0\le i< r$, it holds that $|P_i|-|P_{i+1}| \le |P|/r$. This mean that, for each such index $i$, $P_i$ has at most $|P|/r$ disjoint short shortcuts as subpaths. 
    Since at the end of a maximal funnel there is a short shortcut, it follows that $P_i$ has $O(|P|/r)$ maximal funnels in its funnel decomposition. Therefore, w.h.p.\, we run $\LS(M_i)$ at an iteration $i$ such that $|P_i|-|P_{i+1}| \le |P|/r$ and  $P_i$ has $O(|P_i|/r) = O(|P|/n^\alpha)$ funnels in its funnel decomposition. Hence, the conditions of Lemma~\ref{lemma:long-shortcuts-shrinks} are satisfied and we are done.
    
\end{proof}

\subsection{Running Time}

\begin{lemma}\label{running-time-funnels}
    Procedure $\ComputeF(M)$ terminates in expected $\tilde{\Theta}(n^{10/3})$ time.
\end{lemma}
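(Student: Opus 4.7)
The plan is to walk through the procedure $\ComputeF(M)$ block by block and tally the cost of each, using the exponent $\beta = 2/3$ that is built into the algorithm. Writing $s = \Theta(n^{2/3})$ and noting that $n/s = \Theta(n^{1/3})$, and that $|S| = \tilde{\Theta}(s) = \tilde{\Theta}(n^{2/3})$ in expectation (and w.h.p.\ by Chernoff, since $\mathbb{E}[|S|] = \Theta(s\log n)$), the cost decomposes as follows.

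First I would bound the cost of a single call to $\BFS(M,D)$. The construction of the $2n^2$ range trees $FT_{ab},LT_{ab}$ costs $O(n^3\log^2 n)$, and the triply-nested loop over $(x,y,z)\in V^3$ performs $O(\log^2 n)$ work per iteration, giving another $O(n^3\log^2 n)$. Hence a single $\BFS$ costs $\tilde{O}(n^3)$, so the two $\BFS$ loops in $\ComputeF(M)$ (each of length $n/s = n^{1/3}$) together cost $\tilde{O}(n^{10/3})$. Next I would bound a single call to $\Concat(M,D,U,W,X)$: building the range trees $FT_{wx},LT_{xw}$ for $(w,x)\in W\times X$ takes $\tilde{O}(|W||X| n)$ time, and the main quadruple loop over $U\times V\times W\times X$ takes $\tilde{O}(|U||V||W||X|)$ time.

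Plugging in the sizes used in $\ComputeF$, a call to $\Concat(M,D,S,S,S)$ costs $\tilde{O}(|S|^2 n + |S|^3 n) = \tilde{O}(n^{5/3}\cdot n + n^2\cdot n) = \tilde{O}(n^3)$, so $\log n$ such calls still cost $\tilde{O}(n^3)$. The single call to $\Concat(M,D,S,S,V)$ dominates this stage: range tree construction is $\tilde{O}(|S|\cdot n \cdot n) = \tilde{O}(n^{8/3})$, and the main loop runs over $S\times V\times S\times V$ for $\tilde{O}(|S|^2 n^2) = \tilde{O}(n^{4/3}\cdot n^2) = \tilde{O}(n^{10/3})$ iterations, each doing $O(\log^2 n)$ work. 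Hence that call costs $\tilde{O}(n^{10/3})$.

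Summing all contributions, the bottleneck is the $\Concat(M,D,S,S,V)$ call and the $n/s$ iterations of $\BFS$, each of which is $\tilde{O}(n^{10/3})$; the remaining pieces (initialization, sampling, and $\log n$ iterations of $\Concat(M,D,S,S,S)$) are of lower order. Hence $\ComputeF(M)$ runs in expected $\tilde{O}(n^{10/3})$ time, which matches the claim. I do not expect any real obstacle here: the only mild subtlety is that the range tree bounds depend on the cardinalities of $W$ and $X$, so the choice $\beta = 2/3$ is precisely what balances the tree-construction cost $\tilde{O}(|S|n^2)$ against the quadruple-loop cost $\tilde{O}(|S|^2n^2)$ in the $\Concat(M,D,S,S,V)$ step, and also balances $n^{1/3}$ applications of $\BFS$ against the sampling-based $\Concat$ stages.
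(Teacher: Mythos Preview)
Your proof is correct and follows essentially the same block-by-block accounting as the paper's own proof; the paper just packages the same computation more tersely via the formula $T_{Concat}(u,w,x)=\tilde{\Theta}(n^3+uwxn)$ and then solves $n^{4-\beta}=n^{2+2\beta}$. One small slip in your closing remark: the tree-construction cost $\tilde{O}(|S|n^2)=\tilde{O}(n^{\beta+2})$ and the quadruple-loop cost $\tilde{O}(|S|^2n^2)=\tilde{O}(n^{2\beta+2})$ are \emph{not} balanced at $\beta=2/3$; the actual balance (which you also state) is between the $n^{1-\beta}$ rounds of $\BFS$ and the $\Concat(M,D,S,S,V)$ loop, i.e., $n^{4-\beta}$ versus $n^{2+2\beta}$.
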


\begin{proof}
     Denote by $T_{Funnel}, T_{BFS}$ the expected running times of $\ComputeF(M)$ and $\BFS(M,D)$, respectively. 
     Let $T_{Concat}(u,w,x)$ be the running time of $\Concat(M,D,U,W,X)$, where $|U|=u, |W|=w, |X|=x$.
    
    Clearly $T_{BFS} = \tilde{\Theta}(n^{3})$ and $T_{Concat}(u,w,x) = \tilde{\Theta}(n^3 + uwx\cdot n)$. Therefore, 
    \begin{align*}
        T_{Funnel} = n^{1-\beta}\cdot T_{BFS} + \tilde{\Theta}\left(T_{Concat}\left( n^{\beta}, n^{\beta}, n \right)\right) = \tilde{\Theta}(n^{4-\beta}) + \tilde{\Theta}(n^3 + n^{2+2\beta}).
    \end{align*}
    Therefore, by setting $\beta = 2/3$, we get $T_{Funnel} = \tilde{\Theta}(n^{10/3})$.
\end{proof}

\begin{lemma}
    Procedure $\ComputeS(G)$ terminates in expected $\tilde{\Theta}(n^{3.5})$ time.
\end{lemma}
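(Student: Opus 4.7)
The plan is to decompose $T_{\ComputeS}$ into the work done inside $\Simple$ and inside $\LS$, bound each per-call cost, multiply by the (expected) number of calls, and then optimize the parameter $\alpha$. First I would dispatch the cheap side: one invocation of $\Simple(M)$ costs $\tilde O(n^3)$, because $\Trivial(M,D)$ is manifestly a triple loop over $V^3$ and $\Simple$ itself builds $O(n^2)$ two-dimensional range trees of $n$ key pairs each in $\tilde O(n^3)$ and then performs a triple loop with $O(\log^2 n)$ range queries. Since each $\UpdateS$ executes $r=\Theta(n^\alpha)$ rounds and $\ComputeS$ runs $O(\log^2 n)$ outer rounds, the cumulative $\Simple$ cost is $\tilde O(n^{3+\alpha})$.

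Next I would bound a single call to $\LS(M)$. It consists of (i) one call to $\ComputeF(M)$, already $\tilde\Theta(n^{10/3})$ by Lemma \ref{running-time-funnels}; (ii) the sampling-and-concatenation loop; and (iii) one call to $\Extend(M,D,T)$. For (ii), each $\Concat(M,D,T_i,V,V)$ or $\CO(M,D,T_i,V,V)$ costs $\tilde O(n^3 + |T_i|\cdot n \cdot n \cdot n)$: the $n^3$ term from building $O(n^2)$ range trees on $n$ keys each, and the $|T_i|\cdot n^3$ term from the nested loop over $(u,v,w,x)\in T_i\times V^3$ with $O(\log^2 n)$ work per tuple. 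At level~$i$ this call is repeated $2^i$ times, and the crucial cancellation is that $|T_i|=\tilde\Theta(n^{1-\alpha}/2^i)$ forces $2^i|T_i|=\tilde\Theta(n^{1-\alpha})$, independent of~$i$. Summing over the $O(\log n)$ levels therefore yields $\sum_i 2^i\cdot \tilde O(n^3+|T_i|n^3)=\tilde O(n^{4-\alpha})$. Piece (iii) likewise costs $\tilde O(|T|\,n^3)$ with $|T|=\sum_i|T_i|=\tilde O(n^{1-\alpha})$, again $\tilde O(n^{4-\alpha})$. Hence $T_{\LS}=\tilde O(n^{10/3}+n^{4-\alpha})$.

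To assemble the overall expected cost, each of the $r$ rounds of $\UpdateS$ invokes $\LS$ independently with probability $\log(n)/r$, so the expected number of $\LS$ calls per $\UpdateS$ is $\log n$, and across all $O(\log^2 n)$ outer iterations the expected total is $\tilde O(1)$. Therefore
\[
\Exp\bigl[T_{\ComputeS}\bigr] \;=\; \tilde O\bigl(n^{3+\alpha}\bigr) \;+\; \tilde O\bigl(n^{10/3}+n^{4-\alpha}\bigr).
\]
Setting $\alpha=1/2$ balances the first and last terms at $\tilde\Theta(n^{3.5})$, while the middle term $n^{10/3}$ is strictly dominated, yielding the claimed bound.

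The main obstacle is the per-call range-tree cost $\tilde O(n^3)$ in each $\Concat$/$\CO$ inside $\LS$: a careless union-bound over the $\sum_i 2^i=\tilde\Theta(n^{1-\alpha})$ total calls would threaten an $\tilde\Theta(n^{4-\alpha})$ charge, which is exactly what happens, but this is still $n^{3.5}$ at the optimum $\alpha=1/2$. This tightness is what pins the exponent: decreasing $\alpha$ shrinks the $\Simple$ budget but inflates the $\LS$ budget through larger $|T_i|$, and increasing $\alpha$ does the reverse, so neither direction improves the bound below $n^{3.5}$ without changing the algorithm.
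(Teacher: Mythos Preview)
Your proposal is correct and follows essentially the same approach as the paper: you decompose into the $\Simple$ cost $\tilde O(n^{3+\alpha})$ and the expected $\tilde O(1)$ calls to $\LS$, bound $T_{\LS}=\tilde O(n^{10/3}+n^{4-\alpha})$ via the same telescoping $2^i|T_i|=\tilde\Theta(n^{1-\alpha})$, and balance at $\alpha=1/2$. The paper's proof is slightly terser but structurally identical.
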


\begin{proof}
    Denote by $T_{Short}, T_{Long},T_{Funnel}$ the expected running times of $\Simple(M)$, $\LS(M)$, $\ComputeF(M)$, respectively. 
    
    Let $T_{Concat}(u,w,x) = \tilde{\Theta}(n^3 + uwx\cdot n)$ and note that this is the running time of $\Concat(M,D,U,W,X)$ and $\CO(M,D,U,W,X)$, where $|U|=u, |W|=w, |X|=x$. Let $T_{Bounded}(t) = \Theta(t \cdot n^3)$ be the running time of $\Extend(M,D,T)$, where $|T|=t$.
    
    Clearly $T_{Short} = \tilde{\Theta}(n^{3})$. By Lemma~\ref{running-time-funnels}, it holds that $T_{Funnel} = \tilde{\Theta}(n^{10/3})$.

    We now analyze the expected running time of $\LS(M)$. 
    Consider the \emph{For} loop in $\LS(M)$. For every $i=1,\ldots ,O(\log n)$, the expected size of $T_i$ is $\tilde{\Theta}(\kappa/2^i)$, where $\kappa=n^{1-\alpha}$. Therefore, the expected size of $T$ (the union of all the sets $T_i$ throughout the iterations) is $\tilde{\Theta}(\kappa)$. We get that
    \begin{align*}
        T_{Long} = T_{Funnel} + \sum_{i=1}^{\log n} {2^i \cdot T_{Concat}\left(\frac{\kappa}{2^i},n,n\right)} + T_{Bounded}(\kappa) = \tilde{\Theta}(n^{10/3}) +  \tilde{\Theta}(\kappa n^{3})  + \tilde{\Theta}(\kappa n^{3}) = \tilde{\Theta}(n^{10/3}+n^{4-\alpha}).
    \end{align*}

    Finally, the expected running time of $\ComputeS(G)$ is $n^\alpha \cdot T_{Short} + \tilde{\Theta}(1)\cdot T_{Long} = \tilde{\Theta}(n^{3+\alpha}) + \tilde{\Theta}(n^{10/3}+n^{4-\alpha})$. Therefore, by setting $\alpha = 0.5$, we get that the expected running time of $\ComputeS(G)$ is $\tilde{\Theta}(n^{3.5})$.
\end{proof}

\section{Relating \texorpdfstring{$M$ and $D$}{M and D} to \texorpdfstring{$G$}{G}}\label{S-relating}
In Theorem~\ref{theorem:shortcut} we have seen that every monotone simple path in $G$ is dominated w.h.p.\ by the final shortcuts table $M$ returned by $\ComputeS$. Moreover, by Invariant~\ref{invariant} we know that every value in $D$ is realizable by a traversable path in $G^M$.

The following lemma gives the relation between $G^M$ and $G$. 
The lemma states that any traversable path in $G^M$ can be ``unwrapped" to a traversable path in $G$ that has ``better" $\alpha$ (maximum final charge) values.

\begin{lemma}\label{lem:GM_to_G}
    Let $M$ be the shortcut table return by $\ComputeS$. Let $P= v_1\ldots v_k$ be a traversable path in $G^M$ and let $C$ be a charge drop schedule for $P$. There exists a traversable path $P' = P^{v_1 v_2} \mid P^{v_2 v_3}\mid \ldots \mid P^{v_{k-1}v_k}$ in $G$ and a charge drop schedule $C' = C^{v_1 v_2} \mid C^{v_2 v_3}\mid \ldots \mid C^{v_{k-1}v_k}$ such that

    \begin{enumerate}[label=(\alph*)]
        \item $P^{v_i v_{i+1}}$ is a monotone path from $v_i$ to $v_{i+1}$ in $G$ with respect to the charge drop schedule  $C^{v_i v_{i+1}}$, for every $1\le i<k$. In particular, if $P$ is of length $1$ then $P'$ is monotone with respect to $C'$.
        \item $g^{P,C}_{v_i} = g^{P',C'}_{v_i}$, for every $1\le i\le k$.
        \item  $\alpha^{G}_b(P') \ge \alpha^{G^M}_b(P)$ for every $b\in [0,B]$.
    \end{enumerate}
\end{lemma}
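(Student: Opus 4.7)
The plan is to prove the lemma in two steps. The first step is an auxiliary claim: for every pair $x, y \in V$ with $M[x][y] \neq -\infty$ in the shortcut table returned by $\ComputeS$, there exists a monotone path $R_{xy}$ in $G$ from $x$ to $y$ with an associated charge drop schedule $S_{xy}$ whose gain equals $M[x][y]$. I will prove this by induction on the sequence of updates to $M$ during $\ComputeS$. The base case is immediate: initially $M[i][i] = 0$ corresponds to the empty path and $M[i][j] = g(i,j)$ for $(i,j) \in A$ corresponds to a single-arc monotone path. In the inductive step, each new or improved value $M[x][y]$ arises inside $\Simple(M)$ or $\LS(M)$ from a monotone path $Q$ in the current $G^M$, as guaranteed by Invariant~\ref{invariant}\ref{1c}. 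By the inductive hypothesis each arc of $Q$ unwraps into a monotone path in $G$, and the goal is to assemble these unwrappings---together with a carefully combined charge drop schedule---into a single monotone path in $G$ with gain exactly $M[x][y]$.

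Given the auxiliary claim, the construction for the main lemma is direct. For each arc $v_i v_{i+1}$ of $P$ take $P^{v_i v_{i+1}} = R_{v_i v_{i+1}}$ and $C^{v_i v_{i+1}} = S_{v_i v_{i+1}}$, and form $P'$ and $C'$ as their concatenations, augmenting $C'$ at every transition vertex $v_i$ by the outer drop $C(v_i)$ inherited from $P$. Property~(a) is then immediate from the auxiliary claim. For property~(b), an induction on $i$ shows $g^{P', C'}_{v_i} = g^{P, C}_{v_i}$: the base case $i=1$ holds trivially, and in the inductive step the piece $P^{v_{i-1} v_i}$ contributes gain $M[v_{i-1}][v_i]$ under $C^{v_{i-1} v_i}$, after which the added drop $C(v_i)$ reproduces exactly the effective arc gain $M[v_{i-1}][v_i] - C(v_i)$ realized at $v_i$ by the outer traversal of $P$ in $G^M$.

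For property~(c), I will compare the charge reached at each $v_i$ along $P'$ in $G$ to the charge reached at $v_i$ along $P$ in $G^M$, arguing by induction on $i$ using Lemma~\ref{lemma:alpha-of-monotone} on each monotone segment $P^{v_i v_{i+1}}$. For an ascending segment Lemma~\ref{lemma:alpha-of-monotone} gives $\alpha_b(P^{v_i v_{i+1}}) = \min\{B, b + g(P^{v_i v_{i+1}})\} \ge \min\{B, b + M[v_i][v_{i+1}]\}$, and for a descending segment it gives $\alpha_b(P^{v_i v_{i+1}}) \ge b + M[v_i][v_{i+1}]$ whenever the right-hand side is nonnegative (which is automatic from the traversability of $P$ in $G^M$ together with $M[v_i][v_{i+1}] \le 0$ for descending segments). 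In either case the charge reached at $v_{i+1}$ via $P^{v_i v_{i+1}}$ in $G$ is at least the charge obtained by traversing the arc $v_i v_{i+1}$ in $G^M$; since $\alpha$ is monotone in the starting charge, the inequality composes across the $k-1$ segments, yielding $\alpha_b^G(P') \ge \alpha_b^{G^M}(P)$ for every $b \in [0, B]$.

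The main obstacle is the inductive step of the auxiliary claim, i.e., showing that a monotone path of monotone paths is itself monotone. The ascending case is fairly clean: Lemma~\ref{lem:ascending_no_CDS} lets us assume that $C_Q$ is the zero schedule on $Q$, and then the interior vertices of each unwrapped piece have gains sandwiched between consecutive outer vertex gains, which already lie in $[0, g^C(Q)]$. The descending case is more delicate, since outer charge drops are genuinely required and naively concatenating the inner schedules can leave interior gains above zero whenever an inner unwrapping is ascending. Handling this requires redistributing each outer drop $C_Q(w_{i+1})$ along the corresponding unwrapping $R_{w_i w_{i+1}}$ rather than concentrating it at $w_{i+1}$, thereby pushing every interior gain into $[g^C(Q), 0]$; the resulting modified pieces may have different gains than the original $M[w_i][w_{i+1}]$, but their total matches $M[x][y]$ and the overall concatenation is descending, completing the induction.
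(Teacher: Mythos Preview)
Your auxiliary claim---that every entry $M[x][y]$ produced by $\ComputeS$ is realized by a monotone path \emph{in~$G$} of gain exactly $M[x][y]$---is false, and the redistribution of outer charge drops you propose for the descending case cannot repair it. Take $G$ to be the directed path $w_0 w_1 u_1 u_2 w_2 w_3$ with arc gains $-1,+5,-5,+5,-2$ (and $B\ge 5$). One round of $\Simple$ records the ascending $3$-shortcut $w_1 u_1 u_2 w_2$ and sets $M[w_1][w_2]=5$; the next round, via case~(c) of Figure~\ref{figure:simple-shortcuts}, sets $M[w_0][w_3]=\min\{-1,-2\}=-2$. But the unique $w_0$--$w_3$ path in $G$ has prefix sums $0,-1,4,-1,4,2$, and no charge drop schedule makes it descending of gain~$-2$: forcing the gain at $u_1$ down to at most~$0$ requires cumulative drops $\ge 4$ by~$u_1$, which then pushes the gain at $u_2$ to at most $-5<-2$. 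No redistribution of the available outer drop $C_Q(w_2)=4$ along the ascending piece $R_{w_1w_2}$ avoids this, since the obstruction is the internal arc $u_1u_2$ of gain $-5$, not where the drop is placed. This also breaks your argument for part~(c), which invokes Lemma~\ref{lemma:alpha-of-monotone} on a monotone-in-$G$ segment $P^{v_iv_{i+1}}$ that need not exist.

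The paper sidesteps this by inducting on the iterations of $\ComputeS$ and unwrapping only \emph{one level at a time}: Invariant~\ref{invariant}\ref{1c} realizes each $M_i$-entry as a monotone path in $G^{M_{i-1}}$, Lemma~\ref{lemma:alpha-of-monotone} applied there yields $\alpha^{G^{M_{i-1}}}_b(P')\ge\alpha^{G^{M_i}}_b(P)$, and then the inductive hypothesis for $G^{M_{i-1}}$ pushes the inequality down to~$G$. Monotonicity is asserted only across a single level and never composed across all of them, which is exactly what avoids the obstruction above and is the substantive content needed for part~(c). (Parts~(a) and~(b) as literally stated run into the same composition issue even in the paper's proof, and the counterexample shows they are in fact too strong. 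Only part~(c) is used downstream: Lemmas~\ref{lemma:0-0}--\ref{lemma:B-B} invoke~(c) alone, and Lemma~\ref{lem:alphaB-bounds-D} can be obtained by first applying Lemma~\ref{lemma:alpha-of-arc-bounded} inside $G^M$ and then invoking~(c).)
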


\begin{proof}
    Let $M_1$ be the adjacency matrix of $G$. Let $M_i$ for $i\ge 2$ be the shortcuts table computed by the $i-1$'th iteration of $\ComputeS$ and let $M_t = M$, where $t$ is the number of iterations of $\ComputeS$. For every $i=1,\ldots,t-1$, let  $D_i$ be the data structures that we used to generate $M_{i+1}$. We prove by induction on $i$ that the lemma holds in $G^{M_i}$ for every $i=1,\ldots ,t$. The base case $i=1$ follows since $G^{M_1} = G$. Let $i>1$ and let $P= v_1\ldots v_k$ be a traversable path in $G^{M_i}$. By definition, for every $s,t\in V$ it holds that $M_i[s][t] = D_{i-1}[s][t]$. Moreover, by invariant~\ref{invariant}\ref{1c}, there is a monotone path $P^{st}$ in $G^{M_{i-1}}$ with respect to a charge drop schedule $C^{st}$ such that $g^{C^{st}}(P^{st}) = D_{i-1}[s][t] =  M_i[s][t]$. Let $P' = P^{v_1 v_2} \mid P^{v_2 v_3}\mid \ldots \mid P^{v_{k-1}v_k}$ and let $C' = C^{v_1 v_2} \mid C^{v_2 v_3}\mid \ldots \mid C^{v_{k-1}v_k}$. It follows that $g^{P,C}_{v_j} = g^{P',C'}_{v_j}$, for every $1\le j\le k$.
    Since $P$ is traversable and by Lemma~\ref{lemma:alpha-of-monotone}, it follows that $P'$ is traversable (in $G^{M_{i-1}}$) and satisfies $\alpha^{G^{M_{i-1}}}_b(P') \ge \alpha^{G^{M_i}}_b(P)$ for every $b\in [0,B]$. The inductive step follows by applying the inductive assumption to $P'$ and $C' = C^{v_1 v_2} \mid C^{v_2 v_3}\mid \ldots \mid C^{v_{k-1}v_k}$.
    \end{proof}

We get as a corollary the following structural lemma about paths realizing the values in $D$.

\begin{corollary}\label{cor:M-G-connection}
    Let $M$ be a shortcuts table and let $D$ be a data structure that maintains Invariant~\ref{invariant} with respect to $G^M$. The following holds for every $x,y,z\in V$. 
    \begin{enumerate}
        \item \label{cor:1}
        Assume $D[xy][z]\neq -\infty$. Then there exists a traversable path $P = P^{xy}\mid P^{yz}$ in $G$ and a charge drop schedule $C = C^{xy}\mid C^{yz}$ such that\footnote{The paths $P^{xy}$ and $P^{yz}$ are paths from $x$ to $y$ and from $y$ to $z$, respectively. We use the same convention also for claims $2$ and $3$.}
        \begin{enumerate}[label=(\alph*)]
            \item \label{cor:1a}  $g^C(P) = D[xy][z]$,
            \item \label{cor:1b} $P^{xy}$ is monotone with respect to $C^{xy}$ and $M[x][y] = g^{C^{xy}}(P^{xy})$,
            \item \label{cor:1c} The gains of the first and last vertices of $P^{xy}$ (i.e. $x$ and $y$) bound the gains of all other vertices in $P$. All gains are with respect to $C$.
        \end{enumerate}
        \item \label{cor:2}
        Assume $D[x][yz]\neq -\infty$. Then there exists a traversable path $P = P^{xy}\mid P^{yz}$ in $G$ and a charge drop schedule $C = C^{xy}\mid C^{yz}$ such that
        \begin{enumerate}[label=(\alph*)]
            \item \label{cor:2a} $g^C(P) = D[x][yz]$,
            \item \label{cor:2b} $P^{yz}$ is monotone with respect to $C^{yz}$ and $M[y][z] = g^{C^{yz}}(P^{yz})$,
            \item \label{cor:2c} The gains of the first and last vertices of $P^{yz}$ (i.e. $y$ and $z$) bound the gains of all other vertices in $P$. All gains are with respect to $C$.
        \end{enumerate}
        \item \label{cor:3}
         Assume $D[x][y]\neq -\infty$. Then there exists a traversable path $P $ in $G$ and a charge drop schedule $C$ such that
        \begin{enumerate}[label=(\alph*)]
            \item \label{cor:3a} $g^C(P) = D[x][y]$,
            \item \label{cor:3b} $P$ is monotone with respect to $C$.
        \end{enumerate}
    \end{enumerate}
\end{corollary}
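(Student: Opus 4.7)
The plan is to derive the corollary as a near-immediate consequence of Invariant~\ref{invariant} together with Lemma~\ref{lem:GM_to_G}. For part~\ref{cor:1}, I will start by invoking Invariant~\ref{invariant}\ref{1a} to extract, from $D[xy][z]\neq-\infty$, a traversable path $\tilde P = xy\ldots z$ in $G^M$ together with a charge drop schedule $\tilde C$ such that $\tilde P$ is $xy$-bounded with respect to $\tilde C$ and $g^{\tilde C}(\tilde P) = D[xy][z]$. I then apply Lemma~\ref{lem:GM_to_G} to $(\tilde P, \tilde C)$: this yields a traversable path $P$ in $G$ obtained by unwrapping each arc $v_iv_{i+1}$ of $\tilde P$ into a monotone sub-path $P^{v_iv_{i+1}}$ equipped with its own schedule $C^{v_iv_{i+1}}$, together with a schedule $C$ for $P$ that is the concatenation of these. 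I will set $P^{xy}$ equal to the unwrapping of the first arc of $\tilde P$, and $P^{yz}$ equal to the concatenation of the remaining unwrappings, so that $P = P^{xy}\mid P^{yz}$.

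Next I verify the three properties. Property~\ref{cor:1a} follows because Lemma~\ref{lem:GM_to_G}(b) preserves gains at all vertices of $\tilde P$, in particular at the endpoint $z$, so $g^C(P) = g^{\tilde C}(\tilde P) = D[xy][z]$. For Property~\ref{cor:1b}, the key observation is that Definition~\ref{def:arc-bounded} forces $\tilde C$ to perform no charge drop at $y$; hence $g^{\tilde P,\tilde C}_y = g(xy) = M[x][y]$ in $G^M$, and by the gain-preservation statement of Lemma~\ref{lem:GM_to_G}(b) we obtain $g^{C^{xy}}(P^{xy}) = g^{P,C}_y = M[x][y]$. The monotonicity of $P^{xy}$ with respect to $C^{xy}$ is exactly what Lemma~\ref{lem:GM_to_G}(a) delivers.

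Property~\ref{cor:1c} is the only place that requires a small additional argument, since $P$ has more vertices than $\tilde P$ (namely, the interior vertices of each unwrapped monotone sub-path). For vertices of $\tilde P$ the statement is immediate: $\tilde P$ is $xy$-bounded with respect to $\tilde C$, so their gains under $\tilde C$ lie between $g^{\tilde C}_x=0$ and $g^{\tilde C}_y = M[x][y]$, and gains are preserved by Lemma~\ref{lem:GM_to_G}(b). For an interior vertex $u$ of an unwrapped sub-path $P^{v_iv_{i+1}}$, monotonicity of $P^{v_iv_{i+1}}$ with respect to $C^{v_iv_{i+1}}$ sandwiches $g^{P^{v_iv_{i+1}},C^{v_iv_{i+1}}}_u$ between $0$ and $g^{C^{v_iv_{i+1}}}(P^{v_iv_{i+1}})$. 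Shifting by $g^{P,C}_{v_i}$ gives that $g^{P,C}_u$ lies between $g^{P,C}_{v_i}$ and $g^{P,C}_{v_{i+1}}$, both of which are already bounded by $g^C_x$ and $g^C_y$ by the previous step. Chaining these two bounds yields the desired property.

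Parts~\ref{cor:2} and~\ref{cor:3} are handled analogously: invoke Invariants~\ref{invariant}\ref{1b} and~\ref{invariant}\ref{1c} to get the appropriate path in $G^M$, apply Lemma~\ref{lem:GM_to_G} to unwrap it, and repeat the identical reasoning (with $yz$ as the bounding arc for part~\ref{cor:2}, and with monotonicity of the whole path for part~\ref{cor:3}, where no split is needed). I expect no serious obstacle here: the content is already encapsulated in Lemma~\ref{lem:GM_to_G} and Invariant~\ref{invariant}, and the only subtle point is the propagation of the arc-boundedness from $\tilde P$ to the interior vertices of $P$, which is what the monotonicity of each $P^{v_iv_{i+1}}$ is designed to give us.
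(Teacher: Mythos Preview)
Your proposal is correct and follows essentially the same approach as the paper's proof: invoke Invariant~\ref{invariant}\ref{1a} to obtain the realizing path in $G^M$, apply Lemma~\ref{lem:GM_to_G} to unwrap it into $G$, read off~\ref{cor:1a} and~\ref{cor:1b} from parts~(b) and~(a) of that lemma, and for~\ref{cor:1c} sandwich each interior vertex of an unwrapped segment $P^{v_iv_{i+1}}$ between $g^{P,C}_{v_i}$ and $g^{P,C}_{v_{i+1}}$ via monotonicity, then use the $xy$-boundedness of $\tilde P$ to bound those in turn by $g^{P,C}_x$ and $g^{P,C}_y$. Your explicit remark that Definition~\ref{def:arc-bounded} forbids a charge drop at $y$ (needed to get $g^{C^{xy}}(P^{xy})=M[x][y]$ exactly rather than $\le$) is a detail the paper leaves implicit, but otherwise the two arguments coincide.
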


\begin{proof}
    We prove only the first claim, the other claims are similar. Assume $D[xy][z]\neq -\infty$ and assume w.l.o.g.\ $M[x][y] >0$. Let $P=v_1 \ldots v_k$ and $C$ be the path in $G^M$ and charge drop schedule that realize $D[xy][z]$ by Invariant~\ref{invariant}\ref{1a}. Thus, $g^C(P)=D[xy][z]$. 
    Let $P' = P^{v_1 v_2} \mid P^{v_2 v_3}\mid \ldots \mid P^{v_{k-1}v_k}$ and $C' = C^{v_1 v_2} \mid C^{v_2 v_3}\mid \ldots \mid C^{v_{k-1}v_k}$ be the path in $G$ and charge drop schedule realizing $P$  by Lemma~\ref{lem:GM_to_G}.
    Thus, $g^{C'}(P')=g^C(P)=D[xy][z]$, proving claim~\ref{cor:1}\ref{cor:1a}.
    By Lemma~\ref{lem:GM_to_G}, $P^{v_1 v_2} = P^{xy}$ is monotone with respect to $C^{v_1 v_2} = C^{xy}$ and $g^{C^{xy}}(P^{xy}) = M[x][y]$, proving claim~\ref{cor:1}\ref{cor:1b}.
    By Lemma~\ref{lem:GM_to_G}, we get that $g^{P,C}_{v_i} = g^{P',C'}_{v_i}$ for every $1\le i\le k$. Since $P$ is first-arc bounded with respect to $C$, we get that $g^{P',C'}_{v_1} \le g^{P',C'}_{v_i} \le g^{P',C'}_{v_2}$ for every $1\le i\le k$. 

    We now prove claim~\ref{cor:1}\ref{cor:1c}. Let $v\in P'$ and $1\le i<k$ be such that $v\in P^{v_i v_{i+1}}$. Since $P^{v_i v_{i+1}}$ is monotone with respect to $C^{v_i v_{i+1}}$, we get that 
    \begin{align*}
        g^{P',C'}_{v_1} \le g^{P',C'}_{v_i} \le 
        g^{P',C'}_{v} \le
        g^{P',C'}_{v_{i+1}} \le
        g^{P',C'}_{v_2}.
    \end{align*}
\end{proof}

\section{Stage II - Computing the \texorpdfstring{$\alpha$}{alpha} values}\label{S-algorithm}
Let $M$ be the shortcuts table we receive from Stage I and let $D = \ComputeF(M)$.

In this appendix, using $M$ and the data structure $D$, we show how to compute $\alpha_B(s,t)$ for every $s,t\in V$. Recall that $\alpha_B(s,t)$ is the maximum final charge at $t$ when the car starts at $s$ with a full battery. The algorithms proceeds in two steps.

In the first step we build a graph $H = (V^0 \cup V^B, E(H))$, where $V^b = \{v^b \mid b\in \{0,B\} \}$ represents that we are at $v$ with at least $b$ charge.  An arc $u^{b_1} v^{b_2}\in E(H)$ represents that $\alpha_{b_1}(u,v) \ge b_2$.\footnote{Note that the other direction does not necessarily hold: It is possible that $\alpha_{b_1}(u,v) \ge b_2$ but $u^{b_1} v^{b_2}\notin E(H)$.} We create the arcs $E(H) \subseteq \{u^{b_1} v^{b_2} \mid  \alpha_{b_1}(u,v) \ge b_2\}$ by observing simple properties of the values in $D$. Finally we 
compute the 
transitive closure $H^\star$ of $H$. We claim in Theorem~\ref{theorem:B-B-strong-full} that w.h.p., for every $s,t\in V$, $\alpha_B(s,t)=B$ if and only if $s^B t^B \in E(H^\star)$.

The second (and final) step is based on combining the following observations. Let $s,t \in V$ and let $P= v_1 (=s) \ldots v_k(=t)$ be an optimal path  from $s$ to $t$ (i.e., $\alpha_B(s,t)=\alpha_B(P)$). If $g_{v_i} < 0$ for every $i\le k$ then we can assume that $P$ is simple (otherwise it contains a positive gain cycle and we can repeat this cycle to improve final charge) and we show in Lemma~\ref{lemma:alpha-simple-B-0} that $\alpha_B(s,t)$ is realized by a funnel in $G^M$. Otherwise, some vertices in $P$ are visited with full charge. Using $H^\star$ from the first step (Theorem~\ref{theorem:B-B-strong-full}), we can find the last vertex $y\in P$ that is reached with full charge and compute $\alpha_B(y,t)$. We claim that the suffix $P^{yt}$ of $P$ from $y$ to $t$ is simple, so (by Lemma~\ref{lemma:alpha-simple-B-0})  $\alpha_B(y,t)$ is realized by a funnel in $G^M$. Thus, the second step amounts to finding pairs $(y,t)$ such that $s^B y^B \in H^\star$ and $\alpha_B(y,t)$ can be realized by a funnel. We use the best such pairs in order to compute $\alpha_B(s,t)$ for every $s,t\in V$.   

The rest of this section is organized as follows. In Appendix~\ref{section:transitive-graph} we build the transitive closure graph~$H^\star$ and prove basic properties of $H^\star$. In Appendix~\ref{section:transitive-correctness} we prove that $H^\star$ indeed finds all $s,t\in V$ such that $\alpha_B(s,t)=B$. Finally, in Appendix~\ref{section:compute-alpha} we complete the computation of $\alpha_B(\cdot,\cdot)$ and prove its correctness as described above. 

\subsection{The transitive closure graph}\label{section:transitive-graph}

\begin{figure}
    \centering
\includegraphics[width=1.00\textwidth]{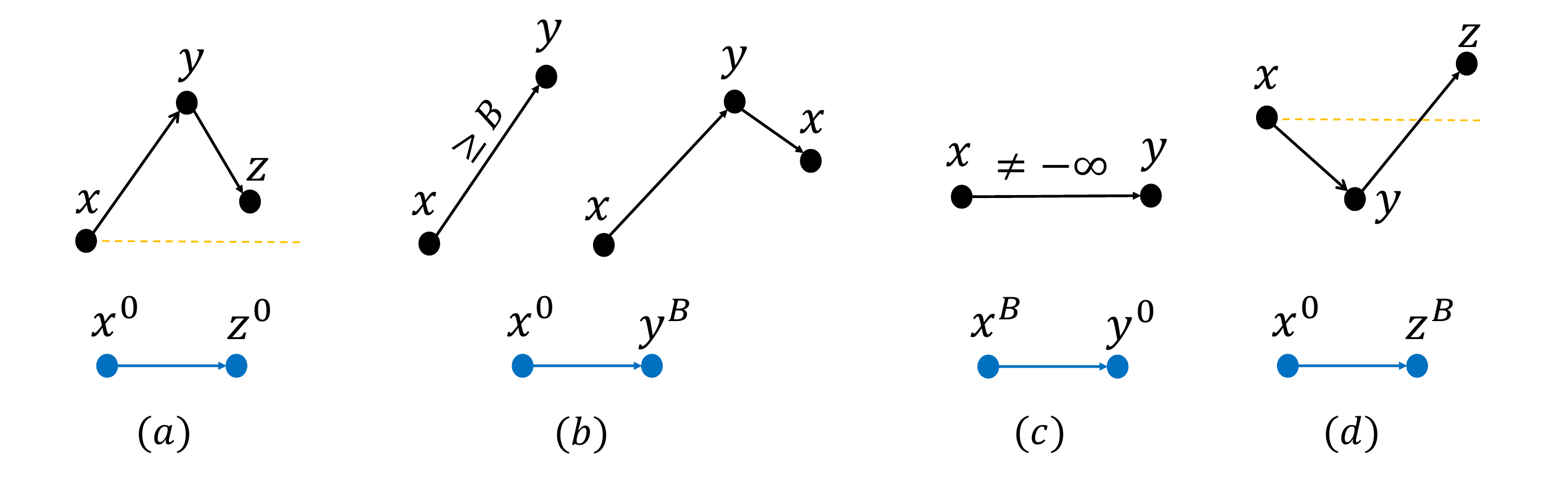}
\caption{The $4$ types of edges we include in $H$.}
\label{fig:H}
\end{figure}

After performing $\ComputeS(G)$ we received a table $M$ of shortcuts and computed the data structure $D=\ComputeF(M)$. Using $M$ and $D$, we construct the graph $H$. 
In the following sections we define the arcs of $H$, see Figure~\ref{fig:H}. After building $H$, we compute its transitive closure graph $H^\star$.

\subsubsection{\texorpdfstring{$0$-$0$}{0-0} arcs}
For every $x,y,z \in V$, add an arc $x^0 z^0$ to $E(H)$ if $M[x][y]+M[y][z]\ge 0$ and $M[x][y]\ge 0$. See Figure~\ref{fig:H}$(a)$.

\begin{lemma}\label{lemma:0-0}
    Let $x^0 z^0 \in E(H)$ , then $\alpha_0(x,z) \ge 0$. 
\end{lemma}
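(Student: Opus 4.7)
The proof plan is to realize the two shortcuts $M[x][y]$ and $M[y][z]$ as concrete monotone paths in the base graph $G$, concatenate them, and verify that the composition can be traversed starting with zero charge.

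First, since $x^0 z^0 \in E(H)$, there exists $y$ with $M[x][y]\ge 0$ and $M[x][y]+M[y][z]\ge 0$. In particular $M[y][z]\ne -\infty$. I will apply Corollary~\ref{cor:M-G-connection}\ref{cor:3} (or equivalently Lemma~\ref{lem:GM_to_G} on the length-one paths $xy$ and $yz$ in $G^M$) to obtain traversable monotone paths $P^{xy}$ and $P^{yz}$ in $G$, with charge drop schedules $C^{xy}$ and $C^{yz}$, such that $g^{C^{xy}}(P^{xy})=M[x][y]$ and $g^{C^{yz}}(P^{yz})=M[y][z]$. Since $M[x][y]\ge 0$ and a descending path has nonpositive gain, $P^{xy}$ must be ascending.

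The next step is to traverse $P^{xy}$ starting at $x$ with zero charge. By Lemma~\ref{lemma:alpha-of-monotone} applied to the ascending path $P^{xy}$, we have
\[
\alpha_0(P^{xy}) \;\ge\; \min\{B,\,g^{C^{xy}}(P^{xy})\} \;=\; \min\{B,\,M[x][y]\}\;\ge\;0,
\]
so the car reaches $y$ with charge exactly $b_y:=\min\{B,M[x][y]\}\ge 0$. It remains to show that $P^{yz}$ can be traversed starting with charge $b_y$.

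The final step splits into cases according to the sign of $M[y][z]$. If $M[y][z]\ge 0$, then $P^{yz}$ is ascending and, by Lemma~\ref{lemma:alpha-of-monotone}, it is strongly traversable, so $\alpha_{b_y}(P^{yz})\ge 0$. If $M[y][z]<0$, then $P^{yz}$ is descending; since $P^{yz}$ is traversable its gain satisfies $M[y][z]\ge -B$. In the sub-case $M[x][y]\ge B$ we have $b_y=B$ and Lemma~\ref{lemma:alpha-of-monotone} (descending case) yields $\alpha_B(P^{yz})\ge B+M[y][z]\ge 0$. In the sub-case $M[x][y]<B$ we have $b_y=M[x][y]$, and the hypothesis $M[x][y]+M[y][z]\ge 0$ gives $M[y][z]\ge -b_y$, so Lemma~\ref{lemma:alpha-of-monotone} again yields $\alpha_{b_y}(P^{yz})\ge b_y+M[y][z]\ge 0$. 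In every case the concatenated path $P^{xy}\mid P^{yz}$ from $x$ to $z$ is traversable starting with zero charge, proving $\alpha_0(x,z)\ge 0$. No real obstacle is expected; the only subtlety is keeping track of charge drop schedules versus true gains, which is handled uniformly by Lemma~\ref{lemma:alpha-of-monotone}.
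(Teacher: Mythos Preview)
Your proof is correct and essentially equivalent to the paper's. The only difference is packaging: the paper observes that the two-arc path $xyz$ in $G^{M}$ is strongly traversable (your case analysis on $b_y$ and the sign of $M[y][z]$ is precisely this verification) and then invokes Lemma~\ref{lem:GM_to_G}(c) once on the whole path to transport $\alpha_0^{G^M}(xyz)\ge 0$ down to $G$, whereas you unwrap each arc separately via Corollary~\ref{cor:M-G-connection}\ref{cor:3} and chain the traversals with Lemma~\ref{lemma:alpha-of-monotone}. One minor imprecision: you write that the car reaches $y$ with charge \emph{exactly} $b_y=\min\{B,M[x][y]\}$, but Lemma~\ref{lemma:alpha-of-monotone} only gives $\alpha_0(P^{xy})\ge b_y$ (since $g(P^{xy})\ge g^{C^{xy}}(P^{xy})$); this is harmless because you only use the lower bound.
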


\begin{proof}
    Let $y\in V$ be such that $M[x][y]+M[y][z]\ge 0$ and $M[x][y]\ge 0$. The path $xyz$ in $G^M$ is strongly traversable. 
    By Lemma~\ref{lem:GM_to_G}, it follows that there is a strongly traversable path from~$x$ to~$z$ in~$G$.
\end{proof}

\subsubsection{\texorpdfstring{$0$-$B$}{0-B} arcs}
For every $x,y\in V$, add an arc $x^0 y^B$ to $E(H)$ if either $M[x][y]\ge B$, or $M[x][y]+M[y][x]>0$ and $M[x][y]>0$. See Figure~\ref{fig:H}$(b)$.

\begin{lemma}\label{lemma:0-B}
    Let $x^0 y^B\in E(H)$, then $\alpha_0(x,y) = B$.
\end{lemma}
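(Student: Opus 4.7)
\textbf{Proof plan for Lemma~\ref{lemma:0-B}.} The arc $x^0 y^B$ is added to $H$ for one of two reasons, and I will handle each separately. First I would dispose of the easy case $M[x][y] \ge B$. By Corollary~\ref{cor:M-G-connection}\ref{cor:3}, there exists a traversable path $P$ in $G$ and a charge drop schedule $C$ with $P$ monotone with respect to $C$ and $g^C(P) = M[x][y] \ge B > 0$. Since a descending path has nonpositive gain (Definition~\ref{def:monotone-path}), $P$ must be ascending with respect to $C$, hence also with respect to the zero schedule by Lemma~\ref{lem:ascending_no_CDS}. Applying Lemma~\ref{lemma:alpha-of-monotone} with $b=0$ then gives $\alpha_0(P) = \min\{B, g(P)\} \ge \min\{B, g^C(P)\} = B$, so $\alpha_0(x,y) = B$.

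For the harder case, set $a = M[x][y] > 0$ and $b = M[y][x]$, where $a + b > 0$. By Corollary~\ref{cor:M-G-connection}\ref{cor:3} pick a traversable ascending path $P_1$ from $x$ to $y$ with $g^{C_1}(P_1) = a$, and a traversable monotone path $P_2$ from $y$ to $x$ with $g^{C_2}(P_2) = b$. If $a \ge B$ we are done by the first case, so assume $a < B$. The strategy is to argue that the concatenation $(P_1 P_2)^k P_1$, traversed starting at $x$ with zero charge, reaches $y$ with charge $B$ once $k$ is large enough.

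To make this precise I would define $c_0 = 0$ and let $c_{k+1}$ be the charge at $x$ after starting at $x$ with charge $c_k$ and traversing $P_1 P_2$. By Lemma~\ref{lemma:alpha-of-monotone} applied to the ascending $P_1$, the charge at $y$ after $P_1$ is at least $\min\{B, c_k + a\}$. If this value already equals $B$, one more traversal of $P_1$ from a previous iteration would have delivered $B$ charge at $y$, and we are done; otherwise the charge at $y$ is exactly $c_k + a$. Traversing $P_2$: if $b \ge 0$ then $P_2$ is ascending and we reach $x$ with charge $\min\{B, c_k + a + b\} \ge c_k + a + b$; if $b < 0$ then $P_2$ is descending, and since $c_k + a \ge a > -b$ (the latter because $a + b > 0$), Lemma~\ref{lemma:alpha-of-monotone} guarantees $P_2$ is traversable and we reach $x$ with charge $\ge c_k + a + b$. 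Either way, $c_{k+1} \ge c_k + (a+b)$ as long as we did not already cap at $B$. Since $a+b > 0$, after finitely many round trips we obtain $c_k \ge B - a$; then one final traversal of $P_1$ from $x$ with charge $c_k$ reaches $y$ with charge $\min\{B, c_k + a\} = B$, giving $\alpha_0(x,y) \ge B$. Combined with the trivial bound $\alpha_0(x,y) \le B$, this yields equality.

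\textbf{Main obstacle.} The only real bookkeeping is the second case, specifically verifying traversability of $P_2$ (when $b<0$) at every round trip, and confirming that each round trip strictly increases the charge at $x$ until the cap $B - a$ is reached. The key identities are $a > -b$ from $a+b > 0$ (so $P_2$ is traversable starting from the charge $c_k + a \ge a$ at $y$) and the fact that the per-round-trip gain $a + b$ is strictly positive, so the monotone sequence $c_k$ must eventually surpass $B - a$.
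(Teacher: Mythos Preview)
Your argument is correct. It differs from the paper's proof in where the iteration lives: the paper stays in $G^M$, observes that the path $x(yx)^j y$ is ascending in $G^M$ (an easy prefix-sum check using $M[x][y]>0$ and $M[x][y]+M[y][x]>0$), picks $j$ large enough that its gain exceeds $B$, and then applies Lemma~\ref{lem:GM_to_G} \emph{once} to transfer the conclusion $\alpha_0^{G^M}=B$ down to $G$. You instead unwrap each shortcut to a path in $G$ first and track charges through repeated traversals of $P_1 P_2$. The paper's route is shorter (no case split on $\mathrm{sign}(b)$, no inductive charge bookkeeping), but yours has the virtue of working entirely in $G$ without the $G^M$ detour.

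Two small points. First, the right reference for extracting $P_1,P_2$ is Lemma~\ref{lem:GM_to_G} applied to the single-arc paths $xy$ and $yx$ in $G^M$; Corollary~\ref{cor:M-G-connection}\ref{cor:3} is stated for $D[x][y]$ rather than $M[x][y]$ (the conclusion you want does hold, since after $\ComputeF(M)$ one has $D[x][y]=M[x][y]$, but the cleaner citation is Lemma~\ref{lem:GM_to_G}). Second, when you write ``the charge at $y$ is exactly $c_k+a$'', it is actually $c_k+g(P_1)\ge c_k+a$, since the charge-drop schedule $C_1$ may make $g(P_1)>g^{C_1}(P_1)=a$. This does not affect the argument (more charge only helps), but the word ``exactly'' should be ``at least''.
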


\begin{proof}
    If $M[x][y] \ge B$ then by Lemma~\ref{lem:GM_to_G} there is a path from $x$ to $y$ in $G$ that satisfies $\alpha^G_0(P) \ge \alpha^{G^M}_0(xy)=B$.

    Assume $M[x][y] + M[y][x] >0$ and $M[x][y] > 0$. Note that the path $xyxy$ in $G^M$ is a strongly traversable ascending path from $x$ to $y$ of gain at strictly larger than $M[x][y]$. By extending this argument, it follows that for every $j>0$, $P=x(yx)^j y$ is strongly traversable ascending path from $x$ to $y$. Thus, there exists a $j>0$ such that $g^{G^M}(P)\ge B$. By Lemma~\ref{lemma:alpha-of-monotone}, we get that $\alpha^{G^M}_0(P)= B$. By Lemma~\ref{lem:GM_to_G}, there is also a path $P'$ from $x$ to $y$ in $G$ that satisfies $\alpha^G_0(P)=B$.
\end{proof}

\subsubsection{\texorpdfstring{$B$-$0$}{B-0} arcs}
For every $x,y\in V$, add an arc $x^B y^0$ to $E(H)$ if $M[x][y] \neq -\infty$. See Figure~\ref{fig:H}$(c)$.

\begin{lemma}\label{lemma:B-0}
    Let $x^B y^0 \in E(H)$, then $\alpha_B(x,y)\ge 0$. 
\end{lemma}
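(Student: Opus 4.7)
The plan is very short: the arc $x^B y^0$ is added precisely when $M[x][y] \neq -\infty$, so I need to exhibit a path $P$ in $G$ from $x$ to $y$ with $\alpha_B(P) \ge 0$. This is essentially an unwrapping argument: the stored value $M[x][y]$ must correspond to a monotone (hence traversable) path in $G$.

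First, I would view the single arc $xy$ as a length-one traversable path in $G^M$ of gain $M[x][y] \ge -B$ (traversability follows because, by Invariant~\ref{invariant}\ref{1c} applied to the $D$ of the last iteration of $\ComputeS$, every entry of $M$ is either $-\infty$ or at least $-B$). Then I would apply Lemma~\ref{lem:GM_to_G} to this length-one path with the zero charge drop schedule. The lemma yields a monotone path $P'$ in $G$ from $x$ to $y$ (with respect to some charge drop schedule $C'$) satisfying $\alpha^{G}_B(P') \ge \alpha^{G^M}_B(xy)$; in particular $P'$ is traversable.

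Finally I would invoke Lemma~\ref{lemma:alpha-of-monotone} on the monotone path $P'$: whether $P'$ is ascending or descending with respect to $C'$, the lemma gives $\alpha_B(P') \ge 0$ (the descending case is the only one requiring a remark, namely that $g^{C'}(P') \ge -B$ since $P'$ is traversable, so the hypothesis $g^{C'}(P') \ge -B$ of the descending clause is met with $b=B$). Hence $\alpha_B(x,y) \ge \alpha_B(P') \ge 0$, which is what we wanted.

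The argument is almost purely bookkeeping: the only subtlety I anticipate is verifying that the invariants established for the in-iteration data structure $D$ transfer correctly to the final shortcut table $M$, so that the single arc $xy$ in $G^M$ can legitimately be fed into Lemma~\ref{lem:GM_to_G}. This is immediate from the construction of $\ComputeS$, where $M$ is updated only via values placed in $D$ under Invariant~\ref{invariant}.
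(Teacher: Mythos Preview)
Your proposal is correct and follows essentially the same approach as the paper: observe $M[x][y]\ge -B$, treat the single arc $xy$ as a traversable path in $G^M$, and apply Lemma~\ref{lem:GM_to_G}. The paper's proof stops there, since part~(c) of Lemma~\ref{lem:GM_to_G} already gives $\alpha^G_B(P')\ge \alpha^{G^M}_B(xy)\ge 0$ directly; your third step invoking Lemma~\ref{lemma:alpha-of-monotone} is not wrong but is redundant.
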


\begin{proof}
    By the design of $\ComputeS$, we have $M[x][y] \ge -B$. The proof follows by applying Lemma~\ref{lem:GM_to_G} on the traversable path $xy$ in $G^M$.
\end{proof}

\subsubsection{\texorpdfstring{$B$-$B$}{B-B} arcs}
For every $x,y,z\in V$, add an arc $x^B z^B$ to $E(H)$ if $M[x][y]+M[y][z]\ge 0$ and $M[y][z]\ge 0$. See Figure~\ref{fig:H}$(d)$. 

\begin{lemma}\label{lemma:B-B}
    Let $x^B z^B \in E(H)$, then  $\alpha_B(x,z) = B$. 
\end{lemma}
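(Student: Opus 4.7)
The plan is to reduce the statement, via Lemma~\ref{lem:GM_to_G}, to the simpler claim that the length-two path $xyz$ in $G^M$ is traversable starting with full battery and reaches $z$ with full battery, where $y$ is a witness of the arc $x^B z^B \in E(H)$, i.e., a vertex satisfying $M[x][y]+M[y][z]\ge 0$ and $M[y][z]\ge 0$.

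First I would observe that, since $M[x][y]+M[y][z]\ge 0>-\infty$, the entry $M[x][y]$ is finite; by the invariant maintained by $\ComputeS$ that every finite $M$-entry is at least $-B$ (each such entry corresponds to a monotone, hence traversable, path, whose gain is at least $-B$), we have $M[x][y]\ge -B$, and likewise $M[y][z]\ge 0\ge -B$. Next I would compute $\alpha^{G^M}_B(xyz)$ directly using the zero charge-drop schedule. Starting at $x$ with charge $B$, we arrive at $y$ with charge $\min\{B,\,B+M[x][y]\}$, which equals $B$ if $M[x][y]\ge 0$ and equals $B+M[x][y]\ge 0$ otherwise. Continuing along the arc $yz$ with gain $M[y][z]\ge 0$, the final charge at $z$ is $\min\{B,\,\min\{B,\,B+M[x][y]\}+M[y][z]\}$. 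If $M[x][y]\ge 0$, this is $\min\{B,\,B+M[y][z]\}=B$ since $M[y][z]\ge 0$; if $M[x][y]<0$, this is $\min\{B,\,B+M[x][y]+M[y][z]\}=B$ since $M[x][y]+M[y][z]\ge 0$. Hence $\alpha^{G^M}_B(xyz)=B$.

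Finally, I would apply Lemma~\ref{lem:GM_to_G} to the path $P=xyz$ in $G^M$ together with the zero schedule: the lemma produces a traversable path $P'$ in $G$ from $x$ to $z$ satisfying $\alpha^G_B(P')\ge \alpha^{G^M}_B(xyz)=B$. Since $\alpha_B$ is always upper bounded by $B$, this yields $\alpha_B(x,z)=B$, as required.

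There is no serious obstacle; the argument is a direct analogue of the $B$-$0$ and $0$-$0$ arc lemmas (Lemmas~\ref{lemma:B-0} and~\ref{lemma:0-0}), and the only point that requires a little care is the elementary case split on the sign of $M[x][y]$ used to verify $\alpha^{G^M}_B(xyz)=B$.
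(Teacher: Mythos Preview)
Your proposal is correct and follows essentially the same approach as the paper: both argue that $M[x][y]\ge -B$, deduce $\alpha^{G^M}_B(xyz)=B$, and then invoke Lemma~\ref{lem:GM_to_G} to transfer this to $G$. Your version merely spells out the elementary case split on the sign of $M[x][y]$ that the paper leaves implicit in its one-line claim ``therefore $\alpha^{G^M}_B(xyz)=B$.''
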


\begin{proof}
    By the definition of $\ComputeS$,  $M[x][y]\ge -B$ and therefore $\alpha^{G^M}_B(xyz)=B$. Therefore, by Lemma~\ref{lem:GM_to_G}, there is a path $P$ from $x$ to $z$ in $G$ that satisfies $\alpha^G_B(P)=B$.
\end{proof}

The following theorem is an immediate consequence of Lemmas~\ref{lemma:0-0}, \ref{lemma:0-B}, \ref{lemma:B-0} and \ref{lemma:B-B}.

\begin{theorem}\label{theorem:trasitive-graph}
    Let $x^{b_1} y^{b_2} \in E(H^\star)$,
    then  $\alpha_{b_1}(x,y)\ge b_2$.
\end{theorem}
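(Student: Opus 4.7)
The plan is to proceed by induction on the length of a directed path in $H$ from $x^{b_1}$ to $y^{b_2}$, using Lemmas~\ref{lemma:0-0}, \ref{lemma:0-B}, \ref{lemma:B-0}, \ref{lemma:B-B} as the base case. Note that since $H$ has vertex set $V^0 \cup V^B$, any charge level appearing along such a path is either $0$ or $B$.

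For the \emph{base case} (path of length one, i.e., a single arc of $E(H)$), the four possibilities $(b_1,b_2) \in \{0,B\}^2$ are precisely the four arc types of $H$, and each is handled by the corresponding lemma cited above. So for any arc $x^{b_1} y^{b_2} \in E(H)$ we already have $\alpha_{b_1}(x,y) \ge b_2$.

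For the \emph{inductive step}, suppose $x^{b_1} y^{b_2} \in E(H^\star)$ is witnessed by a path $x^{b_1} = u_0^{c_0} \, u_1^{c_1} \cdots u_k^{c_k} = y^{b_2}$ in $H$, with $k \ge 2$. By the inductive hypothesis applied to $u_0^{c_0} \cdots u_{k-1}^{c_{k-1}}$ and to the single arc $u_{k-1}^{c_{k-1}} u_k^{c_k}$, we have $\alpha_{b_1}(x, u_{k-1}) \ge c_{k-1}$ and $\alpha_{c_{k-1}}(u_{k-1}, y) \ge b_2$. The inductive step then reduces to the following \emph{composition property}: if $\alpha_a(u,v) \ge c$ and $\alpha_c(v,w) \ge d$ then $\alpha_a(u,w) \ge d$. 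To prove this, let $P$ be a $u$-to-$v$ path realizing $\alpha_a(P) = \alpha_a(u,v) \ge c$ and let $Q$ be a $v$-to-$w$ path realizing $\alpha_c(Q) \ge d$. Traverse $P \mid Q$ starting at $u$ with charge $a$: we arrive at $v$ with some charge $c' \ge c$. By monotonicity of $\alpha_{(\cdot)}(Q)$ in the initial charge (starting with more charge keeps the path traversable and yields a final charge at least as large, since both the lower bound $0$ and the cap $B$ are respected at least as easily), we have $\alpha_{c'}(Q) \ge \alpha_c(Q) \ge d$, so we reach $w$ with charge at least $d$. Hence $\alpha_a(u,w) \ge \alpha_a(P\mid Q) \ge d$, completing the induction.

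The only substantive point is the monotonicity of $\alpha_b(P)$ in $b$ for a fixed path $P$, which is essentially immediate from the definition: replaying the traversal with a larger starting charge never causes the level to drop below~$0$ (if it did not with the smaller charge), and the final level at every vertex is at least as large, capped at $B$. So the main obstacle is not really an obstacle at all; it is just the bookkeeping of verifying that composition along a path in $H$ preserves the intended semantics $\alpha_{b_1}(x,y) \ge b_2$ of arcs of $H$.
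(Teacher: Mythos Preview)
Your proof is correct and matches the paper's approach: the paper states the theorem as an ``immediate consequence'' of Lemmas~\ref{lemma:0-0}--\ref{lemma:B-B}, and you have simply spelled out the routine induction on path length in $H$ together with the monotonicity-based composition step that the paper leaves implicit.
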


\subsection{Transitive closure graph - correctness}\label{section:transitive-correctness}
In this appendix we show that for every $s,t\in V$ it holds that $\alpha_B(s,t)=B$ if and only if $s^B t^B \in E(H^\star)$, see Theorem~\ref{theorem:B-B-strong}. We begin by addressing entry-exit pairs on positive gain cycles, see Definition~\ref{D-entry-exit}.

\begin{lemma}\label{lemma:special-entry-exit}
    Let $C$ be a positive gain simple cycle in $G$. There exists an entry-exit pair $(x,y)$ in $C$ such that w.h.p.\ $x^0 y^B \in E(H)$.
\end{lemma}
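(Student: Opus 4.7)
The plan is to identify $(x,y)$ as the pair maximizing the cyclic gap of the prefix sums of $C$, and then apply Theorem~\ref{theorem:shortcut} to transfer the required gain into the shortcut table $M$. Let $u_0,u_1,\dots,u_{k-1},u_0$ be the vertex sequence of $C$, let $F_0=0$ and $F_i=g(u_0u_1)+\cdots+g(u_{i-1}u_i)$, and extend $F$ periodically by $F_{t+k}=F_t+g(C)$. I will choose $(a^*,b^*)$ with $a^*\in[0,k{-}1]$ and $b^*\in[a^*,a^*{+}k]$ maximizing $F_{b^*}-F_{a^*}$; since every pair $(a,a{+}k)$ already has gain $g(C)>0$, the optimum is at least $g(C)$. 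Put $x=u_{a^*}$, $y=u_{b^*\bmod k}$, and let $C^{xy}$ be the forward path along $C$ of length $b^*-a^*$, with $C^{yx}$ the complementary path of length $k-(b^*-a^*)$ back to $x$.

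The main technical step is to show that $C^{xy}$ is ascending and $C^{yx}$ is descending with respect to the zero charge-drop schedule. Writing $F'_t=F_{a^*+t}-F_{a^*}$, this amounts to $0\le F'_t\le F'_{b^*-a^*}$ on $t\in[0,b^*-a^*]$ and $g(C)\le F'_t\le F'_{b^*-a^*}$ on $t\in[b^*-a^*,k]$. Each of the four inequalities is verified by contradiction: any violation produces a pair with strictly larger gap than $F_{b^*}-F_{a^*}$, contradicting the optimality of $(a^*,b^*)$. For instance, $F'_t<0$ means $F_{a^*+t}<F_{a^*}$ and is refuted by the pair $(a^*+t,b^*)$ when $a^*+t<k$, or by the shifted pair $(a^*+t-k,b^*-k)$ when $a^*+t\ge k$; a violation $F'_t<g(C)$ for $t\in(b^*-a^*,k)$ is refuted by $(a^*+t,b^*+k)$ or $(a^*+t-k,b^*)$; and both upper bounds $F'_t\le F'_{b^*-a^*}$ are refuted by the pair $(a^*,a^*+t)$. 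Once this structure is in place, $(x,y)$ is automatically an entry-exit pair: starting at $x$ with empty battery, the ascending $C^{xy}$ reaches $y$ with charge $\min(B,g(C^{xy}))$ by Lemma~\ref{lemma:alpha-of-monotone}, and any shortfall from $B$ is closed by repeatedly traversing the full cycle, which adds $g(C)>0$ on each revolution.

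Finally, when $x\ne y$ (i.e.\ $1\le b^*-a^*<k$), the two paths $C^{xy}$ and $C^{yx}$ are simple monotone, so Theorem~\ref{theorem:shortcut} gives, w.h.p., $M[x][y]\ge g(C^{xy})$ and $M[y][x]\ge g(C^{yx})$. Either $g(C^{xy})\ge B$, in which case $M[x][y]\ge B$ and the first disjunct of the $0$-$B$ arc definition applies, or $g(C^{xy})<B$, in which case $M[x][y]+M[y][x]\ge g(C)>0$ together with $M[x][y]\ge g(C^{xy})>0$ (strict positivity because $g(C^{xy})=0$ would collapse $C^{xy}$ to the empty path, contradicting $b^*-a^*\ge 1$) triggers the second disjunct; either way $x^0y^B\in E(H)$. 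The degenerate case $b^*-a^*=k$ forces $x=y$ and arises only when the cyclic amplitude is strictly less than $g(C)$: here the full cycle is an ascending closed walk at $x$, and the Case~1 ascending $3$-shortcut rule of $\Simple$ combined with the iterated build-up of $M$ across the many iterations of $\UpdateS$ yields $M[x][x]\ge g(C)>0$ w.h.p., so the second disjunct again gives $x^0x^B\in E(H)$. The principal obstacle is the four-inequality case analysis above, specifically the modular bookkeeping needed to keep every competing pair inside the valid index range $[0,k{-}1]\times[a,a{+}k]$; the $x=y$ corner case is a secondary subtlety that leans on the algorithm's implicit handling of monotone loops.
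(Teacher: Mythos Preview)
Your proposal has a genuine gap: you verify that $C^{xy}$ has the correct prefix-sum \emph{shape} (all partial sums in $[0,g(C^{xy})]$), but this is not the same as $C^{xy}$ being \emph{ascending}. Definition~\ref{def:monotone-path} requires a monotone path to be \emph{traversable}, and when $g(C^{xy})>B$ the path may contain a subpath of gain below $-B$. In that situation Theorem~\ref{theorem:shortcut} does not apply, your appeal to Lemma~\ref{lemma:alpha-of-monotone} is unjustified, and $(x,y)$ need not be an entry-exit pair at all.

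Concretely, take $B=10$ and a cycle on $u_0,\dots,u_7$ with arc gains $8,8,-7,-7,10,8,-10,-9$ (total gain $1$, all arcs in $[-B,B]$). Your maximizing pair is $(a^*,b^*)=(0,6)$ with gap $20$, so $C^{xy}=u_0\cdots u_6$ has prefix sums $0,8,16,9,2,12,20$ --- the right shape --- but the subpath $u_2u_3u_4$ has gain $-14<-B$, so $C^{xy}$ is not traversable and hence not ascending. Moreover $(u_0,u_6)$ is not an entry-exit pair: starting at $u_0$ with zero charge the car caps at $B=10$ at $u_2$, drops to $3$ at $u_3$, and runs out before reaching $u_4$. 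The cycle \emph{does} have the entry-exit pair $(u_0,u_2)$, which is exactly what the paper's proof locates.

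The paper sidesteps this by anchoring the construction at an existing entry-exit pair $(x',y')$ and then using either the strong traversability of $C$ from $x'$ (which guarantees every relevant subpath has gain $\ge -B$) or, when that fails, the fact that $\alpha_0(P^{x'y'})=B$ already forces $P^{x'y'}$ to be ascending of gain $\ge B$. Your pure prefix-sum maximization ignores the battery capacity entirely, and that is where it breaks. (The $x=y$ case is also hand-wavy --- Theorem~\ref{theorem:shortcut} is stated only for simple paths --- but that is a secondary issue.)
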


\begin{proof}
    Let $(x',y')$ be an entry-exit of $C$. If $C$ is not strongly traversable from $x'$  then for every $y\in C$ such that $(x',y)$ is an entry-exit pair, it follows from the definition of an entry-exit pair that the simple path $P^{x'y}$ from $x'$ to $y$ through $C$ satisfies $\alpha_0(P^{x'y})=B$ and therefore, by Lemma~\ref{lemma:observation-monotone}, $P^{x'y}$ is ascending and $g(P^{x'y})\ge B$. Therefore, by Theorem~\ref{theorem:shortcut} it holds that w.h.p.\ $M[x'][y] \ge B$, so by definition, $x'^0 y^B \in E(H)$.
    
    Assume that $C$ is strongly traversable from $x'$ and consider the path $P$ from $x'$ to itself through $C$. Let $y\in P$ be the vertex of maximum gain on $P$. Observe that the path from $x'$ to $y$ on $C$ is ascending. Indeed the charge level cannot go below the initial charge at~$x'$ (which is zero) and the charge level at $y$ is maximum. Thus, by Theorem~\ref{theorem:shortcut} it holds w.h.p.\ that $M[x'][y] > 0$. If $y=x'$ then $M[x'][y]+M[y][x'] > 0 $ and therefore, by the definition of $E(H)$, $x'^0 y^B \in E(H)$. By Theorem~\ref{theorem:trasitive-graph} this means that $(x',y)=(x',x')$ is an entry-exit pair and we are done. 
    
    Otherwise, consider $P^{yx'}$, the simple path from $y$ to $x'$ through $C$, and let $x$ be the vertex of minimum gain in $P^{yx'}$, see Figure~\ref{fig:entry-exit-cycle}. By the choice of $x$, $P^{yx}$, the path from $y$ to $x$ through $C$, is descending. We now show that $P^{xy}=P^{xx'}|P^{x'y}$ is ascending. Since $x$ is of minimum gain in $P^{yx'}$, it follows that the gains of the vertices on $P^{xx'}$ are nonnegative. Moreover, since $P^{x'y}$ is ascending it follows that all gains on $P^{xy}$ are nonnegative. We are left to show that $y$ has maximum gain in $P^{xy}$. Since $P^{x'y}$ is ascending, it is enough to show that $(g^{P^{xx'}}_v = ) g^{P^{xy}}_v \le g^{P^{xy}}_{y}$ for every $v \in P^{xx'}$. Let $b = g^{P^{x'y}}_{y}$, it follows that $g^{P^{xy}}_{y} = g^{P^{xx'}}_{x'}+g^{P^{x'y}}_{y} \ge b$. We prove that $g^{P^{xx'}}_v \le b$ for every $v \in P^{xx'}$.
    By contradiction, assume there is $v\in P^{xx'}$ such that $g^{P^{xx'}}_v > b$. Since all gains of vertices in $P$ are nonnegative we get  that $g^{P}_v = g^{P}_{x} + g^{P^{xx'}}_v > b = g^{P}_{y}$, a contradiction to the definition of $y$.  
    
    By Theorem~\ref{theorem:shortcut}, w.h.p.\ $M[x][y]\ge g^{P^{xy}}_{y}$ and $M[y][x]\ge g^{P^{yx}}_{x}$. Thus,  $M[x][y]+M[y][x]\ge g^{P^{xy}}_{y}  + g^{P^{yx}}_{x} = g(C) > 0$, so by the definition of $H$, we get that ${x}^0 {y}^B \in E(H)$, so by Lemma~\ref{lemma:0-B}, $(x,y)$ is an entry-exit pair of $C$. 
\end{proof}

\begin{figure}
    \centering
\includegraphics[width=1.00\textwidth]{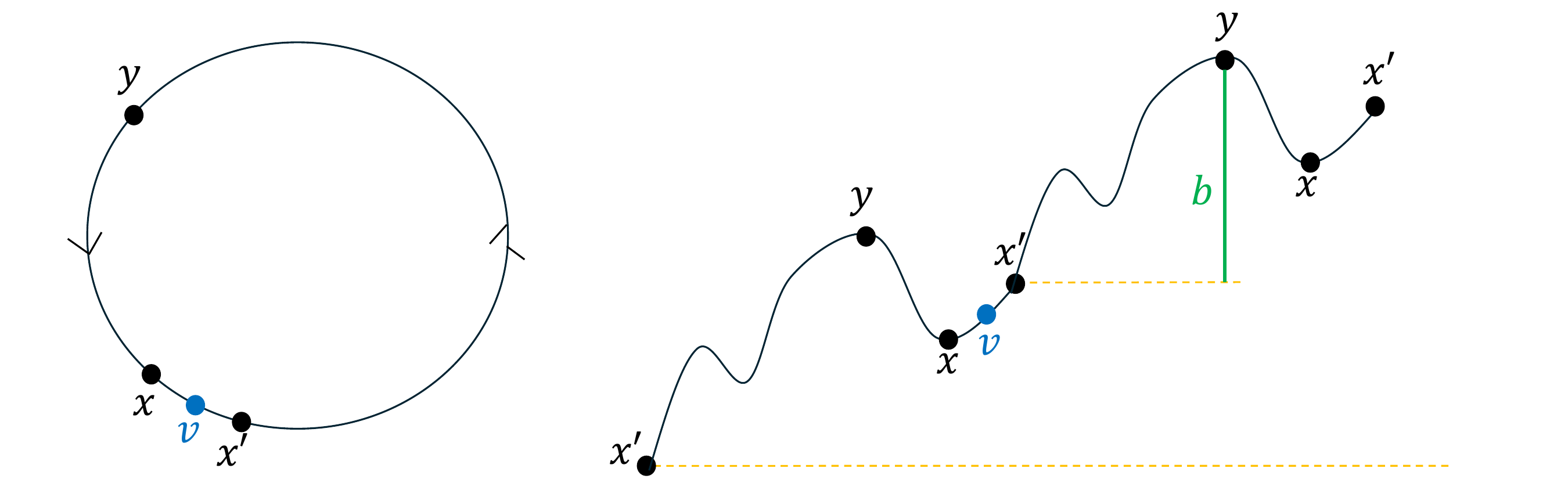}
\caption{Illustration of Lemma~\ref{lemma:special-entry-exit}. Note that $y$ is of maximum gain in the path from $x'$ to itself (through the cycle) and that $x$ is of minimum gain on the subpath from $y$ to $x'$. 
As shown in the proof of Lemma~\ref{lemma:special-entry-exit}, the path from $y$ to $x$ is descending and 
the path from $x$ to $y$ is ascending.}
\label{fig:entry-exit-cycle}
\end{figure}

\begin{lemma}\label{lemma:0-0-strong}
    Let $P$ be a strongly traversable simple path in $G$ from $x$ to $y$, then w.h.p.\ $x^0 y^0 \in E(H^\star)$.
\end{lemma}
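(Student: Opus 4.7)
The plan is to decompose $P$ into a sequence of simple monotone subpaths, dominate each of them by a shortcut in $M$ via Theorem~\ref{theorem:shortcut}, and then compose the resulting shortcuts into a chain of $0$-$0$ arcs in $H$. Write $P=v_1\ldots v_k$ with $v_1=x$ and $v_k=y$, and let $g_i=\sum_{j<i}g(v_j v_{j+1})$ denote the raw cumulative gain at $v_i$. Since $P$ is strongly traversable, Lemma~\ref{lem:alpha_general} (with $b=0$) yields $g_i\ge 0$ for every $i$ and $g_{j_2}-g_{j_1}\ge -B$ for every $j_1\le j_2$.

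The decomposition is the alternating suffix-extremum one: set $v_{i_1}=v_1$, and for $j\ge 1$ let $v_{i_{2j}}$ be the earliest vertex in the suffix starting at $v_{i_{2j-1}}$ maximizing $g_\cdot$, and $v_{i_{2j+1}}$ the earliest vertex in the suffix starting at $v_{i_{2j}}$ minimizing $g_\cdot$, halting when $v_{i_r}=v_k$. By the suffix-extremum property, each segment $v_{i_j}\ldots v_{i_{j+1}}$ is a simple subpath of $P$ that is ascending for odd $j$ and descending for even $j$ in the sense of Definition~\ref{def:monotone-path}, with raw gain $g_{i_{j+1}}-g_{i_j}$; traversability of the descending segments follows from $g_{i_{j+1}}-g_{i_j}\ge -B$. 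In the degenerate case where a suffix extremum coincides with its segment start, $g_\cdot$ must be constant on that suffix, so the remaining arcs all have gain $0$ and are handled by chaining trivial $0$-$0$ arcs arc-by-arc. Invoking Theorem~\ref{theorem:shortcut} on each segment and union-bounding over the $O(n)$ segments, w.h.p.\ $M[v_{i_j}][v_{i_{j+1}}]\ge g_{i_{j+1}}-g_{i_j}$ for all $j$.

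Next, for each odd $j$ with $j+2\le r$ I will exhibit the $0$-$0$ arc $v_{i_j}^0 v_{i_{j+2}}^0\in E(H)$ using $v_{i_{j+1}}$ as the witness: the first defining condition $M[v_{i_j}][v_{i_{j+1}}]\ge 0$ holds because $g_{i_{j+1}}$ is the suffix maximum, and $M[v_{i_j}][v_{i_{j+1}}]+M[v_{i_{j+1}}][v_{i_{j+2}}]\ge g_{i_{j+2}}-g_{i_j}\ge 0$ holds because $g_{i_{j+2}}$ is the minimum of $g_\cdot$ on a suffix contained in the one defining $g_{i_j}$ (with $g_{i_1}=0$ for the base case). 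If $r$ is even, the terminal segment is ascending and I add the arc $v_{i_{r-1}}^0 v_{i_r}^0\in E(H)$ with $y=v_{i_r}$ itself as witness, using $M[v_{i_{r-1}}][v_{i_r}]\ge g_{i_r}-g_{i_{r-1}}\ge 0$ and $M[v_{i_r}][v_{i_r}]=0$. Composing these $0$-$0$ arcs by transitivity yields $x^0 y^0=v_{i_1}^0 v_{i_r}^0\in E(H^\star)$.

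The main subtlety is that the decomposition must be performed with respect to the raw gain $g_i$ rather than the clipped battery level along $P$, because Definition~\ref{def:monotone-path} is stated in terms of raw gains; if one used the clipped charge, a segment could fail to be monotone precisely when $P$ physically reaches the battery cap $B$ during traversal. Working with $g_i$ throughout is safe because the strong-traversability hypothesis delivers exactly the two inequalities $g_i\ge 0$ and $g_{j_2}-g_{j_1}\ge -B$ needed to make each segment a traversable monotone path in $G$, to which Theorem~\ref{theorem:shortcut} applies.
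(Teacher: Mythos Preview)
Your proof is correct and follows essentially the same approach as the paper: the alternating suffix-extremum decomposition into monotone segments, domination of each segment via Theorem~\ref{theorem:shortcut}, pairing an ascending segment with the next descending one to produce a $0$--$0$ arc in $H$ (with the witness being the intermediate max), and handling the final ascending tail with the self-loop shortcut $M[v_{i_r}][v_{i_r}]=0$. Your explicit treatment of the degenerate constant-gain case and the remark distinguishing raw cumulative gain from the clipped battery level add rigor that the paper leaves implicit, but the argument is otherwise identical.
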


\begin{proof}
    Denote $P= v_1\ldots v_k$ where $v_1=x$ and $v_2 = y$. Since $P$ is strongly traversable, $v_1$ has minimum gain in $P$. We decompose $P$ into monotone segments as follows, see Figure~\ref{fig:path-zig-zag-decomposition}. Let $i_1=1$ and let $i_1 < i_2 \le k$ be such that $v_{i_2}$ has maximum gain in $v_{i_1}\ldots v_k$. In particular, $v_{i_1} \ldots v_{i_2}$ is ascending.
    Let $i_2 < i_3 \le k$ be such that $v_{i_3}$ has the minimum gain in 
    $v_{i_2} \ldots v_k$. In particular, 
    $v_{i_2} \ldots v_{i_3}$ is descending. 
    In general, let $  i_{j-1} < i_j \le k$ be such that  $v_{i_{j-1}} \ldots v_{i_j}$ is ascending if $j$ is even and descending otherwise. Let $1=i_1,\ldots i_t=k$ be the indices we defined.
    
    We prove that 
    $g(v_{i_{2j-1}}\ldots v_{i_{2j+1}}) \ge 0$ for every $1\le j < t/2$. Indeed, if $j=1$, then since $P$ is strongly traversable, we get that $g(v_{i_{1}}\ldots v_{i_{3}}) \ge 0$. Let $1< j < t/2$. By the definition of $v_{i_{2j-1}}$, we get that $g(v_{i_{2j-2}}\ldots v_{i_{2j-1}}) \le g(v_{i_{2j-2}}\ldots v_{i_{2j+1}}) $ and therefore $g(v_{i_{2j-1}}\ldots v_{i_{2j+1}}) \ge 0$.
    
    By Theorem~\ref{theorem:shortcut}, for every $1\le j < t/2$, w.h.p.\ it holds that 
    $$M[v_{i_{2j-1}}][v_{i_{2j}}] + M[v_{i_{2j}}][v_{i_{2j+1}}] \ge
    g(v_{i_{2j-1}}\ldots v_{i_{2j}}) + g(v_{i_{2j}}\ldots v_{i_{2j+1}})
    =
    g(v_{i_{2j-1}}\ldots v_{i_{2j+1}}) \ge 0.$$
    Thus, by the definition of $E(H)$, for every $1\le j<t/2$ it holds that $v_{i_{2j-1}}^0 v_{i_{2j+1}}^0 \in  E(H)$. As for the last piece of $P$, if $t$ is even then $M[v_{i_{t-1}}][v_{i_t}] \ge 0$
    and $M[v_{i_{t-1}}][v_{i_t}] + M[v_{i_{t}}][v_{i_t}] \ge 0$ and therefore by the definition of $E(H)$, $v_{i_{t-1}}^0 v_{i_t}^0 \in E(H)$. 
    
    We conclude that since $H^\star$ is transitively closed, $x^0 y^0 = v^0_{i_1} v^0_{i_t} \in E(H^\star)$.
\end{proof}

\begin{figure}
    \centering
\includegraphics[width=1\textwidth]{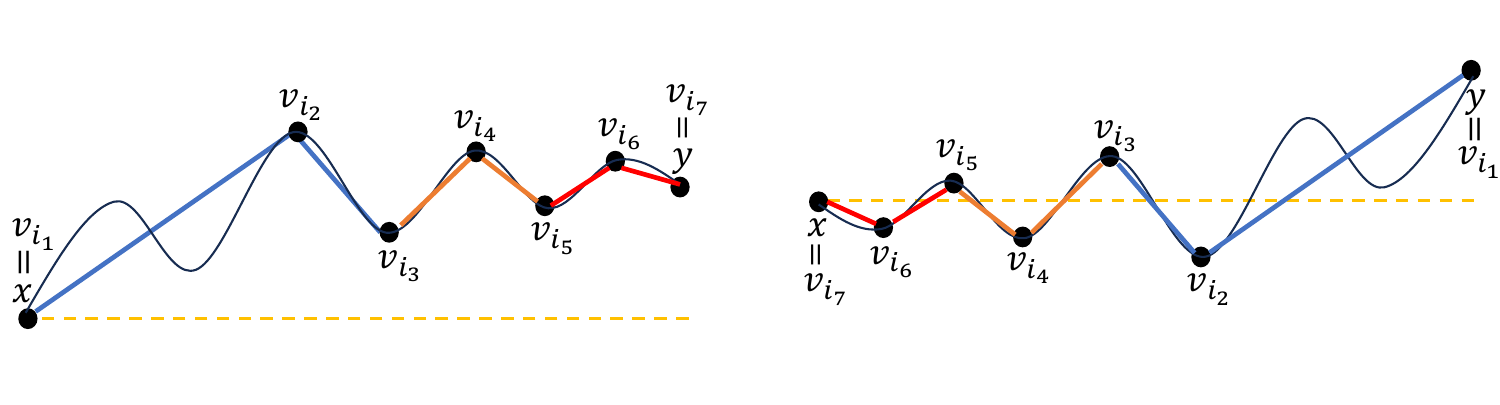}
\caption{Right: The path decomposition in Lemma~\ref{lemma:0-0-strong}. Note that we can pair ascending and descending paths and have overall nonnegative gain. Left: The path decomposition in Lemma~\ref{lemma:B-B-strong}. We pair descending paths with ascending paths such that the descending path comes first.}
\label{fig:path-zig-zag-decomposition}
\end{figure}

\begin{lemma}\label{lemma:B-B-strong}
Let $P$ be a simple path from $x$ to $y$ such that $\alpha_B(P)= B$, then w.h.p.\ $x^B y^B \in E(H^\star)$.    
\end{lemma}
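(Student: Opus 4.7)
The plan is to mimic Lemma~\ref{lemma:0-0-strong}, partitioning $P = v_1\ldots v_k$ (with $v_1 = x$, $v_k = y$) not at global extrema of the gain but at the indices where the car is at full charge. Let $c_i = \alpha_B(v_1\ldots v_i)$, and let $j_1 < j_2 < \cdots < j_r$ enumerate the indices with $c_{j_s} = B$. Since $\alpha_B(P) = B$, we have $j_1 = 1$ and $j_r = k$. I will show that $v_{j_s}^B v_{j_{s+1}}^B \in E(H)$ for every $s$, so that transitivity of $H^\star$ yields $x^B y^B = v_{j_1}^B v_{j_r}^B \in E(H^\star)$ w.h.p.

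Fix $1 \le s < r$. By the choice of $j_s, j_{s+1}$, the charge is strictly below $B$ at every intermediate vertex of the segment $v_{j_s}\ldots v_{j_{s+1}}$, so the cap is never enforced inside the segment, giving $c_t = B + g(v_{j_s}\ldots v_t)$ for $j_s \le t < j_{s+1}$. Thus $g(v_{j_s}\ldots v_t) \le 0$ throughout this range, while $g(v_{j_s}\ldots v_{j_{s+1}}) \ge 0$ (since $c_{j_{s+1}} = B$). Let $v_{m_s}$ achieve the minimum of $g(v_{j_s}\ldots v_t)$ over $j_s \le t \le j_{s+1}$; then $g(v_{j_s}\ldots v_{m_s}) \le 0$ and $g(v_{m_s}\ldots v_{j_{s+1}}) = g(v_{j_s}\ldots v_{j_{s+1}}) - g(v_{j_s}\ldots v_{m_s}) \ge 0$. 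A direct check using these two facts shows that the simple subpath $v_{j_s}\ldots v_{m_s}$ is descending and $v_{m_s}\ldots v_{j_{s+1}}$ is ascending in $G$, both with respect to the zero schedule; degenerate cases where $v_{m_s}$ coincides with an endpoint reduce one subpath to a trivial one of gain $0$ and are handled by the initialization $M[v][v] = 0$.

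Applying Theorem~\ref{theorem:shortcut} to these two simple monotone subpaths, w.h.p.\ $M[v_{j_s}][v_{m_s}] \ge g(v_{j_s}\ldots v_{m_s})$ and $M[v_{m_s}][v_{j_{s+1}}] \ge g(v_{m_s}\ldots v_{j_{s+1}}) \ge 0$. Summing yields $M[v_{j_s}][v_{m_s}] + M[v_{m_s}][v_{j_{s+1}}] \ge g(v_{j_s}\ldots v_{j_{s+1}}) \ge 0$, so taking $y = v_{m_s}$ in the defining rule for $B$-$B$ arcs shows $v_{j_s}^B v_{j_{s+1}}^B \in E(H)$. A union bound over the at most $k-1$ segments, followed by transitivity of $H^\star$, completes the proof. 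The main technical obstacle is the battery cap, but the choice of checkpoints precisely at full-charge vertices sidesteps it: inside each segment the charge evolution is additive in the arc gains, which is exactly what lets the argument mirror the structure of Lemma~\ref{lemma:0-0-strong}.
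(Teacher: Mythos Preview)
Your proof is correct, and it takes a genuinely different route from the paper's. The paper first observes that $v_k$ must have maximum gain in $P$ (otherwise full charge at $v_k$ is impossible), and then runs the same alternating max/min decomposition as in Lemma~\ref{lemma:0-0-strong}, only starting from the right end: $i_1=k$, $v_{i_2}$ the global minimum on $v_1\ldots v_{i_1}$, $v_{i_3}$ the maximum on $v_1\ldots v_{i_2}$, and so on. Each pair $v_{i_{2j+1}}\ldots v_{i_{2j}}\ldots v_{i_{2j-1}}$ is again a descending followed by an ascending simple monotone path of nonnegative total gain, and the $B$--$B$ rule fires.

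Your decomposition instead cuts at the full-charge indices and then locates the minimum inside each segment. The key point that makes this clean is exactly what you identified: because the cap is never hit strictly between consecutive full-charge checkpoints, the charge evolves additively there, so $g(v_{j_s}\ldots v_t)\le 0$ for $t<j_{s+1}$ and $g(v_{j_s}\ldots v_{j_{s+1}})\ge 0$. This immediately yields both monotonicity claims (including the upper bound in the ascending part, since every intermediate prefix has nonpositive gain while the final one is nonnegative). Your approach is a bit more ``operational'' and arguably more direct for this lemma, while the paper's version emphasizes the symmetry with Lemma~\ref{lemma:0-0-strong}. In concrete examples the two decompositions can differ (yours may produce more, shorter segments), but both feed the same $B$--$B$ arc rule and the same appeal to Theorem~\ref{theorem:shortcut}, so neither buys anything the other does not.
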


\begin{proof}
    Denote $P = v_1 \ldots v_k$ where $v_1 = x$ and $v_k = y$.
    Note that $v_k$ has the largest gain in $P$ (since otherwise it cannot be reached with full charge). Similarly to Lemma~\ref{lemma:0-0-strong}, we decompose $P$ to monotone segments but this time we start the decomposition from $v_k$, see Figure~\ref{fig:path-zig-zag-decomposition}. Let $i_1=k$ and let $1\le i_2<i_1$ be such that $v_{i_2}$ has the minimum gain in $v_1\ldots v_{i_1}$. In particular, $v_{i_2} \ldots v_{i_1}$ is ascending. Let $1\le i_3<i_2$ be such that $v_{i_3}$ has the maximum gain in $v_1\ldots v_{i_2}$. In particular, $v_{i_3} \ldots v_{i_2}$ is descending.
    In general, let $1\le i_{j} < i_{j-1}$ be such that  $v_{i_{j}} \ldots v_{i_{j-1}}$ is ascending if $j$ is even and descending otherwise. Let $1=i_t,\ldots i_1=k$ be the indices we constructed.

    Similarly to Lemma~\ref{lemma:0-0-strong}, we prove that 
    $g(v_{i_{2j+1}}\ldots v_{i_{2j-1}}) \ge 0$  for every $1\le j < t/2$. Indeed, if $j=1$, then since $v_{i_1}=v_k$ has the largest gain in $P$, we get that $g(v_{i_{3}}\ldots v_{i_{1}}) \ge 0$. Let $1< j < t/2$. By the definition of $v_{i_{2j-1}}$, we get that $g(v_1\ldots v_{i_{2j-1}}) \ge g(v_1\ldots v_{i_{2j+1}}) $ and therefore $g(v_{i_{2j+1}}\ldots v_{i_{2j-1}}) \ge 0$.
    By Theorem~\ref{theorem:shortcut}, w.h.p.\ we get that 
    $$M[v_{i_{2j+1}}][v_{i_{2j}}] + M[v_{i_{2j}}][v_{i_{2j-1}}] 
    \ge g(v_{i_{2j+1}}\ldots v_{i_{2j}}) 
    +
    g(v_{i_{2j}}\ldots v_{i_{2j-1}})
    = g(v_{i_{2j+1}}\ldots v_{i_{2j-1}}) \ge 0.$$ 
    Thus, by the definition of $E(H)$, for every $1 \le j < t/2$, we get that $v_{i_{2j+1}}^B v_{i_{2j-1}}^B \in E(H)$. As for the last piece, note that if $t$ is even then $M[v_{i_t}][v_{i_{t-1}}] \ge 0$, so $M[v_{i_t}][v_{i_t}] + M[v_{i_t}][v_{i_{t-1}}] \ge 0$  and therefore by the definition of $E(H)$, $v_{i_t}^B v_{i_{t-1}}^B \in E(H)$. 
    
    Since $H^\star$ is transitively closed, $x^B y^B = v_{i_t}^B v_{i_1}^B \in E(H^\star)$. 
\end{proof}

The following theorem states that we have indeed found all entry-exit pairs.
\begin{lemma}\label{lemma:find-entry-exit}
    Let $(x,y)$ be an entry-exit pair in a positive gain cycle $C$, then w.h.p.\ $x^0 y^B$ is an arc in~$H^{\star}$.
\end{lemma}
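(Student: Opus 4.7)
The overall plan is to build a chain of arcs in $H^\star$ from $x^0$ to $y^B$, anchored at the specific entry-exit pair $(x_0,y_0)$ provided by Lemma~\ref{lemma:special-entry-exit}, for which $x_0^0 y_0^B \in E(H)$ w.h.p. The argument splits on whether the cycle $C$ is strongly traversable starting at $x$ with zero charge.

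\emph{Case 1: $C$ is not strongly traversable from $x$.} Then the trajectory witnessing that $(x,y)$ is an entry-exit pair must reach $y$ within a single partial traversal of $C$ (otherwise the car would complete a full lap from $x$, contradicting non-traversability). Hence the simple path $P^{xy}$ along $C$ is itself strongly traversable with $\alpha_0(P^{xy})=B$. By Lemma~\ref{lemma:observation-monotone} it is ascending, and its gain is at least $B$, so Theorem~\ref{theorem:shortcut} yields $M[x][y]\ge B$ w.h.p., giving $x^0 y^B \in E(H) \subseteq E(H^\star)$ directly by the definition of $0$-$B$ arcs.

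\emph{Case 2: $C$ is strongly traversable from $x$.} Every simple prefix of one forward traversal of $C$ starting at $x$ is then strongly traversable, so Lemma~\ref{lemma:0-0-strong} gives $x^0 x_0^0 \in E(H^\star)$, which combined with $x_0^0 y_0^B \in E(H)$ produces $x^0 y_0^B \in E(H^\star)$. It remains to connect $y_0^B$ to $y^B$. For this I will invoke the following key structural claim: for any entry-exit pair $(u,v)$ on $C$ with $u \neq v$, the simple path $P^{uv}$ along $C$ satisfies $\alpha_B(P^{uv})=B$. The proof examines the last sub-traversal of a witness trajectory, in which the car enters $u$ with some charge $c\in[0,B]$ and exits $v$ with charge $B$; the formula in Lemma~\ref{lem:alpha_general} forces $v$ to attain the maximum cumulative gain along $P^{uv}$, and substituting $B$ as the starting charge into the same formula yields $\alpha_B(P^{uv})=B$. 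By Lemma~\ref{lemma:B-B-strong} this in turn implies $u^B v^B \in E(H^\star)$. Applied to $(x,y)$ this gives $x^B y^B \in E(H^\star)$; to get from $y_0^B$ to $y^B$ I construct a chain of $B$-$B$ arcs along $C$ (invoking the key claim on auxiliary entry-exit pairs that arise from the strong traversability of $C$ from $x$) and chain into $x^B y^B$ via transitivity, using the trivial self-arcs $v^B v^0 \in E(H)$ (available since $M[v][v] \ge 0$) when a drop back to the zero level and re-entry through a different $0$-$B$ arc is needed.

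The main obstacle is the construction of the segment $y_0^B \rightsquigarrow y^B$: it requires exploiting the specific structural information attached to $(x_0,y_0)$ by the Case 2 argument of Lemma~\ref{lemma:special-entry-exit} (namely that $P^{x_0 y_0}$ is ascending and $P^{y_0 x_0}$ is descending on $C$) together with repeated applications of the key structural claim. The degenerate situation $x=y$ reduces to Case 2: $(x,x)$ being an entry-exit pair forces $C$ to be strongly traversable from $x$, and the witness trajectory passes through some $y'\neq x$ for which $(x,y')$ is also entry-exit; applying the main argument to $(x,y')$ yields $x^0 y'^B \in E(H^\star)$, and a final $y'^B x^B$ hop (obtained from the key claim applied to $P^{y' x}$ once we verify it can be traversed while preserving full charge, using that $g(C)>0$) closes the loop.
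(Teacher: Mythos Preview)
Your overall architecture matches the paper's: anchor at the special pair $(x_0,y_0)$ from Lemma~\ref{lemma:special-entry-exit}, use Lemma~\ref{lemma:0-0-strong} to reach $x_0^0$, use the $0$-$B$ arc to reach $y_0^B$, and then use Lemma~\ref{lemma:B-B-strong} to reach $y^B$. Your Case~1 is correct and coincides with the paper's easy case.

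The gap is in Case~2, specifically the hop $y_0^B \rightsquigarrow y^B$. Your ``key structural claim'' ($\alpha_B(P^{uv})=B$ for any entry-exit pair $(u,v)$) is true and gives you $x^B y^B \in E(H^\star)$, but that is useless unless you can first reach $x^B$ from $y_0^B$. You never construct this link: the appeal to ``auxiliary entry-exit pairs that arise from the strong traversability of $C$ from $x$'' is not a construction, and the self-arcs $v^B v^0$ only let you drop charge, after which you would need $y_0^0 \rightsquigarrow x^B$ or similar, which you also do not establish (note $P^{y_0 x}$ need not be strongly traversable, and $(y_0,x)$ need not be an entry-exit pair). The $x=y$ coda has the same issue: you need $y'^B x^B$, but $(y',x)$ is not known to be an entry-exit pair, so your key claim does not apply as stated.

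The paper avoids all of this by choosing a different case split: on whether $\alpha_0(P^{xy})=B$ rather than on strong traversability of $C$ from $x$. When $\alpha_0(P^{xy})<B$, the entry-exit witness trajectory from $x$ must go around $C$ at least once, and therefore visits $y_0$. The suffix of that trajectory from its last visit to $y_0$ is exactly the simple path $P^{y_0 y}$, and it arrives at $y$ with charge $B$ starting from some charge $\le B$ at $y_0$; monotonicity in the initial charge gives $\alpha_B(P^{y_0 y})=B$, and Lemma~\ref{lemma:B-B-strong} yields $y_0^B y^B \in E(H^\star)$ directly. No chaining through $x^B$, no auxiliary pairs, no separate handling of $x=y$. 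Your split is strictly coarser (your Case~2 includes situations with $\alpha_0(P^{xy})=B$ where the witness never cycles and hence may miss $y_0$), which is exactly why you cannot run the direct suffix argument and are forced into the unfinished detour.
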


\begin{proof}
    Let $P^{xy}$ be the simple path from $x$ to $y$ through $C$. We split into cases.

    \textbf{Case $1$: $\alpha_0(P^{xy}) = B$:}
    Therefore, by Lemma~\ref{lemma:observation-monotone}, $P^{xy}$ is ascending with gain at least $B$. Therefore, by Theorem~\ref{theorem:shortcut}, w.h.p., $M[x][y]\ge B$ and therefore w.h.p.\ $x^0 y^B \in E(H) \subseteq E(H^\star)$.
    
    \textbf{Case $2$: $\alpha_0(P^{xy}) < B$:}
    Thus, in order to start at $x$ with no charge and reach $y$ (through $C$) with full-charge the car must traverse $C$ at least once. Let $P'$ be such a path from $x$ to $y$ through $C$ such that $\alpha_0(P')=B$ (note that $P'$ must cycle $C$ at least once).
    By Lemma~\ref{lemma:special-entry-exit}, there exists an entry-exit pair $(x',y')$ on $C$ that satisfies $x'^0 y'^B \in E(H)$ and theretofore $x'^0 y'^B \in E(H^\star)$. Since $P'$ is strongly traversable and it cycles around $C$ at least once, it follows that the simple path from $x$ to~$x'$ (which is a prefix of $P'$) through $C$ is strongly traversable. So, by Lemma~\ref{lemma:0-0-strong}, it holds that $x^0 x'^0 \in E(H^\star)$. 
    Finally, since $y$ is an exit of $C$ and $y'$ lies on the same cycle $C$, it follows that $P^{y'y}$, the simple path from $y'$ to $y$  through $C$ satisfies $\alpha_B(P^{y'y}) = B$ (since otherwise, the exit $y$ cannot be reached with full charge from $y'$ and in particular $\alpha_0(P')<B$, a contradiction). Therefore, by Lemma~\ref{lemma:B-B-strong}, $y'^B y^B \in E(H^\star)$. Since $H^\star$ is transitively closed, we get $x^0 y^B \in E(H^\star)$.
\end{proof}

We are now ready to prove the main claim.

\begin{theorem}\label{theorem:0-B-strong}
    Let $s,t\in V$. If $\alpha_0(s,t) = B$ then w.h.p.\ $s^0 t^B \in E(H^\star)$.
\end{theorem}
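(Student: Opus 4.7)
The plan is to invoke Lemma~\ref{lemma:optimal-structure} to obtain a path $P'$ from $s$ to $t$ with $\alpha_0(P')=B$ of structured form, and then build an $s^0\to t^B$ chain in $H^\star$ out of arcs corresponding to its segments and entry-exit cycle traversals. Either $P'$ is simple, or $P'=P_1\mid C_1^{x_1y_1}\mid P_2\mid\cdots\mid C_k^{x_ky_k}\mid P_{k+1}$, where $(x_i,y_i)$ is an entry-exit pair of the positive-gain cycle $C_i$ and each $P_i$ is simple.

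If $P'$ is simple, I would combine Lemma~\ref{lemma:observation-monotone} (which implies $P'$ is ascending with $g(P')\ge B$) with Theorem~\ref{theorem:shortcut} to conclude that w.h.p.\ $M[s][t]\ge B$; by the definition of $0$-$B$ arcs of $H$ this immediately gives $s^0t^B\in E(H)\subseteq E(H^\star)$.

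In the non-simple case, I would process each segment in turn. $P_1$ (from $s$ to $x_1$) is traversed starting with zero charge and is therefore simple and strongly traversable, so Lemma~\ref{lemma:0-0-strong} gives $s^0x_1^0\in E(H^\star)$. Each entry-exit pair $(x_i,y_i)$ contributes $x_i^0y_i^B\in E(H^\star)$ by Lemma~\ref{lemma:find-entry-exit}. For the last segment $P_{k+1}$, which is traversed starting with charge $B$ and, since $\alpha_0(P')=B$, also ending with charge $B$, we have $\alpha_B(P_{k+1})=B$, so Lemma~\ref{lemma:B-B-strong} yields $y_k^Bt^B\in E(H^\star)$. Transitivity of $H^\star$ will then chain these arcs into $s^0t^B\in E(H^\star)$, provided the intermediate paths $P_i$ are handled.

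The main obstacle will be the intermediate simple paths $P_i$ for $2\le i\le k$: each is simple and traversable with starting charge $B$ but does not necessarily end with full charge, so neither Lemma~\ref{lemma:0-0-strong} nor Lemma~\ref{lemma:B-B-strong} applies directly. I would therefore prove a ``$B$-$0$'' helper claim: for any simple $P=v_1\cdots v_m$ with $\alpha_B(P)\ge 0$, w.h.p.\ $v_1^Bv_m^0\in E(H^\star)$. To this end, let $v_{i_2}$ be a vertex of maximum gain in $P$; a direct application of Lemma~\ref{lem:alpha_general} together with the fact that the battery caps at $B$ gives $\alpha_B(v_1\cdots v_{i_2})=B$, and Lemma~\ref{lemma:B-B-strong} then yields $v_1^Bv_{i_2}^B\in E(H^\star)$. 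Next, let $v_{i_3}$ be a vertex of minimum gain in the suffix $v_{i_2}\cdots v_m$; the descending segment $v_{i_2}\cdots v_{i_3}$ has gain $g_{v_{i_3}}-g_{v_{i_2}}\ge -B$ (by traversability of $P$ with initial charge $B$), so Theorem~\ref{theorem:shortcut} ensures $M[v_{i_2}][v_{i_3}]\ne -\infty$ w.h.p., and the definition of $B$-$0$ arcs gives $v_{i_2}^Bv_{i_3}^0\in E(H)$. Finally, by the minimality of $v_{i_3}$, every prefix of $v_{i_3}\cdots v_m$ has non-negative gain, and every subpath of $P$ has gain $\ge -B$, so by Lemma~\ref{lem:alpha_general} the suffix $v_{i_3}\cdots v_m$ is strongly traversable, and Lemma~\ref{lemma:0-0-strong} gives $v_{i_3}^0v_m^0\in E(H^\star)$. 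Transitivity produces $v_1^Bv_m^0\in E(H^\star)$ (the degenerate cases $i_2=1$, $i_3=i_2$, or $i_3=m$ being handled by omitting the corresponding sub-chain). Applying this helper to each intermediate $P_i$ completes the chain $s^0\to x_1^0\to y_1^B\to x_2^0\to\cdots\to y_k^B\to t^B$ in $H^\star$.
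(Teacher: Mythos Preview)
Your proposal is correct and follows essentially the same route as the paper: invoke Lemma~\ref{lemma:optimal-structure}, dispatch the simple case via Lemma~\ref{lemma:observation-monotone} and Theorem~\ref{theorem:shortcut}, and in the non-simple case chain together $s^0x_1^0$ (Lemma~\ref{lemma:0-0-strong}), $x_i^0y_i^B$ (Lemma~\ref{lemma:find-entry-exit}), $y_k^Bt^B$ (Lemma~\ref{lemma:B-B-strong}), and a $y_i^B\to x_{i+1}^0$ hop for each intermediate segment.

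The only real difference is how you bridge $y_i^B$ to $x_{i+1}^0$. The paper splits the simple segment $Q=u_1\ldots u_t$ at a single index $j$ (the maximal $j$ with $u_1\ldots u_j$ descending), yielding a direct $B$-$0$ arc followed by a $0$-$0$ hop. You instead prove a standalone ``$B$-$0$ helper'' using a three-vertex split: first to the global maximum (giving $\alpha_B=B$ on the prefix, hence a $B$-$B$ hop via Lemma~\ref{lemma:B-B-strong}), then to the minimum of the remaining suffix (a descending segment, hence a $B$-$0$ arc), then the strongly traversable tail (a $0$-$0$ hop). Your version is slightly longer but more transparent: each of the three pieces is monotone or strongly traversable by construction, with no appeal to a maximality argument. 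Both approaches land on the same chain in $H^\star$.
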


\begin{proof}
    Let $P$ be a path from $s$ to $t$ of the form of Lemma~\ref{lemma:optimal-structure} and let $C_1, \ldots C_k$ and $(x_1,y_1),\ldots(x_k,y_k)$ as in Lemma~\ref{lemma:optimal-structure}. By Lemma~\ref{lemma:observation-monotone}, $P$ is ascending.

    If $P$ is simple (i.e., $k=0$) then by Theorem~\ref{theorem:shortcut} it holds w.h.p.\ that $M[s][t] \ge g(P) \ge B$ and therefore $s^0 t^B \in E(H) \subseteq E(H^\star)$.

    Otherwise, Since $P$ starts with a simple path from $s$ to $x_1$ then by Lemma~\ref{lemma:0-0-strong}, $s^0 x_1^0 \in E(H^\star)$.  By Lemma~\ref{lemma:find-entry-exit}, $x_i^0 y_i^B\in E(H^\star)$, for every $i\le k$. Let $i < k$, and consider $Q=u_1 \ldots u_t$, the simple subpath of $P$ from $y_i$ to $x_{i+1}$. Let $j$ be maximal such that $u_1 \ldots u_j$ is descending (and also traversable as a subpath of $P$). Since $Q$ is simple and traversable, we get by Theorem~\ref{theorem:shortcut} that $M[u_1][u_k] \neq -\infty$. By the definition of $E(H)$, we get that $y_i^B u_j^0 = u_1^B u_j^0 \in E(H)$. By the minimality of $u_j$, we get that $u_j \ldots u_t$ is strongly traversable and therefore by Lemma~\ref{lemma:0-0-strong} we get that w.h.p.\ $u_j^0 x^0_{i+1}=u^0_j u_t^0 \in E(H^\star)$. 
    Let $P^{y_k t}$ be the (simple) subpath of $P$ from $y_k$ to $t$. It holds that $\alpha_B(P^{y_k t}) = B$. Therefore, 
    by Lemma~\ref{lemma:B-B-strong}, w.h.p., $y_k^B t^B \in E(H^\star)$. Since $H^\star$ is transitively closed,  we get $s^0 t^B \in E(H^\star)$.
\end{proof}

\begin{theorem}\label{theorem:B-B-strong}
    Let $s,t\in V$. If $\alpha_B(s,t) = B$ then w.h.p.\ $s^B t^B \in E(H^\star)$.
\end{theorem}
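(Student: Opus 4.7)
The plan is to mirror the proof of Theorem~\ref{theorem:0-B-strong}. Invoke Lemma~\ref{lemma:optimal-structure} to obtain a path $P$ from $s$ to $t$ with $\alpha_B(P) = B$, either simple or decomposing as $P_0 \mid C_1 \mid P_1 \mid \cdots \mid C_k \mid P_k$ with entry-exit pairs $(x_i,y_i)$ on simple positive-gain cycles $C_i$. If $P$ is simple, Lemma~\ref{lemma:B-B-strong} gives $s^B t^B \in E(H^\star)$ directly. Otherwise, I chain arcs as follows: $x_i^0 y_i^B \in E(H^\star)$ via Lemma~\ref{lemma:find-entry-exit}; $y_k^B t^B \in E(H^\star)$ via Lemma~\ref{lemma:B-B-strong} applied to $P_k$ (which satisfies $\alpha_B(P_k) = B$ because it begins at $y_k$ with full charge and ends at $t$ with full charge); and finally $s^B x_1^0$ and $y_i^B x_{i+1}^0$ for $1 \le i < k$, obtained uniformly from a single sub-claim addressing generic simple traversable subpaths that start with full charge.

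The sub-claim is: if $Q = u_1 \ldots u_m$ is a simple path from $u$ to $v$ with $\alpha_B(Q) \geq 0$, then w.h.p.\ $u^B v^0 \in E(H^\star)$. To prove it, let $c_j = \alpha_B(u_1 \ldots u_j)$, so $c_1 = B$ and $c_m \geq 0$, and let $i^\star$ be the largest index with $c_{i^\star} = B$. If $i^\star > 1$, Lemma~\ref{lemma:B-B-strong} applied to the prefix $u_1 \ldots u_{i^\star}$ yields $u_1^B u_{i^\star}^B \in E(H^\star)$; the case $i^\star = 1$ is trivial. If $i^\star = m$, closing with the self-loop $v^B v^0 \in E(H)$ (which holds since $M[v][v] = 0$) finishes the chain. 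Otherwise $c_j < B$ for every $j > i^\star$, and I pick $j^\star \in \{i^\star+1,\ldots,m\}$ attaining the minimum charge $c_{j^\star}$ in the suffix. The segment $u_{i^\star} \ldots u_{j^\star}$ is then descending with gain $c_{j^\star} - B \in [-B,0)$, because $u_{i^\star}$ is the unique maximum-charge vertex (charge $B$) and $u_{j^\star}$ is the minimum. Theorem~\ref{theorem:shortcut} gives $M[u_{i^\star}][u_{j^\star}] \geq c_{j^\star} - B \neq -\infty$ w.h.p., which in turn yields $u_{i^\star}^B u_{j^\star}^0 \in E(H)$ by the definition of the $B$-$0$ arcs.

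For the tail $u_{j^\star} \ldots u_m$, I claim strong traversability. Since $c_j < B$ for $j > i^\star$, no battery capping occurs past $u_{i^\star}$, so $c_j = c_{j^\star} + g(u_{j^\star} \ldots u_j)$ for $j \geq j^\star$; minimality of $c_{j^\star}$ forces $g(u_{j^\star} \ldots u_j) = c_j - c_{j^\star} \geq 0$, and the same quantity is strictly below $B$ since $c_j < B$. Restarting at $u_{j^\star}$ with charge $0$ therefore reproduces these nonnegative prefix gains without ever hitting the cap or going below zero, so the tail is a simple strongly traversable path. Lemma~\ref{lemma:0-0-strong} then gives $u_{j^\star}^0 u_m^0 \in E(H^\star)$ w.h.p., and transitive closure chains $u^B \to u_{i^\star}^B \to u_{j^\star}^0 \to v^0$, proving the sub-claim. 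The main obstacle is exactly this sub-claim: the pivot $j^\star$ must simultaneously make the first segment descending with gain at least $-B$ and render the tail strongly traversable, and verifying the latter requires the observation that capping cannot occur once the charge drops strictly below $B$. Once the sub-claim is in hand, the global statement follows by routine transitive-closure chaining along the cycle-hopping optimal path.
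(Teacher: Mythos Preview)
Your proof is correct and uses the same core argument as the paper. The single organizational difference is that the paper short-circuits everything from $x_1$ onward by invoking Theorem~\ref{theorem:0-B-strong} (since $\alpha_0(x_1,t)=B$), leaving only the link $s^B x_1^0$ to establish; you instead rebuild the entire chain $s^B\!\to x_1^0\!\to y_1^B\!\to\cdots\to y_k^B\!\to t^B$ explicitly via your sub-claim and Lemma~\ref{lemma:find-entry-exit}. Your sub-claim's pivot argument (last index $i^\star$ with full charge, then minimum-charge index $j^\star$, descending segment $u_{i^\star}\ldots u_{j^\star}$, strongly traversable tail) is exactly the paper's inline argument for $s^B x_1^0$, just stated once and reused.
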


\begin{proof}
    Let $P$ be a path from $s$ to $t$ of the form of Lemma~\ref{lemma:optimal-structure} and let $C_1, \ldots C_{\ell}$ and $(x_1,y_1),\ldots(x_{\ell},y_{\ell})$ as in Lemma~\ref{lemma:optimal-structure}.
    If $P$ is simple (i.e., $k=0$) then we are done by Lemma~\ref{lemma:B-B-strong}. Assume otherwise, and let $P^{s x_1}$ be the simple subpath from $s$ to $x_1$. 
    By Theorem~\ref{theorem:0-B-strong}, we get that $x_1^0 t^B \in E(H^\star)$. Since $H^\star$ is transitively closed, it is enough to prove that $s^B x_1^0 \in E(H^\star)$. 

    Denote $P^{s x_1} = v_1\ldots v_k$ and let $i$ be maximal such that $\alpha_B(s,v_i)= B$. By Lemma~\ref{lemma:B-B-strong}, it holds that $s v_i^B \in E(H^\star)$. Denote $P^{v_i x_1} = v_i\ldots v_k$ and let  $v_j$ be the vertex of smallest gain in $P^{v_i x_1}$. By the definition of $v_i$, we get that $v_i$ has the largest gain in $P^{s x_1}$. In particular, $v_i$ has the largest gain in $P^{v_i v_j} = v_i \ldots v_j$, so $P^{v_i v_j}$ is descending. By Theorem~\ref{theorem:shortcut}, we get that w.h.p.\ $M[v_i][v_j]\ge g(P^{v_i v_j})(\ge -B)$ and therefore $v_i^B v_j^0 \in E(H)$. Since $P^{v_i x_1}$ is traversable and $v_j$ has the minimum gain in $P^{v_i v_j}$, we get by Lemma~\ref{lem:alpha_general} that $P^{v_i v_j}$ is strongly traversable. Therefore, by Lemma~\ref{lemma:0-0-strong}, we get that $v_j^0 x_1^0 \in E(H^\star)$. Since $H^\star$ is transitively closed, we get that $s^B x_1^B \in E(H^\star)$.
\end{proof}

By combining Theorem~\ref{theorem:B-B-strong} with Theorem~\ref{theorem:trasitive-graph} we get the following theorem.

\begin{theorem}\label{theorem:B-B-strong-full}
    For every $s,t\in V$, w.h.p., $\alpha_B(s,t)=B$ if and only if $s^B t^B \in E(H^\star)$.
\end{theorem}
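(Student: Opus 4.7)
The plan is simply to combine the two main results already established earlier in the appendix. The statement is a biconditional with a ``w.h.p.'' qualifier on the ``only if'' direction; the ``if'' direction is deterministic once the graph $H^{\star}$ has been constructed.

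For the ``only if'' direction, the plan is to invoke Theorem~\ref{theorem:B-B-strong} directly. That theorem says that whenever $\alpha_B(s,t)=B$ there is an optimal witness path $P$ from $s$ to $t$ of the cycle-hopping form given by Lemma~\ref{lemma:optimal-structure}, and the proof there already handles both the simple case (via Lemma~\ref{lemma:B-B-strong}) and the case with positive-gain cycles (via a reduction to Theorem~\ref{theorem:0-B-strong} after locating the last vertex on the prefix to $x_1$ that is visited with full charge). Since $H^{\star}$ is the transitive closure of $H$, chaining finitely many arcs yields $s^B t^B\in E(H^{\star})$, and each of the individual arcs is placed in $H$ w.h.p.\ by Theorem~\ref{theorem:shortcut} applied to the monotone pieces of $P$.

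For the ``if'' direction, I would appeal to Theorem~\ref{theorem:trasitive-graph}, which asserts that every arc $x^{b_1}y^{b_2}\in E(H^{\star})$ witnesses $\alpha_{b_1}(x,y)\ge b_2$. Specializing $b_1=b_2=B$ and $(x,y)=(s,t)$ gives $\alpha_B(s,t)\ge B$. Since the charge level is capped by the battery capacity, $\alpha_B(s,t)\le B$ always, so $\alpha_B(s,t)=B$. This direction requires no probabilistic reasoning because Theorem~\ref{theorem:trasitive-graph} is a deterministic soundness statement about $H^{\star}$ (it is proved by induction on the number of arcs in a path in $H^{\star}$, using the four basic soundness lemmas \ref{lemma:0-0}, \ref{lemma:0-B}, \ref{lemma:B-0}, \ref{lemma:B-B} for single arcs, together with the semantics of transitive composition of $\alpha$ relations).

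There is no real obstacle here: both ingredients are already in hand, and the only subtlety is to note that the w.h.p.\ qualifier propagates only from Theorem~\ref{theorem:B-B-strong} (since Theorem~\ref{theorem:trasitive-graph} is deterministic), so the combined statement holds w.h.p.\ as claimed. A union bound over the $O(n^2)$ pairs $(s,t)$ preserves the w.h.p.\ guarantee since the original high-probability events for Theorem~\ref{theorem:shortcut} already hold for all monotone simple paths simultaneously w.h.p.
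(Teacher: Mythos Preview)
Your proposal is correct and matches the paper's approach exactly: the paper states the theorem as an immediate consequence of combining Theorem~\ref{theorem:B-B-strong} (the ``only if'' direction) with Theorem~\ref{theorem:trasitive-graph} (the ``if'' direction), which is precisely what you do. Your additional observations---that the ``if'' direction is deterministic soundness, that $\alpha_B(s,t)\le B$ trivially so $\ge B$ gives equality, and the remark about the union bound over pairs---are all correct elaborations that the paper leaves implicit.
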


\subsection{Computing the \texorpdfstring{$\alpha_B(\cdot, \cdot)$}{alphaB function} values}\label{section:compute-alpha}

The algorithm $MFC(M)$ for deriving of the $\alpha_B(\cdot,\cdot)$ values is given in Figure~\ref{alg:alpha_B}.
The algorithm computes a table $\alpha_B[\cdot][\cdot]$ and we prove in Theorem~\ref{theorem:alpha-correctness} that  $\alpha_B[s][t] = \alpha_B(s,t)$, for every $s,t\in V$. Algorithm $MFC(M)$ starts by computing $H^\star$ as explained in Appendix~\ref{section:transitive-graph}.

The algorithm is based on the following idea. For $s,t \in V$, let~$P=v_1(=s)\ldots v_k(=t)$ be an optimal path from $s$ to $t$ (i.e., $\alpha_B(P)=\alpha_B(s,t)$) that follows the structure of Lemma~\ref{lemma:optimal-structure} and let $C_1, \ldots C_{\ell}$ and $(x_1,y_1),\ldots(x_{\ell},y_{\ell})$ as in Lemma~\ref{lemma:optimal-structure}.
Let $1\le i \le k$ be the maximum index that satisfies $\alpha_B(v_1 \ldots v_i) = B$. By Theorem~\ref{theorem:B-B-strong-full}, w.h.p.\ $s^B v_i^B \in E(H^\star)$. 
By Lemma~\ref{lemma:optimal-structure} it holds that $\alpha_B(v_1,y_{\ell})=B$ and therefore $v_i \in P^{y_{\ell}v_k}$, where $P^{y_{\ell}v_k}$ is the (simple) subpath of $P$ from $y_{\ell}$ to $t$. Hence $v_i \ldots v_k$ is a simple path.

We prove in Lemma~\ref{lemma:alpha-simple-B-0} that there exists $x\in V$ that satisfies $M[v_i][x]\in [-B,0]$ and $B+D[v_i x][v_k] = \alpha_B(v_i,v_k)(=\alpha_B(v_1,v_k))$, see Figure~\ref{fig:alpha-simple-B-0} where $y=v_i, x=v_{i_2},v_k=t$.

Based on the above, the algorithm proceeds as follows. For every $y,t\in V$, we upper bound the largest final charge we can get if we use a simple path $P$ that starts at $y$ with full charge and ends at $t$ such that~$y$ has the maximum gain in $P$. We store these values in a table $A_B$ whose computation is done by assigning $A_B[y][t] \gets \max \{B+D[yx][t] \mid x\in V, \; M[y][x] \le 0 \} $, for every $y,t\in V$. Finally, the computation of  $\alpha_B[s][t]$ is done by assigning $\alpha_B[s][t] \gets \max \{ A_B[y][t] \mid s^B y^B \in E(H^\star)\}$.

The following lemma states that the $A_B[\cdot][\cdot]$ values $MFC(M)$ computes lower bound the actual $\alpha_B(\cdot,\cdot)$ values.

\begin{lemma}\label{lem:alphaB-bounds-D}
    Let $M$ be the shortcut table returned by $\ComputeS$. Let $D=\ComputeF(M)$ and let $\alpha_B[\cdot][\cdot]$ be the result of $MFC(M)$. Then $\alpha_B(y,t) \ge B +  D[yx][t]$ for every $y,x,t\in V$ that satisfy $-B \le M[y][x] \le 0$. In particular, $A_B[y][t] \le \alpha_B(y,t)$ for every $y,t\in V$.
\end{lemma}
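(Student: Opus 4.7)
The plan is to lift the arc-bounded path that realizes $D[yx][t]$ inside $G^M$ to a traversable path in~$G$ with final charge at least $B + D[yx][t]$, using the invariant on~$D$, the $\alpha$-bound for arc-bounded paths, and the translation lemma from $G^M$ to~$G$.

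First I would dispose of the trivial case. If $D[yx][t] = -\infty$ then the inequality is vacuous, so assume $D[yx][t] \in \mathbb{R}$. Since $\ComputeF$ only ever calls $\BFS,\Concat,\CO$ which all maintain Invariant~\ref{invariant}\ref{1a} (Lemmas~\ref{lemma:BFS-single-extension}, \ref{lemma:concat-maintains-invariant}, \ref{lemma:concat-opposite-invariant}), there is a traversable path $P = y\,x\,v_3 \ldots v_k = t$ in $G^M$ and a charge drop schedule~$C$ such that $P$ is $yx$-bounded with respect to~$C$ and $g^C(P) = D[yx][t]$. The hypothesis $M[y][x] \le 0$ means the bounding arc has nonpositive gain, so $P$ is a negative arc-bounded path in the sense of Definition~\ref{def:arc-bounded}.

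Next I would apply Lemma~\ref{lemma:alpha-of-arc-bounded} (the first-arc-bounded case) with $b = B$. The hypothesis $M[y][x] \ge -B$ supplies exactly the requirement $g(yx) \ge -b$, so the lemma gives
\begin{equation*}
\alpha_B^{G^M}(P) \;\ge\; B + g^C(P) \;=\; B + D[yx][t].
\end{equation*}
Now I would push this bound from $G^M$ down to~$G$ via Lemma~\ref{lem:GM_to_G}: there exists a traversable path $P'$ from~$y$ to~$t$ in~$G$ with $\alpha_B^G(P') \ge \alpha_B^{G^M}(P)$. Chaining the two inequalities yields $\alpha_B(y,t) \ge \alpha_B^G(P') \ge B + D[yx][t]$, which is the main claim.

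For the ``In particular'' statement, I would simply observe that $A_B[y][t]$ is computed in $MFC(M)$ as the maximum of $B + D[yx][t]$ over $x \in V$ with $M[y][x] \le 0$, and every such $x$ with $D[yx][t] \ne -\infty$ automatically satisfies $M[y][x] \ne -\infty$, hence $M[y][x] \ge -B$ by the design of $\ComputeS$ (arcs in $M$ have gain in $\{-\infty\} \cup [-B,\infty)$ since shortcuts correspond to traversable monotone paths). Therefore the inequality established above applies to every term in the $\max$ defining $A_B[y][t]$, giving $A_B[y][t] \le \alpha_B(y,t)$. I do not foresee a genuine obstacle here; the only subtlety is making sure the hypothesis $M[y][x] \ge -B$ is present to invoke Lemma~\ref{lemma:alpha-of-arc-bounded} cleanly, which is exactly what the lemma statement provides.
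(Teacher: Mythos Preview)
Your argument is correct and is in fact a cleaner packaging than the paper's own proof. Both proofs start the same way---realize $D[yx][t]$ by an $yx$-bounded path---but then diverge.

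The paper unwinds immediately to $G$ via Corollary~\ref{cor:M-G-connection}, obtaining a path $P$ in $G$ whose vertex gains (with respect to the inherited schedule $C$) are all sandwiched in $[M[y][x],0]\subseteq[-B,0]$, and then proves by a direct induction on the vertices of $P$ that $\alpha_B(v_1\ldots v_i)\ge B+g^{P,C}_{v_i}$. This induction is essentially a bespoke re-proof of the first item of Lemma~\ref{lemma:alpha-of-arc-bounded} in the special case $b=B$.

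You instead stay in $G^M$, invoke Lemma~\ref{lemma:alpha-of-arc-bounded} once to get $\alpha_B^{G^M}(P)\ge B+g^C(P)$, and then push the bound down to $G$ with Lemma~\ref{lem:GM_to_G}. This buys you modularity: you reuse an existing lemma rather than reproducing its induction, and the transfer step is a single citation. The paper's route has the minor advantage that it never needs to talk about $\alpha_B$ in $G^M$ at all, working entirely in $G$ after the first step.

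Two small points. First, Lemma~\ref{lemma:alpha-of-arc-bounded} is stated for \emph{negative} arc-bounded paths, so strictly speaking the borderline case $M[y][x]=0$ is not covered by its hypotheses; it is harmless (all gains collapse to $0$ and the conclusion is trivial) but you may want a one-line remark. Second, the invariant-preservation lemma for $\BFS$ is the unlabeled lemma right after Corollary~\ref{lemma:BFS-basic}, not Lemma~\ref{lemma:BFS-single-extension}; the citation should be adjusted.
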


\begin{proof}
    Let $y,x,t\in V$ be as in the statement of the lemma. Let $P=P^{yx}\mid P^{xt}$
    (a traversable path in $G$) and $C= C^{yx}\mid C^{xt}$ be as in Corollary~\ref{cor:M-G-connection}~\ref{cor:1}\ref{cor:1a}-\ref{cor:1c} when applied on $D[yx][t]$. Denote $P = v_1\ldots v_k$. Since $-B \le M[y][x]\le 0$, it follows by Corollary~\ref{cor:M-G-connection}~\ref{cor:1}\ref{cor:1c} that   $-B\le g^{P,C}_x \le g^{P,C}_{v_i}\le g^{P,C}_y= 0$ for every $1\le i \le k$. We prove by induction on $i=1,\ldots, k$ that $\alpha_B(v_1 \ldots v_i) \ge B +  g^{P,C}_{v_i}$ and therefore $\alpha_B(P) \ge B +  g^C(P)=B + D[yx][t]$. 

    The base of induction holds since 
    $\alpha_B(v_1)=B= B+ g^{P,C}_{v_1}$. Let $i>1$, since $P$ is traversable it holds that $\alpha_B(v_1\ldots v_{i+1})\ge 0$ and therefore 
    \begin{align*}
    \alpha_B(v_1\ldots v_{i+1}) 
    &= \min \{B , \alpha_B(v_1 \ldots v_i) + g(v_i v_{i+1}) \}
     \numge{1} 
    \min \{B , B + g^{P,C}_{v_i} + g(v_i v_{i+1}) \} \\
    &\ge  \min \{B , B + g^{P,C}_{v_{i+1}}  \} 
    \numeq{2} B +  g^{P,C}_{v_{i+1}},
    \end{align*}
     where Inequality~$(1)$ holds by the inductive hypothesis and Equality~$(2)$ holds since we showed that $g^{P,C}_{v_i}\le 0$ for every $1\le i \le k$.
\end{proof}

As a corollary, we get that the $\alpha_B[\cdot][\cdot]$ values that $MFC(M)$ computes, lower bound the actual $\alpha_B(\cdot,\cdot)$ values.

\begin{corollary}\label{cor:A_lowerbounds_alpha}
    For every $s,t\in V$ it holds that $\alpha_B[s][t] \le \alpha_B(s,t)$.
\end{corollary}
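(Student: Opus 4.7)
The plan is to unpack the two-level definition of $\alpha_B[s][t]$ used by $MFC(M)$ and reduce the claim to the two facts already at hand: Theorem~\ref{theorem:trasitive-graph}, which turns arcs of $H^\star$ into lower bounds on $\alpha$, and Lemma~\ref{lem:alphaB-bounds-D}, which bounds $A_B[y][t]$ by $\alpha_B(y,t)$. The only real content of the corollary is that concatenating these two pieces of information yields a valid witness path from $s$ to $t$.

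Fix $s,t \in V$. If the set $\{y \in V : s^B y^B \in E(H^\star)\}$ is empty, then $\alpha_B[s][t] = -\infty$ and the inequality is trivial. Otherwise, let $y \in V$ attain the maximum in the definition, so that $\alpha_B[s][t] = A_B[y][t]$ and $s^B y^B \in E(H^\star)$. First, I would apply Theorem~\ref{theorem:trasitive-graph} to conclude $\alpha_B(s,y) \ge B$; since charge is capped at $B$, this forces $\alpha_B(s,y) = B$, so there is a path $P_1$ in $G$ from $s$ to $y$ that, starting with full battery at $s$, reaches $y$ with a full battery. Second, by Lemma~\ref{lem:alphaB-bounds-D}, I have $A_B[y][t] \le \alpha_B(y,t)$; let $P_2$ be a path in $G$ from $y$ to $t$ achieving $\alpha_B(P_2) = \alpha_B(y,t)$ when starting at $y$ with charge $B$.

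The concluding step is the concatenation $P = P_1 \mid P_2$. Starting at $s$ with a full battery, traversing $P_1$ leaves the car at $y$ with a full battery (by the choice of $P_1$); then traversing $P_2$ from $y$ with a full battery reaches $t$ with charge $\alpha_B(y,t)$ (by the choice of $P_2$). Thus $P$ is traversable from $s$ with full battery and $\alpha_B(P) = \alpha_B(y,t)$, so
\[
\alpha_B(s,t) \;\ge\; \alpha_B(P) \;=\; \alpha_B(y,t) \;\ge\; A_B[y][t] \;=\; \alpha_B[s][t],
\]
which is the desired inequality. I do not anticipate any genuine obstacle here: the only thing to be careful about is the degenerate case $\alpha_B(y,t) = -\infty$, which is handled by Lemma~\ref{lem:alphaB-bounds-D} automatically (it forces $A_B[y][t] = -\infty$ as well), and the observation that $\alpha_B(s,y) \ge B$ really does certify a path that arrives at $y$ with exactly $B$ charge, which is what is needed for the concatenation with $P_2$ to preserve the final charge $\alpha_B(y,t)$.
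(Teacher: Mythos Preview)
Your argument is essentially the paper's: invoke Theorem~\ref{theorem:trasitive-graph} on $s^B y^B\in E(H^\star)$ to get $\alpha_B(s,y)=B$, concatenate with an optimal $y\to t$ path, and bound $A_B[y][t]\le\alpha_B(y,t)$ via Lemma~\ref{lem:alphaB-bounds-D}. You are in fact more explicit than the paper about the concatenation step, which the paper leaves implicit behind its Inequality~(1).

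There is one small omission. In $MFC(M)$, the value $\alpha_B[s][t]$ is not purely $\max\{A_B[y][t]:s^By^B\in E(H^\star)\}$; the algorithm first sets $\alpha_B[s][t]\gets B$ whenever $s^Bt^B\in E(H^\star)$, and this $B$ can strictly exceed every $A_B[y][t]$ encountered in the loop. Your sentence ``let $y\in V$ attain the maximum in the definition, so that $\alpha_B[s][t]=A_B[y][t]$'' therefore does not cover that branch. The fix is immediate: if $\alpha_B[s][t]=B$ via that assignment, then $s^Bt^B\in E(H^\star)$ and Theorem~\ref{theorem:trasitive-graph} gives $\alpha_B(s,t)\ge B=\alpha_B[s][t]$ directly. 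The paper sidesteps this by disposing of the case $\alpha_B(s,t)=B$ at the outset (noting that $\alpha_B[s][t]\le B$ always).
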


\begin{proof}
    Let $s,t\in V$. If $\alpha_B[s][t] = -\infty$ or $\alpha_B(s,t)=B$ then we are done. Assume otherwise. Since $\alpha_B(s,t) < B$, then by Theorem~\ref{theorem:trasitive-graph} 
    $s^B t^B \notin E(H^\star)$. Moreover, since $\alpha_B[s][t] \neq -\infty$, there is $(t\neq )y\in V$ such that $s^B y^B \in E(H^\star)$ and $\alpha_B[s][t] = A_B[y][t]$. 
    Let $x\in V$ be such that $-B\le M[y][x]\le 0$ and $A_B[y][t] = B + D[yx][t]$. We conclude that $\alpha_B(s,t) \numge{1} \alpha_B(y,t) \numge{2} A_B[y][t] = \alpha_B[s][t]$, where Inequality~$(1)$ holds by Theorem~\ref{theorem:trasitive-graph} (recall that $s^B y^B \in E(H^\star)$) and Inequality~$(2)$ holds by Lemma~\ref{lem:alphaB-bounds-D}  
\end{proof}

\begin{figure}[t]
\begin{algorithm}[H]
\Fn{$MFC(M)$}{ % maximum final charge
    \BlankLine
    $H \gets Build\_H(M)$ \tcp*{As explained in Appendix~\ref{section:transitive-graph}}
    $H^\star \gets Transitive\_closure(H)$\;
    $D \gets \ComputeF(M)$\;
    \BlankLine
    $A_{B}\gets matrix(n,n, -\infty)$ \tcp*{$\alpha_B(\cdot,\cdot)$ of simple bounded paths starting with $B$ charge}
    \For{$y,t\in V$}{ 
        \For{$x\in V$}{
            \If(\tcp*[h]{$\overunderline{yxt}{1-1}{2-2}$ paths}){$M[y][x] \le 0$}{ 
                $A_{B}[y][t] \gets \max \{A_B[y][t], B + D[yx][t] \}$
            }
        }
    }
    \BlankLine
    $\alpha_B \gets matrix(n,n,-\infty)$\;
    \For{$s,t\in V$}{
        \If{$s^B t^B\in E(H^\star)$}{
            $\alpha_B[s][t] \gets B$    
        }
        \For{$y\in V$}{
            \If{$s^B y^B\in E(H^\star)$}{
                $\alpha_B[s][t] \gets \max \{\alpha_B[s][t] , A_{B}[y][t]\}$
            }
        }
    }
    \BlankLine
    \Return $\alpha$
}
\end{algorithm}
\caption{Computing the \emph{maximum final charges} $\alpha_B(s,t)$ for every $s,t\in V$.}\label{alg:alpha_B}
\end{figure}

\begin{figure}
    \centering
\includegraphics[width=0.5\textwidth]{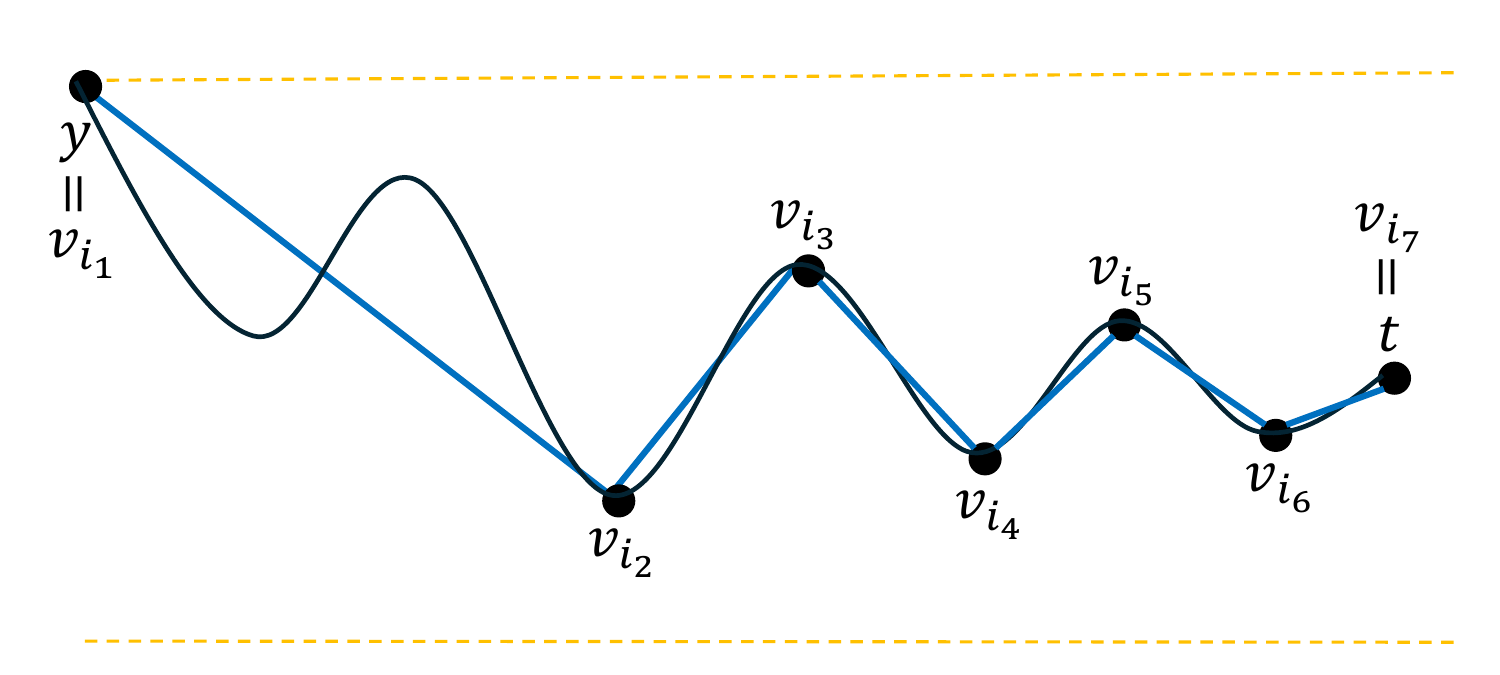}
\caption{The path decomposition in Lemma~\ref{lemma:alpha-simple-B-0}. The blue arcs correspond to arcs in $G^M$ of \textbf{the same} gain as the subpaths.}
\label{fig:alpha-simple-B-0}
\end{figure}

\begin{lemma}\label{lemma:alpha-simple-B-0}
    Let $y,t\in V$. If there is a simple traversable path $P$ from $y$ to $t$ such that $\alpha_B(P) = \alpha_B(y,t)$ and $g_v < g_y = 0$ for every $(y \neq) v \in P$, then  $\alpha_B(y,t) = A_B[y][t]$. 
\end{lemma}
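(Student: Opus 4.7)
The plan is to prove $\alpha_B(y,t)=A_B[y][t]$ by sandwiching. The direction $A_B[y][t]\le\alpha_B(y,t)$ is already given by Corollary~\ref{cor:A_lowerbounds_alpha} (via Lemma~\ref{lem:alphaB-bounds-D}), so the real work lies in producing some $x\in V$ with $-B\le M[y][x]\le 0$ for which $B+D[yx][t]\ge\alpha_B(y,t)$. First I observe that under the hypothesis $g_v<g_y=0$ for all $v\neq y$, the battery never reaches $B$ along $P$, so by Lemma~\ref{lem:alpha_general}, $\alpha_B(P)=B+g_t=B+g(P)$. Hence the target reduces to exhibiting an $x$ with $D[yx][t]\ge g(P)$.

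The natural candidate is $x:=v_{i^\star}$, a vertex of minimum gain in $P$. I would form a zig--zag decomposition $v_{i_1}=y,\ v_{i_2}=v_{i^\star},\ v_{i_3},\ldots,v_{i_r}=t$, where for $j$ odd, $v_{i_{j+1}}$ is the vertex of smallest gain on $v_{i_j}\ldots v_k$, and for $j$ even, it is the vertex of largest gain on that suffix. Each consecutive segment $v_{i_j}\ldots v_{i_{j+1}}$ is then a simple monotone subpath of $P$ (descending for $j$ odd, ascending for $j$ even), and by Theorem~\ref{theorem:shortcut} we have w.h.p.\ $M[v_{i_j}][v_{i_{j+1}}]\ge g(v_{i_j}\ldots v_{i_{j+1}})$. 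The crucial optimality step is to promote these inequalities to equalities: if $M[v_{i_j}][v_{i_{j+1}}]>g(v_{i_j}\ldots v_{i_{j+1}})$ by some $\Delta>0$, then Corollary~\ref{cor:M-G-connection}\ref{cor:3} yields a monotone path $R$ in $G$ realizing $M[v_{i_j}][v_{i_{j+1}}]$; plugging $R$ in for the segment in $P$ gives a path $P'$ whose charge at $v_{i_{j+1}}$ exceeds the original by at least $\Delta$ (possibly clipped to $B$), and because no vertex of the suffix $v_{i_{j+1}}\ldots t$ has gain $0$, this surplus (or the clipping slack $|g_{v_{i_{j+1}}}|>0$) propagates to strictly increase $\alpha_B(P')$, contradicting optimality of $P$. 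In particular $M[y][v_{i_2}]=g_{v_{i_2}}$, which is in $[-B,0)$, so the if-guard in $MFC(M)$ fires on $x=v_{i_2}$.

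Combining the equalities, the shortcutted path $\tilde Q=v_{i_1}v_{i_2}\ldots v_{i_r}$ in $G^M$ has cumulative gain at $v_{i_j}$ equal to $g_{v_{i_j}}$; hence $g(\tilde Q)=g_t=g(P)$, and since $y$ and $v_{i_2}$ realize the max and min of the $g_{v_{i_j}}$'s, $\tilde Q$ is $yv_{i_2}$-bounded. It remains to show that $D[y v_{i_2}][t]\ge g(\tilde Q)$ after $\ComputeF(M)$. Consecutive arcs of $\tilde Q$ have opposite signs, so every other arc is negative and $\tilde Q$ enjoys the same alternation property that the proof of Lemma~\ref{lemma:concat-sampling-correctness} exploits: with high probability the sampled set $S$ hits a suitable vertex in every window, and the combination of $\BFS$, $\Concat(M,D,S,S,S)$ and $\Concat(M,D,S,S,V)$ inside $\ComputeF$ reconstructs $\tilde Q$ as a first-arc-bounded path by concatenating arc-bounded sub-pieces exactly as in Lemma~\ref{lemma:funnels-optimality}. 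Plugging back in: $A_B[y][t]\ge B+D[y v_{i_2}][t]\ge B+g(P)=\alpha_B(y,t)$, as required.

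The main obstacle I foresee is precisely this last step, because strictly speaking $\tilde Q$ need not satisfy Definition~\ref{def:funnel}: at a vertex $v_{i_{j+1}}$ that is a local extremum, the three-vertex piece $v_{i_j}v_{i_{j+1}}v_{i_{j+2}}$ is typically monotone with respect to a non-zero charge drop schedule (for instance, a ``bounce'' up-by-$c$ then down-by-$c'$ is descending after dropping $c$ units at the peak), so Lemma~\ref{lemma:funnels-optimality} cannot be cited as a black box. The resolution should be that $\BFS$ and $\Concat$ are in fact sound for all arc-bounded paths (Lemma~\ref{lemma:BFS-single-extension}, Lemma~\ref{lemma:concat-basic}) and the sampling argument of Lemma~\ref{lemma:concat-sampling-correctness} only needs the density of negative arcs, which $\tilde Q$ has. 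A subsidiary point to verify carefully is that the optimality propagation argument really requires $g_v<g_y$ strictly; with equality somewhere the improvement $\Delta$ could be absorbed by clipping to $B$ and fail to reach $t$, which is why this lemma is stated under exactly this hypothesis and why the ``cycle-hopping'' portion of the optimal path is handled separately via $H^\star$ rather than here.
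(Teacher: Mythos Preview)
Your approach is the paper's approach, and you have all the right pieces: $\alpha_B(P)=B+g(P)$ from Lemma~\ref{lem:alpha_general}, the alternating min/max decomposition, the optimality argument that forces $M[v_{i_j}][v_{i_{j+1}}]=g(v_{i_j}\ldots v_{i_{j+1}})$, and the reduction to $D[yv_{i_2}][t]\ge g(P)$. The only gap is the ``obstacle'' you raise at the end, and it rests on a misreading of Definition~\ref{def:funnel}.

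A funnel is required to contain no monotone subpath of length $2$ or $3$ \emph{with respect to the zero schedule}; this is the paper's convention for ``monotone'' without a named schedule (Definition~\ref{def:monotone-path}). Your example of a bounce $v_{i_j}v_{i_{j+1}}v_{i_{j+2}}$ that becomes descending after a charge drop is therefore irrelevant to whether $\tilde Q$ is a funnel: with alternating-sign arcs such a two-arc piece is never monotone under the zero schedule. What you did not carry out is the observation that the equalities $M[v_{i_j}][v_{i_{j+1}}]=g(v_{i_j}\ldots v_{i_{j+1}})$ make the arc gains of $\tilde Q$ alternate in sign with \emph{strictly} decreasing absolute values,
\[
|g_{v_{i_2}}|\;>\;|g_{v_{i_3}}-g_{v_{i_2}}|\;>\;|g_{v_{i_4}}-g_{v_{i_3}}|\;>\;\cdots,
\]
where the first strict inequality is exactly $g_{v_{i_3}}<0=g_y$ (the hypothesis), and each later one follows from choosing the \emph{last} occurrence of the extremum at each step of the decomposition (you should state ``last'' explicitly). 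By Lemma~\ref{lemma:funnel-zigzag-structure} this is precisely the characterization of a first-arc-bounded funnel, so $\tilde Q$ \emph{is} a funnel in $G^M$ and Lemma~\ref{lemma:funnels-optimality} applies as a black box to give $D[yv_{i_2}][t]\ge g(\tilde Q)=g(P)$. No detour through the internals of $\ComputeF$ is needed, and the workaround you sketch is both unnecessary and not cleanly supported by the stated lemmas (Lemma~\ref{lemma:concat-sampling-correctness} and Corollary~\ref{lemma:BFS-basic} are genuinely phrased for funnels, since they rely on every subpath again being a funnel).
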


\begin{proof}
    By Lemma~\ref{lem:alpha_general} and by the assumption, we get that $\alpha_B(P) = B+g(P)$.
    
    Denote $P= v_1 (=y)\ldots v_k (=t)$. We decompose $P$ into monotone subpaths as follows (see Figure~\ref{fig:alpha-simple-B-0}).
    Let $i_1=1$ and let $v_{i_2}$, where $i_1 < i_2 \le k$, be the last vertex of minimum gain in $v_{i_1}\ldots v_k$. Since  $g_v \le g_y$ for every $ v \in P$, we get that $v_{i_1} \ldots v_{i_2}$ is descending. 
    Let $v_{i_3}$, where $i_2<i_3 \le k$, be the last vertex of maximum gain in $v_{i_2} \ldots v_{k}$. In particular, $v_{i_2} \ldots v_{i_3}$ is ascending.
    In general, let $v_{i_j}$, where $i_{j-1} < i_j\le k$ be the last vertex of maximum gain in $v_{i_{j-1}} \ldots v_{k}$ if $j$ is odd and and the last vertex of minimum gain in $v_{i_{j-1}} \ldots v_{k}$ if $j$ is even. We get that if $j$ is even then $v_{i_{j-1}} \ldots v_{i_j}$ is descending and otherwise ascending. Let $i_1(=1),\ldots i_t=k$ be the indices we constructed. Let $P_j = v_{i_{j-1}} \ldots v_{i_j}$ for $j=2,\ldots,t$. It follows by the construction and from the assumption that $g_v < g_y=0$ for every $(y\neq) v\in P$, that
    \begin{enumerate}
        \item $|g(P_2)| > |g(P_3)| > \ldots >|g(P_t)|$.
        \item $sign(g(P_{i-1})) = - sign(g(P_{i}))$ for $i=2,\ldots t$.
    \end{enumerate}

    Since $P$ is simple, it follows by Theorem~\ref{theorem:shortcut} that w.h.p.\ $M[v_{i_{j-1}}][v_{i_{j}}]\ge g(v_{i_{j-1}} \ldots v_{i_{j}})$ for every $2\le j\le t$. Note that actually $M[v_{i_{j-1}}][v_{i_{j}}] =  g(v_{i_{j-1}} \ldots v_{i_{j}})$. Otherwise, since $g_v < 0$ for every $(y\neq )v\in P$, we can improve $P$ by constructing a path $P'$ from $P$ by replacing a subpath $v_{i_{j-1}} \ldots v_{i_{j}}$ by a better subpath that corresponds to $M[v_{i_{j-1}}][v_{i_{j}}]$ by Lemma~\ref{lem:GM_to_G}. This yields $\alpha_B(P') > \alpha_B(P)$, a contradiction to the optimality of $P$: $\alpha_B(P) = \alpha_B(y,t)$. Therefore, by Lemma~\ref{lemma:funnel-zigzag-structure}, $v_{i_1} v_{i_2},\ldots v_{i_t}$ is a funnel in $G^M$, so by Lemma~\ref{lemma:funnels-optimality} we get w.h.p.\ that $D[y v_{i_1}][t] = D[v_{i_1} v_{i_1}][v_{i_t}] \ge g(P) = \alpha_B(y,t) - B$. Therefore, by the definition of $A_B$ in $MFC(M)$.
    \[\alpha_B(y,t) \le 
    B + D[y v_{i_2}][t] \le A_B[y][t].
    \]
    Thus, by Lemma~\ref{lem:alphaB-bounds-D} it follows that
     $A_B[y][t] =  \alpha_B(y,t)$.
\end{proof}

\begin{theorem}\label{theorem:alpha-correctness}
    Algorithm $MFC(M)$ computes $\alpha_B(s,t)$ for every $s,t\in V$.
\end{theorem}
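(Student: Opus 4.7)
The plan is to establish both inequalities $\alpha_B[s][t] \le \alpha_B(s,t)$ and $\alpha_B[s][t] \ge \alpha_B(s,t)$ for every $s,t\in V$. The first (soundness) direction is already handled by Corollary~\ref{cor:A_lowerbounds_alpha}, which piggybacks on Lemma~\ref{lem:alphaB-bounds-D} and Theorem~\ref{theorem:trasitive-graph} to show that every value the algorithm assigns is justified by an actual path in $G$. So the work is entirely in the completeness direction $\alpha_B[s][t] \ge \alpha_B(s,t)$, which I would prove by fixing $s,t\in V$ and exhibiting a witness $y$ that the inner loop of $MFC(M)$ inspects.

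I would split on the value of $\alpha_B(s,t)$. If $\alpha_B(s,t) = -\infty$ there is nothing to prove, and if $\alpha_B(s,t) = B$ then Theorem~\ref{theorem:B-B-strong-full} gives $s^B t^B \in E(H^\star)$ w.h.p., so the algorithm sets $\alpha_B[s][t] \gets B$ in its first inner \textbf{if} branch. The interesting case is $-\infty < \alpha_B(s,t) < B$. Here I would pick an optimal path $P = v_1\ldots v_k$ from $s$ to $t$ structured as in Lemma~\ref{lemma:optimal-structure} and let $i$ be the \emph{largest} index for which $\alpha_B(v_1\ldots v_i) = B$; such an $i$ exists (it includes $i=1$) and $i < k$ since $\alpha_B(s,t)<B$. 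Set $y = v_i$. By Theorem~\ref{theorem:B-B-strong} applied to the prefix $v_1\ldots v_i$, we obtain $s^B y^B \in E(H^\star)$ w.h.p., so the inner loop will at some point consider this $y$.

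The main obstacle, and the key content of the proof, is verifying that the suffix $P' = v_i\ldots v_k$ meets the hypotheses of Lemma~\ref{lemma:alpha-simple-B-0}, namely that $P'$ is simple and that $g^{P'}_v < g^{P'}_y = 0$ for every $v \in P'\setminus\{y\}$. The gain condition follows by the choice of $i$: for any $j>i$, if the charge at $v_j$ along $P$ were at least $B$, then $\alpha_B(v_1\ldots v_j)=B$, contradicting maximality of $i$; since the car starts the suffix at $y$ with full charge, this forces $g^{P'}_{v_j} < 0$. Simplicity of $P'$ takes a short argument using the structure in Lemma~\ref{lemma:optimal-structure}: any cycle $C_r$ with entry-exit pair $(x_r,y_r)$ that contributes to $P$ strictly after $y$ would raise the charge to $B$ at $y_r$, again contradicting maximality of $i$. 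Hence in the decomposition of $P$ afforded by Lemma~\ref{lemma:optimal-structure}, all positive-gain cycles lie weakly before $y$, so $P'$ is simple. Finally, by the pasting argument (splitting the battery reset at $y$) we get $\alpha_B(y,t) = \alpha_B(s,t)$, because the prefix $v_1\ldots v_i$ witnesses $\alpha_B(s,y) = B$.

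Putting the pieces together: Lemma~\ref{lemma:alpha-simple-B-0} applied to $P'$ and the pair $(y,t)$ gives $A_B[y][t] = \alpha_B(y,t) = \alpha_B(s,t)$ w.h.p. Since $s^B y^B \in E(H^\star)$, the assignment $\alpha_B[s][t] \gets \max\{\alpha_B[s][t], A_B[y][t]\}$ in $MFC(M)$ ensures $\alpha_B[s][t] \ge A_B[y][t] = \alpha_B(s,t)$, completing the upper bound and hence the theorem. The only non-routine step is the simplicity/gain argument for $P'$ sketched above; everything else is bookkeeping on top of the already-proven Theorems~\ref{theorem:B-B-strong-full} and Lemmas~\ref{lem:alphaB-bounds-D},~\ref{lemma:alpha-simple-B-0}.
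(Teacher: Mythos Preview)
Your proof is correct and follows essentially the same approach as the paper: both pick the last vertex $y$ on an optimal (Lemma~\ref{lemma:optimal-structure}-structured) path reached with full charge, invoke Theorem~\ref{theorem:B-B-strong-full} for the prefix, and apply Lemma~\ref{lemma:alpha-simple-B-0} to the simple suffix, with Corollary~\ref{cor:A_lowerbounds_alpha} handling soundness. The only cosmetic difference is that the paper argues simplicity of the suffix by observing directly that $y$ lies on the final simple segment $P^{y_\ell t}$ of the decomposition, whereas you obtain it by the equivalent contradiction that any later cycle exit would again hit full charge.
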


\begin{proof}
    Let $s,t\in V$. We prove that $\alpha_B[s][t] =\alpha_B(s,t)$, where $\alpha_B[\cdot][\cdot]$ is the table used in $MFC(M)$, see Figure~\ref{alg:alpha_B}.
    Let $P$ be a traversable path from $s$ to $t$ of the form of Lemma~\ref{lemma:optimal-structure} and let $C_1, \ldots C_k$ and $(x_1,y_1),\ldots(x_{\ell},y_{\ell})$ be as in Lemma~\ref{lemma:optimal-structure} and let $P^{y_{\ell} t}$ be the simple subpath from $y_{\ell}$ to $t$.

    Let $y\in P$ be the last vertex in $P$ that satisfies $\alpha_B(s,y) = B$. By the definition of the decomposition, $\alpha_B(s,y_{\ell}) = B$ and therefore $y$ 
    is on the simple path $P^{y_\ell t}$.
    By Theorem~\ref{theorem:B-B-strong-full}, we get w.h.p.\ that $s^B y^B \in E(H^\star)$.

    Let $P^{y t}$ be the simple subpath of $P^{y_{\ell} t}$ from $y$ to $t$. Since $\alpha_B(P)= \alpha_B(s,t)$ and $\alpha_B(s,y)= B$  it follows that $ \alpha_B(s,t) = \alpha_B(P^{y t}) = \alpha_B(y,t)$.
    By the definition of $y$, it holds that $g^{P^{y t}}_v < g^{P^{y t}}_y = 0$ for every $(y\neq) v \in P^{y t}$. Therefore, by Lemma~\ref{lemma:alpha-simple-B-0}, we get that $\alpha_B(y,t) = A_B[y][t]$. 
    
    By the definition of algorithm $MFC(M)$, since $s^B y^B \in E(H^\star)$, we get that $\alpha_B[s][t] \ge  A_B[y][t] = \alpha_B(s,t)$. 
    On the other hand, by Corollary~\ref{cor:A_lowerbounds_alpha} we have that $\alpha_B[s][t] \le \alpha_B(s,t)$, so we are done.
\end{proof}

The following theorem summarise the main result of this paper.

\begin{theorem}
    Let $G = (V,A,g)$ be a road network that may contain positive gain cycles and let $B \in \RR^{+}$. There is a randomized algorithm that in expected $\tilde{O}(n^{3.5})$ time computes a table $\alpha_B[\cdot][\cdot]$ such w.h.p.\ $\alpha_B[s][t] = \alpha_B(s,t)$ for every $s,t\in V$.
\end{theorem}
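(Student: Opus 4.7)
The plan is to simply assemble the two stages developed in the appendix and verify that they together meet the claimed time and correctness bounds. First I would invoke $\ComputeS(G)$ to obtain the shortcut table $M$; by the running time lemma for $\ComputeS$, this takes expected $\tilde{O}(n^{3.5})$ time, and by Theorem~\ref{theorem:shortcut} the returned $M$ dominates every simple monotone path of $G$ with high probability. Then I would run $MFC(M)$, which outputs the desired table $\alpha_B[\cdot][\cdot]$, and invoke Theorem~\ref{theorem:alpha-correctness} to conclude that $\alpha_B[s][t]=\alpha_B(s,t)$ for all $s,t\in V$ with high probability.

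The remaining task is to bound the running time of $MFC(M)$. I would analyze it piece by piece following the pseudocode in Figure~\ref{alg:alpha_B}. Building $H$ from $M$ takes $O(n^3)$ time, since each of the four families of arcs in $E(H)$ (the $0$-$0$, $0$-$B$, $B$-$0$, $B$-$B$ arcs) can be read off by scanning triples $x,y,z\in V$. Computing the transitive closure $H^\star$ of the $2n$-vertex graph $H$ takes $O(n^3)$ time by standard algorithms. The call to $\ComputeF(M)$ costs expected $\tilde{O}(n^{10/3})$ time, by the running time lemma for $\ComputeF$. Finally, both of the two for-loops that fill in $A_B[\cdot][\cdot]$ and then $\alpha_B[\cdot][\cdot]$ iterate over $O(n^3)$ triples with constant work per triple, so they contribute $O(n^3)$ overall. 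Adding these up, $MFC(M)$ runs in expected $\tilde{O}(n^{10/3})$ time, which is dominated by the $\tilde{O}(n^{3.5})$ cost of $\ComputeS(G)$.

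For correctness I would combine three ingredients already proved. First, Theorem~\ref{theorem:shortcut} ensures that $M$ dominates every simple monotone path w.h.p.; this is the hypothesis used throughout Appendix~\ref{S-relating} and Appendix~\ref{S-algorithm}. Second, Theorem~\ref{theorem:B-B-strong-full} gives the exact characterization $\alpha_B(s,t)=B \iff s^Bt^B\in E(H^\star)$ w.h.p., which legitimizes the reduction to the case of a simple optimal suffix whose starting vertex is reachable from $s$ with full battery. Third, Lemma~\ref{lemma:alpha-simple-B-0} (combined with Lemma~\ref{lem:alphaB-bounds-D}) guarantees that $A_B[y][t]=\alpha_B(y,t)$ whenever the optimal path attaining $\alpha_B(y,t)$ is simple and never revisits full charge after $y$. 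Putting these together is precisely the argument in Theorem~\ref{theorem:alpha-correctness}, so no new reasoning is needed; I would simply cite it.

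The main ``obstacle'' is purely a bookkeeping one, namely confirming that the error probabilities of the individually high-probability claims (shortcut domination in Theorem~\ref{theorem:shortcut}, funnel domination via $\ComputeF$, and the several transitive-closure steps in Appendix~\ref{section:transitive-correctness}) can be union-bounded over all $O(n^2)$ pairs $(s,t)$ without degrading the $\tilde{O}(n^{3.5})$ bound. Each underlying sampling step in $\ComputeS$ and $\LS$ already uses $\Theta(\log^2 n)$ repetitions, so the error probability of any single path-domination event is at most $n^{-c}$ for any desired constant $c$; choosing $c$ large enough and union-bounding over the polynomially many events yields the overall w.h.p.\ guarantee, completing the proof.
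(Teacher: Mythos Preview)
Your proposal is correct and matches the paper's (implicit) approach: the paper states this theorem as a summary of the main result without a separate proof, so assembling $\ComputeS(G)$ for the shortcut table, then $MFC(M)$ for the $\alpha_B$ table, and citing the running-time lemmas together with Theorem~\ref{theorem:alpha-correctness} is exactly what is intended. Your running-time breakdown of $MFC(M)$ and the union-bound remark are accurate and fill in the routine details the paper leaves tacit.
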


\end{document}